\documentclass[11pt,a4paper]{article}

\usepackage{amsthm,amsmath,multirow,graphicx,subfigure,booktabs,url,mathtools,wrapfig,mathrsfs,bm,relsize,caption, enumitem, fullpage, xcolor,algorithm}

\usepackage[noend]{algpseudocode}

\RequirePackage[authoryear]{natbib}
\RequirePackage[psamsfonts]{amssymb}
\usepackage[colorlinks=true,
linkcolor=blue,
urlcolor=blue,
citecolor=blue]{hyperref}

\numberwithin{equation}{section}
\newtheorem{conj}{Conjecture}
\newtheorem{thm}[conj]{Theorem}
\newtheorem{cor}[conj]{Corollary}
\newtheorem{prop}[conj]{Proposition}
\newtheorem{lemma}[conj]{Lemma}
\newtheorem{ass}{Assumption}

\newtheorem{definition}{Definition}

\providecommand{\customgenericname}{}
\newcommand{\newcustomtheorem}[2]{%
	\newenvironment{#1}[1] 
	{%
		\renewcommand\customgenericname{#2}%
		\renewcommand\theinnercustomgeneric{##1}%
		\innercustomgeneric
	}
	{\endinnercustomgeneric}
}

\newcustomtheorem{customAss}{Assumption}

\theoremstyle{definition}
\newtheorem{remark}{Remark}
\newtheorem{example}{Example}

\def\e{\mathbf{e}}

\def\PP{\mathbb{P}}
\def\EE{\mathbb{E}}
\def\RR{\mathbb{R}}

\def\I{\mathcal{I}}
\def\E{\mathcal{E}}
\def\H{\mathcal{H}}
\def\S{\mathcal{S}}
\def\J{\mathcal{J}}
\def\P{\mathcal{P}}

\def\wh{\widehat}
\def\wt{\widetilde}

\def\0{\mathbf{0}}
\def\eps{\varepsilon}
\def\bI{\bm{I}}
\def\C{\Sigma_Z}
\def\whC{\wh \Sigma_Z}

\def\T{\top}
\def\ul{\underline}
\def\ol{\overline}

\def\diag{\textrm{diag}}   
\def\rank{\textrm{rank}}
\def\Cov{\textrm{Cov}}   
\def\sgn{\textrm{sgn}}
\def\op{{\rm op}}

\def\X{\bm X} 
\def\Z{\bm Z} 
\def\bE{\bm E} 

\def\i{\infty}
\def\nij{\setminus\{i,j\}}
\let\emptyset\varnothing

\def\group{\overset{{\rm I}}{\sim}}
\def\groupH{\overset{{\rm H}}{\sim}}
\def\groupS{\overset{{\rm S}}{\sim}}

\newcommand{\sbt}{\,\begin{picture}(-1,1)(-0.5,-2)\circle*{2.3}\end{picture}\ }

\renewcommand{\parallel}{\mathrel{/\mkern-5mu/}}
\makeatletter
\newcommand{\notparallel}{%
	\mathrel{\mathpalette\not@parallel\relax}%
}
\newcommand{\not@parallel}[2]{%
	\ooalign{\reflectbox{$\m@th#1\smallsetminus$}\cr\hfil$\m@th#1\parallel$\cr}%
}
\makeatother

\def\mike{\color{cyan}}

\captionsetup{font={small},skip=5pt,width=0.8\textwidth}

\begin{document}
    \title{Detecting approximate replicate components of a high-dimensional  random vector with latent structure}
    \author{Xin Bing\thanks{Department of Statistics and Data Science, Cornell University, Ithaca, NY. E-mail: \texttt{xb43@cornell.edu}.}~~~~~Florentina Bunea\thanks{Department of Statistics and Data Science, Cornell University, Ithaca, NY. E-mail: \texttt{fb238@cornell.edu}.}~~~~~Marten Wegkamp\thanks{Departments of Mathematics, and   of Statistics and Data Science, Cornell University, Ithaca, NY. E-mail: \texttt{mhw73@cornell.edu}.} }
	\date{}
	\maketitle

	\begin{abstract}
	
     High-dimensional feature vectors are likely to contain sets of measurements that are approximate replicates of one another. In complex applications, or automated data collection, these feature sets   are not known a priori, and need to be determined. 
     
     
This work proposes  a class of latent factor models  on the observed,  high-dimensional,  random vector $X \in \RR^p$, for defining, identifying and estimating  the index set of its approximately replicate components. The model class is parametrized  by a $p \times K$ loading matrix $A$ that contains a hidden sub-matrix whose 
rows can be partitioned into  groups of parallel vectors. Under this model class, a set of approximate replicate components of $X$ corresponds to a set of parallel rows in $A$: these entries of $X$ are, up to scale and additive error, the same linear combination of the $K$ latent factors; the value of $K$ is itself unknown.  

The problem of finding approximate replicates in $X$ reduces to identifying, and estimating, the location of the hidden sub-matrix within $A$, and of the  partition $\mathcal{H}$ of its row index set $H$. Both $H$ and $\mathcal{H}$ can be fully characterized in terms of a new family of criteria based on the correlation matrix of $X$, and their identifiability, as well as that of the unknown latent dimension $K$, are  obtained as consequences.   The constructive nature of the identifiability arguments enables computationally efficient procedures, with consistency guarantees. 

Furthermore, when the loading matrix $A$ has a particular sparse structure, provided by the errors-in-variable parametrization, the difficulty of the problem is elevated.  The task becomes that of separating out groups of parallel rows that are proportional to canonical basis vectors from other, possibly dense, parallel rows in $A$. This is  met under a scale assumption, via a principled way of selecting the target row indices,   guided by the succesive maximization of Schur complements of appropriate covariance matrices. The resulting procedure is an enhanced version of that developed for recovering general parallel rows in $A$, is also computationally efficient, and consistent.

	\end{abstract}

	\noindent {\bf Keywords:} high-dimensional statistics, identification, latent factor model, matrix factorization, replicate measurements, pure variables, overlapping clustering

	\section{Introduction}

    Latent factor models are simple, ubiquitous,  tools for describing data generating mechanisms that yield random vectors $X \in \RR^p$ with possibly very correlated entries, and subsequently approximately low rank covariance matrix. 	The history of factor analysis can be traced back to the 1940s \citep{L40,L41,L43, J67,J69,J70,J77}, with foundational work established by \cite{anderson1956}, and a wealth of recent works motivated by applications to economy and finance, educational testing and psychology, forecasting, biology, to give a limited number of examples.

    A factor model assumes the existence of  an  integer $ 1\le  K \le  p$  and of a random vector $Z \in \RR^K$ of unobservable, latent, factors,  such that the observed $X \in \RR^p$ 
    has the representation 	\begin{equation}\label{model}
    	X = AZ + E, 
    	\end{equation} for some real-valued loading matrix $A \in \RR^{p \times K}$ and random noise $E \in \RR^p$ uncorrelated with $Z$. The corresponding covariance matrix of $X$  has the expression $\Sigma = A \C A^\top  + \Sigma_E$, 
    where $\C$ is the covariance matrix of $Z$ and $\Sigma_E$ that of $E$. A large amount of literature has been and continues to be devoted to  the estimation of approximately low rank covariance matrices corresponding to factor models, for instance, \cite{Chandrasekaran,Hsu2011,rpca,agarwal2012,FAN2008186,fan2011,fan2018,FAN20195,donoho2018}, to name a few. 
    
A related, but different,  line of research is devoted to the estimation of the loading matrix $A$ itself, in identifiable factor models that place structure on $A$ \citep{anderson1956,ICA,NIPS2003_2463,bai2012,fan2011,anandkumar12,NIPS2012_4637,arora2013practical,anandkumar14b,fan2018,jaffe2018learning,bing2019inference,bing2017adaptive,Top,sparseTop} 

We treat a problem of an intermediate nature in this work: we also model the dependency between the components of a high dimensional random vector $X \in \RR^p$ via latent factors, as in the covariance estimation literature, and place structure on $A$, as in the literature devoted to loading matrix estimation, but our focus is different. We study factor models relative to matrices $A$ that are allowed to have groups  of parallel rows, a structure that in general  is not sufficient  for identifying $A$.  The indices of rows that are respectively parallel form a partition $\mathcal{H}$ of the collection $H$ of all indices of parallel rows in $A$.  The components 
$X_j$ of $X$, with $j$ in a group of $\mathcal{H}$ are, up to scale and additive error, the same linear combination of the background latent factors. This parametrization of $A$ thus  provides a way of modeling those components of $X$ that are very highly dependent, in that they are  ``approximate replicates'' of each other, while allowing for general dependency between the other entries in $X$, albeit modeled via a factor model. 

Very high dimensional random vectors  do typically have entries  that are approximately redundant, and we give only a few  motivating examples from biology. For instance, in human systems immunology, in addition to measuring serological features (cytokines, chemokines, antibody titers etc) that are highly correlated with each other, one often measures the same  feature under slightly different technical conditions (e.g. same titer measured at different serum dilutions, same cytokines measured using different technical platforms). In genetic perturbation experiments, one could quantify the effect of the exact same genetic perturbation using different reporter assays. Although some of these redundancies may be obvious based on the experimental design (technical redundancies), others are known to be induced by latent, underlying biological mechanisms, but it is unknown which of the collected measurements reflect them. We address the latter problem in this work. 

We study factor models on high dimensional vectors $X \in \RR^p$ that contain approximate replicate components, in {\it unknown} positions. The focus of our work is in determining their locations, a problem that reduces to that of  identifying and estimating the location of a hidden sub-matrix of $A$, with unknown row index set $H$, and of  unknown  partition $\mathcal{H}$. Since the entirety of a  matrix $A$ with such structure is typically  not  identifiable, we also study an instance of it, provided by an  added sparsity constraint, under which both a hidden sparse sub-matrix of $A$, and $A$ itself are identifiable.   

 The following section provides a detailed summary of our approach and results.

	\subsection{Our contributions}\label{contribution}  

	To state our results, we will assume that $X$ follows a factor model (\ref{model}) with 
	$A\in \RR^{p\times K}$ and $\rank(A)=K$,
	\begin{equation*}
		    \EE[E]=\0 \qquad \text{ and }\qquad  \Sigma_E= \EE [ E E^\top] = \text{diag}( \sigma_1^2,\ldots, \sigma_p^2)>0
		\end{equation*}
 and  $Z$ is standardized such that 
	\[ \EE[Z]=\0 \quad\text{and}\quad
	\diag(\C) = {\bm 1}\quad 
	\text{with\quad $\C:= \EE[ZZ^\top]$}  \quad \text{and}  \quad \rank(\C) = K.
	\]
	We make the following  assumption, which we will show later identifies $K$. It is based on the intuition that for $p\gg K$, the matrix  $A$ is expected to contain many parallel rows. 
	 \begin{ass}\label{ass_parallel}
      The index  set of parallel rows
      \[ H:= \left\{i\in \{1,\ldots,p\}:\ 
      A_{i\sbt} \parallel A_{j\sbt} \ 
      \text{ for some } j\in \{1,\ldots,p\}\setminus \{i\}\right\}
      \] of $A$ is non-empty and
       $\rank( A_{H\sbt})=K$.  
    \end{ass}
    
The  partition of $H$ consists of   disjoint sets $H_1, \ldots, H_G$, and all indices $i,j\in H_k$ correspond to  parallel rows $A_{i\sbt}$ and $A_{j\sbt}$, for each $k\in [G] := \{1,2,\ldots, G\}$. The assumption $H\ne \emptyset$ and $\rank( A_{H\sbt})=K$ implies that $G\ge K$, and  $|H_k|\ge 2$ for each $k\in [G]$. While we are not aware of a study of factor models under Assumption \ref{ass_parallel}, we mention that it is a generalization of the {\it errors-in-variables parametrization}, see for instance \cite{yalcin2001}, which we discuss in more detail below. 

 With the  parametrization of $A$ provided by Assumption \ref{ass_parallel},  we define {\it approximate replicate measurements}  relative to the parallel rows in $A$  as the groups of variables $X_j$, with $j \in H_k$ and $k \in [G]$: they are, up to scale, the same linear combination of the $K$ background factors, up to the  additive measurement error term  $E_j$. The problem of detecting approximate replicate features $X_j$ reduces to that of recovering the parallel rows of $A$, and their partition.

    In the context of this problem and modelling assumptions, we give below  the organization of the paper, section by section, and summarize our contributions.

   \paragraph{Section \ref{sec_parallel}: Identification and recovery of approximately replicate features.}
    The results of our Section \ref{sec_parallel} can be summarized as follows.\\ 
    
    \noindent{\it  1. A new score function for an  if and only if characterization of the partition of parallel rows of a loading matrix.} We show in Section \ref{sec_ident} that Assumption \ref{ass_parallel} is sufficient for the unique, and constructive,  determination of $H$, its partition $\H = \{H_1, \ldots, H_G\}$,  and $K$ from the correlation matrix $R=[ {\rm Corr}(X_i,X_j)]_{i,j\in [p]}$. In Proposition \ref{prop_I}, we prove the non-trivial fact that the parallelism between rows of $A$ is preserved by appropriately modified rows in $R$.  To be  precise,  we show  that the vectors $R_{i \sbt \setminus \{i,j\}} $ and $R_{j \sbt\setminus \{i,j\}}$, each defined by leaving out, respectively,  the $i$th and $j$th entries from the rows $R_{i \sbt}$ and $R_{j \sbt}$,  are parallel in $\RR^{p-2}$ if and only if $A_{i\sbt}$ and $A_{j\sbt}$ are parallel. 
    This realization, combined with the fact that two non-zero vectors  $v,w$ are parallel if and only if 
    	\[\min_{\|(\alpha,\beta)\|_r = 1}\|\alpha  v + \beta w\|_q = 0 \]
    	for any $0<r\le \infty$ and $0\le q\le \infty$, 
    naturally	leads to defining the class of criteria 
	\begin{equation}\label{score0}
	S_{q,r}(i,j) = 
	(p-2)^{-1/q} 
	\min_{\|(\alpha,\beta)\|_r=1} \left\|\alpha R_{i\sbt \setminus\{i,j\}} + \beta R_{j \sbt\setminus\{i,j\}}\right\|_q, \ \text{for any} \ i,j \in [p],
	\end{equation}
for identifying  parallel vectors in $A$ via the model-independent correlation matrix $R$. 

In Proposition \ref{prop_score}, we establish the following  characterization of both $H$\ and its partition $\H$,
	\begin{equation}\label{iff_I}
	    S_{q,r}(i,j) = 0 \quad  \iff \quad  i,j\in H_a\quad  \text{for some } a\in [G]
	\end{equation}
	for any $0<r\le \i$ and $0\le q\le \i$. The new  criteria (\ref{score0}) are indexed by two parameters $(q,r)$. Proposition \ref{prop_score}  shows that $r=\i$ is optimal. In practice, we prefer $(q,r)=(2,\i)$ since $S_{2,\i}$ can be written in closed form,  as proved in Proposition \ref{prop_score_2}. In Theorem \ref{thm_partial_ident}, we prove that Assumption \ref{ass_parallel} identifies not only $H$ and $\H$, but  the dimension $K$ as well.

	 To the best of our knowledge, the only criterion similar in spirit to   our proposed (\ref{score0}) is that in \cite{bunea2020},  introduced in the much more restricted setting of a factor model in which the matrix $A$ has only $0/1$ entries, each row is a canonical basis vector $\e_k \in \RR^K$, and the $p$ rows can be partitioned in $K$ groups of replicates of $\e_k$, for each $k \in [K]$. Thus, in our notation, $H = [p]$ and $G = K$. In this mathematically simpler setting, it suffices to
     compute  the supremum-norm
    \[\|
            \Sigma_{i\sbt\nij} - \Sigma_{j\sbt\nij}
        \|_{\i}\]
    of the differences
    between rows  $\Sigma_{i\sbt\nij} $ and $ \Sigma_{j\sbt\nij}$ of the covariance matrix $\Sigma=\Cov(X)$, for all pairs $i,j\in [p]$. When $A$ has general real-valued entries, this criterion no longer discrimates between general parallel rows, which we prove is made possible by an additional minimization over $(\alpha,\beta)$ on the set $\{( \alpha, \beta)\in\RR^2:\|(\alpha, \beta)\|_r=1\}$. Furthermore, our proposed class of criteria is based on the scale invariant correlation matrix $R$. \\

    \noindent{\it 2. A new method for estimating the approximate replicate index set, its partition, and the latent dimension.} As the identifiability proofs of $H$ and $K$ are constructive, they lead to a practical estimation procedure, stated in Section \ref{sec_est_H}, that is easy to implement, even for large $p$. 
    Section \ref{sec_est_H} first introduces  the empirical counterpart $\wh S_{q,r}$ of the criterion $S_{q,r}$ in (\ref{score0}), by simply replacing $R$ by the empirical correlation matrix $\wh R$.
From the tight, in probability, bound  
\[\max_{i,j}| \wh S_{q,r}(i,j) -  S_{q,r}(i,j)|\le 2\delta_n\] in Theorem \ref{thm_score_ell_q} of Section \ref{sec_theory_H_K}, with
$\delta_n= O(\sqrt{\log(p\vee n)/ n})$, in conjunction with the characterization of $H$ in (\ref{iff_I}), 
we estimate $H$ by the set $\wh H$ of all pairs $(i,j)$ with $\wh S_{q,r}(i,j)\le 2\delta_n$. In this paper, we derive the order of magnitude of $\delta_n$ under the assumption that $X$ is sub-Gaussian. 

In Corollary \ref{cor_H} of Section \ref{sec_theory_H_K} we show that $\wh H$ and its partition $\wh \H$ consistently estimates $H$ and $\H$, respectively. After estimating $H$ and $\H$, Section \ref{sec_est_H} devises the estimation of $K$, exploiting the fact that $\Sigma -\Sigma_E = A\C A^\top$ has rank $K$. 
Proposition \ref{prop_K_parallel} in Section \ref{sec_theory_H_K} shows that this procedure is consistent under mild regularity conditions.


\paragraph{Section \ref{sec_overlap}: Identification and recovery under a canonical parametrization.}
While in Section \ref{sec_parallel} we studied the recovery of generic  approximate replicates, in this section we shift focus to replicates generated in a particular way, and motivate our interest in this problem below. \\

\noindent {\it 1. Pure variables with arbitrary loadings.} A particular instance of Assumption \ref{ass_parallel} is 
  \begin{ass}\label{ass_parallel_prime}
      There exists a subset $I\subseteq H$ such that $A_{I\sbt}$ (up to row permutation) contains at least two $K\times K$ diagonal matrices with non-zero diagonal entries,
    \end{ass}	

\noindent to  which we refer in the sequel as a {\it canonical parametrization}.
This  assumption can be re-stated, equivalently as 
   \begin{customAss}{2$^\prime$}\label{ass_I}
		For any $k\in [K]$, there exist  at least two $i,j\in H$ with $i\ne j$ such that $A_{ik}\ne 0$, $A_{jk}\ne 0$ and $A_{ik'} = A_{jk'} = 0$ for all $k'\in [K]\setminus \{ k\}$. 
  \end{customAss}	
\noindent The collection of the set of indices with existence postulated by  Assumption \ref{ass_I} is the set $I$, defined in Assumption \ref{ass_parallel_prime}. We let $\I = \{I_1,\ldots, I_K\}$  denote its partition.

 One arrives at this sparse parametrization  of $A$ from  both mathematical and applied  statistics perspectives. It has been long understood, see for instance  \cite{anderson1956,bai2012}, that  a particular version of Assumption \ref{ass_parallel_prime} determines $A$ uniquely.  When $I$ and $\Sigma_E$ are known, it is sufficient to assume the existence of only one diagonal sub-matrix in $A$, whereas the existence of a duplicate is a sufficient identifiability condition, when neither $I$ nor $\Sigma_E$ are  known \citep{bing2017adaptive}. 

From an applied perspective, the interest in this parametrization can be most easily seen from its equivalent formulation, Assumption \ref{ass_I}. In \cite{yalcin2001}, it is called the  {\em errors-in-variables parametrization}.  It is popular in the social sciences literature \citep{koopmans1950,mcdonaldbook}
   and 
  a further, large,  array of applications   are given in the review paper \cite{yalcin2001}.
  		 Versions of this factor model class are  routinely used in educational and psychological testing, where the latent variables are viewed as aptitudes or psychological states \citep{thurstone,anderson1956,bollen1989}. The components of  $X$ are test results, with some tests  
	specifically designed to measure {\it only a single}  aptitude $Z_k$, for each aptitude, whereas others test mixtures of aptitudes.  
	
	This brings into focus a particular type of replicate measurements, sparse replicates, which satisfy $X_j = A_{jk}Z_k + E_j$, for all $j \in I_k$, and each $1 \leq k \leq K$. Following the  terminology in network analysis and previous work on factor models we refer to $X_j$ with $j \in I$ as {\it pure variables.} By experimental design, the index set $I$ and the dimension $K$ {\it are  known} in  the  classical literature mentioned above. 
	
	In modern applications, when $p$ is very large, as in genetic applications, or when  many features are automatically collected, neither $I$ nor $K$ are known in advance, and are not identifiable without further modeling assumptions. We show in Section \ref{sec_overlap} that  Assumption \ref{ass_parallel_prime}, or equivalently, Assumption \ref{ass_I},  is  sufficient for this task. 
	

While we will refer to this assumption as the pure-variable parametrization,  we note that, importantly, pure variables $X_i$ and $X_j$  connected to the same latent factor $Z_k$, with indices  collected in the set $I_k$, are allowed different factor loadings, in that $A_{ik}$ and $A_{jk}$  can be different. This is in line with assumptions made in topic models \citep{arora2013practical,Top}, but has not yet been extended, to the best of our knowledge,  to latent factor models on $X \in \RR^p$,  corresponding to $Z \in \RR^K$ and a matrix $A$  with arbitrary real values.

More generally, latent factor models for arbitrary random vectors $X \in \RR^p$,  under the parametrization on $A$ given by  Assumption \ref{ass_I}, with  $I$ {\it unknown},  have not been studied, to the best of our knowledge. The work of \cite{bing2017adaptive} uses  a restricted version 
 of  Assumption \ref{ass_I}, motivated by biological applications, in which $|A_{ik}| = |A_{jk}| = 1$.  Their entire estimation procedure of $I$ and $A$  is crucially tailored to a model in which all pure variables, in all groups, have the same loading, by convention taken to be equal to one,  and it cannot be generalized to the model under the parametrization considered here. The procedure we propose, and therefore its analysis,  are new, and entirely  different from existing work.
 However, the newly proposed  procedure of estimating $I$ can be employed for determining the centers of latent overlapping clusters, using the rationale in   \cite{bing2017adaptive}, but adding the flexibility provided by Assumption \ref{ass_I}.\\

    \noindent{\it 2. Identifying the pure variable index set when the loading matrix  has additional, non-pure, parallel, rows.}  Allowing for different loadings of the pure variables brings challenges in establishing the identifiability of $I$ and therefore of $A$, especially when we allow for the existence of other parallel, but with  arbitrary entries, rows in $A$. 
	In Section \ref{sec_ident_full}, we formally establish the identifiability of $I$ and of its partition $\I = \{I_1,\ldots, I_K\}$ under Assumption \ref{ass_I}, and an additional assumption that we will discuss shortly.
	
	Based on the observation that $I\subseteq H$, the first step towards identifying $I$ finds the set of parallel rows  $H$ and its partition $\H$. 
	
	When there are no parallel rows of $A$ corresponding to non-pure variables, $H$ and $\H$ reduce to $I$ and $\I$, respectively and the results of Section \ref{sec_parallel} apply directly.  Corollary \ref{thm_ident_I} of Section \ref{sec_ident_full} summarizes the identifiability of $I$ and $\I$ in this scenario. 
	
	However, if there exist non-pure variables  corresponding to rows in $A$ that are parallel, it turns out that the set $I$ is not identifiable:  the index set $J_1$ corresponding to these non-pure variables is also included in  $H$, and $H = I \cup J_1$.
	
	Separating $I$ from $J_1$ reduces to selecting $K$ distinct indices from $H$, and proving that they correspond to pure variables.  One has liberty in selecting these indices. We opted for selecting  those that correspond to variables that  contain as much information as possible. 
	 A systematic way of selecting  $K$ representive variables  is given in Lemma \ref{lem_find_L} of Section \ref{sec_ident_full} based on successively maximizing  certain Schur complements of the low rank matrix   $\Cov(AZ) = A \C A^\T$. These quantities are equivalent with conditional variances,  when $Z$ follows a Gaussian distribution. Whereas this selection process may be of interest in its own right, we  guarantee that its output  is indeed a set of pure variables under the additional  Assumption \ref{ass_J_prime}, stated in Section \ref{sec_ident_full}. It is a scaling assumption,  that compares the loading $|A_{jk}|$, for each $j\in J_1$ and $k\in [K]$, with $\max_{i\in I_k} \| A_{i\sbt}\|_1$, the largest loading of pure variables in $I_k$.

    We formally prove in Theorem \ref{thm_ident_I_post} of Section \ref{sec_ident_full} that the pure variable index set $I$, its partition,  as well as the assignment matrix $A$,  are indeed identifiable under Assumptions \ref{ass_I} \& \ref{ass_J_prime}.
    We give a constructive proof that determines these quantities uniquely from the correlation matrix $R$.\\
    
 
    \noindent{\it 3. Estimating the pure variable set and its partition, with guarantees.}
    The estimation of $I$ follows the steps of the identifiability proofs. As the first step estimates $H$ and $K$ by the procedure in Section \ref{sec_est_H}, we revisit theoretical guarantees of $\wh H$ and $\wh K$ under Assumption \ref{ass_I}. We show, in Corollary \ref{cor_I_prime} of Section \ref{sec_theory_I},  that $\wh H$ estimates $I$, while possibly including  a few indices $i,j$ belonging to non-pure variables that are 
{\em near-parallel} in the sense that $S_q (i,j) < 4\delta_n$.
    

Proposition \ref{prop_K} in Section \ref{sec_theory_I} further shows the consistency of $\wh K$ under mild assumptions, in particular  on the size of the indices corresponding to  
the non-pure variables  that are near-parallel. 

After consistently estimating $K$, Section \ref{sec_est_pure} gives a principled way for  sifting the pure variables from other variables with indices  in the set $\wh H$.
This pruning step is a sample adaptation of the constructive method given, at the population level, in   Lemma \ref{lem_find_L} of Section \ref{sec_ident_full}.  Theorem \ref{thm_I_post} of Section \ref{sec_theory_I} shows that this procedure consistently yields the pure variable index set,   under certain regularity conditions.
Its technical proof 
involves comparing estimated Schur complements of appropriate sub-matrices of $A \C A^\T$. These  quantities are not easy to handle, especially under the  additional layer of complexity   induced by the existence of near parallel variables, and   a delicate uniform control between these matrices and their empirical counterparts is required.  Nevertheless, under a simple set of conditions, we prove that our proposed procedure consistently finds the partition of the pure variable index set. \\

\noindent 
{\it 4.  Estimation of $A$ under the canonical parametrization.}    
    Estimation of the non-zero loadings in the submatrix $A_{I\sbt}$ of $A$ is discussed in Section 
    \ref{sec_est_AI}. Under fairly mild regularity conditions, we establish the consistency of 
the proposed estimator in Theorem \ref{thm_BI}. In Section \ref{sec_est_AJ}, we carefully explain that estimation of the remaining part of $A$ follows  from straightforward adaptation of the methods introduced, developed  in  \cite{bing2017adaptive}. The theoretical guarantees, including the minimax optimal properties of the resulting estimate of $A$   carry over as well. For this reason, we focus in this paper on the recovery of the pure variables and their corresponding loadings $A_{I\sbt}$. In a sense, finding the set $I$ is the first, yet most difficult step in the recovery of the matrix $A$.

\paragraph{Section \ref{sec_sims}: Practical considerations and simulations.}
In Section \ref{sec_sims} we discuss the  practical considerations associated with the implementation of our procedure,  including the selection of tuning parameters and the pre-screening of features with weak signal.  We verify our theoretical claims via extensive simulations.

 \paragraph{Appendix: Proofs and supplementary simulation results.}
Finally, all proofs are collected in the Appendix. Perhaps of 
 independent interest, 	as a byproduct of our proof, we establish in Appendix \ref{app_concentration_R}  deviation inequalities in operator norm for the empirical sample correlation matrix based on  $n$ independent sub-Gaussian random vectors in $\RR^p$, with $p$ allowed to exceed $n$. While 
    similar deviation inequalities for the sample covariance matrix have been well understood   \cite{vershynin_2012,lounici2014,bunea2015,koltchinskii2017},  the operator norm concentration inequalities of the sample correlation matrix is relatively less explored. \cite{elkaroui2009} studied the asymptotic behaviour of the limiting spectral distribution of the sample correlation matrix when $p/n\to (0,\i)$. \cite{wegkamp2016, han2017} prove a similar result for  Kendall's tau sample correlation matrix.

	
	\subsection{Notation} 
	For any positive integer $d$, we write $[d] := \{1,\ldots, d\}$. For two real numbers $a$ and $b$, we write $a\vee b = \max\{a,b\}$ and $a\wedge b = \min\{a, b\}$. 
	
	For any vector $v\in \RR^d$, we write its $\ell_q$-norm as $\|v\|_q$ for $0\le q\le \i$.
	We denote the $k$-th canonical unit vector in $\RR^d$ by $\e_k$, that has zero coordinates except for the $k$-th coordinate, which equals 1.
	Two vectors $u,v\in\RR^d$ are parallel, and we write
	$
	v\parallel u$, if and only if $|\sin(\angle(u, v))| = 0.
	$
	For a subset $S\subset [d]$, we define $v_S$ as the subvector of $v$ with corresponding indices in $S$.
	
	Let $M\in \RR^{d_1\times d_2}$ be any matrix. For any set $S_1\subseteq [d_1]$ and $S_2 \subseteq [d_2]$, we use $M_{S_1,S_2}$ to denote the submatrix of $M$ with corresponding rows $S_1$ and columns $S_2$. We also write $M_{S_1S_2}$ by removing the comma when there is no confusion. In particular, $M_{S_1\sbt}$ ($M_{\sbt S_2}$) stands for the whole rows (columns) of $M$ in $S_1$ ($S_2$). We use $\|M\|_{\rm op}$, $\|M\|_{\rm F}$ and $\|M\|_\i$ to denote the operator norm, the Frobenius norm and elementwise sup-norm, respectively. For any positive semi-definite matrix $M$, we denote by $\lambda_k(M)$ its $k$-th eigenvalue with non-increasing order.
	
	Let $S=S_1\cup\cdots\cup S_K$ 
	be the collection of $K$ sets of indices with $S \subseteq [p]$. 
	For any $i, j\in [p]$,  the notation $i \groupS j$ means that there exists $k\in [K]$ such that $i,j\in S_k$. Its complement $i\not\groupS j$ means $i$ and $j$ do not simultaneously belong to any $S_k$. Let $\Sigma := \Cov(X)$ be the covariance matrix of the random vector $X\in \RR^p$ with diagonal matrix $D_{\Sigma} = \diag(\Sigma_{11}, \ldots, \Sigma_{pp})$
	and we denote the  correlation matrix of $X$  by
	$
	R = D_{\Sigma}^{-1/2}\Sigma D_{\Sigma}^{-1/2}.
	$

	\section{Identification and recovery of approximately replicate features}\label{sec_parallel}
	In this section we show that under Assumption \ref{ass_parallel} stated in the Introduction, both $H$ and its partition $\H$ introduced in Section \ref{contribution}, as well as the latent dimension $K$, can be uniquely  determined from the scale invariant correlation matrix $R$. 
	Our identifiability proofs are constructive, and are  the basis of the estimation procedures described in Section \ref{sec_est_H}, and further analyzed in Section \ref{sec_theory_H_K}.

	\subsection{Identification of the parallel row index set  and latent dimension of $A$}\label{sec_ident}

		We begin by noting that  model (\ref{model}) implies the   decomposition  
	\[
	\Sigma ~ =~  A\C A^\T + \Sigma_E,
	\]
	and consequently
	\begin{equation}\label{eq_R}
	R ~=~  D_{\Sigma}^{-1/2}\Sigma D_{\Sigma}^{-1/2}    
	~ := ~
	B\C B^\T + \Gamma, 
	\end{equation}
	where \[ B := D_{\Sigma}^{-1/2}A \ \  \text{and} \ \ \Gamma := D_{\Sigma}^{-1/2} \Sigma_E D_{\Sigma}^{-1/2}.\]
	Since the matrix  $B$ has the same support as $A$, both matrices $A$ and $B$  share the same index set $H$ of parallel rows, and its partition $\H$.

	The following proposition provides an if and only if characterization of both $H$ and $\H$.
	Its proof is deferred to Appendix \ref{app_proof_prop_I}. We assume \[\min_{i\in [p]} \|A_{i\sbt}\|_2>0.\] Otherwise, 
    we remove all zero rows of $A$ in  the  pre-screening  step described in Section \ref{sec_pre_screen}. The notation  $i \groupH j$ means that both $i,j\in H_k$ for some $k\in [G]$. 
    For any $i,j\in [p]$, let $R_{i,\nij} \in \RR^{p-2}$ denote the $i$th row of $R$, with the entries in the $i$th and $j$th columns removed. 
	
	\begin{prop}\label{prop_I}
	Under model (\ref{model}) and Assumption  \ref{ass_parallel},
		\begin{equation}\label{eq_iff_I}
		i\groupH j\quad \iff \quad A_{i\sbt} \parallel A_{j\sbt} \quad \iff \quad R_{i,\setminus\{i,j\}} \parallel 	R_{j,\setminus\{i,j\}}.
		\end{equation}
    Furthermore, $H$ and $\mathcal{H}$ are uniquely defined.\footnote{The partition $\H$ is unique up to a group permutation.} 
	\end{prop}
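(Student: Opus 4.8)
The plan is to treat the two stated equivalences separately. The first one, $i \groupH j \iff A_{i\sbt}\parallel A_{j\sbt}$, together with the uniqueness of $H$ and $\H$, is essentially bookkeeping: since all rows are nonzero (we assume $\min_i\|A_{i\sbt}\|_2>0$), parallelism is an equivalence relation on $[p]$ — reflexivity and symmetry are immediate, and transitivity holds because two nonzero vectors are parallel iff one is a nonzero scalar multiple of the other. Hence $H$ is exactly the union of the equivalence classes of size at least two, and $\H=\{H_1,\dots,H_G\}$ is the collection of these classes; both are intrinsic to $A$, up to a permutation of the groups. The substantive content is therefore the second equivalence $A_{i\sbt}\parallel A_{j\sbt} \iff R_{i,\nij}\parallel R_{j,\nij}$, and the uniqueness claim will then follow once this last condition is seen to be phrased purely in terms of the observable matrix $R$.

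For the second equivalence I would first record the key linear-algebraic identity. Because $\Gamma=D_\Sigma^{-1/2}\Sigma_E D_\Sigma^{-1/2}$ is diagonal, the off-diagonal entries of $R$ satisfy $R_{il}=B_{i\sbt}\C B_{l\sbt}^\T$ for $l\neq i$. Writing $\wt B$ for the $(p-2)\times K$ matrix obtained from $B$ by deleting rows $i$ and $j$, every coordinate of $R_{i,\nij}$ has the form $B_{l\sbt}\C B_{i\sbt}^\T$ with $l\notin\{i,j\}$, so
\[
R_{i,\nij} = \wt B\, \C\, B_{i\sbt}^\T, \qquad R_{j,\nij} = \wt B\, \C\, B_{j\sbt}^\T.
\]
Setting $W:=\wt B\,\C$ and noting that $B=D_\Sigma^{-1/2}A$ rescales rows by positive constants (so $A$ and $B$ share support, the parallelism relation, and the rank of every row-submatrix), the forward implication is immediate: if $A_{i\sbt}\parallel A_{j\sbt}$ then $B_{i\sbt}^\T=c\,B_{j\sbt}^\T$ for some scalar $c$, whence $R_{i,\nij}=c\,R_{j,\nij}$ and the two modified rows are parallel.

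The reverse implication is the heart of the matter, and I would prove its contrapositive: assuming $A_{i\sbt}\not\parallel A_{j\sbt}$, I show $R_{i,\nij}$ and $R_{j,\nij}$ are linearly independent. Since $\C$ has rank $K$ it is invertible, so $W$ is injective as soon as $\wt B$ has full column rank $K$; an injective map sends the linearly independent pair $B_{i\sbt}^\T,B_{j\sbt}^\T$ to a linearly independent pair $R_{i,\nij},R_{j,\nij}$, exactly as wanted. The main obstacle is therefore to show that deleting rows $i$ and $j$ does not drop the rank below $K$, i.e.\ $\rank(\wt B)=K$, and this is where Assumption \ref{ass_parallel} enters. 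By the first equivalence, $A_{i\sbt}\not\parallel A_{j\sbt}$ means $i$ and $j$ lie in different groups (or outside $H$), so at most one of them belongs to any single $H_k$. Because every group has $|H_k|\ge 2$, each group retains a representative after deleting $i,j$; hence $A_{(H\setminus\{i,j\})\sbt}$ still contains a row from every parallel direction and spans the same row space as $A_{H\sbt}$, giving $\rank(A_{(H\setminus\{i,j\})\sbt})=\rank(A_{H\sbt})=K$. Transporting this through the positive row-rescaling to $B$, and using that $\wt B$ contains these rows, yields $\rank(\wt B)=K$.

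Finally, for the uniqueness of $H$ and $\H$ I would observe that the chain \eqref{eq_iff_I} expresses membership in $H$ and the grouping $\H$ entirely through the relation $R_{i,\nij}\parallel R_{j,\nij}$, which depends only on the correlation matrix $R$. Consequently any two factor models satisfying Assumption \ref{ass_parallel} that produce the same $R$ must produce the same $H$ and the same partition $\H$, up to relabeling of the groups; in particular $H$ and $\H$ are identified by $R$ alone. The only routine point I would not belabor is that the $R$-side relation inherits the equivalence-relation structure from parallelism of the rows of $A$, which is what makes the induced partition well defined.
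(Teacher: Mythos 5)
Your proposal is correct and follows essentially the same route as the paper's proof: both reduce to the scaled matrix $B=D_\Sigma^{-1/2}A$, use the diagonality of $\Gamma$ to write $R_{i,\nij}$ and $R_{j,\nij}$ as images of $B_{i\sbt}^\T$, $B_{j\sbt}^\T$ under the map $\wt B\,\C$ (the paper's $M_{(ij)}^\T$), and prove the non-trivial direction by showing this map has rank $K$ whenever $i,j$ lie in different groups, using $|H_k|\ge 2$ and $\rank(A_{H\sbt})=K$ exactly as you do. Your row-space phrasing of the rank step and the explicit equivalence-class bookkeeping for the first equivalence and uniqueness are just cleaner write-ups of the same argument.
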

	The first equivalence  in (\ref{eq_iff_I}) trivially follows from the definition of $H$. The second equivalence in (\ref{eq_iff_I}) is  key, especially with a view towards estimation, as it shows that  the connection between $H$ and $A$ can be transferred over to the correlation matrix $R$, a model-free estimable quantity. 
	
	To make use of Proposition \ref{prop_I}, first recall that for any two non-zero vectors $v$ and $w$, we have 
	\begin{align}\nonumber
	v\parallel w \quad
	& 
	\iff \quad \min_{\|(a,b)\|_r = 1}\|a v  + b w \|_q = 0
	\end{align}
   for any integers $0\le q\le \infty$ and $0<r\le \infty$, with $  (a, b) \in \RR^2$. This observation, in conjunction with Proposition \ref{prop_I},  suggests the usage of the  following general score function for determining, {\it constructively},  the index set $H$: 
	\begin{equation}\label{score}
	S_{q,r}(i,j) = 
	(p-2)^{-1/q} 
	\min_{\|(a,b)\|_r=1} \left\|a R_{i,\setminus\{i,j\}} + b R_{j,\setminus\{i,j\}}\right\|_q
	\end{equation}
	for any $i\ne j$,   $0<r\le \infty$ and $0\le q\le \infty$.
	The factor 
	$(p-2)^{-1/q}$ serves as a normalizing  constant. 	The following proposition justifies its usage, and offers guidelines on the practical choices of $r$ and $q$. Its proof can be found in Appendix \ref{app_proof_prop_score}.
	
    \begin{prop}[A general score function for finding parallel rows]\label{prop_score}
    	Under model (\ref{model}) and Assumption \ref{ass_parallel}, we have: 
    	\begin{enumerate}
    	    \item[(1)] 		$ 	i\groupH j ~\iff ~ S_{q,r}(i,j) = 0$
    		for any $0<r\le \infty$ and $0\le q\le \infty$. 
    		\item[(2)] 	For any fixed $i,j\in [p]$,  $S_{q,r}(i,j)$ defined in (\ref{score}) satisfies 
    		\begin{enumerate}
    			\item[(i)] $S_{q, t}(i,j) \ge S_{q,r}(i,j)$ for all $0<r\le t\le \i$;
    			\item[(ii)] $S_{t,r}(i,j) \ge S_{q, r}(i,j)$ for all $t \ge q$. \\
    		\end{enumerate}
        \end{enumerate}
    \end{prop}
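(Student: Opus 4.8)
The plan is to reduce all three assertions to elementary facts about $\ell_p$-norms, after fixing the two vectors $v := R_{i,\setminus\{i,j\}}$ and $w := R_{j,\setminus\{i,j\}}$ in $\RR^{p-2}$, both nonzero under the standing assumption $\min_{i}\|A_{i\sbt}\|_2>0$. The function $f(a,b) := \|a v + b w\|_q$ is then a fixed, positively homogeneous (degree one) map on $\RR^2$, and everything follows from how the two minimizations interact with the index $r$ (constraining $(a,b)$) and the index $q$ (defining the norm of the combination).

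For part (1), observe that for $0<q<\infty$ the normalizing factor $(p-2)^{-1/q}$ is strictly positive and finite, so $S_{q,r}(i,j)=0$ holds if and only if $\min_{\|(a,b)\|_r=1} f(a,b)=0$. By the displayed algebraic characterization of parallelism recalled just before the proposition, this minimum vanishes precisely when $v\parallel w$, and by Proposition~\ref{prop_I} the latter is equivalent to $i\groupH j$. Chaining these equivalences gives the claim; the endpoint indices $q\in\{0,\infty\}$ are read off by the usual conventions for $(p-2)^{-1/q}$ and for the $\ell_0$/$\ell_\infty$ quasi-norms.

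For part (2)(i), $q$ is fixed, so I would use homogeneity to rewrite $\min_{\|(a,b)\|_r=1} f(a,b) = \inf_{(a,b)\neq \0} f(a,b)/\|(a,b)\|_r$, since rescaling any nonzero pair to unit $\ell_r$-norm multiplies $f$ by $1/\|(a,b)\|_r$. Because the $\ell_p$-norm of the pair $(a,b)$ is nonincreasing in $p$, we have $\|(a,b)\|_t\le\|(a,b)\|_r$ for $r\le t$, hence $f(a,b)/\|(a,b)\|_t \ge f(a,b)/\|(a,b)\|_r$ pointwise; taking infima and multiplying by the common factor $(p-2)^{-1/q}$ yields $S_{q,t}(i,j)\ge S_{q,r}(i,j)$. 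For part (2)(ii), $r$ is fixed and the normalization does the work: the dimension-weighted Hölder inequality gives, for any $u\in\RR^{p-2}$ and $q\le t$, the bound $\|u\|_q\le (p-2)^{1/q-1/t}\|u\|_t$, equivalently $(p-2)^{-1/q}\|u\|_q\le (p-2)^{-1/t}\|u\|_t$. Applying this with $u=av+bw$ for each $(a,b)$ on the $\ell_r$-sphere gives a pointwise inequality between the two normalized objectives, which passes to their minima (evaluate the smaller objective at the minimizer of the larger one), producing $S_{q,r}(i,j)\le S_{t,r}(i,j)$.

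The one point demanding care is the direction of the comparison in (2)(ii): the raw relation between norms runs the opposite way, $\|u\|_t\le\|u\|_q$, so the conclusion is driven entirely by the normalizing factor $(p-2)^{-1/q}$, which reverses the inequality. Beyond this, the only delicate bookkeeping concerns the boundary indices $q\in\{0,\infty\}$ and $r=\infty$, where $(p-2)^{-1/q}$ and the corresponding norms must be interpreted as the appropriate limits; I would dispatch these by the standard conventions and note that all three statements hold there by continuity of the $\ell_p$-norm in $p$.
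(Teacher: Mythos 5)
Your proof is correct and takes essentially the same route as the paper's: part (1) chains the algebraic characterization of parallelism with Proposition \ref{prop_I}, part (2)(i) uses the homogeneous ratio form of the score together with the monotonicity $\|v\|_t \le \|v\|_r$ for $r \le t$, and part (2)(ii) uses the dimension-normalized H\"older bound $(p-2)^{-1/q}\|u\|_q \le (p-2)^{-1/t}\|u\|_t$, passed from a pointwise inequality to the minima. The only slight inaccuracy, harmless here, is your claim that $R_{i,\setminus\{i,j\}}$ and $R_{j,\setminus\{i,j\}}$ are always nonzero under the standing assumption: both can vanish in degenerate configurations, but only when $i \groupH j$, in which case the equivalence in (1) holds anyway via the $0 \parallel 0$ convention built into Proposition \ref{prop_I}.
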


	In view of part {\it (1)}  of Proposition \ref{prop_score}, one should select $(q,r)$ such that $S_{q,r}(i,j)$ is as large as possible whenever $i\not\groupH j$.	Part ${\it (2)} (i)$ of Proposition  \ref{prop_score} immediately suggests  taking $r = \infty$ and considering the class of score functions 
	\begin{equation}\label{score_ell_q}
	S_q(i,j):=S_{q,\i}(i,j) = (p-2)^{-1/q} \min_{\|(a,b)\|_\i = 1}\left\|
	a R_{i,\nij} + b R_{i,\nij} 
	\right\|_q.
	\end{equation}
	In conjunction with part ${\it (2)} (ii)$ of  Proposition  \ref{prop_score}, the most ideal score function is 
	\begin{equation*}
	S_{\i}(i,j) := S_{\infty, \infty}(i,j) = \min_{\|(a,b)\|_\i=1} \left\|a R_{i,\nij} + b R_{j,\nij}\right\|_\i\quad \forall i,j\in [p],
	\end{equation*}
	corresponding to  $q=\i$.
	While this score function could in principle be computed via linear programming, it is expensive for very large $p$ since   $S_{\i,\i}(i,j)$ needs to be computed for each pair $(i,j)$. A compromise is to choose $q = 2$, especially  because the score function 
	\begin{equation}\label{score_ell_2}
	S_{2}(i,j) := S_{2, \infty}(i,j) = {1\over \sqrt{p-2}} \min_{\|(a,b)\|_\i=1} \left\|a R_{i,\nij} + b R_{j,\nij}\right\|_2\quad \forall i,j\in [p]
	\end{equation}
	has a closed form expression, given in Proposition \ref{prop_score_2} below and proved in Appendix \ref{app_proof_prop_score_2}.
%
	To simplify our notation, and recalling that $R_{i,\nij}, R_{j,\nij} \in \RR^{p-2}$, we define 
	\begin{equation*}
	V_{ii}^{(ij)} :=[ R_{i,\nij}] ^\T R_{i,\nij},\quad V_{ij}^{(ij)} := [ R_{i,\nij} ]^\T R_{j,\nij}\qquad \forall i,j\in [p].
	\end{equation*}
	
	\begin{prop}\label{prop_score_2}
		The score function $S_2(i,j)$ defined in (\ref{score_ell_2}) satisfies
		\[
		\big[S_2(i,j)\big]^2 = {V_{ii}^{(ij)} \wedge V_{jj}^{(ij)} \over p-2}\left(
		1 - {[V_{ij}^{(ij)}]^2 \over V_{ii}^{(ij)}V^{(ij)}_{jj}}\right)\qquad \forall\ i,j\in [p].
		\]
	In particular, 	under model (\ref{model}) and Assumption \ref{ass_parallel}, we have: 
	\begin{equation}\label{iff_H}
	i\groupH j\qquad \iff \qquad S_2(i,j) = 0.
	\end{equation}	
	\end{prop}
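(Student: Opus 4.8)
The plan is to establish the closed-form expression first, after which the equivalence (\ref{iff_H}) follows immediately. Writing $v := R_{i,\nij}$ and $w := R_{j,\nij}$ and expanding the squared norm, I would reduce the inner minimization defining $S_2(i,j)$ to
\[
\min_{\|(a,b)\|_\i = 1} \|a v + b w\|_2^2 = \min_{\max(|a|,|b|)=1} \left( a^2 V_{ii}^{(ij)} + 2ab\, V_{ij}^{(ij)} + b^2 V_{jj}^{(ij)} \right),
\]
a positive semidefinite quadratic form in $(a,b)$ minimized over the boundary of the square $[-1,1]^2$. Dropping the superscript $(ij)$ and setting $\rho := V_{ij}/\sqrt{V_{ii} V_{jj}}$, so that $|\rho|\le 1$ by Cauchy--Schwarz, the goal is to show this minimum equals $(V_{ii}\wedge V_{jj})(1-\rho^2)$; dividing by $p-2$ then yields the claimed formula.

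The boundary of the square splits into the edge $|a|=1,\ |b|\le 1$ and the edge $|b|=1,\ |a|\le 1$. Since the objective is even under $(a,b)\mapsto(-a,-b)$, on the first edge I may fix $a=1$ and minimize the univariate quadratic $V_{ii} + 2bV_{ij} + b^2 V_{jj}$ over $b\in[-1,1]$; symmetrically, the second edge reduces to minimizing $V_{jj} + 2aV_{ij} + a^2 V_{ii}$ over $a\in[-1,1]$. On each edge I would compare the unconstrained vertex of the parabola against the box constraint. By the symmetry of the whole problem in $(i,j)$ it suffices to treat the case $V_{ii}\le V_{jj}$.

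The key observation is that, when $V_{ii}\le V_{jj}$, the unconstrained minimizer $b^\star = -V_{ij}/V_{jj}$ of the first edge is always feasible, because $|b^\star| = |\rho|\sqrt{V_{ii}/V_{jj}} \le 1$; this edge therefore contributes exactly $V_{ii}(1-\rho^2)$. It remains to check that the second edge never does better. Its minimum is either the interior value $V_{jj}(1-\rho^2) \ge V_{ii}(1-\rho^2)$, when the vertex $a^\star=-V_{ij}/V_{ii}$ is feasible, or, when $|a^\star|>1$, the boundary value $V_{ii}+V_{jj}-2|V_{ij}|$ attained at $a=-\sgn(V_{ij})$; the latter exceeds the first-edge value by
\[
\left(V_{ii}+V_{jj}-2|\rho|\sqrt{V_{ii}V_{jj}}\right) - V_{ii}(1-\rho^2) = \left(\sqrt{V_{jj}} - |\rho|\sqrt{V_{ii}}\right)^2 \ge 0.
\]
Hence the first edge attains the overall minimum $V_{ii}(1-\rho^2) = (V_{ii}\wedge V_{jj})(1-\rho^2)$, completing the derivation. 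I expect this perfect-square comparison --- together with the bookkeeping of which edge and which constraint is active --- to be the only delicate point; everything else is routine analysis of a univariate quadratic.

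Finally, the equivalence (\ref{iff_H}) is immediate: it is the special case $(q,r)=(2,\i)$ of part (1) of Proposition \ref{prop_score}. Alternatively, it can be read off the closed form directly, since under the standing assumption $\min_i\|A_{i\sbt}\|_2>0$ one has $V_{ii}\wedge V_{jj}>0$, so $S_2(i,j)=0$ forces $\rho^2=1$, i.e.\ equality in Cauchy--Schwarz, i.e.\ $R_{i,\nij}\parallel R_{j,\nij}$, which by Proposition \ref{prop_I} is exactly $i\groupH j$.
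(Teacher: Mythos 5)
Your proof is correct and takes essentially the same route as the paper's: minimize the quadratic form over the two edges of the $\ell_\infty$-sphere, assume WLOG $V_{ii}^{(ij)}\le V_{jj}^{(ij)}$, use Cauchy--Schwarz to show the unconstrained vertex on the first edge is feasible (giving the value $V_{ii}^{(ij)}(1-\rho^2)$), check the second edge does no better, and then obtain the equivalence from part (1) of Proposition \ref{prop_score}. The only cosmetic difference is the sub-case $|V_{ij}^{(ij)}|>V_{ii}^{(ij)}$ on the second edge: you verify the corner value $V_{ii}^{(ij)}+V_{jj}^{(ij)}-2|V_{ij}^{(ij)}|$ dominates via an explicit perfect square, whereas the paper simply observes that this corner point also has first coordinate of modulus one and is therefore already covered by the first-edge analysis.
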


We showed in Proposition \ref{prop_I} that $H$ and its partition $\H$ are identifiable under Assumption \ref{ass_parallel}, via an if and only if characterization of $H$, and further provided, in Proposition \ref{prop_score_2}, a constructive if and only if characterization of $H$. The latter is used in the proof of Theorem \ref{thm_partial_ident} below to show that the latent dimension $K = \rank(M_{HH})$, with $M = A\C A^\T$, can be uniquely determined, by showing that the matrix $M_{HH}$ itself is identifiable. 
Theorem \ref{thm_partial_ident}  summarizes these identifiability results and its proof is given in Appendix \ref{sec_proof_thm_partial_ident}.

	\begin{thm}[Partial identifiability]\label{thm_partial_ident}
		Under model (\ref{model}) and Assumption \ref{ass_parallel}, the set $H$, its partition $\H$ and $G= |\H|$ are unique. Moreover, $K$ is uniquely determined. 
	\end{thm}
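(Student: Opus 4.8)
The plan is to leverage the constructive if-and-only-if characterization of $H$ and $\H$ already established, and then reduce the identification of $K$ to showing that a single identifiable matrix---namely $M_{HH}$ with $M = A\C A^\T$---has rank exactly $K$. First I would dispense with the identifiability of $H$, $\H$, and $G = |\H|$: Proposition \ref{prop_score_2} shows that $i \groupH j \iff S_2(i,j) = 0$, and since $S_2$ is a function of the correlation matrix $R$ alone, the relation $\groupH$ is recoverable from $R$. Because being in the same group is an equivalence relation on $H$ (reflexivity is vacuous/by convention, symmetry is clear from the symmetry of the score, and transitivity follows because parallelism of the rows $A_{i\sbt}$ is itself transitive among nonzero rows), the groups $H_1,\ldots,H_G$ are exactly the equivalence classes, hence $H = \bigcup_k H_k$, the partition $\H$, and its cardinality $G$ are all uniquely determined from $R$, up to relabeling of the groups.

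The substantive part is the identification of $K$. The strategy is to show $K = \rank(M_{HH})$ and that $M_{HH}$ is itself identifiable from $R$ (equivalently from $\Sigma$). For the rank equality, write $M_{HH} = A_{H\sbt}\, \C\, A_{H\sbt}^\T$. Since $\C$ has rank $K$ and, by Assumption \ref{ass_parallel}, $\rank(A_{H\sbt}) = K$, a routine rank argument gives $\rank(M_{HH}) = K$: the factor $\C$ is full-rank $K\times K$, so $\rank(A_{H\sbt}\C A_{H\sbt}^\T) = \rank(A_{H\sbt}) = K$ (using that for a full column-rank $A_{H\sbt}$ and positive-definite-on-its-range $\C$, no rank is lost). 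Thus $K$ is read off as the rank of $M_{HH}$, provided $M_{HH}$ can be recovered.

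Recovering $M_{HH}$ is where the real work lies, and I expect it to be the main obstacle. The off-diagonal entries of $M = A\C A^\T$ coincide with those of $\Sigma = M + \Sigma_E$, because $\Sigma_E$ is diagonal; hence every off-diagonal block of $M_{HH}$ is immediately identifiable from $\Sigma$. The difficulty is the diagonal entries $M_{ii} = \sigma_i^{-2}$-free quantities are contaminated by $\Sigma_{E,ii} = \sigma_i^2$, so $\Sigma_{ii}$ does not reveal $M_{ii}$ directly. The key idea is to exploit the group structure: for each group $H_k$ with $|H_k|\ge 2$, and any index $i\in H_k$, one can reconstruct the diagonal entry $M_{ii}$ from off-diagonal entries involving other members of the group together with the known parallelism ratios. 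Concretely, if $i,j\in H_k$ then $A_{i\sbt} = c_{ij} A_{j\sbt}$ for a scalar $c_{ij}$ recoverable from $R$, which ties $M_{ii}$, $M_{ij}$, and $M_{jj}$ through $M_{ii} = c_{ij} M_{ij}$ and $M_{ij} = c_{ij} M_{jj}$; since $|H_k|\ge 2$ guarantees at least one off-diagonal partner, and since $\rank(A_{H\sbt}) = K$ provides enough cross-group off-diagonal information, every diagonal entry of $M_{HH}$ is pinned down. I would verify that the constants $c_{ij}$ are identifiable from $R$ (they are determined by the parallelism established in Proposition \ref{prop_I}) and that combining the identifiable off-diagonal block with these scaling relations yields the full $M_{HH}$, hence its rank $K$.
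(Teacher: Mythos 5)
Your proposal is correct and follows essentially the same route as the paper's own proof: first identify $H$, $\H$ and $G$ via the vanishing of the score function (Propositions \ref{prop_I} and \ref{prop_score_2}), then identify the rank-$K$ matrix on $H\times H$ — off-diagonal entries directly from $\Sigma$ because $\Sigma_E$ is diagonal, diagonal entries through the within-group parallelism ratios — and read off $K$ as its rank. The only cosmetic difference is the scale: the paper works with $M=B\C B^\T$ on the correlation scale, recovering $|\rho_{ij}|=\|R_{i,\nij}\|_q/\|R_{j,\nij}\|_q$ and $M_{jj}=|R_{ij}|\,\|R_{j,\nij}\|_q/\|R_{i,\nij}\|_q$, whereas you work with $A\C A^\T$ on the covariance scale; the two are linked by the observable scaling $D_{\Sigma}^{1/2}$, so your constants $c_{ij}$ are likewise identifiable.
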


	Assumption \ref{ass_parallel} is sufficient for identifying $K$, but not the entire matrix $A$.  To see this, note that for any invertible matrix $Q \in \RR^{K \times K}$, there exists some diagonal (scaling) matrix $D \in \RR^{K \times K}$  such that $AZ = \wt A \wt Z$,  where $\wt A = AQD$ has the same index set $H$ as $A,$ and the covaraince matrix $\Cov(\wt Z)$ of  $\wt Z := D^{-1}Q^{-1} Z$ is positive definite and satisfies $\diag[\Cov(\wt Z)]={\bm 1}$, as $\Cov(Z)$. We therefore refer as partial identifiability to the results of  Theorem \ref{thm_partial_ident}. We revisit them in Section \ref{sec_overlap}, where we introduce assumptions under which not only $K$, but also $A$  of model (\ref{model}) can be identified.


	
  	

	\subsection{Estimation of the parallel row index set $H$, its  partition $\mathcal{H}$  and latent dimension $K$}\label{sec_est_H}

	Suppose we have access to $n$ i.i.d. copies of $X\in \RR^p$, collected in 
		a $n\times p$ data matrix $\X$.
		We write the sample covariance matrix as 
	\[
	\wh\Sigma  ={1\over n}  \X ^\top \X
	\]
	and  denote the sample correlation matrix by $\wh R$,
	with entries 
	\[\wh R_{ij} = \wh\Sigma_{ij} / \sqrt{\wh\Sigma_{ii}\wh\Sigma_{jj}}\]
	for any $i,j\in [p]$. Our estimation procedure is the sample level analogue of Theorem \ref{thm_partial_ident} of  Section \ref{sec_ident} above: we first estimate  the parallel row index set $H$ and its partition $\mathcal{H}$, then we estimate  $K$. The statistical guarantees of these estimates are provided in Section \ref{sec_theory_H_K}.

	Recall from part ${\it (2)}(i)$ of Proposition \ref{prop_score} that we use $S_q(i,j):=S_{q,\i}(i,j)$ as a generic score for finding $H$, with any $q\ge 1$. We propose to estimate $S_q(i,j)$ by solving the   optimization problem 
	\begin{align}\label{est_score_ell_q}
	&\wh S_{q}(i,j)=  (p-2)^{-1/q}\min_{\|(a,b)\|_\i=1} \left\|a\wh R_{i,\nij} + b \wh R_{j,\nij}\right\|_q
	\end{align}
	for each $i,j\in [p]$ and $i\ne j$.
	In particular, Proposition \ref{prop_score_2} implies that 
	 $\wh S_2(i,j)$
	 has the   closed form
	\begin{align}\label{est_score_ell_2}
	\wh S_2(i,j) &= 
	\left[{\wh V_{ii}^{(ij)} \wedge \wh V_{jj}^{(ij)} \over p-2}\left(
	1 - {[\wh V_{ij}^{(ij)}]^2 \over \wh V_{ii}^{(ij)}\wh V^{(ij)}_{jj}}\right)\right]^{1/2}
	\end{align}
	with 
	$
	\wh V_{ij}^{(ij)} = [\wh R_{i,\nij}]^\T \wh R_{j,\nij} 
	$
	for all $i,j\in [p]$.
	
	\begin{remark}
	{\rm 
		For any pair $(i,j)$ and any $q\ge 1$, the criterion $\wh S_{q}(i,j)$  in (\ref{est_score_ell_q}) can be computed by solving two convex optimization problems because
		\[
		\wh S_{q}(i,j)  =  (p-2)^{-1/q}\min\left\{\min_{|a|\le 1} \left\|a\wh R_{i,\nij} + \wh R_{j,\nij}\right\|_q,~ 
		\min_{|b|\le 1} \left\|\wh R_{i,\nij} + b \wh R_{j,\nij}\right\|_q
		\right\}.
		\]
		In particular,
		for $q = \i$, computation of  $\wh S_\i(i,j)$ requires solving two linear programs, while for $q = 2$, we have the closed form expression in (\ref{est_score_ell_2}).  
		}
	\end{remark}
	

	
	Algorithm \ref{alg_I} gives  the procedure of estimating parallel row index set, which reduces to finding  all pairs $(i,j)$ with $\wh S_q(i,j)$ below the threshold level $2\delta$. It returns not only the estimated index set $\wh H$, but  also its partition $\wh \H := \{\wh H_1,\ldots,\wh H_{\wh G}\}$. Algorithm \ref{alg_I} requires  one single tuning parameter 
	$\delta$, with an explicit rate stated in Section \ref{sec_theory_H_K}. A fully data-driven criterion of selecting $\delta$ is stated in Section \ref{sec_cv}, relying on the following lemma that shows that the number of  estimated parallel rows of Algorithm \ref{alg_I}    increases in $\delta$.
	\begin{lemma}\label{lem_I_hat}
		Let $\wh H(\delta)$ be the estimated set of parallel rows from Algorithm \ref{alg_I}. Then 
		\[
		|\wh H(\delta)| ~\le~ |\wh H(\delta')|\qquad \forall \delta \le \delta'.
		\]
	\end{lemma}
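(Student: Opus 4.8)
The plan is to exploit the fact that the scores $\wh S_q(i,j)$ are computed directly from the data matrix $\X$ and do \emph{not} depend on the tuning parameter $\delta$, which enters Algorithm \ref{alg_I} only through the threshold $2\delta$ applied to these fixed scores. The monotonicity claim should then follow from the elementary observation that raising a threshold applied to a fixed collection of numbers can only enlarge the set of numbers that fall below it.

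Concretely, I would first record the decision rule underlying Algorithm \ref{alg_I}: a pair $(i,j)$ with $i \ne j$ is declared parallel at level $\delta$ precisely when $\wh S_q(i,j) \le 2\delta$, and $\wh H(\delta)$ collects every index $i \in [p]$ for which there exists some $j \ne i$ with $\wh S_q(i,j) \le 2\delta$. Introducing the ``parallel graph'' $\wh{\mathcal{G}}(\delta)$ on the vertex set $[p]$ whose edges are exactly these declared-parallel pairs, the set $\wh H(\delta)$ is precisely the set of non-isolated vertices of $\wh{\mathcal{G}}(\delta)$.

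The key step is then the inclusion of edge sets. Fix $\delta \le \delta'$, so that $2\delta \le 2\delta'$. For any pair $(i,j)$ that is an edge of $\wh{\mathcal{G}}(\delta)$ we have $\wh S_q(i,j) \le 2\delta \le 2\delta'$, whence $(i,j)$ is also an edge of $\wh{\mathcal{G}}(\delta')$; thus the edge set grows monotonically in $\delta$. Because a non-isolated vertex of $\wh{\mathcal{G}}(\delta)$ remains non-isolated in the larger graph $\wh{\mathcal{G}}(\delta')$, we obtain the set inclusion $\wh H(\delta) \subseteq \wh H(\delta')$, and the cardinality bound $|\wh H(\delta)| \le |\wh H(\delta')|$ is immediate.

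The only point requiring care, and the step I expect to be the main obstacle, is to verify that membership of an index in $\wh H$ depends on the fixed scores only through the pairwise threshold, and is not altered by the way Algorithm \ref{alg_I} subsequently groups the detected pairs into the partition $\wh{\mathcal{H}}$. I would therefore check directly from the pseudocode that $\wh H$ is exactly the vertex set touched by at least one declared-parallel edge, equivalently the union of the non-singleton groups of $\wh{\mathcal{H}}$, so that no index can leave $\wh H$ as $\delta$ increases even if the grouping of indices into components is reorganized. Once this bookkeeping is confirmed, the argument above closes the proof.
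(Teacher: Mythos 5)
Your proof is correct and takes essentially the same route as the paper: both rest on the observation that the scores $\wh S_q(i,j)$ are fixed while the threshold $2\delta$ is monotone, yielding the set inclusion $\wh H(\delta)\subseteq \wh H(\delta')$ and hence the cardinality bound. Your graph-theoretic framing and the explicit check that the \textsc{Merge} step cannot remove an index from $\wh H$ are just careful bookkeeping of what the paper's three-line proof treats as immediate from the algorithm.
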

	\begin{proof}
		For a given $\delta>0$, suppose $i\in \wh H(\delta)$. Then, from Algorithm \ref{alg_I}, there exists some $j\ne i$ such that $\wh S_q(i,j) \le 2\delta$.  This implies $\wh S_q(i,j) \le 2\delta'$ for any $\delta' \ge \delta$. Hence, $i\in \wh H(\delta')$, as desired. 
	\end{proof}


	{\begin{algorithm}[ht]
			\caption{Estimate the parallel row index set $H$ by $\wh H$ and its  partition  $\H$ by $\wh \H$}\label{alg_I}
			\begin{algorithmic}[1]
				\Require Matrix $\wh R\in \RR^{p\times p}$, a positive integer $q\ge 1$, a tuning parameter $\delta > 0$.
				\Procedure  {Parallel}{$\wh R$, $\delta$}
				\State $\wh \H \gets \emptyset$.
				\For {$i=1,\ldots, p-1$} 
				\For {$j=i+1,\ldots, p$}
				\State Compute $\wh S_q(i,j)$ by solving (\ref{est_score_ell_q})
				\If {$\wh S_q(i,j) \le 2\delta$} 
				\State  $\wh\H \gets$ \textsc{Merge($\{i,j\},\ \wh \H$)}
				\EndIf	
				\EndFor
				\EndFor
				\State \Return $\wh \H$, $\wh G=|\wh \H|$ and $\wh H = \cup_{k} \wh \H_k$
				\EndProcedure
				\Statex
				\Function {Merge}{$S$, $\wh \H$}
				\State Add = True
				\ForAll {$g \in \wh \H$}
				\Comment $\wh \H$ is a collection of sets
				\If {$g \cap S\ne \emptyset$} 
				\State  $g\gets g\cup S$
				\Comment Replace $g\in \wh \H$ by $g\cup S$
				\State Add = False
				\EndIf
				\EndFor
				\If {Add}
				\State $\wh \H = \wh \H \cup \{S\}$\Comment add $S$  in $\wh \H$
				\EndIf
				\State \Return $\wh \H$
				\EndFunction
			\end{algorithmic}
	\end{algorithm}}

	\bigskip

	Since $\wh G$ estimates $G$ and $G$ is typically larger than $K$, unless there are exactly $K$ sets of parallel rows in $A$,  we propose the following procedure for estimating  $K$ by using the output $\wh \H = \{\wh H_1, \ldots, \wh H_{\wh G}\}$ of Algorithm \ref{alg_I}. 
	It relies on the observation that $[B\C B^\T]_{LL}$ has rank $K$ where $L=\{\ell_1,\ldots, \ell_G\}$ with $\ell_k\in H_k$ for each $1\le k\le G$. 
	\begin{itemize}
		\item For each $k\in [\wh G]$,  we select one representative variable index from $\wh H_k$ as 
		\begin{equation}\label{def_l_hat}
		\ell_k := \arg\max_{i\in \wh H_k} \bigl\|\wh R_{i,\setminus\{i\}}\bigr\|_q,
		\end{equation}
	    and
		create the set of representative indices 
		\begin{equation}\label{def_L_hat}
		\wh L := \bigl\{\wh \ell_1, \ldots, \wh \ell_{\wh G}\bigr\}.
		\end{equation}
		\item 	Next, motivated by (\ref{eq_MI_offdiag}) and (\ref{ident_MI}) in the proof of Theorem \ref{thm_partial_ident}, we propose to estimate  the submatrix $M_{\wh L\wh L}$ of $M := B \C B^\T$ by 
		\begin{equation}\label{est_M_LL_off_diag}
		\wh M_{ij} = \wh R_{ij},\quad \forall i,j\in \wh H,~ i\ne j,
		\end{equation}
		and
		\begin{eqnarray}\label{est_M_LL_diag}
        	\wh M_{ii}  &=& |\wh R_{i\wh j}|~  {\|\wh R_{i,\setminus\{i,\wh j\}}\|_q  \over\|\wh R_{\wh j,\setminus\{i,\wh j\}}\|_q },\qquad \forall i\in \wh H_k, ~ k\in[\wh G],\\
        	\wh j &=& \arg\min_{\ell \in \wh I_k\setminus \{i\}} \wh S_q(i,\ell).
        	\nonumber
    	\end{eqnarray}
	Instead of choosing $\wh j$ for each $i\in \wh H_k$ as above, we could alternatively estimate $\wh M_{ii}$ via (\ref{est_M_LL_diag}) by averaging over $j \in \wh I_k \setminus \{i\}$. Our numerical experiments indicate that these two procedures have  similar performance.
	
		\item	Finally, we determine 
		the approximate rank $\wh K$ of the matrix $\wh M_{\wh L \wh L}$ from (\ref{est_M_LL_off_diag}) -- (\ref{est_M_LL_diag}) 
		by  
		\begin{equation}\label{est_K}
		\wh K := \max \left\{ k \in [ \wh G]~:~
		\lambda_k(\wh M_{\wh L\wh L}) \ge \mu 
		\right\}
		\end{equation}
		for some tuning parameter $\mu > 0$.\\
		
	\end{itemize}

	\subsection{Statistical guarantees for   $\wh H$, $\wh \H$ and $\wh K$}\label{sec_theory_H_K}

	We will assume that the feature $X$ is a sub-Gaussian random vector. 
	Recall that
	a centered random vector $X\in \RR^d$ is $\gamma$--sub-Gaussian if $\EE[\exp(u^\T X)] \le \exp(\|u\|_2^2\gamma^2/2)$ for any fixed $u\in \RR^d$.
	The quantity $\gamma$ is called the sub-Gaussian constant.
	
	\begin{ass}\label{ass_distr}
	    There exists a constant $\gamma$ such that $\Sigma^{-1/2} X$ is $\gamma$--sub-Gaussian.\footnote{Under model (\ref{model}), if there exist  constants $\gamma_Z,\gamma_E > 0$, such that
		$\C^{-1/2}Z$ and
		$\Sigma_E^{-1/2}E$
		are sub-Gaussian random vectors with sub-Gaussian constants 
		$\gamma_Z$ and $\gamma_E$, respectively, then $\Sigma^{-1/2} X$ is $\gamma$--sub-Gaussian with $\gamma = \max\{\gamma_Z, \gamma_E$\}.}
	\end{ass}

	The only tuning parameter in Algorithm \ref{alg_I} is $\delta$ with theoretical order given by 
	\begin{equation}\label{delta_n}
	\delta_n :=  c(\gamma)\sqrt{\log (p\vee n) / n}
	\end{equation}
	for some constant $c(\gamma)$ depending on $\gamma$ only.  $\delta_n$ is a 
	key quantity that controls   the   deviation  $\wh R$ from  $R$. 
	Indeed, 	under Assumption \ref{ass_distr} and $\log p \le n$, Lemma \ref{lem_dev_R} in Appendix \ref{app_aux_lem} shows that, with probability $1-4/(p\vee n)$, the   event  
	\begin{equation}\label{event:E}
	\E := \left\{\max_{1\le i,j\le p}|\wh R_{ij}-R_{ij}| \le \delta_n\right\}
	\end{equation}
	holds.  
	Throughout the rest of the paper, we make the blanket assumption that $\log p \le n$. 
	
	
	The following theorem provides the uniform deviation bounds for $\wh S_q(i,j) - S_q(i,j)$ over all $(i,j)\in [p]$ for any $1\le q\le \i$. Its proof  is deferred to Appendix \ref{app_proof_thm_score_q}. At this point, it is useful to discuss  the value of $q$ in the criterion $\wh S_q$ used in Algorithm \ref{alg_I}. We found in our simulations that $\wh S_q$ with $q=2$ performs well in terms of statistical accuracy and computational speed, the latter due to its closed form. While we present our conditions and statements in terms of a general, fixed $q\ge 1$, our preferred choice is $q=2$.
	
	\begin{thm}\label{thm_score_ell_q}
		On the event $\E$, one has, for any $1\le q\le \i$,
		\begin{alignat*}{2}
		\max_{1\le i,j\le p}\left|\wh S_q (i,j) - S_q(i,j)\right| ~\le~ 2\delta_n.
		\end{alignat*}
		If, in addition,  model (\ref{model}) and Assumption \ref{ass_parallel} hold , one has
		\begin{alignat*}{2}
		&\wh S_q(i,j) ~\le~ 2\delta_n, \qquad &&\textrm{for all} \quad  i\groupH j;\\
		&\wh S_q(i,j) ~\ge~ \max\left\{0,~ S_q(i,j) - 2\delta_{n}\right\},\qquad &&\textrm{for all}  \quad  i,j \in [p]
		\end{alignat*}
		for any $1\le q\le \i$.
	\end{thm}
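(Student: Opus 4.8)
The plan is to prove the three claims in the order they are stated, deriving the second and third from the first. The heart of the matter is the uniform bound $\max_{i,j}|\wh S_q(i,j)-S_q(i,j)|\le 2\delta_n$ on the event $\E$, so I would establish that first and then read off the two consequences.

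For the uniform bound, the key observation is that $S_q(i,j)$ and $\wh S_q(i,j)$ are, up to the common normalizing factor $(p-2)^{-1/q}$, the optima of the \emph{same} minimization problem $\min_{\|(a,b)\|_\i=1}\|a\, u_i+b\, u_j\|_q$, evaluated at the two different data vectors: the population rows $u_i=R_{i,\nij}$, $u_j=R_{j,\nij}$ for $S_q$, and the empirical rows $\wh u_i=\wh R_{i,\nij}$, $\wh u_j=\wh R_{j,\nij}$ for $\wh S_q$. First I would invoke the elementary stability-of-the-minimum fact: for a fixed feasible $(a,b)$, the map $(u,v)\mapsto \|a u+b v\|_q$ is $\max(|a|,|b|)\le 1$-Lipschitz (in the appropriate sense) with respect to perturbing $u$ and $v$, so that
\begin{equation*}
\bigl|\,\|a\wh u_i+b\wh u_j\|_q-\|a u_i+b u_j\|_q\,\bigr|\le |a|\,\|\wh u_i-u_i\|_q+|b|\,\|\wh u_j-u_j\|_q.
\end{equation*}
Taking the infimum over the feasible set on both sides and using that the infimum is $1$-Lipschitz in its parametrized objective gives
\begin{equation*}
\bigl|\wh S_q(i,j)-S_q(i,j)\bigr|\le (p-2)^{-1/q}\bigl(\|\wh R_{i,\nij}-R_{i,\nij}\|_q+\|\wh R_{j,\nij}-R_{j,\nij}\|_q\bigr),
\end{equation*}
since $|a|,|b|\le 1$ on the constraint $\|(a,b)\|_\i=1$.

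The remaining step is to control each $\ell_q$ discrepancy uniformly. On the event $\E$ every entry of $\wh R-R$ is bounded by $\delta_n$, so for a vector of length $p-2$ one has $\|\wh R_{i,\nij}-R_{i,\nij}\|_q\le (p-2)^{1/q}\delta_n$ for every finite $q\ge 1$, and for $q=\i$ one has $\|\cdot\|_\i\le\delta_n\le(p-2)^{0}\delta_n$; in all cases the factor $(p-2)^{-1/q}$ cancels the length factor, yielding $(p-2)^{-1/q}\|\wh R_{i,\nij}-R_{i,\nij}\|_q\le\delta_n$. Summing the two terms delivers the uniform bound $2\delta_n$, valid for all pairs and for every $1\le q\le\i$; this is precisely where the normalizing constant $(p-2)^{-1/q}$ earns its place.

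The two consequences then follow immediately under model (\ref{model}) and Assumption \ref{ass_parallel}. If $i\groupH j$, Proposition \ref{prop_score} gives $S_q(i,j)=0$, hence $\wh S_q(i,j)\le |\wh S_q(i,j)-S_q(i,j)|\le 2\delta_n$ on $\E$. For the last inequality, nonnegativity of $\wh S_q$ gives $\wh S_q(i,j)\ge 0$, while the uniform bound gives $\wh S_q(i,j)\ge S_q(i,j)-2\delta_n$, and combining the two yields $\wh S_q(i,j)\ge\max\{0,\,S_q(i,j)-2\delta_n\}$ for all $i,j\in[p]$. I expect the only genuinely delicate point to be the handling of the $q=\i$ endpoint in the length-versus-normalization cancellation, and the care needed to push the triangle/Lipschitz inequality through the infimum uniformly in $(a,b)$; the rest is bookkeeping on the event $\E$ guaranteed by Lemma \ref{lem_dev_R}.
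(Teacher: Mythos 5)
Your proposal is correct and follows essentially the same route as the paper's own proof: the paper establishes the uniform bound by substituting the population minimizer $(v_1^*,v_2^*)$ into the empirical objective and vice versa (which is exactly the proof of your ``stability of the infimum'' fact), then controls $\|\wh R_{i,\nij}-R_{i,\nij}\|_q\le (p-2)^{1/q}\delta_n$ on $\E$ via Lemma \ref{lem_dev_R}, and derives the two consequences from $S_q(i,j)=0$ for $i\groupH j$ (Proposition \ref{prop_score}) and nonnegativity of $\wh S_q$. The only cosmetic difference is that you invoke the infimum-stability lemma abstractly while the paper writes out the two-sided substitution explicitly.
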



	Algorithm \ref{alg_I} with $\delta = \delta_n$ selects those indices $i,j\in \wh H$ for which $\wh S_q(i,j)\le 2 \delta_n$. On the event $\E$, Theorem \ref{thm_score_ell_q} tells us that
	\begin{equation}\label{eq_H_twosides}
	 H	~\subseteq ~\wh H~ \subseteq  ~\left\{ i \in [p]:~ S_q(i,j) \le 4\delta_n~ \text{for some $j\ne i$} \right \}.
	\end{equation}
	Hence, with probability at least $1-4/(p\vee n)$, $\wh H$ includes all parallel rows in $H$ and 
	may mistakenly include  near-parallel rows  corresponding to 
	 $S_q(i,j)< 4\delta_n$.  Note that this holds without imposing any signal condition. 
	
	On the other hand, the partition  $\wh \H$, however, may not include all the groups that make up the 
	 partition $\H$. For instance, there may be two distinct groups of parallel rows with weak signals that get merged, while  some subgroups of parallel rows may be enlarged by a few near-parallel rows. Nevertheless, Theorem \ref{thm_score_ell_q} immediately implies that
	  \begin{equation}\label{cond_signal_H}
	  	S_q(i,j) > 4\delta_n, \quad \text{for all } i\not\groupH j
	  \end{equation} 
	 is a sufficient condition for consistent estimation of both $H$ and $\H$, as summarized in the following corollary. 
	 
	 \begin{cor}\label{cor_H}
		Under model (\ref{model}) and Assumption \ref{ass_parallel}, assume that (\ref{cond_signal_H}) holds.
	Then, on the event $\E$,  the output $\wh G$ and $\wh \H = \{\wh H_1, \ldots, \wh H_{\wh G}\}$ from Algorithm \ref{alg_I} with $\delta = \delta_n$ satisfy: 
	$\wh G = G$, $\wh H = H$ and  $\wh H_k = H_{\pi(k)}$ for all $1\le k\le G$, for some permutation $\pi: [G] \to [G]$.
	
	\end{cor}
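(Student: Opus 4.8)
The plan is to reduce the corollary to two deterministic facts that hold on the event $\E$: first, that the pairwise test used by Algorithm \ref{alg_I} is exactly correct, and second, that the \textsc{Merge} routine then reconstructs the true partition.

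For the first fact I would establish the exact equivalence
\[
\wh S_q(i,j) \le 2\delta_n \quad \iff \quad i \groupH j.
\]
The "$\Leftarrow$" direction is immediate: if $i \groupH j$, then Theorem \ref{thm_score_ell_q} gives $\wh S_q(i,j) \le 2\delta_n$. For "$\Rightarrow$" I would argue by contraposition: if $i \not\groupH j$, the signal condition (\ref{cond_signal_H}) gives $S_q(i,j) > 4\delta_n$, and the lower bound $\wh S_q(i,j) \ge S_q(i,j) - 2\delta_n$ from Theorem \ref{thm_score_ell_q} then forces $\wh S_q(i,j) > 2\delta_n$. I would stress that this covers every pair that is not co-grouped, including pairs with one or both indices outside $H$, since such pairs automatically satisfy $i \not\groupH j$. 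Consequently the set of pairs flagged by the algorithm is exactly $\{(i,j): i \groupH j\}$: within each true group $H_k$ every pair is flagged (a clique), while across distinct groups, and between any index and one lying outside $H$, no pair is flagged.

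Given this exact edge structure, the second step is to verify that \textsc{Merge} outputs precisely $\{H_1,\dots,H_G\}$. The subtle point, which I expect to be the main obstacle, is that \textsc{Merge} is not a general connected-components routine: were a newly flagged pair to bridge two previously created groups, both would absorb it and the output would contain overlapping sets rather than their union. I would rule this out using the clique-plus-lexicographic structure. Fix $k$ and put $m_k = \min H_k$. Any flagged pair meeting $H_k$ has both indices in $H_k$, hence first coordinate at least $m_k$; therefore the block of flagged pairs $(m_k, j)$ with $j \in H_k\setminus\{m_k\}$ (all present, by the clique property) is processed before any other within-$H_k$ pair, and before it no element of $H_k$ has been touched. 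Thus the first such pair creates a fresh group containing $m_k$, each subsequent $(m_k, j)$ merely appends $j$ to that single group, and after the block is exhausted the group equals $H_k$. Every remaining within-$H_k$ pair then already lies inside this group and triggers no change, while indices outside $H$ are never flagged and never enter any group.

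Finally I would assemble the pieces. Distinct true groups are disjoint and carry no cross-edges, so the argument runs independently for each $k$, yielding $\wh\H = \{H_1,\dots,H_G\}$ as an unordered collection. This gives $\wh G = G$, $\wh H = \bigcup_k \wh H_k = \bigcup_k H_k = H$, and $\wh H_k = H_{\pi(k)}$ for the permutation $\pi$ recording the order in which the scan produces the groups. The entire argument is deterministic on $\E$, whose probability $1 - 4/(p\vee n)$ is already guaranteed by Lemma \ref{lem_dev_R}, so no further probabilistic input is needed.
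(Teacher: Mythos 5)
Your proof is correct and follows essentially the same route as the paper, which treats the corollary as an immediate consequence of Theorem \ref{thm_score_ell_q} combined with the signal condition (\ref{cond_signal_H}): on the event $\E$ the pairs flagged by Algorithm \ref{alg_I} are exactly those with $i \groupH j$. Your additional verification that the \textsc{Merge} routine, given this exact clique structure and the lexicographic scan, rebuilds the partition without ever creating two overlapping groups is a detail the paper leaves implicit, but it is consistent with, and does not alter, the paper's argument.
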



 The following proposition states the explicit rate of the tuning parameter $\mu$ under which we provide theoretical guarantees for $\wh K$ as an estimator of $K$, for any $q\ge 1$.  For any $L = \{\ell_1,\ldots,\ell_G\}$ with $\ell_k\in H_k$ for all $k\in [G]$, define
	\begin{equation}\label{def_cL}
	 \ul c(L) :=\lambda_K(B_{L\sbt}\C B_{L\sbt}^\T), \qquad \ol c(L) :=\lambda_1(B_{L\sbt}\C B_{L\sbt}^\T).
	\end{equation}
	
	\begin{prop}\label{prop_K_parallel}
		Under model (\ref{model}),  Assumptions \ref{ass_parallel} \& \ref{ass_distr} and condition (\ref{cond_signal_H}), suppose there exist some constant $c>0$ such that 
		\begin{equation}\label{cond_regularity_parallel}
		  \|R_{i,\nij}\|_q \ge c (p-2)^{1/q},\quad \forall ~ i\groupH j.
		\end{equation}
		For  
		$    \mu = 
		C(\sqrt{\wh G\delta_n^2}  + \wh G\delta_n^2) 
		$
		in (\ref{est_K}), and  for some  $C>0$ depending on $\ol c(\wh L)$,
		we have 
		\begin{equation*}
		     \PP\{\wh K \le  K\} \ge 1 - c'(n\vee p)^{-c''}.
		\end{equation*}
		If, additionally, $\max_L[\ol c(L) / \ul c(L)] G\delta_n^2\le c_0$ for some sufficiently small constant $c_0>0$,  we further have
		\begin{equation*}
		\PP\{\wh K = K\}\ge 1 - c'(n\vee p)^{-c''}. 
		\end{equation*}
	\end{prop}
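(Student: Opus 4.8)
The plan is to reduce the statement to a Weyl--inequality argument on the small $\wh G\times \wh G$ matrix $\wh M_{\wh L\wh L}$. First I would condition on the event $\E$ of (\ref{event:E}): on $\E$, Corollary \ref{cor_H} together with the signal condition (\ref{cond_signal_H}) guarantees that Algorithm \ref{alg_I} recovers the partition exactly, so $\wh G = G$, $\wh H = H$, and $\wh\H = \H$ up to relabeling. Consequently the selected representatives $\wh L = \{\wh\ell_1,\ldots,\wh\ell_{\wh G}\}$ form an admissible representative set, i.e. $\wh L = L$ for some $L = \{\ell_1,\ldots,\ell_G\}$ with $\ell_k\in H_k$. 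The entire argument then lives on this event, whose complement has probability $O((n\vee p)^{-1})$.

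Next I would record the two population facts that make the thresholding work. Because the rows of $B$ within each group $H_k$ are parallel, one representative per group spans the same row space as $B_{H\sbt}$; hence $\rank(B_{L\sbt}) = \rank(B_{H\sbt}) = K$ and $M_{LL} = B_{L\sbt}\C B_{L\sbt}^\T$ has rank exactly $K$, i.e. $\lambda_K(M_{LL}) = \ul c(L) > 0$ while $\lambda_{K+1}(M_{LL}) = 0$. Second, using the identities from the proof of Theorem \ref{thm_partial_ident}, for $i,j$ in the same group the off-diagonal entries satisfy $M_{ij} = R_{ij}$ and the diagonal obeys $M_{ii} = |R_{ij}|\,\|R_{i,\nij}\|_q / \|R_{j,\nij}\|_q$. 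This is precisely the quantity plugged in by (\ref{est_M_LL_off_diag})--(\ref{est_M_LL_diag}), so $\wh M_{\wh L\wh L}$ is a genuine plug-in estimator of $M_{LL}$.

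The crux, and the step I expect to be the main obstacle, is bounding $\|\wh M_{\wh L\wh L} - M_{LL}\|_{\op}$ at the rate $\sqrt{\wh G}\,\delta_n + \wh G\delta_n^2$. I would split the deviation into its off-diagonal and diagonal parts. For the off-diagonal part the entries are $\wh R_{ij} - R_{ij}$, and the entrywise bound on $\E$ only yields $\wh G\delta_n$ in operator norm; the gain to $\sqrt{\wh G}\,\delta_n$ must come from an operator-norm deviation bound for the restricted sample correlation matrix $\wh R_{LL}$, applied uniformly over the admissible representative sets (one index per recovered group). This uniform operator-norm concentration --- essentially the $G$-dimensional analogue of the bounds of Appendix \ref{app_concentration_R}, with the union bound over representative choices absorbed into the logarithmic factor of $\delta_n$ --- is the hard part. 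For the diagonal part I would use that each $\wh M_{ii}$ is a product of $|\wh R_{i\wh j}|$ and a ratio of $\ell_q$-norms; the regularity condition (\ref{cond_regularity_parallel}) bounds the denominator below by $c(p-2)^{1/q}$, so both factors are stable under the $\delta_n$-perturbation of $\wh R$, giving $|\wh M_{ii}-M_{ii}| = O(\delta_n)$ plus second-order $O(G\delta_n^2)$ terms; being diagonal, its operator norm is the entrywise maximum. Combining, $\|\wh M_{\wh L\wh L} - M_{LL}\|_{\op} \le C(\sqrt{\wh G}\,\delta_n + \wh G\delta_n^2)$ with $C$ depending on $\ol c(\wh L)$.

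Finally I would apply Weyl's inequality to the $G\times G$ matrices. Since $\lambda_{K+1}(M_{LL}) = 0$, we get $\lambda_{K+1}(\wh M_{\wh L\wh L}) \le \|\wh M_{\wh L\wh L}-M_{LL}\|_{\op} \le \mu$ for the stated $\mu$, so no eigenvalue past the $K$th crosses the threshold and $\wh K \le K$; this needs no signal condition beyond (\ref{cond_signal_H}). For the exact statement $\wh K = K$, I additionally need $\lambda_K(\wh M_{\wh L\wh L}) \ge \mu$, and Weyl gives $\lambda_K(\wh M_{\wh L\wh L}) \ge \ul c(L) - \|\wh M_{\wh L\wh L}-M_{LL}\|_{\op}$. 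The extra hypothesis $\max_L [\ol c(L)/\ul c(L)]\,G\delta_n^2 \le c_0$ forces the signal $\ul c(L)$ to dominate the deviation bound, so that $\ul c(L) - \|\wh M_{\wh L\wh L}-M_{LL}\|_{\op} > \mu$, closing the argument. The stated probability bounds then follow from $\PP(\E) \ge 1 - 4/(p\vee n)$ together with the probability controlling the uniform operator-norm concentration step.
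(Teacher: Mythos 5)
Your proposal is correct and follows essentially the same route as the paper: on the event $\E$ (intersected with a uniform concentration event), Corollary \ref{cor_H} gives exact recovery of $H$ and $\H$, the operator-norm bound $\|\wh M_{\wh L\wh L}-M_{\wh L\wh L}\|_{\op}\lesssim \ol c(L)(\sqrt{G}\delta_n+G\delta_n^2)$ is established uniformly over admissible representative sets $L$ (the paper's Lemma \ref{lem_op_M_H}, built from the restricted sample-correlation concentration of Lemma \ref{lem_op_R_SS} plus a union bound, with the diagonal correction controlled entrywise via condition (\ref{cond_regularity_parallel})), and then Weyl's inequality is applied twice, to $\lambda_{K+1}$ for $\wh K\le K$ and to $\lambda_K$ for $\wh K\ge K$. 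Your split into off-diagonal and diagonal parts is equivalent to the paper's split $\wh M-M=(\wh R-R)-(\wh\Gamma-\Gamma)$, since $\wh M_{ii}-M_{ii}=-(\wh\Gamma_{ii}-\Gamma_{ii})$, so there is no substantive difference.
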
	
	
	The proof of Proposition \ref{prop_K_parallel} is deferred to Appendix \ref{app_proof_prop_K_parallel}. 
	We first prove that 
	$\wh K \le K$  always holds on the event $\E$ with the specified choice of $\mu$. Proving consistency  $\wh K = K$ requires $\mu$ to be sufficiently small so that $\mu<\ul c(\wh L)$,   which is guaranteed by $\max_L[\ol c(L) / \ul c(L)] G\delta_n^2=o(1)$. When both $\ul c(L)$ and $\ol c(L)$ are bounded away from $0$ and $\i$, consistency only requires $G\delta_n^2 = o(1)$, that is, we allow $G=|\H|$ to grow but no faster than $O(n/\log (n\vee p))$. 
	The data-driven choice of the leading constant of $\mu$ is discussed in Section \ref{sec_cv}. Condition (\ref{cond_regularity_parallel}) is a mild regularity condition. For instance, it requires $\min_{i \in H}R_{ii} \ge c$ for  $q = \i$. Sufficient conditions of (\ref{cond_regularity_parallel}) for $q = 2$ are provided in Proposition \ref{prop_K} of Section \ref{sec_theory_I}.

	\section{Identifiability and recovery under canonical parametrization}
	\label{sec_overlap}

    As argued in Section \ref{sec_ident}, although the very general Assumption \ref{ass_parallel} is sufficient for identifying $H$, its partition $\mathcal{H}$ and the dimension of the model, it only determines $A$ up to rotations. Considering  the parametrization provided by Assumption \ref{ass_I} is a first step towards the unique determination of $A$. If $I$ is known, one can use the results of Section \ref{sec_ident_full} below to identify $A$. However, as $I$ is also unknown, Assumption \ref{ass_I}  is not sufficient for identifying $I$, hence neither for $A$. In Section \ref{sec_ident_full} we provide another assumption that, combined with  Assumption \ref{ass_I}, is provably sufficient for determining $I$, and therefore $A$, uniquely.  As the identifiability proofs are constructive, they naturally lead to the procedure of estimating the pure variable  index set, as stated in Section \ref{sec_est_pure}. Section \ref{sec_theory_I} provides the statistical guarantees for its output. Finally, the estimation of the loading matrix $A$ and its statistical guarantees are provided in Section \ref{sec_est_A}. 

    We begin by introducing the notation employed in the following sections. 
    For model (\ref{model}) satisfying Assumption \ref{ass_I}, we let  $J := [p]\setminus I$ be the index set corresponding to non-pure variables. We  introduce the index set corresponding to  parallel, but  non-pure,  rows of $A$ as
	\begin{equation}\label{def_J1}
	J_1 :=  \left\{
	j\in J:  A_{j\sbt} \parallel A_{\ell \sbt} \text{ for some }\ell \in J\setminus\{j\}
	\right\},
	\end{equation}
	and its partition $\J_1 := \{J_1^1, \ldots, J_1^N\}$. 
	With this notation,  the index  set $H$ of {\it all} parallel rows of $A$ and its partition, decompose as 
	\[
	H = I\cup J_1\quad\textrm{ and }\quad   \H = \{H_1,\ldots, H_G\} = \{I_1, \ldots, I_K, J_1^1, \ldots, J_1^N\}.
	\]
	The total number of groups in $\H$ is $G= K+N$.


		\subsection{Model identifiability under a canonical parametrization}\label{sec_ident_full}

    Recall that $H$ and $\H$, as well as $G = |\H|$ and  $K$, are  identifiable under Assumption \ref{ass_parallel}, hence under Assumption \ref{ass_I}. 
    To begin discussing when  $I$ is also identifiable, we distinguish between two cases $J_1 = \emptyset$ and $J_1 \neq \emptyset$, by simply comparing $G$ with $K$.

    If the only parallel rows in $A$ correspond to pure variables ($J_1 = \emptyset$ or, equivalently, $G = K$), then $I$ is identifiable as an immediate consequence of Theorem  \ref{thm_partial_ident}. Moreover, $A$ can be shown to be identifiable. We summarize this in the following theorem, proved in Appendix \ref{app_proof_thm_ident_A}. Although we state Theorem \ref{thm_ident_I} in terms of the $S_2$ score function, which we will later use for estimation purposes, it holds for any of the $S_q$ scores defined in Section \ref{sec_ident}. Recall that $i \group j$ means $i,j\in I_k$ for some $k\in [K]$.
	\begin{thm}\label{thm_ident_I}
		Under model (\ref{model}) and Assumption \ref{ass_I}, assume $J_1 = \emptyset$. Then 
		\[
				i\group j\qquad \iff \qquad S_2(i,j) = 0.
		\]
		Therefore, $I$ is identifiable and its partition $\I$ is identifiable up to a group permutation. Furthermore, the matrix  $A$ is identifiable up to a $K\times K$ signed permutation matrix.
	\end{thm}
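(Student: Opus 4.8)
The plan is to separate the statement into the score-function equivalence, which is immediate, and the identifiability of $A$, which is the substantive part. First I would dispose of the equivalence $i\group j \iff S_2(i,j)=0$. Since $J_1=\emptyset$, no non-pure rows of $A$ are parallel, so the parallel-row index set reduces to $H=I$ and its partition to $\H=\I$; in particular $i\groupH j$ and $i\group j$ describe the same relation. The claimed equivalence is then exactly Proposition \ref{prop_score_2} specialized to this case, and Theorem \ref{thm_partial_ident} immediately gives that $I$, its partition $\I$ (up to relabeling the groups), and $K$ are identifiable from $R$.

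The core of the proof is identifiability of the loadings. Write $M:=A\C A^\T$, whose off-diagonal entries coincide with those of $\Sigma$ and are therefore identifiable, and recall that for $i\in I_k$ we have $A_{i\sbt}=A_{ik}\e_k^\T$, so that $M_{ij}=A_{ik}A_{jk}$ for $i,j\in I_k$ while $M_{ii}=A_{ik}^2$. The crux is to pin down the magnitudes $|A_{ik}|$, i.e.\ the unobserved diagonal of $M_{II}$. When $|I_k|\ge 3$ this is immediate, since $A_{ik}^2=M_{ij}M_{ij'}/M_{jj'}$ for distinct $j,j'\in I_k$; the delicate case is $|I_k|=2$, where the within-group block carries only the product $A_{ik}A_{jk}$. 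Here I would invoke Proposition \ref{prop_I}: for $i,j\in I_k$ the modified rows $R_{i,\setminus\{i,j\}}$ and $R_{j,\setminus\{i,j\}}$ are parallel, with a proportionality constant recoverable from off-diagonal correlations, which is exactly the reconstruction of the diagonal of $[B\C B^\T]_{II}$ performed inside the proof of Theorem \ref{thm_partial_ident}. Since $B=D_\Sigma^{-1/2}A$, this yields $[B\C B^\T]_{ii}=A_{ik}^2/\Sigma_{ii}$, and multiplying by the directly observed variance $\Sigma_{ii}$ pins down $|A_{ik}|$ for every $i\in I$.

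Given the magnitudes, the remaining indeterminacies are resolved by elementary linear algebra. Within each group the signs are fixed up to a single global sign, because $\sgn(A_{ik}A_{jk})=\sgn(\Sigma_{ij})$ for $i,j\in I_k$; choosing the sign of one representative $\ell_k\in I_k$ determines all signs in the group, and the freedom in this choice is precisely the ``signed'' part of the signed permutation, while the per-group relabeling supplies the permutation. With $A_{I\sbt}$ fixed this way, the entries of $\C$ follow from $\C_{kl}=M_{\ell_k\ell_l}/(A_{\ell_k k}A_{\ell_l l})$ for $k\ne l$ together with $\C_{kk}=1$, so $\C$ inherits the same sign convention and is invertible by the rank assumption. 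Finally, for a non-pure index $j\in J$ the identity $\Sigma_{j\ell_k}=M_{j\ell_k}=A_{\ell_k k}\,(\C A_{j\sbt}^\T)_k$ recovers the full vector $\C A_{j\sbt}^\T$ coordinate by coordinate over $k\in[K]$, whence $A_{j\sbt}^\T=\C^{-1}(\C A_{j\sbt}^\T)$ is determined. Thus all of $A$ is recovered up to a $K\times K$ signed permutation.

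The step I expect to be the main obstacle is the recovery of the loading magnitudes $|A_{ik}|$ in the two-indicator case $|I_k|=2$: there the within-group data fix only the product $A_{ik}A_{jk}$, and separating the two scales rests essentially on the parallelism of the modified correlation rows in Proposition \ref{prop_I}, hence on each pure row having nonzero off-diagonal reach. Once this diagonal reconstruction of $M_{II}$ is in hand, the passage to signs, to $\C$, and to $A_{J\sbt}$ is routine, so the entire difficulty is concentrated in importing the diagonal-identification argument already established for Theorem \ref{thm_partial_ident}.
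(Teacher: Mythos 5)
Your proposal is correct and follows essentially the same route as the paper's own proof: recover $I$ and $\I$ from the score characterization (here $H=I$, $\H=\I$ since $J_1=\emptyset$), pin down the pure-variable loading magnitudes via the parallelism ratio $\|R_{i,\setminus\{i,j\}}\|_q/\|R_{j,\setminus\{i,j\}}\|_q$ combined with the off-diagonal product $R_{ij}$ (the paper's identities (\ref{ident_BI})--(\ref{ident_BI_sign}), which is exactly the diagonal reconstruction you import from Theorem \ref{thm_partial_ident}), fix signs up to a signed permutation, then recover $\C$ and the non-pure rows by inverting against the pure block. Your entry-wise formulas for $\C_{kl}$ and $A_{j\sbt}$ via group representatives, and the extra triple-product shortcut for $|I_k|\ge 3$, are only cosmetic variants of the paper's pseudo-inverse expressions (\ref{ident_C}) and $B_{j\sbt}=\C^{-1}B_{I\sbt}^+R_{Ij}$, so the two arguments coincide in substance.
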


	If $J_1 \ne \emptyset$ (or equivalently, $G>K$), the pure variable set $I$ cannot be distinguished from $J_1$, unless further structure is imposed on the model.  To strengthen this observation,  consider the following toy example.
	\begin{example}\label{example}
		Let $ K = 2$ and $ p = 4$. Consider 
		\[
		A = \begin{bmatrix}
		a_1 & 0 \\
		a_2 & 0 \\
		0 & b_1 \\
		0 & b_2 \\
		c_1 & c_2 \\
		d c_1 & d c_2
		\end{bmatrix},\qquad \C = \begin{bmatrix}
		1 & \rho  \\ 
		\rho & 1\\
		\end{bmatrix},
		\]
		where $a_1, a_2, b_1, b_2, c_1, c_2$ and $d$ are all non-zero constants and $\rho \in [0,1)$.
		Clearly, $A$ satisfies Assumption \ref{ass_I} with $\I := \I(A) = \{\{1,2\}, \{3,4\}\}$ and $J_1 = \{5,6\}$ and, in the notation of Theorem \ref{thm_partial_ident}, $G = 3$ and $H = \{1, \ldots, 6\}$. Consider
		the following
		\[
		\wt A =  A Q  = A\begin{bmatrix}
		a_1 & 0 \\
		c_1 & c_2\\
		\end{bmatrix}^{-1} 
		= 
		\begin{bmatrix}
		1 & 0 \\
		{a_2 \over  a_1} & 0 \\
		-{c_1b_1 \over a_1c_2} & {b_1\over c_2} \\
		-{c_1b_2 \over a_1c_2} & {b_2\over c_2} \\
		0 & 1 \\
		0 & d\\
		\end{bmatrix} ,\qquad \wt\Sigma_Z = Q^{-1}\C(Q^{-1})^\T.
		\]
		Since we can always find a $2\times 2$ diagonal matrix $D$ (the diagonal elements are non-zero) such that $\diag(D \wt \Sigma_Z D) = 1$, we conclude 
		\[
		A\C A^\T = \left(\wt A D^{-1}\right)\left(D\wt \Sigma_Z D\right) \left(\wt AD^{-1}\right)^\T.
		\]
		Notice that $\wt A D^{-1}$ satisfies Assumption \ref{ass_I} with $\wt \I = \wt \I(\wt AD^{-1}) = \{\{1,2\}, \{5, 6\}\}$, $|\wt \I| = K = 2$ and $\wt J_1 = \{3,4\}$.
		Therefore, there exist two distinct matrices $A$, each with a respectively different   pure variable index set, but with the same index set $H$ of parallel rows.\\ 
	\end{example}

    Our general identifiability result, stated in Theorem \ref{thm_ident_I_post} below, allows for $J_1 \neq \emptyset$. The rationale behind its proof is the following:

{\bf Step 1.}  Show that $H$, $\H$ and $K$ can be uniquely determined, as in Theorem \ref{thm_partial_ident}. With $|\H| = G$, if $G = K$, appeal to Theorem \ref{thm_ident_I} above to complete the proof. If $G > K$,  provide a statistically meaningful criterion for selecting $K$ representative indices from $H$. There is freedom in the  choice of such a criterion, and we build one such that: (i) each representative  contains as much information as possible; (ii) representatives of  different groups are as uncorrelated as possible.  

{\bf Step 2.} Provide further conditions on $A$ under which the thus selected indices correspond to distinct pure variable indices. Then, use Proposition \ref{prop_score} to reconstruct the entire set $I$, and its partition, by the aid of the score function $S_2$ (or any $S_q$). Complete the proof by showing that $A$ is identifiable.\\

For the first step, we propose 
to select indices of variables that maximize, successively, Schur complements of appropriately defined matrices, with general form given by (\ref{def_i_k_population}). For the second step, we make 
   Assumption \ref{ass_J_prime}, which is  sufficient  for proving the following fact: if  the index $i_k$  is given by  (\ref{def_i_k_population}),  then    $X_{i_{k}}$ is indeed  a pure variable, and can be taken as the representative of its group. This is formally stated  in Lemma \ref{lem_find_L} and explained in its subsequent remark.  With a view towards estimation, we note that this selection criterion will be constructively used in Section \ref{sec_est_pure}.


    \begin{ass}\label{ass_J_prime}
    	Let $\xi_k:= \max_{i\in I_k} \|A_{i\sbt}\|_1$ be the largest loading in group $I_k$, in absolute value.
    	We have
    	\[
    	\sum_{k=1}^K {|A_{jk}| \over \xi_k} \le 1,\qquad \forall j\in J_1.
    	\]
    \end{ass}

    We note, trivially, that the assumption is automatically met  when $J_1 = \emptyset$. 
   Assumption \ref{ass_J_prime} imposes  a scaling constraint between $I$ and $J_1$, without placing any  restriction  on the rows of $A$ corresponding to $J\setminus J_1$. 
    A sufficient condition for Assumption \ref{ass_J_prime} is 
    $\|A_{i\sbt}\|_1 = \xi$ for all $i\in I$ and $\|A_{j\sbt}\|_1 \le \xi$ for all $j\in J_1$, which reduces it  to the condition employed in \cite{bing2017adaptive} with $J_1 = J$ and $\xi = 1$, and thus generalizes that work.\\

    Under Assumption \ref{ass_J_prime}, the lemma below states a systematic way of finding $K$ group representatives  by successively maximizing certain Schur complements of $\Theta := A\C A^\T$. Its proof can be found in Appendix \ref{app_proof_lem_find_L}.

    \begin{lemma}\label{lem_find_L}
    	Assume model (\ref{model}) and Assumptions \ref{ass_I} \& \ref{ass_J_prime}.  For any $1\le k\le K$, let $S_k = \{i_1, \ldots, i_{k-1}\}$ with $S_1 = \emptyset$ satisfying $i_a \in I_{\pi(a)}$ for all $1\le a\le k-1$ and some permutation $\pi: [K] \to [K]$. Then one has 
    	\begin{equation}\label{def_i_k_population}
    	i_k :=\arg\max_{j \in H} \Theta_{jj| S_k} \in I_{\pi(k)},\qquad \text{for all }1\le k\le K,
   	 \end{equation}
    	where $\Theta_{jj| S_k} = \Theta_{jj} - \Theta_{jS_k}^\T \Theta_{S_kS_k}^{-1}\Theta_{S_k j}$ and $H = I\cup J_1$.
    \end{lemma}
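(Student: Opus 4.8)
The plan is to read each Schur complement $\Theta_{jj|S_k}$, with $\Theta = A\C A^\T$, as the variance of the residual of $(AZ)_j$ after the $L_2$-projection onto the linear span of $\{(AZ)_i : i\in S_k\}$, and then to exploit the sparsity of the pure rows together with the scaling Assumption \ref{ass_J_prime}. The statement is most naturally an inductive step: I would fix a valid $S_k=\{i_1,\ldots,i_{k-1}\}$ with $i_a\in I_{\pi(a)}$ and the $\pi(a)$ distinct, set $T:=\{\pi(1),\ldots,\pi(k-1)\}\subseteq[K]$ and $T^c:=[K]\setminus T$, and show the maximizer lands in a previously unused group, which extends $\pi$.

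First I would reduce the conditioning span. Since each $i_a\in I_{\pi(a)}$ is pure, Assumption \ref{ass_I} gives $A_{i_a\sbt}=A_{i_a\pi(a)}\e_{\pi(a)}^\T$ with $A_{i_a\pi(a)}\ne 0$, so $(AZ)_{i_a}$ is a nonzero multiple of $Z_{\pi(a)}$, and the span of $\{(AZ)_i : i\in S_k\}$ coincides with $\mathrm{span}\{Z_t : t\in T\}$. Writing $\wt Z_\ell$ for the residual of $Z_\ell$ under projection onto this span (so $\wt Z_\ell=0$ for $\ell\in T$, and the residual covariance matrix of $(\wt Z_\ell)_{\ell\in T^c}$ is the Schur complement of $\C$ with respect to block $T$), I obtain the representation
\[
\Theta_{jj|S_k} = \sum_{\ell,\ell'\in T^c} A_{j\ell}A_{j\ell'}\,\Cov(\wt Z_\ell,\wt Z_{\ell'}) = \Bigl\| \sum_{\ell\in T^c} A_{j\ell}\wt Z_\ell\Bigr\|_{L_2}^2, \qquad \forall\, j\in H.
\]
This is pure $L_2$ geometry and needs no Gaussian assumption; it coincides with a conditional variance only when $Z$ is Gaussian.

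Next I would separate the two row types. For a pure row $j\in I_m$ we have $A_{j\sbt}=A_{jm}\e_m^\T$, so $\Theta_{jj|S_k}=0$ if $m\in T$, while $\Theta_{jj|S_k}=A_{jm}^2\,\mathrm{Var}(\wt Z_m)\le \xi_m^2\,\mathrm{Var}(\wt Z_m)$ if $m\in T^c$, with $\xi_m=\max_{i\in I_m}|A_{im}|$ as in Assumption \ref{ass_J_prime}; the bound is attained at the maximal-loading pure variable of $I_m$. For a non-pure row $j\in J_1$ I would combine the $L_2$ triangle inequality with Assumption \ref{ass_J_prime}:
\[
\sqrt{\Theta_{jj|S_k}} \le \sum_{\ell\in T^c}|A_{j\ell}|\sqrt{\mathrm{Var}(\wt Z_\ell)} = \sum_{\ell\in T^c}\frac{|A_{j\ell}|}{\xi_\ell}\Bigl(\xi_\ell\sqrt{\mathrm{Var}(\wt Z_\ell)}\Bigr) \le \Bigl(\max_{m\in T^c}\xi_m\sqrt{\mathrm{Var}(\wt Z_m)}\Bigr)\sum_{\ell\in[K]}\frac{|A_{j\ell}|}{\xi_\ell} \le \max_{m\in T^c}\xi_m\sqrt{\mathrm{Var}(\wt Z_m)}.
\]
The two displays together give $\max_{j\in H}\Theta_{jj|S_k}=\max_{m\in T^c}\xi_m^2\,\mathrm{Var}(\wt Z_m)$, attained at a maximal-loading pure variable in some new group $I_{\pi(k)}$ with $\pi(k)\in T^c$, closing the induction.

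The delicate point, and the step I would treat most carefully, is ruling out that a non-pure row $j\in J_1$ \emph{also} attains this maximum, since the conclusion $i_k\in I_{\pi(k)}$ requires the $\arg\max$ to be genuinely pure. I would trace equality throughout the chain: equality in the triangle inequality forces the nonzero $A_{j\ell}\wt Z_\ell$ $(\ell\in T^c)$ to be collinear, and because the residual covariance of $(\wt Z_\ell)_{\ell\in T^c}$ is positive definite (the Schur complement of the nonsingular $\C$, using $\rank(\C)=K$) these $\wt Z_\ell$ are linearly independent, so at most one $A_{j\ell}$, $\ell\in T^c$, is nonzero; equality in the final step forces $A_{j\ell}=0$ for every $\ell\in T$. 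Together these would make $A_{j\sbt}$ a multiple of a single $\e_m^\T$, i.e.\ render $j$ pure, contradicting $j\in J_1$. Hence the inequality is strict for every $j\in J_1$, the maximizer is a pure variable from an unused group, and the lemma follows.
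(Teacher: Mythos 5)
Your proof is correct, and it reaches the paper's conclusion by a recognizably different technique, although both arguments ultimately analyze the same object. The common core is the reduction showing that $\Theta_{jj|S_k}$ depends only on the set of groups $T=\{\pi(1),\ldots,\pi(k-1)\}$ already represented, and equals the quadratic form $A_{jT^c}^\T\,\C_{T^cT^c|T}\,A_{jT^c}$; the paper obtains this algebraically after the change of scale $\wt A = A D_\xi^{-1}$, $V=D_\xi\C D_\xi$ (its display (\ref{ident_Theta_jjsk})), whereas you obtain it through the Hilbert-space projection interpretation of the Schur complement. Where the two proofs genuinely diverge is in how non-pure rows are dominated. The paper splits into two cases according to whether $\wt A_{jT_k}$ vanishes, and in the harder case invokes a bespoke quadratic-form inequality (Lemma \ref{lem_quad}), $\alpha^\T M\alpha \le \|\alpha\|_1^2\max_i M_{ii} - (\|\alpha\|_1^2-\|\alpha\|_2^2)\Delta(M)$, together with $\Delta(M)\ge\lambda_{\min}(M)$ (Lemma \ref{lem_Delta}), so that strictness comes from an \emph{explicit gap} proportional to $\|\alpha\|_1^2-\|\alpha\|_2^2$ and $\lambda_{\min}(V)$. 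You instead apply the Minkowski inequality to $\bigl\|\sum_{\ell\in T^c}A_{j\ell}\wt Z_\ell\bigr\|_{L_2}$ and extract strictness from the equality case of the triangle inequality: equality would force collinearity of the summands, which the linear independence of the residuals $(\wt Z_\ell)_{\ell\in T^c}$ (positive definiteness of $\C_{T^cT^c|T}$) rules out unless $A_{j\sbt}$ has a single nonzero entry, contradicting $j\in J_1$. This unifies the paper's two cases into one argument and dispenses with the auxiliary lemmas, which is aesthetically cleaner. The trade-off is quantitative: the paper's gap inequality produces a bound of the same shape that is recycled in the finite-sample analysis (see the use of (\ref{bd_Theta_jjsk}) inside Lemma \ref{lem_signal}, where condition (\ref{ass_J1_prime}) supplies the margin $\eps$), while your equality-case argument yields strict inequality with no margin, so it suffices for the population identifiability statement but would have to be re-quantified before it could drive the estimation theory. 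Two minor points worth making explicit in a final write-up: the invertibility of $\Theta_{S_kS_k}$ (it equals $D'\C_{TT}D'$ for a nonsingular diagonal $D'$, since the selected rows are pure with nonzero loadings and $\rank(\C)=K$), and the fact that $\max_{m\in T^c}\xi_m^2\mathrm{Var}(\wt Z_m)>0$, which is what prevents ties with the zero values attained on the already-used groups.
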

   
   \begin{remark}
    The procedure in Lemma \ref{lem_find_L} is based on the following rationale which achieves the previous two goals (i) and (ii).
    Let $W_H := X_H-E_H= A_{H\sbt}Z $ for $H\subseteq [p]$ and observe that  $\Theta$ is the degenerate (rank $K$) covariance matrix of $W \in \RR^p$.
    To add intuition, if $Z$ has a multivariate normal distribution,  then  $\Theta_{H^cH^c|H} = {\rm Cov}(W_{H^c} | W_H)$. In this case, display (\ref{def_i_k_population}) in Lemma \ref{lem_find_L} becomes
    \[
    i_k = \arg\max_{j\in H} {\rm Var}(W_j | W_{S_k}),\qquad \text{for all }1\le k\le K,
    \]
    and the procedure returns  the $K$ largest {\it conditional variances} $  \text{Var}\left(W_j  | W_{i_1}, \ldots, W_{i_{k-1}}\right)$.
    Suppose we have already  selected $W_{i_1}$ and we are considering  the selection of a new index,  $i_2$. Since
    \[
    {\rm Var}(W_j | W_{i_1}) = \Theta_{jj}\left[
    1 - {\rm Corr}(W_j, W_{i_1})
    \right],
    \]
    we see that maximizing the above conditional variance retains more information (for goal (i)), while avoiding linear dependence by reducing ${\rm Corr}(W_{i_1}, W_{i_2})$ (for goal (ii)). While one can always select $K$ variables, from a given collection,  in this manner, it is Assumption  \ref{ass_J_prime} that ensures that their indices do indeed correspond to pure variables in this parametrization of the model. The de-noising step implicit in Lemma \ref{lem_find_L} is crucial for this procedure. It is made possible by the determination of the superset $H = I\cup J_1$, which in turn enables the identifiability of $\Theta$ and of its various functionals employed above, as shown in the proof of Theorem \ref{thm_partial_ident}. In Section \ref{sec_est_pure}, these arguments will be used constructively for estimation purposes.
   \end{remark}  
   
    We use the following toy example to aid our explanation. 
    \begin{example}\label{example_cond_var}  In a factor model under Assumption \ref{ass_I}, 
    	let  $K = 2$, $H = \{1, 2, 3, 4, 5, 6\}$, $\I = \{\{1, 2\}, \{3, 4\}\}$ and assume that the $W$'s, which are denoised versions of the $X$'s,  are given by   $W_1 = 1.1 Z_1$, $W_2 = 0.8Z_1$,  $W_3 = Z_2$, $W_4 = 0.5Z_2$, $W_5 = 0.2Z_1 + 0.4Z_2$, $W_6 = -0.3Z_1 -0.6 Z_2$. The coefficients are chosen such that Assumption \ref{ass_J_prime} is met.  Assume that $Z_1$ and $Z_2$ are two independent, zero mean, Gaussian random variables with $\text{Var}(Z_1) = \text{Var}(Z_2) = 1$.
    	It is easy to see that $\text{Var}(W_1) = \max_{j \in H} \text{Var}(W_j)$ and $\text{Var}(W_3|W_1) \geq \text{Var}(W_j|W_1)$ for $j\in \{5,6\}$.
    	Thus, in this example, the conditional variance of a  denoised pure variable is larger than that of a denoised non-pure variable, when conditioned on a previously determined pure variable $(W_1)$.  
    \end{example}

    Based on the procedure in Lemma \ref{lem_find_L}, both $I$ and $\I$ are identifiable, and so is the entire loading matrix $A$. We summarize these results in the theorem below. Its proof is constructive and is deferred to Appendix \ref{app_proof_thm_ident_I_post}.

    \begin{thm}\label{thm_ident_I_post}
    	Under model (\ref{model}) and Assumptions \ref{ass_I} \& \ref{ass_J_prime}, $I$ is identifiable and its partition $\I$ is identifiable up to a group permutation. Furthermore, the matrix  $A$ is identifiable up to a $K\times K$ signed permutation matrix.
    \end{thm}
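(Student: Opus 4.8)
The plan is to give a constructive proof that mirrors the two-step rationale stated above, invoking the earlier results as black boxes. First I would apply Theorem \ref{thm_partial_ident} to recover, from the correlation matrix $R$, the set $H$, its partition $\H = \{H_1, \ldots, H_G\}$, the number of groups $G$, and the latent dimension $K$. Comparing $G$ with $K$ splits the argument: if $G = K$ then $J_1 = \emptyset$ and Theorem \ref{thm_ident_I} already delivers the identifiability of $I$, of $\I$ up to a group permutation, and of $A$ up to a signed permutation, so the remaining work is confined to the case $G > K$.

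In that case the only difficulty is to single out, among the $G$ groups of $\H$, the $K$ pure groups $I_1, \ldots, I_K$. Here I would appeal to Lemma \ref{lem_find_L}, which is stated for the matrix $\Theta = A\C A^\T$. The step that needs care is that the Schur complements $\Theta_{jj \mid S_k}$, $j \in H$, with $S_k \subseteq H$, are functionals only of the sub-matrix $\Theta_{HH}$; I would therefore first record that $\Theta_{HH}$ is identifiable. Its off-diagonal entries coincide with the (identifiable) off-diagonal entries of $\Sigma = \Cov(X)$, because $\Sigma_E$ is diagonal, while its diagonal entries $\Theta_{ii}$, $i \in H$, are recovered through parallelism exactly as in the proof of Theorem \ref{thm_partial_ident} (equivalently, by rescaling the identifiable matrix $M_{HH} = (B\C B^\T)_{HH}$ by the known variances $\Sigma_{ii}$). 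With $\Theta_{HH}$ in hand, Lemma \ref{lem_find_L} produces indices $i_1, \ldots, i_K$ with $i_k \in I_{\pi(k)}$ for a permutation $\pi: [K] \to [K]$; since $\H$ is already known, each group of $\H$ that meets $\{i_1, \ldots, i_K\}$ must be a distinct pure group, and these are exactly the $K$ pure groups. Their union is $I$ and the collection of them is $\I$, both identified up to relabeling.

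Finally, to identify $A$ I would argue as in Theorem \ref{thm_ident_I}. Knowing $\I$, the rows of $A_{I\sbt}$ are, within each group $I_k$, scalar multiples of $\e_k$, so $\Theta_{II} = A_{I\sbt}\C A_{I\sbt}^\T$ factors through this diagonal pattern; selecting one anchor index per group and imposing $\diag(\C) = \bm 1$ recovers $A_{I\sbt}$ and $\C$ up to a $K \times K$ signed permutation. Because $A_{I\sbt}$ contains a $K \times K$ diagonal block and $\C$ is full rank, $A_{I\sbt}\C$ has full column rank $K$, so each non-pure row is determined uniquely by solving the linear system $\Theta_{jI} = A_{j\sbt}\C A_{I\sbt}^\T$, $j \in J$, whose right-hand side is identifiable. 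This pins down all of $A$ up to the same signed permutation.

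The main obstacle I anticipate is the identifiability of $\Theta_{HH}$, and specifically the recovery of its diagonal, which is the one place where the unknown $\Sigma_E$ would otherwise obstruct the argument and where the parallel-row structure of Assumption \ref{ass_I} must be used. Once that is in place the remainder is assembly of the cited results, with Assumption \ref{ass_J_prime} entering only implicitly, through the correctness of the selection rule in Lemma \ref{lem_find_L}.
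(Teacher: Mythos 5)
Your proposal is correct and follows essentially the same route as the paper's own proof: identify $\Theta_{HH}$ via the argument of Theorem \ref{thm_partial_ident}, invoke Lemma \ref{lem_find_L} (where Assumption \ref{ass_J_prime} enters) to locate one pure variable per factor, use the known partition $\H$ to read off $I$ and $\I$, and then recover $A$ by the arguments of Theorem \ref{thm_ident_I}. The only cosmetic difference is that you phrase the final step in terms of $\Theta$ and $A$ while the paper works with the correlation-scaled quantities $R$, $M=B\C B^\T$ and $B=D_{\Sigma}^{-1/2}A$, which are equivalent up to the identifiable scaling $D_{\Sigma}^{1/2}$.
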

    
      
    \begin{remark}[Discussion of Assumptions \ref{ass_I} and \ref{ass_J_prime}]\label{rem_one_pure}\mbox{} 
    \begin{enumerate}
        \item {\it Discussion of Assumption \ref{ass_I}: quasi-pure variables}. 
          As discussed in detail in the Introduction, one way to parametrize a factor model (\ref{model}) in which  $\rank(A) = K$ is via the errors-in-variables parametrization, which refers to loading matrices $A$ that contain a $K\times K$ diagonal sub-matrix. In the terminology of this paper, this requires the existence of {\it one} pure variable per latent factor, and it fixes $A$ uniquely, when $I$ and $\Sigma_E$ are known. From this perspective, when $I$ and $\Sigma_E$ are not known,  Assumption \ref{ass_I} requires the existence of  {\it only one additional} pure variable  per latent factor. 
        
        To preserve full generality, it  is perhaps more realistic to assume the existence of {\it one extra quasi-pure variable}, provided that it also leads to the identifiability results proved above. We argue below that this is possible, although it may lead to unnecessarily heavy techniqualities that would obscure the main message of this work. We therefore content ourselves to explaining how such an 
        assumption can be used, without pursuing it fully throughout the paper. 
        	
        	Suppose that there exists only one pure variable $i_k$ for some group $k\in [K]$, that is, $|I_k| = 1$.  Assume also that  there exists some {\em quasi-pure} variable $j$ for this group $k$, in the sense that for that index $j$ we have: 
    	\begin{equation}\label{quasi-scale}
    	   {\sum_{a\ne k}|A_{ja}| \over \|A_{j\sbt}\|_1} = 1 - {|A_{jk}| \over \|A_{j\sbt}\|_1} \le \varepsilon^2.
    	\end{equation}
    	As we can see, when $\varepsilon$ is small, the $j$th variable is close to the pure variable $i_k$,  as the majority of  the weights in  its corresponding row of $A$  are placed on the $k$th factor. We show in Appendix \ref{app_proof_rem_one_pure} that, for this $i_k$ and $j$, the score in (\ref{score_ell_2}) satisfies 
    	\[  
    	    S_2(i_k, j) \le 2\varepsilon.
    	\]
    	By slight abuse of notation, we write $I_k = \{i_k, j\}$ and provided that $S_2(\ell, \ell') > 2\epsilon$ for all $\ell \not\group \ell'$ with $S_2(\ell,\ell')$ defined in (\ref{score_ell_2}), both $I$ and its partition can be recovered uniquely by applying 
    	Algorithm \ref{alg_I} to the population correlation matrix $R$,  with $q = 2$ and $\delta \ge 2\varepsilon$.
    	
         \item {\it Discussion of Assumption \ref{ass_J_prime}}. This assumption is only active  when $J_1 \neq \emptyset$. 
          
          If $J_1 \neq \emptyset$ and Assumption  \ref{ass_I} holds,  but Assumption  \ref{ass_J_prime} does not hold,  then the representative selection of Lemma \ref{lem_find_L}, as well as group reconstruction, can still be performed in an identical manner. However, one cannot guarantee that all groups consist of only pure variables. Nevertheless, their representatives will continue to have  properties (i) and (ii), by construction, and therefore still be statistically meaningful. 
        \end{enumerate}
    \end{remark}

     \subsection{Estimation of the pure variables index set}
 	\label{sec_est_pure}
 	
 	The estimation procedure follows, broadly, the {\bf Step 1} and {\bf Step 2} employed in the proof of Theorem \ref{thm_ident_I_post} of the previous section. 
 	We first use Algorithm \ref{alg_I} and the procedure of estimating $K$ described in Section \ref{sec_est_H} to obtain estimates $\wh \H= \{\wh H_1, \ldots, \wh H_{\wh G}\}$ and $\wh K$. If $\wh K = \wh G$, no further action is taken; if  $\wh K< \wh G$, we add the pruning step stated below, based on  
 	  the sample analogue of  Lemma \ref{lem_find_L}. 
 	
 	

  For any input  $r < \wh G$ (for instance, $r = \wh K$ when $\wh K < \wh G$), the pruning step consists of the following  steps to estimate $\wh \I$, collected   in Algorithm \ref{alg_I_post}. 

	\begin{itemize}
        \item	  Estimate
        	$\Gamma_{\wh H\wh H}$ by $\wh \Gamma_{ij} = 0$ for all $i\ne j$ and  
    	\begin{equation}\label{est_Gamma_II}
        	\wh \Gamma_{ii} = 1 - \wh M_{ii},\qquad \forall i\in \wh H
    	\end{equation}
    	where $\wh M_{ii}$ is obtained from (\ref{est_M_LL_diag}) with $q = 2$.
	\item
	Set  $\Theta := A\C A^\T$ and, in view of (\ref{eq_R}), estimate $\Theta_{\wh H\wh H}$ by 
	\begin{equation}\label{est_Theta_LL}
	    \wh \Theta_{\wh H\wh H} = \left[\wh \Sigma- \wh \Sigma_E\right]_{\wh H\wh H} = \left[\wh \Sigma - \wh D^{1/2}_{\wh \Sigma}\wh \Gamma \wh D^{1/2}_{\wh \Sigma}\right]_{\wh H\wh H}
	\end{equation}
	with $\wh D_{\wh \Sigma} = \diag(\wh\Sigma_{11}, \ldots, \wh \Sigma_{pp})$. 
	\item For any set $S\subseteq
\wh H$ and $S^c = \wh H\setminus S$, write the Schur complement of $\wh\Theta_{SS}$ of $\wh\Theta_{\wh H\wh H}$ as 
	\begin{equation}\label{est_Theta_schur}
		\wh\Theta_{S^cS^c|S} = 	\wh\Theta_{S^cS^c} - 	\wh\Theta_{S^cS}\wh\Theta_{SS}^{-}	\wh\Theta_{SS^c}
	\end{equation}
	with $M^-$ denoting the Moore-Penrose pseudo-inverse of $M$.  Set $S_1 = \emptyset$, and  let $S_k = \{i_1, \ldots, i_{k-1}\}$  for each $2\le k\le r$,
	and define 
	\begin{equation}\label{def_i_k}
		 i_k = \arg\max_{j\in  \wh H} \wh \Theta_{jj | S_k}.
	\end{equation}
	When there are ties,  arbitrarily pick one of the maximizers.
	
	\item  The final estimate of $\I$ is defined as 	\begin{equation}\label{est_wt_I}
	    \wh \I = \left\{\bigl\{\wh H_k \bigr\}_{1\le k \le r}: \textrm{there exists $a\in $  $[r]$ such that $i_a\in \wh H_k$}\right\}.
    	\end{equation}
	\end{itemize}

	\begin{algorithm}[ht]
	\caption{Prune the parallel row index set obtained from Algorithm \ref{alg_I}}\label{alg_I_post}
	\begin{algorithmic}[1]
			\Require $\wh \Sigma, \wh R\in \RR^{p\times p}$, the partition $\wh \H$ with $\wh G = |\wh \H|$, the integer $1\le r <\wh G$.
			\Procedure {Pruning}{$\wh \Sigma$, $\wh R$, $\wh \H$, $r$}
			\State Compute $\wh \Theta_{\wh H\wh H}$ from (\ref{est_Theta_LL})
			\State Set $S = \emptyset$
			\For{$k = 1,\ldots, r$} 
			    \State Compute $i_k$ from (\ref{def_i_k}) and add $i_k\in S$
			\EndFor 
			\State \Return $\wh \I$ obtained from (\ref{est_wt_I})
			\EndProcedure
		\end{algorithmic}
	\end{algorithm}
	
	We note that  Algorithm \ref{alg_I_post} can take any $1\le r\le K$  as input, including a random $r$. This adds flexibility to the procedure, should one want to use a value of $r$  different from  the value of $\wh K$ defined in (\ref{est_K}), for instance if one uses a different estimator of $K$, or one is interested in a pre-specified value of $r$. We analyze $\wh \I$ in the next section.

	For the reader's convenience, we summarize our whole procedure for  estimating the pure variable index set and its partition, Pure Variable Selection (PVS), in Algorithm \ref{alg_whole}.\\	
	
   \begin{algorithm}[h!]
    	\caption{Pure Variable Selection (PVS)}\label{alg_whole}
	\begin{algorithmic}[1]
			\Require $\wh \Sigma, \wh R\in \RR^{p\times p}$, a positive integer $q\ge 1$, two positive tuning parameters $\delta$ and $\mu$.
			\State Obtain $\wh \H$ and $\wh G = |\wh \H|$ from {\sc Parallel}($\wh R$, $\delta$) in Algorithm \ref{alg_I}
			\State Compute $\wh L$ from (\ref{def_l_hat}) -- (\ref{def_L_hat}) and compute $\wh M_{\wh L \wh L}$ from (\ref{est_M_LL_off_diag}) -- (\ref{est_M_LL_diag})
			\State Estimate $\wh K$ from (\ref{est_K})
			\If{$\wh K < \wh G$}
			\State Obtain $\wh \I$ from {\sc Pruning}($\wh \Sigma$, $\wh R$, $\wh \H$, $\wh K$)
			\Else 
			\State Set $\wh \I = \wh \H$
			\EndIf 
			\State \Return $\wh \I$
		\end{algorithmic}
	\end{algorithm}


    \subsection{Statistical guarantees for the estimated pure variable index set}\label{sec_theory_I}	
	  
	  We provide statistical guarantees for the estimated pure variable index set  obtained via  Algorithm \ref{alg_whole}. As Algorithm \ref{alg_I} is used to estimate $H$, $\H$ and $K$ first, we start by revisiting the statistical guarantees for $\wh H$, $\wh \H$ and $\wh K$ under the canonical parametrization of $A$.  The theoretical properties of $\wh \I$ are given in Theorem \ref{thm_I_post}.

    
    As shown in (\ref{eq_H_twosides}) of Section \ref{sec_theory_H_K}, on the event $\E$, $\wh H$ from Algorithm \ref{alg_I} 
	includes $H = I \cup J_1$ with $J_1$ defined in (\ref{def_J1}), but 
	 may mistakenly include  
	 other  variable indices, for instance pairs  satisfying  
	 $S_q(i,j)< 4\delta_n$.  Under Assumption \ref{ass_parallel}, 
	 if the signal condition (\ref{cond_signal_H}) holds, we further showed that  both $\wh H$ and its partition  $\wh \H$ are consistent. 
	 
	 Under the canonical parametrization provided by Assumption \ref{ass_I}, the set of indices involved in condition (\ref{cond_signal_H}) can be reduced, and the following weaker condition suffices for consistent recovery:
	  \begin{equation}\label{cond_signal_ell_q_relaxed}
	  	S_q(i,j) > 4\delta_n \quad \text{for all } i\not\group j,\ i\in I.
	  \end{equation} 
	 Note that 
	 we still allow
	 \begin{equation*}
	 S_q(i,j) \le 4\delta_n \quad \text{ for some } i\ne j, \ i,j\in J,
	 \end{equation*}
	 by recalling that  $J = [p]\setminus I$. We show, in Lemma \ref{lem_lb_score} of Appendix \ref{app_dicuss}, that variables with indices satisfying the above display are non-pure, but have {\em near-parallel} corresponding rows in $A$. 
	 By collecting these near-parallel rows with indices in $J$ in the new set 
	 \begin{equation}\label{def_J1_bar}
	 \bar J_1 ~:= ~\left\{
	 j \in J: \  S_q(j,\ell) \le 4\delta_n \ \text{ for some } \ell \in J\setminus \{j\} 
	 \right\},
	 \end{equation}
	 we have $J_1 \subseteq \bar J_1$. In fact, as the quantity $4\delta_n$ in (\ref{def_J1_bar}) originates from the estimation error of the score function $S_q$, the set $\bar J_1$ can be viewed as the sample analogue of $J_1$. 
	 
	 We begin by presenting the analogue of the results of Section \ref{sec_theory_H_K}, by giving recovery guarantees for $\wh H$ and 
	 $\wh{\mathcal{H}}$, in Corollaries \ref{cor_I_prime}, and for $\wh K$ in  Proposition \ref{prop_K}, under model (\ref{model}) satisfying the canonical parametrization given by  Assumption \ref{ass_I}. Write $\lfloor x\rfloor$ to denote the largest integer that is no greater than $x$. 
	 
	  \begin{cor}\label{cor_I_prime}
	  	Under model (\ref{model}), Assumption \ref{ass_I} and (\ref{cond_signal_ell_q_relaxed}),  on  the event $\E$, 
	    the outputs $\wh G$ and $\wh \H = \{\wh H_1, \ldots, \wh H_{\wh G}\}$ of  Algorithm \ref{alg_I}, applied  with $\delta = \delta_n$, satisfy
	  	\begin{enumerate}
	  		\item[(1)] $K\le \wh G \le K + \lfloor |\bar J_1|/2 \rfloor$;
	  		\item[(2)] $(I\cup J_1)\subseteq \wh H \subseteq (I\cup \bar J_1)$;
	  		\item[(3)] $I_k = \wh H_{\wh \pi(k)}$ for all $1\le k\le K$ for some permutation  $\wh \pi: [\wh G] \to [\wh G]$.
	  	\end{enumerate}
	  \end{cor}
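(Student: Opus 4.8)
The plan is to translate the population-level separation guaranteed by the relaxed signal condition (\ref{cond_signal_ell_q_relaxed}) into the discrete merging behaviour of Algorithm \ref{alg_I}, using the uniform deviation bound of Theorem \ref{thm_score_ell_q} as the bridge. Throughout I would work on the event $\E$ with $\delta = \delta_n$, so that Algorithm \ref{alg_I} merges a pair $(i,j)$ precisely when $\wh S_q(i,j) \le 2\delta_n$; equivalently, $\wh \H$ is the partition of $\wh H$ into connected components of the graph on $\wh H$ whose edges are those pairs. Every component then has at least two vertices, since membership $i \in \wh H$ requires some partner $j$ with $\wh S_q(i,j) \le 2\delta_n$, which is the fact underlying the counting below.

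The crux is an \emph{isolation} statement for the pure groups: on $\E$, for every $i \in I_k$ one has $\wh S_q(i,j) \le 2\delta_n$ for all $j \in I_k \setminus \{i\}$, and $\wh S_q(i,j) > 2\delta_n$ for all $j \notin I_k$. The first inequality is immediate from Theorem \ref{thm_score_ell_q}, because $i \groupH j$ whenever $i,j \in I_k$. For the second, note that $i \in I$ together with $j \notin I_k$ forces $i \not\group j$; condition (\ref{cond_signal_ell_q_relaxed}) then gives $S_q(i,j) > 4\delta_n$, and the lower bound $\wh S_q(i,j) \ge S_q(i,j) - 2\delta_n$ of Theorem \ref{thm_score_ell_q} yields $\wh S_q(i,j) > 2\delta_n$. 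Thus each $I_k$ is a complete subgraph with no outgoing edges, hence a full connected component of the graph. This delivers part (3), with $\wh H_{\wh\pi(k)} = I_k$; since the $I_k$ are disjoint it also gives $\wh G \ge K$, the lower bound in part (1).

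For part (2), the inclusion $I \cup J_1 = H \subseteq \wh H$ is exactly the left-hand containment in (\ref{eq_H_twosides}). For the upper inclusion I would take $i \in \wh H$; by the right-hand containment in (\ref{eq_H_twosides}) there is $j \ne i$ with $S_q(i,j) \le 4\delta_n$. If $i \in I$ there is nothing to prove. If $i \in J$, then $j$ cannot lie in $I$, for otherwise $i \not\group j$ (as $i \notin I$) and (\ref{cond_signal_ell_q_relaxed}) would force $S_q(i,j) > 4\delta_n$, a contradiction; hence $j \in J \setminus \{i\}$ and $i \in \bar J_1$ by the definition (\ref{def_J1_bar}), giving $\wh H \subseteq I \cup \bar J_1$. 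The upper bound in part (1) then follows by counting: parts (2) and (3) show the $K$ components equal to $I_1,\ldots,I_K$ exhaust all of $I$, so the remaining $\wh G - K$ components are pairwise disjoint subsets of $\wh H \setminus I \subseteq \bar J_1$, each of size at least $2$; summing cardinalities gives $2(\wh G - K) \le |\bar J_1|$, i.e. $\wh G \le K + \lfloor |\bar J_1|/2 \rfloor$ after invoking integrality.

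The step I expect to be the main obstacle is the isolation statement, namely verifying that the \emph{weakened} condition (\ref{cond_signal_ell_q_relaxed}) — which separates only pairs containing a pure index — already pins down each $I_k$ as an exact connected component, so that impure, near-parallel indices in $\bar J_1$ may cluster arbitrarily among themselves yet never attach to a pure group. A secondary point requiring care is making the connected-component reading of the \textsc{Merge} routine precise, so that every output group genuinely has size at least $2$ and the non-pure groups are disjoint, which is what legitimises the counting in the upper bound of part (1).
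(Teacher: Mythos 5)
Your proof is correct and takes essentially the same route as the paper's: the paper's own (one-line) proof combines Theorem \ref{thm_score_ell_q} with condition (\ref{cond_signal_ell_q_relaxed}) and the counting fact that groups consisting solely of variables in $\bar J_1$ number at most $\lfloor |\bar J_1|/2\rfloor$, which are exactly your three ingredients. Your write-up merely makes explicit the isolation of each $I_k$ and the size-two disjoint-seed counting that the paper leaves implicit.
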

	  \begin{proof}
	  	The result is a direct consequence of Theorem \ref{thm_score_ell_q} and  (\ref{cond_signal_ell_q_relaxed}) in conjunction with the fact that $\wh \H$ contains at most $\lfloor |\bar J_1|/2 \rfloor$ groups that only consist of variables in $\bar J_1$.
	  \end{proof}

	  In the simple case when $\bar J_1 = \emptyset$ (no parallel or near parallel rows beyond those corresponding to pure variables), or equivalently, 
	  \begin{equation}\label{cond_signal_ell_q_J}
	  S_q(i,j) > 4\delta_n, \quad \forall i,j\in J \text{ with }i\ne j,
	  \end{equation}
	  Corollary \ref{cor_I} below shows that Algorithm \ref{alg_I} is already sufficient for consistent estimation of pure variables. Condition (\ref{cond_signal_ell_q_J}), or equivalently $\bar J_1 = \emptyset$, is stronger than $J_1 = \emptyset$, the identifiability condition in Corollary \ref{thm_ident_I}.
	  
	\begin{cor}\label{cor_I}
		Under model (\ref{model}) and Assumptions \ref{ass_I}, assume that (\ref{cond_signal_ell_q_relaxed}) and (\ref{cond_signal_ell_q_J}) hold.
	Then, on the event $\E$,  the output $\wh G$ and $\wh \H = \{\wh H_1, \ldots, \wh H_{\wh G}\}$ from Algorithm \ref{alg_I} with $\delta = \delta_n$ satisfy: 
	$\wh G = K$, $\wh H = I$ and  $\wh H_k = I_{\pi(k)}$ for all $1\le k\le K$, for some permutation $\pi: [K] \to [K]$.
	
	\end{cor}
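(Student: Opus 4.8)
The plan is to derive this statement as a direct specialization of Corollary~\ref{cor_I_prime}, whose three conclusions already hold under the single signal condition~(\ref{cond_signal_ell_q_relaxed}). The crucial observation is that the additional hypothesis~(\ref{cond_signal_ell_q_J}) is, by the very definition~(\ref{def_J1_bar}) of the near-parallel set, equivalent to $\bar J_1 = \emptyset$: condition~(\ref{cond_signal_ell_q_J}) asserts $S_q(i,j) > 4\delta_n$ for every pair $i\ne j$ in $J$, which is precisely the negation of the defining condition of $\bar J_1$. Since $J_1\subseteq \bar J_1$ (as noted below~(\ref{def_J1_bar})), we also obtain $J_1 = \emptyset$. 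Thus the entire content of the corollary reduces to substituting these two emptiness facts into an already-proved result, so I do not expect any genuine analytic obstacle.

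With $\bar J_1 = \emptyset$ and $J_1 = \emptyset$ in hand, I would simply feed them into the three conclusions of Corollary~\ref{cor_I_prime}. Part~(1) gives $K \le \wh G \le K + \lfloor |\bar J_1|/2\rfloor = K$, forcing $\wh G = K$. Part~(2) collapses to $(I\cup \emptyset)\subseteq \wh H\subseteq (I\cup \emptyset)$, i.e.\ $\wh H = I$. Neither step uses anything beyond the substitution, so the main conclusions on $\wh G$ and $\wh H$ are immediate.

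For the partition claim I would invoke Part~(3) of Corollary~\ref{cor_I_prime}, which supplies a permutation $\wh\pi:[\wh G]\to[\wh G]$ with $I_k = \wh H_{\wh\pi(k)}$ for every $k\in[K]$. Having just shown $\wh G = K$, the map $\wh\pi$ is a genuine permutation of $[K]$, so setting $\pi := \wh\pi^{-1}$ yields $\wh H_k = \wh H_{\wh\pi(\wh\pi^{-1}(k))} = I_{\wh\pi^{-1}(k)} = I_{\pi(k)}$ for all $k\in[K]$, exactly as stated. The only points deserving care are therefore purely bookkeeping: the inversion of the permutation to turn $I_k = \wh H_{\wh\pi(k)}$ into $\wh H_k = I_{\pi(k)}$, and the remark that~(\ref{cond_signal_ell_q_J}) strictly strengthens the identifiability requirement $J_1=\emptyset$ of~\ref{thm_ident_I}, since it additionally excludes non-pure rows that are merely near-parallel at the resolution $4\delta_n$ rather than exactly parallel.
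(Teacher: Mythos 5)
Your proof is correct and takes essentially the same route as the paper, which states no separate proof for Corollary \ref{cor_I} precisely because it is the immediate specialization of Corollary \ref{cor_I_prime} under the equivalence, noted just before the corollary, between (\ref{cond_signal_ell_q_J}) and $\bar J_1 = \emptyset$ (hence $J_1 = \emptyset$, since $J_1 \subseteq \bar J_1$). Substituting these into parts (1)--(3) of Corollary \ref{cor_I_prime} and inverting the permutation is exactly the intended bookkeeping.
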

	
     Appendix \ref{app_dicuss} provides  insight into   conditions (\ref{cond_signal_ell_q_relaxed}) and (\ref{cond_signal_ell_q_J}) in terms of their induced restrictions on the model parameters. Summarizing, we show, under mild regularity conditions on $B\C B^\T$, that (\ref{cond_signal_ell_q_relaxed}) holds under the following mild separation condition between pure and non-pure variables,
     \[
          \min_{i\in I, j\notin I}\sin(\angle (A_{i\sbt}, A_{j\sbt})\gtrsim \delta_n. 
     \]
    
    To gain insight into  the stronger condition (\ref{cond_signal_ell_q_J}), consider the situation in which the entries of $A_{J\sbt}$ are realizations of independent sub-Gaussian random vectors. If $K\gtrsim \log|J|$, then (\ref{cond_signal_ell_q_J}) holds,  with high probability. This suggests that it may not hold  for a large $|J|$, small $K$ configuration. For this reason, 
    we present the rest of our theoretical guarantees  only under the weaker condition (\ref{cond_signal_ell_q_relaxed}).

		  


    \bigskip

	Since we do not assume (\ref{cond_signal_ell_q_J}),  we need to revisit the estimation of $K$, as $\wh G$ is no longer the best candidate. Instead, we propose to use and analyze $\wh K$,  constructed in  Section \ref{sec_est_H},  with $q = 2$.   Proposition \ref{prop_K} below states the explicit rate of the tuning parameter $\mu$ required for its  estimation. The proposition  offers  the same guarantees as  Proposition \ref{prop_K_parallel}, but  they are established under slightly different conditions, that reflect the usage of the canonical parametrization given by Assumption \ref{ass_I},  and of the  weaker condition (\ref{cond_signal_ell_q_relaxed}) enabled by this parametrization. 

	Recall that 
	$\xi_k$ is defined in Assumption \ref{ass_J_prime}. Define further
	\begin{equation}\label{def_cbI}
	c_{b,I} :=\min_{j \in I\cup \bar J_1} \|B_{j\sbt}\|_2^2,\quad \bar c_z := \lambda_1(\C), \quad c_z:= \lambda_K(\C),
	\end{equation}
	and  
	\begin{equation}\label{def_cr}
	   c_r:= \min_{i\ne j}{1\over p-2}\lambda_K\left(B_{\nij\sbt}\C B_{\nij\sbt}^\T\right).
	\end{equation}
	
	\begin{prop}\label{prop_K}
		Under model (\ref{model}) and  Assumptions \ref{ass_I} \& \ref{ass_distr}, 
		assume (\ref{cond_signal_ell_q_relaxed}) and $\log p = o(n)$.
		In addition,  suppose there exist
		absolute constants  $0<c\le C<\infty$   such that
		\begin{equation}\label{cond_regularity}
		\min(c_{b,I},c_z,c_r)>c, \qquad \bar c_z \le C
		\end{equation}
		For  
		$    \mu = 
		C'(\sqrt{K\delta_n^2}  + K\delta_n^2 +|\bar J_1|\delta_n)
		$ 
		in (\ref{est_K}), and  for a large enough constant $C'>0$,
		we have 
		\begin{equation*}
		    \lim_{n\to\infty} \PP\{\wh K \le  K\}=1.
		\end{equation*}
		If, in addition, $\max\{K\delta_n^2,  |\bar J_1|\delta_n\}=o(1)$  as $n\to\infty$,  we further have
		\begin{equation}\label{equal}\lim_{n\to\infty} \PP\{\wh K = K\}=1. 
		\end{equation}
	\end{prop}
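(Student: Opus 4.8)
The plan is to work on the high-probability event $\E$ of (\ref{event:E}), augmented by an operator-norm deviation event for the relevant sample sub-correlation matrices, and to reduce the claim to a Weyl-type eigenvalue separation for the $\wh G\times\wh G$ matrix $\wh M_{\wh L\wh L}$ around its population counterpart $M_{\wh L\wh L} = B_{\wh L\sbt}\C B_{\wh L\sbt}^\T$. First I would invoke Corollary \ref{cor_I_prime}: under (\ref{cond_signal_ell_q_relaxed}) and on $\E$, the partition $\wh\H$ recovers every pure group exactly and satisfies $K\le \wh G\le K+\lfloor |\bar J_1|/2\rfloor$, so the representative set $\wh L$ from (\ref{def_l_hat})--(\ref{def_L_hat}) contains exactly $K$ pure representatives (one per group $I_k$), whose rows of $B$ are proportional to distinct canonical vectors, together with at most $\lfloor |\bar J_1|/2\rfloor$ representatives drawn from near-parallel non-pure groups.

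Next I would record the two population spectral facts that drive the separation. Because $B_{\wh L\sbt}$ has exactly $K$ rows proportional to distinct $\e_k$ and $\C$ has rank $K$, the matrix $M_{\wh L\wh L}$ has rank $K$, hence $\lambda_{K+1}(M_{\wh L\wh L}) = 0$. For the lower bound, Cauchy interlacing applied to the pure principal submatrix $M_{\wh L_I\wh L_I}$ (a $K\times K$ matrix of the form $D\C'D$ with $D$ diagonal and $\C'$ a permutation of $\C$) gives $\lambda_K(M_{\wh L\wh L}) \ge \lambda_K(M_{\wh L_I\wh L_I}) \ge c_z\, c_{b,I}$, which is bounded below by a positive constant under (\ref{cond_regularity}).

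The main work, and the principal obstacle, is the perturbation bound $\|\wh M_{\wh L\wh L} - M_{\wh L\wh L}\|_{\op} \lesssim \sqrt K\,\delta_n + K\delta_n^2 + |\bar J_1|\,\delta_n$ on the chosen event. I would split the error into its off-diagonal and diagonal parts and into the blocks indexed by the pure representatives versus the non-pure ones. For the off-diagonal entries one has $\wh M_{ij} = \wh R_{ij}$ and $M_{ij} = R_{ij}$ (since $\Gamma$ is diagonal), so the pure--pure block is a sub-correlation-matrix error; entrywise control via $\E$ alone is too crude (it only yields $K\delta_n$), and the sharper $\sqrt K\,\delta_n + K\delta_n^2$ comes from an operator-norm deviation inequality for the sample correlation matrix restricted to the $K$ pure coordinates, of the form $\sqrt{K/n}+K/n$ up to logarithmic factors, exactly the inequalities established in Appendix \ref{app_concentration_R}. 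The blocks involving the non-pure representatives are controlled entrywise by $\delta_n$ and contribute $|\bar J_1|\,\delta_n$ through their dimension. The diagonal error is the delicate algebraic piece: $\wh M_{ii}$ in (\ref{est_M_LL_diag}) estimates $M_{ii}$ through a ratio of $\ell_q$-norms of rows of $\wh R$, so I would lower-bound the denominators using (\ref{cond_regularity}) (which controls $c_r$ and hence the row norms $\|R_{i,\nij}\|_q$ from below) and then propagate the $\delta_n$ perturbations through the ratio, obtaining a per-entry error of order $\delta_n$ plus a second-order $\delta_n^2$ term; being diagonal, this part adds only $O(\delta_n)$ to the operator norm. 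The care needed here is twofold: the ratio estimator is nonlinear, so its second-order remainder must be tracked to land the $K\delta_n^2$ term, and $\wh L$ is data-dependent, so the operator-norm concentration must either be applied on the event that the pure structure is correctly identified (Corollary \ref{cor_I_prime}) or made uniform over the $O(\binom{p}{\wh G})$ candidate index sets.

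Finally I would combine these estimates via Weyl's inequality. Writing $\Delta := \wh M_{\wh L\wh L} - M_{\wh L\wh L}$, we have $\lambda_{K+1}(\wh M_{\wh L\wh L}) \le \lambda_{K+1}(M_{\wh L\wh L}) + \|\Delta\|_{\op} = \|\Delta\|_{\op}$ and $\lambda_K(\wh M_{\wh L\wh L}) \ge \lambda_K(M_{\wh L\wh L}) - \|\Delta\|_{\op} \ge c_z c_{b,I} - \|\Delta\|_{\op}$. Choosing $C'$ in $\mu = C'(\sqrt{K\delta_n^2}+K\delta_n^2+|\bar J_1|\delta_n)$ larger than the constant in the perturbation bound makes $\lambda_{K+1}(\wh M_{\wh L\wh L}) < \mu$, which already yields $\PP\{\wh K\le K\}\to 1$. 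Under the additional assumption $\max\{K\delta_n^2,\, |\bar J_1|\delta_n\}=o(1)$ we have $\mu\to 0$, so for $n$ large $\mu < c_z c_{b,I} - \|\Delta\|_{\op} \le \lambda_K(\wh M_{\wh L\wh L})$, giving $\wh K\ge K$ as well; hence $\wh K = K$ with probability tending to one, which is (\ref{equal}).
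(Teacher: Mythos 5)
Your proposal is correct and takes essentially the same route as the paper's own proof: the same reduction via Corollary \ref{cor_I_prime}, the same block decomposition of $\wh M_{\wh L\wh L}-M_{\wh L\wh L}$ (operator-norm concentration for the pure-representative block, entrywise $\delta_n$ control contributing $|\bar J_1|\delta_n$ for the near-parallel block, a separate analysis of the nonlinear ratio estimator on the diagonal, and uniformity over the candidate pure representative sets), the same population bounds $\lambda_{K+1}(M_{\wh L\wh L})=0$ and $\lambda_K(M_{\wh L\wh L})\ge c_z c_{b,I}$, and the same two-sided Weyl argument. The paper simply packages these ingredients as Lemma \ref{lem_op_M}, Lemma \ref{lem_Gamma_M_Ihat} and display (\ref{lb_eigen_Msksk}), so the substance matches what you describe.
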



	The proof of Proposition \ref{prop_K} is deferred to Appendix \ref{app_proof_prop_K}. 
	It relies on a careful analysis of $\|\wh M_{\wh L\wh L}-M_{\wh L\wh L}\|_{\op}$, performed when  $\bar J_1 \neq \emptyset$.

	We first prove that 
	$\wh K \le K$  always holds on the event $\E$. In conjunction with part $(1)$ of Corollary \ref{cor_I_prime}, this ensures that,  with high probability, the event $\{\wh G=\wh K\}$ implies the event $\{\wh G=K\}$. In this case, 
	our procedure stops and the output of Algorithm \ref{alg_whole} is $\wh \H$ from Algorithm \ref{alg_I}. 
	On the other hand, if $\wh K< \wh G$, as explained in Section \ref{sec_est_pure}, Algorithm \ref{alg_whole} uses the pruning step from  Algorithm \ref{alg_I_post} to estimate the pure variable index set. 
	
	Proving consistency  $\wh K = K$ requires $\mu $ to be sufficiently small so that $\mu<c_{b,I} c_z$,   which is guaranteed by $\max\{K\delta_n^2,   |\bar J_1|\delta_n\}=o(1)$. See Remark \ref{rem_cond_I_post} below for a discussion of this condition and of (\ref{cond_regularity}).\\ 

    The following theorem  gives theoretical guarantees for the output of Algorithm \ref{alg_I_post} with any $1\le r\le K$, and, in particular, the output of Algorithm \ref{alg_whole} if $r$ is set to $\wh K$. 

	\begin{thm}\label{thm_I_post}
	    Under model (\ref{model}) and Assumptions  \ref{ass_I} \& \ref{ass_distr}, assume (\ref{cond_signal_ell_q_relaxed}) and (\ref{cond_regularity}). Suppose that 
	    \begin{align}\label{cond_Kdelta}
	    &   \lim_{n\to\i}  K\delta_n^2 = 0
	    \end{align}
	    and, if $\bar J_1 \ne \emptyset$, assume that there exist absolute constants $0< C_0<\infty$ and   $0<\eps<1$ such that  
	   \begin{align}
	         &		{\max_{1\le k\le K}\xi_k 
			} \le C_0	\min_{1\le k\le K}\xi_k, \label{cond_pure_ratio}\\
			&	    \max_{j\in \bar J_1}\left(	\sum_{k=1}^K {|A_{jk}| \over  \xi_k} \right)^2  <   {1-\eps }.\label{ass_J1_prime}	 
	    \end{align}
	   Then,   there exists some  permutation $\pi: [K] \to [K]$ such that the output $\wh \I$ of Algorithm \ref{alg_I_post} by using any $1\le r\le K$ satisfies
	   \[ \lim_{n\to\infty} \PP\left\{ \wh I_a = I_{\pi(a)} \, \text{ for all } 1\le a\le r\right\}=1.\]
	   %
        In particular, the claim is valid for $r=\wh K$ defined in (\ref{est_K}). In this case, if additionally $|\bar J_1|\delta_n = o(1)$, then with probability tending to one as $n\to \i$, $\wh K= K$ and the output $\wh \I$ from Algorithm \ref{alg_whole} satisfies $\wh I_a = I_{\pi(a)}$, for all $1\le a\le K$.
	\end{thm}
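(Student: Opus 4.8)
The plan is to work on the high-probability event $\E$ of (\ref{event:E}) and to show that, on this event, the greedy sample selection (\ref{def_i_k}) mimics the population selection of Lemma \ref{lem_find_L}, returning exactly one pure variable per group. First I would use Corollary \ref{cor_I_prime} to pin down the combinatorial structure of the inputs to Algorithm \ref{alg_I_post}: under (\ref{cond_signal_ell_q_relaxed}) we have $I\cup J_1\subseteq \wh H\subseteq I\cup \bar J_1$, every pure-variable group is recovered exactly ($I_k=\wh H_{\wh\pi(k)}$), and any spurious group of $\wh \H$ consists solely of near-parallel non-pure indices from $\bar J_1$. Consequently it suffices to prove that the indices $i_1,\ldots,i_r$ produced by (\ref{def_i_k}) fall in $r$ distinct pure-variable groups; the assignment (\ref{est_wt_I}) then returns $\{I_{\pi(1)},\ldots,I_{\pi(r)}\}$.

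Second, I would establish a uniform deviation bound for the estimated Schur complements. Using the entrywise control on $\E$ together with the construction (\ref{est_Gamma_II})--(\ref{est_Theta_LL}) of $\wh\Theta_{\wh H\wh H}$, I would bound $\wh\Theta_{\wh H\wh H}-\Theta_{\wh H\wh H}$ entrywise; the diagonal correction $\wh M_{ii}$ from (\ref{est_M_LL_diag}) contributes an additional error of order $\delta_n$ per entry, so on the low-dimensional subblocks of size $\le K$ relevant to (\ref{est_Theta_schur}) this yields an operator-norm error of order $K\delta_n$ (scale factors $\wh D_{\wh\Sigma}$ being controlled under (\ref{cond_regularity})). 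The delicate part is propagating this through the pseudo-inverse $\wh\Theta_{SS}^{-}$: since $\Theta=A\C A^\T$ has rank $K$, a generic block $\Theta_{SS}$ is ill-conditioned. I would circumvent this by arguing inductively that the sets $S_k$ arising in the algorithm are built from pure representatives in distinct groups, for which the corresponding population block is well-conditioned via the lower bounds $c_{b,I},c_z,c_r$ of (\ref{def_cbI})--(\ref{def_cr}) guaranteed by (\ref{cond_regularity}). A standard Schur-complement perturbation bound then gives
\[
\max_{j\in \wh H}\bigl|\wh\Theta_{jj|S_k}-\Theta_{jj|S_k}\bigr|\ \lesssim\ K\,\delta_n
\]
uniformly over the at most $K$ adaptively chosen conditioning sets, a union bound over them being affordable since each conditions on fewer than $K$ indices.

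Third --- and this is the crux --- I would establish a quantitative population margin separating pure from near-parallel non-pure variables. Lemma \ref{lem_find_L} already yields $\arg\max_{j\in H}\Theta_{jj|S_k}\in I_{\pi(k)}$, but consistency of the sample argmax requires a \emph{strict} gap. By induction on $k$, whenever $S_k$ consists of pure representatives from $k-1$ distinct groups, I would show there is a constant $\kappa>0$ (depending on $\eps$, $C_0$ and the constants in (\ref{cond_regularity})) with
\[
\max_{\, i\in I\setminus(I_{\pi(1)}\cup\cdots\cup I_{\pi(k-1)})}\Theta_{ii|S_k}\ \ge\ \kappa+\max_{\,j\in \bar J_1}\Theta_{jj|S_k}.
\]
The left-hand lower bound uses that a not-yet-used pure factor contributes its full conditional variance, bounded below via $c_{b,I}$ and $c_z$; the right-hand upper bound uses the sharpened scaling (\ref{ass_J1_prime}), $(\sum_k|A_{jk}|/\xi_k)^2<1-\eps$, together with the balance (\ref{cond_pure_ratio}) of the $\xi_k$, to show that a near-parallel non-pure variable's conditional variance is a factor $(1-\eps)$ smaller, yielding the additive margin $\kappa$. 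This is where Assumption \ref{ass_J_prime}, in its refined form, is essential, and I expect it to be the main obstacle: the margin must hold uniformly over all admissible conditioning sets, and the rank-$K$ degeneracy of $\Theta$ must be handled carefully within the Schur-complement algebra.

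Finally I would combine the two estimates: condition (\ref{cond_Kdelta}), $K\delta_n^2\to 0$, makes the uniform error $O(K\delta_n)=o(1)$, hence $o(\kappa)$, so on $\E$ the sample argmax at each step lands in an unused pure-variable group and the induction closes, placing $i_1,\ldots,i_r$ in $r$ distinct pure-variable groups. By part (3) of Corollary \ref{cor_I_prime}, (\ref{est_wt_I}) then returns $\{I_{\pi(1)},\ldots,I_{\pi(r)}\}$, proving the claim for any fixed $1\le r\le K$. For the $r=\wh K$ statement, I would invoke Proposition \ref{prop_K}: under the extra condition $|\bar J_1|\delta_n=o(1)$ one has $\wh K=K$ with probability tending to one, so running the pruning with $r=\wh K=K$ recovers all $K$ groups, giving $\wh I_a=I_{\pi(a)}$ for all $1\le a\le K$.
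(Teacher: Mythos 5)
Your overall architecture matches the paper's proof: reduce via Corollary \ref{cor_I_prime} to showing the greedy selection (\ref{def_i_k}) picks one pure variable per distinct group, prove a uniform deviation bound for the estimated Schur complements, prove a population margin using (\ref{ass_J1_prime}) and (\ref{cond_pure_ratio}), close an induction, and invoke Proposition \ref{prop_K} for $r=\wh K$. However, there is a genuine quantitative gap at the step where you combine the estimates. Your deviation bound is obtained by converting the entrywise error $\delta_n$ into an operator-norm error on blocks of size at most $K$, giving $\|\wh M_{S_kS_k}-M_{S_kS_k}\|_{\op}\lesssim K\delta_n$ and hence a Schur-complement error of order $K\delta_n$; you then assert that (\ref{cond_Kdelta}), i.e.\ $K\delta_n^2\to0$, makes $O(K\delta_n)=o(1)$. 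This implication is false: $K\delta_n^2\to 0$ only gives $\sqrt{K}\delta_n\to 0$, and since $K\delta_n=(K\delta_n^2)/\delta_n$ it can diverge (take $\delta_n\asymp\sqrt{\log n/n}$ and $K\asymp n/(\log n)^2$). Because the theorem allows $K\to\infty$ under (\ref{cond_Kdelta}), your bound is too crude to close the argument. The paper avoids this by proving (Lemma \ref{lem_op_M}, via the operator-norm concentration of Lemma \ref{lem_op_R_SS} with $t\asymp K\log(n\vee p)$ and a union bound over \emph{all} candidate sets in $\S_k$) that $\max_{S_k\in\S_k}\|\wh M_{S_kS_k}-M_{S_kS_k}\|_{\op}\lesssim \sqrt{K}\delta_n+K\delta_n^2$, which, combined with the quadratic-form control of Lemma \ref{lem_M_schur}, yields the uniform deviation $c_*\sqrt{K}\delta_n$ of Lemma \ref{lem_noise}; that rate is $o(1)$ exactly under (\ref{cond_Kdelta}).

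Two further points. First, your justification of uniformity is off: you propose a union bound ``over the at most $K$ adaptively chosen conditioning sets,'' but the sets $S_k$ are data-dependent, so one needs uniformity over the full class $\S_k$ of Definition \ref{def_Sk}, whose cardinality is of order $p^{k-1}$ — this is precisely why the paper pays the price $t\asymp K\log(n\vee p)$ in the concentration bound, and why the attainable rate is $\sqrt{K}\delta_n$ rather than $\delta_n$. (Entrywise bounds, as you use them, are automatically uniform over sets, but they cost the fatal extra factor of $\sqrt K$.) Second, your margin $\kappa$ is an absolute constant and your error is unweighted, but (\ref{cond_regularity}) constrains only correlation-scale quantities ($B$ and $\C$), not the variances $\Sigma_{ii}$, which may be arbitrarily heterogeneous across features. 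The paper therefore works with normalized quantities: Lemma \ref{lem_noise} bounds $\Sigma_{ii}^{-1}|\wh\Theta_{ii|S_k}-\Theta_{ii|S_k}|$, and Lemma \ref{lem_signal} proves the gap in the weighted form $\Theta_{ii|S_k}-\Theta_{jj|S_k}>(\Sigma_{ii}+\Sigma_{jj})\varepsilon_n$, using (\ref{cond_pure_ratio}) together with $\Sigma_{jj}\le A_{j\sbt}^\T\C A_{j\sbt}/(c_zc_{b,I})$ to compare scales across groups. Without this weighting, the comparison of sample argmax values does not go through when variances differ substantially between groups, so your Step 3 would need to be reformulated in the same scale-aware fashion.
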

	
	As already mentioned, the statement of Theorem \ref{thm_I_post} is formulated in terms of  $r$ groups, for any  $1\le r\le K$. In this way, in case we use some estimator that under-estimates $K$, Theorem \ref{thm_I_post} still ensures that a subset of the groups consisting in  pure variable indices (corresponding to the largest conditional variances) can be consistently estimated. When $K$ is consistently estimated, consistent estimation of the entire  pure variable index set, and of  its partition,  is guaranteed.\\

			
	Allowing for a general  
	statement in  Theorem \ref{thm_I_post}, which is valid for  a general $1\le r\le K$, brings technical difficulty to its proof, which is stated in Appendix \ref{app_proof_thm_I_post}. One of the main challenges is to control the difference between the estimated Schur complement $\wh \Theta_{jj|S_k}$ and its population-level counterpart, uniformly for all $j\in \wh H$, all $S_k$ selected from (\ref{def_i_k}) and all $1\le k\le r$. The difficulty is further elevated by the fact that  $\bar J_1 \neq \emptyset$. We provide this uniform control in Lemma \ref{lem_noise} of Appendix \ref{app_proof_thm_I_post} and its proof further relies on the uniform controls of the sup-norm of $\wh M_{\wh H\wh H} - M_{\wh H\wh H}$, the operator norm of $\wh M_{S_kS_k} - M_{S_kS_k}$ and the quadratic form $\wh M_{iS_k}^\T \wh M_{S_kS_k}^{-1}\wh M_{S_ki}$,
	collected in Lemmas \ref{lem_Gamma_M_Ihat}, \ref{lem_op_M} and \ref{lem_M_schur} of Appendix \ref{app_lemmas_M}.

	\begin{remark}[On the conditions of Proposition \ref{prop_K} and  Theorem \ref{thm_I_post}]\label{rem_cond_I_post}
    	The proofs for both results are non-asymptotic,
    	in the sense that their statements continue to hold when   $K\delta_n^2 \le c$,  for some sufficiently small constant $c$, 
    	instead of the assumed $K\delta_n^2 = o(1)$ in (\ref{cond_Kdelta}). Then,  they would   hold on an event with probability $1-c'(p\vee n)^{-c''}$ for some constants $c',c''>0$. To avoid the delicate interplay between different constants, 
    	we opted for the current asymptotic formulation. 
    	
    	Condition (\ref{cond_Kdelta}) requires  $K$  not to grow too fast relative to $n$, $K\log(p\vee n) = o(n)$.  Condition $|\bar J_1|\delta_n = o(1)$ in Proposition \ref{prop_K} allows the size of the  index set corresponding   to  near-parallel rows in $A$ to grow,  but slower than $\sqrt{n/\log(n\vee p)}$. 
    	
    	Conditions (\ref{cond_pure_ratio}) and (\ref{ass_J1_prime}) are only needed when $\bar J_1 \neq \emptyset$.  
	   Condition (\ref{ass_J1_prime}) is the analogue of Assumption \ref{ass_J_prime}  
	    and 
	    allows us to distinguish, at the sample level, pure variables from non-pure variables corresponding to rows in $A$ that  are close to parallel (in the sense $S_q(i,j)<4\delta_n$).
	    Condition (\ref{cond_pure_ratio}) allows us to eventually separate two pure groups, $I_a$ and $I_b$ with $a\ne b$, from each other. It prevents the pure variable variances  from being very different. 
	    For instance,  (\ref{cond_pure_ratio}) holds if $\max_{k}\max_{i\in I_k}\Sigma_{ii} \le C\min_{k}\max_{i\in I_k}\Sigma_{ii}$ coupled with (\ref{cond_regularity}).
        Finally, condition   (\ref{cond_regularity}) is a mild regularity condition on the matrix $B$ and $\C$, which is needed in our rather technically involved proofs.  More discussion of this condition can be found in Appendix \ref{app_dicuss}.
	\end{remark}

\begin{remark}	
	As a byproduct of our proof, we establish the deviation inequalities in operator norm for the empirical sample correlation matrix of $n$ independent sub-Gaussian random vectors in $\RR^p$ with $p$ allowed to exceed $n$. We refer to Appendix \ref{app_concentration_R} for  its formal statement, and only state a simplified result below. Suppose that  $\Sigma^{-1/2}\X_{i\sbt}$ are independent sub-Gaussian random vectors and $\log p \lesssim n$. We prove  that, with probability at least $1 - 2 \exp(-ct) - 4(p\vee n)^{-1}$,  
        \[
            \|\wh R - R\|_{\rm op} \le C \|R\|_{\rm op}\left(
                \sqrt{p\over n} + {p\over n} + \sqrt{t\over n} + {t\over n} + \sqrt{\log n \over n}
            \right), \quad \forall\ t\ge 0.
       \] 
    Similar deviation inequalities for the sample covariance matrix have been well understood, for instance,  \cite{vershynin_2012,lounici2014,bunea2015,koltchinskii2017}. However, the operator norm concentration inequalities of the sample correlation matrix is relatively less explored. \cite{elkaroui2009} studied the asymptotic behaviour of the limiting spectral distribution of the sample correlation matrix when $p/n\to (0,\i)$. \cite{wegkamp2016, han2017} prove a similar result for  Kendall's tau sample correlation matrix. 
     \end{remark}

	\subsection{Estimation of $A$ under the canonical parametrization}\label{sec_est_A}
	We first estimate $B = D_{\Sigma}^{-1/2} A$ by the estimator $\wh B$ presented below. Once this is done,  $A$ is easily estimated by 
	\begin{equation}\label{est_A}
	\wh A = \wh D_{\wh \Sigma}^{1/2}\wh B,\qquad 
	\text{with} \quad \wh D_{\wh \Sigma} = \diag(\wh \Sigma_{11},\cdots, \wh \Sigma_{pp}).
	\end{equation}
	
	\subsubsection{Estimation of $B_{I\cdot}$ and $\C$}\label{sec_est_AI}
Since Assumption \ref{ass_I} and $\diag(\C) = {\bm 1}$ imply that 	$M_{ii} = B_{ik}^2$ for any $i\in I_k$, and given the estimated partition of the pure variables $\wh \I = \{\wh I_1, \ldots, \wh I_{\wh K}\}$, we propose to estimate $B_{I\sbt}$ by 
	\begin{equation}\label{est_BI}
	\wh B_{ik}^2 = \wh M_{ii},\quad \forall i\in \wh I_k,~ k\in [\wh K]
	\end{equation}
	with $\wh M_{ii}$ defined in (\ref{est_M_LL_diag}). 
	Since we can only identify $B_{I\sbt}$ up to a signed permutation matrix, we use the convention that $i_k$ is the first element in $\wh I_k$.  Using the rationale  in (\ref{ident_BI_sign}), in the proof of Theorem \ref{thm_ident_I}, we set 
	\begin{equation}\label{est_BI_sign}
	\sgn(\wh B_{i_kk}) = 1,\qquad \sgn(\wh B_{jk}) = \sgn(\wh R_{i_kj}),\qquad \forall  j\in \wh I_k \setminus\{i_k\}.
	\end{equation}
	We estimate the diagonal of $\C$ by $\diag(\whC) = {\bm 1}$ and, following (\ref{ident_Gamma_II}) and (\ref{ident_C}), we estimate the off-diagonal elements of $\C$ by the corresponding entries of
	\begin{equation}\label{est_C}
	\wh B_{\wh I \sbt}^+\left(
	\wh R_{\wh I \wh I} - \wh \Gamma_{\wh I \wh I}\right)\left[\wh B_{\wh I \sbt}^{+}\right]^{\T}
	\end{equation}
	with $\wh B_{\wh I\sbt}^+ = \bigl[\wh B_{\wh I\sbt}^\T \wh B_{\wh I\sbt}\bigr]^{-1}\wh B_{\wh I\sbt}^\T$ being the left inverse of $\wh B_{\wh I\sbt}$, and $\wh \Gamma_{\wh I\wh I}$ estimated from (\ref{est_Gamma_II}).

	\bigskip

	We provide theoretical guarantees for the estimated loadings of the pure variables  and  of the covariance matrix $\C$ of $Z$, obtained from the above procedure for any $q\ge 2$. The  proofs are  deferred to Appendix \ref{app_proof_sec_est_A}.

	\begin{thm}\label{thm_BI}
	    Under the conditions of Theorem \ref{thm_I_post}, assume $|\bar J_1| \delta_n = o(1)$.  
		With probability tending to one, as $n\to \i$, 
		we have
		\begin{eqnarray*}
			&& \min_{P\in \P_K}\max_{i\in I_k} \|\wh B_{i\sbt} - PB_{i\sbt }\|_\i \lesssim \delta_n,\\ 
			&&  \min_{P\in \P_K}\| \whC - P^\T\C P \|_\i \lesssim \delta_n,\\
			&& \min_{P\in \P_K}\max_{i\in I_k} \|\wh A_{i\sbt} - PA_{i\sbt }\|_\i \lesssim \sqrt{\Sigma_{ii}}\ \delta_n.
		\end{eqnarray*}
		The minimum is taken over 
		the set $\P_K$ of $K\times K$ signed permutation matrices.  
	\end{thm}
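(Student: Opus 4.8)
The plan is to work throughout on the intersection of the event $\E$ of (\ref{event:E}) and the high-probability event on which Theorem \ref{thm_I_post} holds, so that the partition is exactly recovered: after relabelling by the permutation $\pi$ we may assume $\wh I_k = I_k$ and $\wh K = K$, and it suffices to prove the three bounds with $P = \bI$. I will also invoke the companion diagonal concentration $|\wh\Sigma_{ii} - \Sigma_{ii}| \lesssim \Sigma_{ii}\,\delta_n$, which is part of the sub-Gaussian analysis behind Lemma \ref{lem_dev_R} in Appendix \ref{app_aux_lem}. The overall strategy is the standard two-step one: first verify that the \emph{population} version of each estimator reproduces its target exactly, and then bound the estimation error by a perturbation argument driven by $\|\wh R - R\|_\infty \le \delta_n$ together with the regularity conditions (\ref{cond_regularity}).

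For $\wh B_{I\sbt}$, the key population identity is that for any two pure indices $i,j\in I_k$ one has $R_{i\ell} = B_{ik}\,[\C B_{\ell\sbt}^\T]_k$ for every $\ell\ne i,j$, whence $R_{i,\setminus\{i,j\}}$ and $R_{j,\setminus\{i,j\}}$ are parallel with $\|R_{i,\setminus\{i,j\}}\|_q/\|R_{j,\setminus\{i,j\}}\|_q = |B_{ik}|/|B_{jk}|$, while $|R_{ij}| = |B_{ik}B_{jk}|$. Substituting into the population analogue of (\ref{est_M_LL_diag}) yields exactly $B_{ik}^2 = M_{ii}$, \emph{independently of which representative $j$ in the group is used}; this is what neutralises the data-dependent choice of $\wh j$, since on the good event $\wh j$ lands in the correct group $I_k$. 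The error $|\wh M_{ii} - M_{ii}|$ is then a ratio-and-product perturbation: each of $|\wh R_{i\wh j}|$ and the two $\ell_q$-norms differs from its population value by at most $\delta_n$ and $(p-2)^{1/q}\delta_n$ respectively, and under (\ref{cond_regularity}) the norms are of exact order $(p-2)^{1/q}$, so the normalising factors cancel and leave $|\wh M_{ii} - M_{ii}| \lesssim \delta_n$ uniformly over $i\in I$. Taking square roots, legitimate because $M_{ii} = B_{ik}^2 \ge c_{b,I} > c$, gives $\bigl||\wh B_{ik}| - |B_{ik}|\bigr| \lesssim \delta_n$; the sign rule (\ref{est_BI_sign}) is correct because $\sgn(\wh R_{i_kj}) = \sgn(R_{i_kj}) = \sgn(B_{jk})$ whenever $|R_{i_kj}| = |B_{i_kk}B_{jk}| \gtrsim c$ exceeds $\delta_n$. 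Since $B_{i\sbt} = B_{ik}\e_k$ has a single nonzero entry, this is precisely the first bound.

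For $\whC$, I exploit that, by construction, $\wh B_{\wh I\sbt}$ is block-diagonal after grouping rows, with $k$th block the column vector $\wh\beta_k := (\wh B_{ik})_{i\in I_k}$ (write $\beta_k$ for its population analogue); hence $\wh B_{\wh I\sbt}^+$ is explicit and (\ref{est_C}) reduces, for $k\ne k'$, to the self-normalised bilinear form $[\whC]_{kk'} = \wh\beta_k^\T \wh R_{I_kI_{k'}}\wh\beta_{k'}/(\|\wh\beta_k\|_2^2\|\wh\beta_{k'}\|_2^2)$, whose population counterpart equals $[\C]_{kk'}$ exactly. Expanding the numerator difference into its three first-order pieces and the remainder, and bounding them with $\|\wh\beta_k - \beta_k\|_\infty \lesssim \delta_n$ and $\|\wh R - R\|_\infty \le \delta_n$, each piece carries a factor $|I_k|$ (or $\|\beta_k\|_1 \le \sqrt{|I_k|}\,\|\beta_k\|_2$) that is exactly cancelled through the denominator via $\|\beta_k\|_2^2 \ge |I_k|\,c_{b,I}$; the same averaging controls the relative perturbation of the denominator. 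This yields $|[\whC]_{kk'} - [\C]_{kk'}| \lesssim \delta_n$, the second bound.

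Finally, writing $\wh A_{ik} = \sqrt{\wh\Sigma_{ii}}\,\wh B_{ik}$ from (\ref{est_A}) and splitting $|\wh A_{ik} - A_{ik}|$ into a variance-factor term $|\sqrt{\wh\Sigma_{ii}} - \sqrt{\Sigma_{ii}}|\,|\wh B_{ik}| \lesssim \sqrt{\Sigma_{ii}}\,\delta_n$ and a loading term $\sqrt{\Sigma_{ii}}\,|\wh B_{ik} - B_{ik}| \lesssim \sqrt{\Sigma_{ii}}\,\delta_n$ gives the third bound. The main obstacle is the error-accumulation bookkeeping in $\whC$: a naive entrywise bound produces $|I_k|\delta_n$, and recovering the $\delta_n$ rate hinges on exploiting the self-normalising block structure together with the lower bound $c_{b,I}$ on pure-variable loadings. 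A secondary point requiring care is the uniformity of all estimates over the data-dependent representatives $\wh j$, which is exactly why it matters that the population target $M_{ii}$ does not depend on the choice of $j$ within the group.
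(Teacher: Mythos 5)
Your proposal is correct and follows essentially the same route as the paper's proof: conditioning on the exact-recovery event, the population identity $B_{ik}^2 = |R_{ij}|\,\|R_{i,\setminus\{i,j\}}\|_q/\|R_{j,\setminus\{i,j\}}\|_q$ (valid for every $j$ in the group, which neutralises the data-dependent $\wh j$), the explicit self-normalised block form of $[\whC]_{ab}$ with the $|I_k|$ factors cancelled by $\|\beta_k\|_2^2\gtrsim |I_k|$, and the $\sqrt{\wh\Sigma_{ii}}$ splitting for $\wh A$. The only differences are cosmetic (you take square roots directly where the paper divides differences of squares), so nothing further is needed.
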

	
	Similar to Theorem \ref{thm_I_post}, the results of Theorem \ref{thm_BI} can be easily stated non-asymptotically, 
	as explained in Remark \ref{rem_cond_I_post}.

	\subsubsection{Estimation of the submatrix $B_{J \cdot }$}\label{sec_est_AJ}
	
	Since the correlation matrix $R$ takes the form $B\C B^\T +\Gamma$, we can write  
	\begin{eqnarray*}
		\begin{bmatrix}
			R_{I I} & R_{I J}\\
			R_{J I}  & R_{J J}
		\end{bmatrix} &=& \begin{bmatrix}
			B_{I \sbt}\C B_{I \sbt} ^\T& B_{I \sbt}\C B_{J \sbt} ^\T
			\\
			B_{J \sbt}\C B_{I \sbt} ^\T& B_{J \sbt}\C B_{J \sbt} ^\T
		\end{bmatrix} + \begin{bmatrix}
			\Gamma_{II} & \\ & \Gamma_{JJ}
		\end{bmatrix}.
	\end{eqnarray*}
	In particular, from $R_{I J} = B_{I \sbt}\C B_{J \sbt} ^\T$, the submatrix $B_{J\sbt} $ is solved via
	\begin{eqnarray*} B_{J\sbt} ^\T &=&
		\C^{-1} [ B_{I\sbt}^\T B_{I\sbt} ]^{-1} B_{I\sbt}^\T\  R_{IJ}.
	\end{eqnarray*}
	Hence, after estimating $\C$ and $B_{\wh I\sbt}$, using the estimate $\wh R$, we obtain a plug--in estimate for $B_{\wh J\sbt}$ with $\wh J= [p]\setminus \wh I$.
	Alternatively, we can regress $\wh R_{\wh I \wh J}$ on $\wh B_{\wh I\sbt} \wh \Sigma_Z$, or  $[\wh B_{\wh I\sbt}^\T\wh B_{\wh I\sbt} ]^{-1} \wh B_{\wh I\sbt}^\T \wh R_{\wh I \wh J}$ on
	$\wh \Sigma_Z$. This approach has been adopted in \cite{bing2017adaptive}; \cite{Top}, and allows for incorporating sparsity restrictions on $B$ and hence on $A$, and ultimately can result in  estimation at minimax-optimal rates. 
	We refer to the aforementioned work for the details of estimating the entire assignment matrix $A$ under sparsity, and subsequent theoretical analysis. We mention that as soon as an estimator of $A_{I\sbt}$ and of $\C$ are found, the full estimation of $A$, and its  guarantees, are  almost identical to those  proposed by \cite{bing2017adaptive}; \cite{Top}, and are therefore omitted. 
	

	\section{Practical considerations \& simulations}\label{sec_sims}

	We discuss the selection of tuning parameters in Section \ref{sec_cv} and the pre-screening procedure in Section \ref{sec_pre_screen}. The results of extensive simulations are presented  in Sections \ref{sec_sim_only} and \ref{sec_pre_screen}.
	
	\subsection{Selection of tuning parameters}\label{sec_cv}
	
	Algorithm \ref{alg_whole} requires two tuning parameters: $\delta$ in Algorithm \ref{alg_I} for estimating $H$ and $\mu$ in (\ref{est_K}) for estimating $K$. We now explain how to choose these two tuning parameters, in turn.  
	
	\paragraph{Selection of $\delta$.}
	We discuss how to select the leading constant $c$ of $\delta(c) = c \sqrt{\log (p\vee n) / n}$ from Algorithm \ref{alg_I} in a fully data-driven way. We first split the data into two parts with equal sample size. Denote their corresponding correlation matrices as $\wh R^{(1)}$ and $\wh R^{(2)}$. 
	
	Given a pre-specified grid of $\mathcal{C} = \{c_1, \ldots, c_N\}$, for each $\ell \in [N]$, we estimate the set of parallel rows, $\wh H_{(\ell)}$, and its partition $\wh \H_{(\ell)}$ by applying Algorithm \ref{alg_I}  to $\wh R^{(1)}$ with $\delta(c_\ell) = c_{\ell}\sqrt{\log (p\vee n) / n}$. Then $\wh B_{\wh H_{(\ell)}\sbt}$ and $\whC$ are estimated from (\ref{est_BI}), (\ref{est_BI_sign}) and (\ref{est_C}). Finally, we compute
	\begin{equation}\label{def_L}
		L(\ell) = \left\|
			\wh R^{(2)} - \wt R_{(\ell)}^{(1)}
		\right\|_{\textrm{F-off}}
	\end{equation}
	where $\|\cdot \|_{\textrm{F-off}}$ denotes the Frobenius norm of all off-diagonal entries and 
	\[
		\wt R_{(\ell)}^{(1)} = \begin{bmatrix}
			\wh B_{\wh H_{(\ell)}\sbt}\whC \wh B_{\wh H_{(\ell)}\sbt}^\T & \wh R^{(1)}_{\wh H_{(\ell)} \wh H^c_{(\ell)}}\\
			\wh R^{(1)}_{\wh H^c_{(\ell)} \wh H_{(\ell)}} &  \wh R^{(1)}_{\wh H^c_{(\ell)} \wh H_{(\ell)}}\left[\wh B_{\wh H_{(\ell)}\sbt}\whC \wh B_{\wh H_{(\ell)}\sbt}^\T\right]^{-} \wh R^{(1)}_{\wh H_{(\ell)} \wh H^c_{(\ell)}}
		\end{bmatrix}
	\]
	with $M^-$ being the Moore-Penrose pseudo-inverse of any matrix $M$ and $\wh H^c_{(\ell)} = [p] \setminus \wh H_{(\ell)}$. The final $c_{\wh \ell}$ is chosen as 
	$
		\wh \ell = \arg\min_{1\le \ell \le N}L(\ell).
	$
	
	Since Lemma \ref{lem_I_hat} shows that $|\wh H(\delta)|$ is monotone in $\delta$, the range of the initial grid $\mathcal{C}$ can be chosen such that 
	$\delta(c_1) = \min_{i \ne j} \wh S_q(i,j)$ and 
	$\delta(c_N) = \max_{i\ne j}\wh S_q(i,j)$. This choice guarantees $2\le |\wh H(\delta)| \le p$ for all $\delta = \delta(c_\ell)$. When one has prior information of the range of $|H|$, the choice of $c_1$ and $c_N$ can be adapted correspondingly. In practice, we also recommend to adapt the above framework to $10$-fold cross-validation. 
	
To motivate the criterion $L(\ell)$ in (\ref{def_L}), note that if some $\ell$ yields good estimates  of $H$ and $\H$, then 
	$[\wt R^{(1)}_{(\ell)}]_{\wh H_{(\ell)}\wh H_{(\ell)}}$ is close to $B_{H\sbt} \C B_{H\sbt}^\T$. As a result 
	$
		[\wt R^{(1)}_{(\ell)}]_{\wh H^c_{(\ell)}\wh H^c_{(\ell)}}
	$
	is close to 
	$$
	R_{H^c H}(B_{H\sbt} \C B_{H\sbt}^\T)^- R_{HH^c} 
	= B_{H^c\sbt} \C B_{H^c \sbt}^\T.
	$$
	Therefore, the overall loss $L(\ell)$ should be small, as $\wh R^{(2)}$ can be well approximated by $\wt R_{(\ell)}^{(1)}$.
	
	\paragraph{Selection of $\mu$.}
	We discuss the selection of $\mu$ used in (\ref{est_K}). With $\wh \H$ given from Algorithm \ref{alg_I}, in light of the theoretical rate given in Proposition \ref{prop_K_parallel}, we propose to choose 
	$$
	    \mu = c \delta_n \left(\sqrt{\wh G }  + \wh G\delta_n\right)
	$$ 
	where $\wh G = |\wh \H|$, $\delta_n$ is selected from the procedure discussed above and the constant $c$ is chosen from a specified grid $[c_l, c_u]$ (our extensive simulation suggests to use $c_l = 0.1$ and  $c_u = 0.5$ with increment equal to $0.02$). We select $c$ by minimizing the prediction error on a validation dataset. Specifically, let $\wh \H$ and $\wh L$ be obtained, respectively, from Algorithm \ref{alg_I} and (\ref{def_L_hat}). We randomly split the data into two parts of equal sizes. For each $c$ in the grid, we compute $\wh M_{\wh L\wh L}^{(1)}$ and $\wh K$ from (\ref{est_M_LL_off_diag}) -- (\ref{est_M_LL_diag}) and (\ref{est_K}) by using the first dataset. Based on the spectral decomposition   \[
	\wh M_{\wh L\wh L}^{(1)}=\sum_k\sigma_k u_ku_k^\T\]
	with eigenvalues $\sigma_1 \ge \sigma_2 \ge \cdots$,
	we choose $c$ that minimizes the risk $\|\wt M - \wh M_{\wh L\wh L}^{(2)}\|_F^2$. Here $\wt M = \sum_{k=1}^{\wh K}\sigma_ku_ku_k^\T$  and $\wh M_{\wh L\wh L}^{(2)}$ is computed from (\ref{est_M_LL_off_diag}) -- (\ref{est_M_LL_diag}) by using the second dataset.\\
	
	Alternatively, instead of considering different values of $\mu$, one can directly choose  $k\in[\wh G]$ which minimizes the prediction loss. Specifically, we compute $\wt M$ by using the first dataset for $1\le k \le \wh G$ and then choose $k$ based on the smallest $\|\wt M - \wh M_{\wh L\wh L}^{(2)}\|_F^2$.\\
	
	Both procedures require the computation of a spectral decomposition only once.
	Our simulation shows these two procedures have almost the same performance.

	\subsection{Simulation study}\label{sec_sim_only}
	In this section, we conduct extensive simulations to support our theoretical findings from two aspects: the recovery of pure variables and the estimation errors of $A_{I\sbt}$, $\C$ and $[\Sigma_E]_{II}$. 
	We start by describing the metrics used to quantify these aspects.

	\paragraph{Metrics:}
	We first introduce the metrics that quantify the estimation performance  of the pure variable index set and respective partition. Recall that, for any estimate $\wh \I$ of $\I$, we need to evaluate both its estimation quality, and also that of  $\wh I = \cup_{k}\wh I_k$ as an estimate of $I$.  Towards this end, the first two metrics that we consider are the true positive rate (TPR) and the true negative rate (TNR), defined as  
	\begin{equation*}
		{\rm TPR} =  {|\wh I \cap I| \over |I|},\qquad {\rm TNR} = {|\wh J\cap J| \over |J|}.
	\end{equation*}
	Here, $I$ is the true set of pure variables, $J = [p]\setminus I$ and $\wh J := [p]\setminus \wh I$. TPR and TNR quantify the performance of the estimated set $\wh I$ but do not reflect how well we recover the partition of pure variables. %
	The pairwise approach of  \cite{wiwie2015comparing} measures how well two partitions are aligned with each other. This approach also allows $\wh K = |\wh \I|$ to be different from $K$. 
	For any pair $1\leq j<k\leq d := |\wh I \cup I|$, define 
	\begin{align*}
	{\rm TP}_{jk} &= 1\left\{ \textrm{$j,k\in I_a$ and $j,k\in \wh I_b$ for some $1\leq a\leq K$ and $1\leq b\leq \wh K$}\right\},\\
	{\rm TN}_{jk} &=1\left\{ \textrm{$j,k\notin I_a$ and $j,k\notin  \wh I_b$ for any $1\leq a\leq K$ and $1\leq b\leq \wh K$}\right\},\\
	{\rm FP}_{jk}&=1\left\{\textrm{$j,k\notin I_a$ for any $1\leq a\leq K$ and $j,k\in \wh I_b$ for some  $1\leq b\leq \wh K$}\right\},\\
	{\rm FN}_{jk} &=1\left\{\textrm{$j,k\in I_a$ for some $1\leq a\leq K$ and $j,k\notin \wh I_b$ for any $1\leq b\leq \wh K$}\right\}.
	\end{align*}
	and we define 
	\begin{align*}
	{\rm TP}&=\sum_{1\leq j<k\leq d} {\rm TP}_{jk},\quad  {\rm TN}=\sum_{1\leq j<k\leq d} {\rm TN}_{jk},\\
	{\rm FP}&=\sum_{1\leq j<k\leq d} {\rm FP}_{jk},\quad {\rm FN}=\sum_{1\leq j<k\leq d} {\rm FN}_{jk}.
	\end{align*}
	We use sensitivity (SN) and specificity (SP) to evaluate the performance of $\wh \I$, defined as
	\begin{equation}\label{eqspsn}
	{\rm SP=\frac{TN}{TN+FP}}, \qquad {\rm SN=\frac{TP}{TP+FN}}.
	\end{equation}
	To measure the errors of estimating $A_{I\sbt}$ and $\C$ (and indirectly $[\Sigma_E]_{II}$), we use 
	$$
	    {\rm Err}(A)= {1\over |\wh I|}\left\|\wh A_{\wh I\sbt}\wh A_{\wh I\sbt}^\T - A_{\wh I\sbt}A_{\wh I\sbt}^\T\right\|_F, \quad {\rm Err}(\C)= {1\over \wh K}\left\|\wh \Sigma_Z - \C\right\|_F.
	 $$
	 Note that Err$(\C)$ is only well-defined when $\wh K = K$, whereas Err$(A)$ allows $\wh I$ to differ from $I$.

	\paragraph{Methods:} 

    Throughout the simulation, we consider the PVS procedure given in Algorithm \ref{alg_whole} with $q = 2$.\footnote{The same procedure, with  $q = \i$,  is not included as it has similar performance but is much more computationally demanding. The supplement contains more simulation results regarding  the numerical comparison between $q = 2$ and $q = \i$.} The tuning parameters $\delta$ and $\mu$ are selected by the cross-validation approach outlined in Section \ref{sec_cv}.\footnote{We do not distinguish between the different possible  selections of $\mu$ described in Section \ref{sec_cv},  as they  yield almost the same results. The presented results are based on the second criterion of enumerating $1\le k\le \wh G$ directly. 
    } We compare its performance  with that of the LOVE algorithm proposed in \cite{bing2017adaptive}, developed for factor models in which  pure variables in the same partition group have equal factor loadings, an assumption relaxed  in this work.

	\paragraph{Data generating mechanism:} We first describe how we generate $\C$ and $\Sigma_E$. The number of latent factor $K$ is set to $10$. 
	The matrix $\C$ is generated as $\diag(\C) = {\bm 1}$ and 
	$[\C]_{ab} = (-1)^{a+b}\rho_Z^{|a-b|}$ for any $a,b\in [K]$, where the parameter $\rho_Z \in [0,1)$ controls the correlation of the latent factor $Z$. $\Sigma_E$ is a diagonal matrix with diagonal elements independently sampled from Unif$(1,3)$.
	
	We generate $A$ as follows.  First, we 
	generate a homogeneous matrix $\wt A$ with the same support as $A$ and satisfies $|\wt A_{i\sbt}| = \e_k$ for all $i\in I_k$, $k\in [K]$ and $\|\wt A_{j\sbt}\|_1\le 1$ for all $j\in J$. 
	Specifically, we consider the following configuration of signs for pure variables in each cluster:  $(3,2)$, $(4,1)$, $(2,3)$, $(1,4)$ and $(5,0)$, with the convention that the first number denotes the number of positive pure variables in that group and the second one denotes the number of negative pure variables. Among the 10 groups, each sign pattern is repeated twice. We construct  $\wt A_{J\sbt}$ by  independently sampling each entry from $N(0,1)$ and normalize each row to have $\ell_1$-norm equal to 1.   
	Finally, we create a heterogeneous $A$ via $A:= D_A \wt A$ using 
	$D_A = \diag(d_1, \ldots, d_p)$ with 
	\[
		d_j =  \alpha{ p \ v_j^\eta  \over  \sum_{i = 1}^p v_i^\eta }\qquad {\rm and } \qquad  v_1,\ldots, v_p \overset{i.i.d.}{\sim}  \text{Unif}(1, 2).
	\]
	The parameter $\alpha > 0$ controls the overall scale of $A$ as $p^{-1}\sum_{j=1}^p d_j = \alpha$. Hence the larger $\alpha$, the stronger signal we have. The parameter $\eta \ge 0$ controls the level of heterogeneity of the rows of $A$. The larger $\eta$, the more heterogeneous $A$ is. The case $\eta = 0$ corresponds to the homogeneous $A$, that is, $A = \alpha \wt A$. 
	
	Finally, the data matrix $\X$ is generated from $\X=\Z A^\T +\bE$ where the rows of  $\Z$ and $\bE$ are i.i.d. samples from $N(0, \C)$ and $N(0, \Sigma_E)$, respectively. For each setting in the simulation studies below, we
	repeat generating the data matrix 100 times and 
	report the averaged metrics that we introduced earlier based on 100 repetitions.

	\subsubsection{Performance of estimating the pure variables}
	
	We compare PVS with LOVE in terms of TPR, TNR, SP and SN under various scenarios. To illustrate the dependency on different parameters, we vary $n$, $p$, $\alpha$, $\rho_Z$ and  $\eta$ one at a time. We fix $n = 300$, $p = 500$, $\alpha = 2.5$, $\rho_Z = 0.3$ and $\eta = 1$, as a baseline setting,  when they are not varied.

	\subsubsection*{Varying $n$ and $p$ one at a time}
	To study the performance of PVS and LOVE under different combinations of $n$ and $p$, we first vary $p \in \{100, 300, 500, 700, 900\}$ with $n=300$, and then vary $n\in \{100, 300, 500, 700, 900\}$ with $p=500$. Figure \ref{fig_np} depicts the TPR, TNR, SP and SN metrics of PVS and LOVE in each setting.  
	
	PVS consistently recovers both the set and the partition of pure variables.  LOVE fails to control TPR and SN. Its often lower TPR means that some pure variables are not selected. The undercount of pure variables further leads to a low SN. The low performance of LOVE is to be expected when $\eta = 1$ as it has theoretical guarantees only when the rows of $A$ are homogeneous.  
	The performance of PVS improves further as $n$ increases, but it isn't  impacted much as $p$ gets larger, in line with 
	Theorem \ref{thm_score_ell_q} and Corollary \ref{cor_I}.

	\begin{figure}[ht]
		\centering
		\begin{tabular}{cc}
			\includegraphics[width = .4\textwidth]{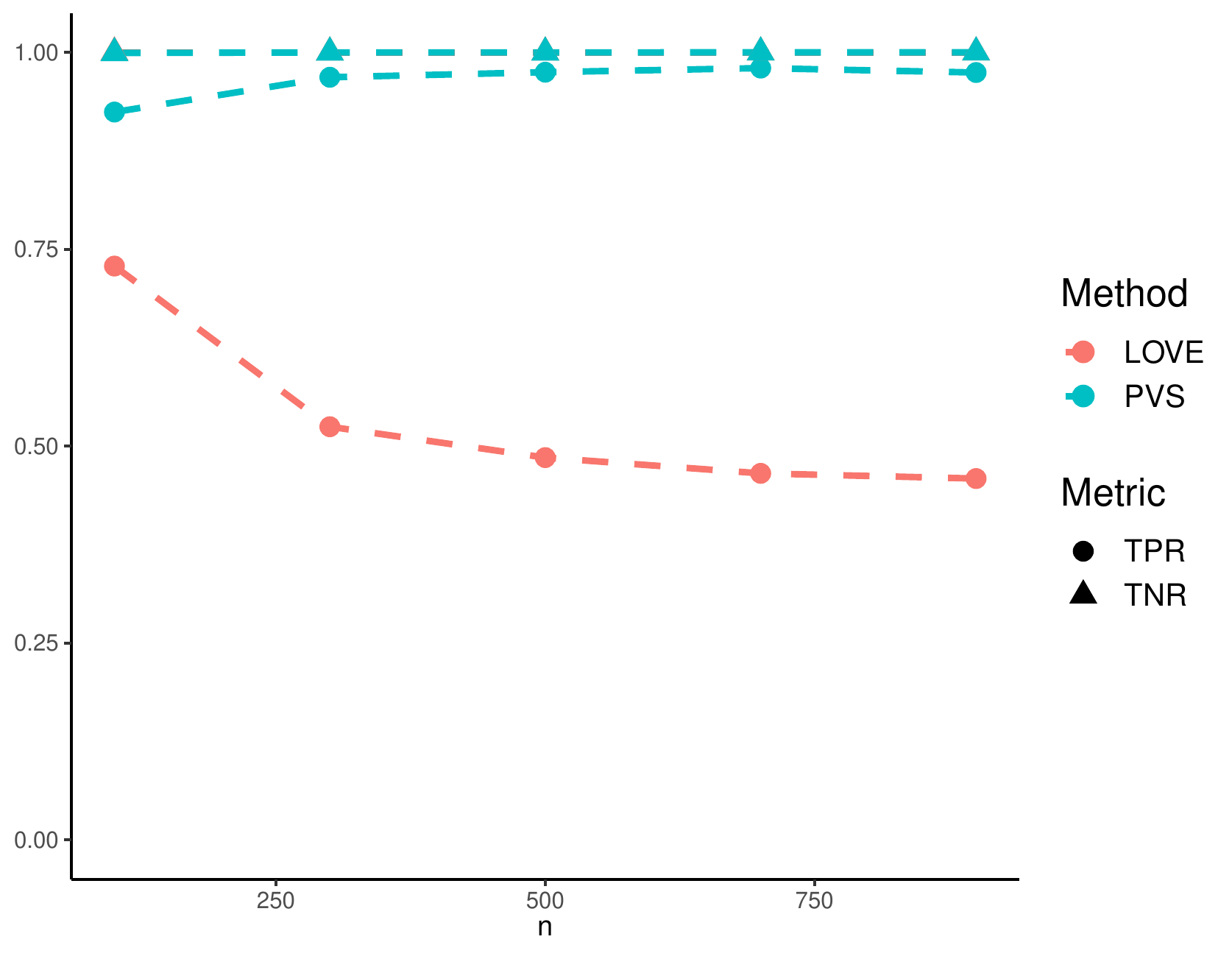} & 
			\includegraphics[width = .4\textwidth]{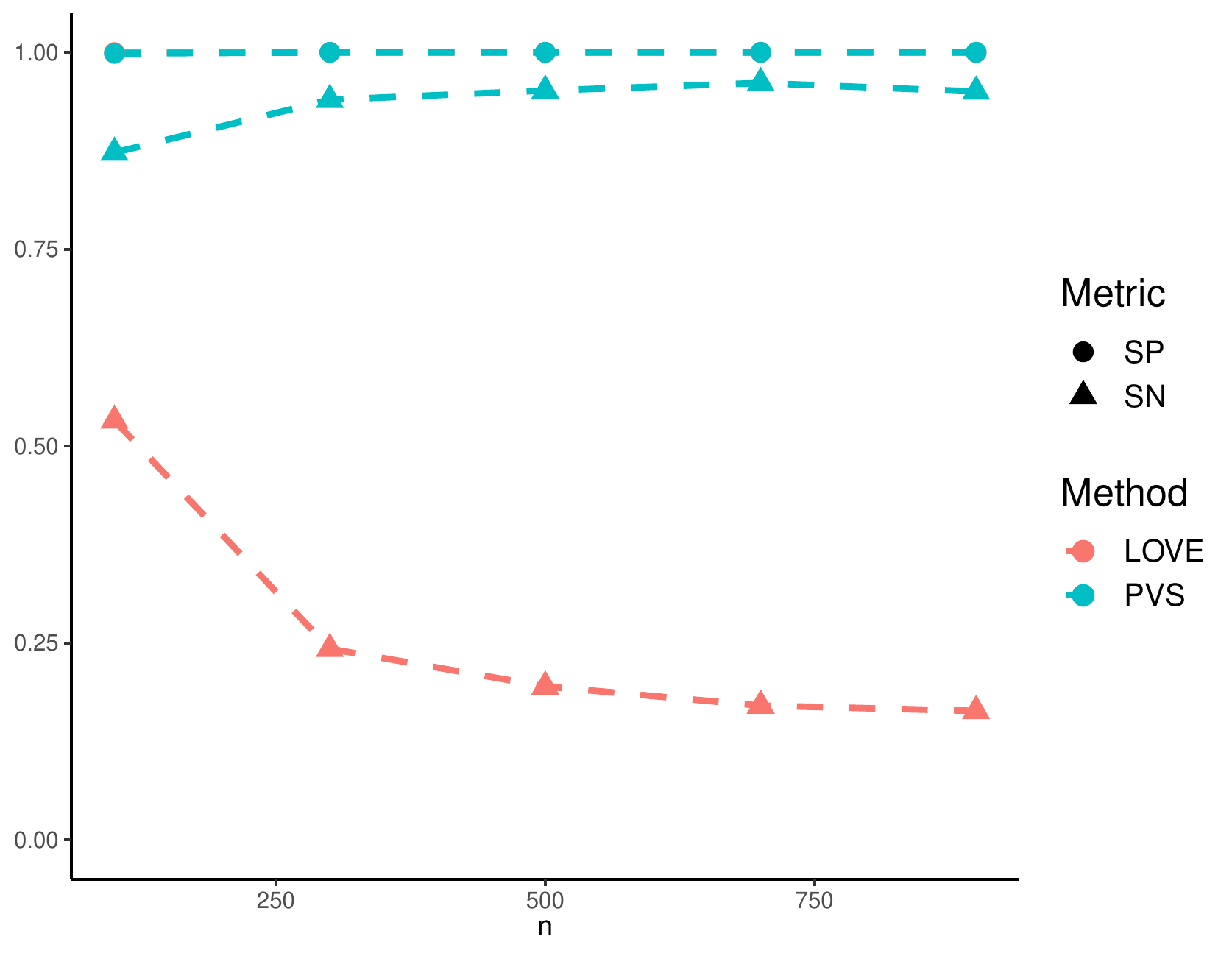}\\
			\includegraphics[width = .4\textwidth]{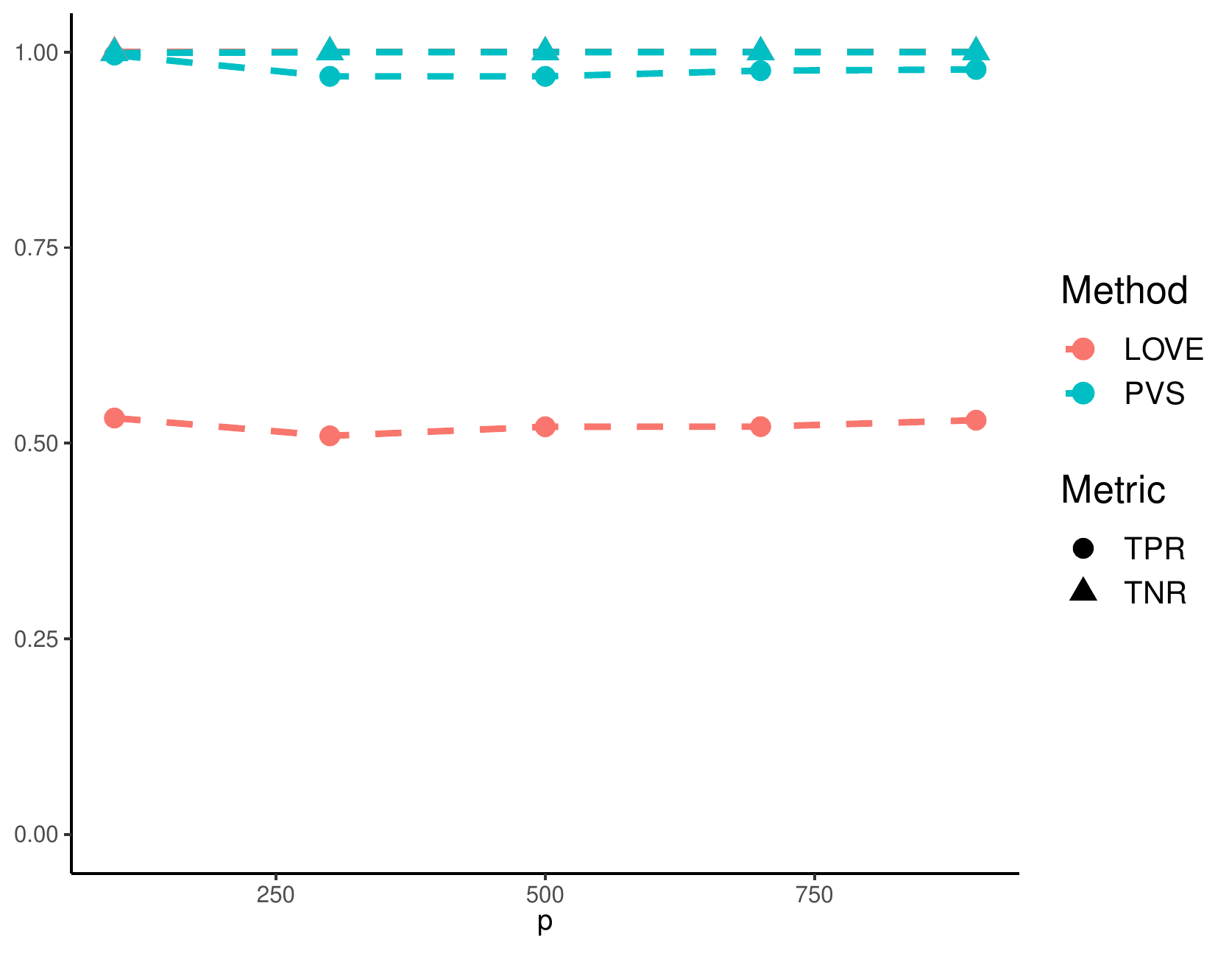} & 
			\includegraphics[width = .4\textwidth]{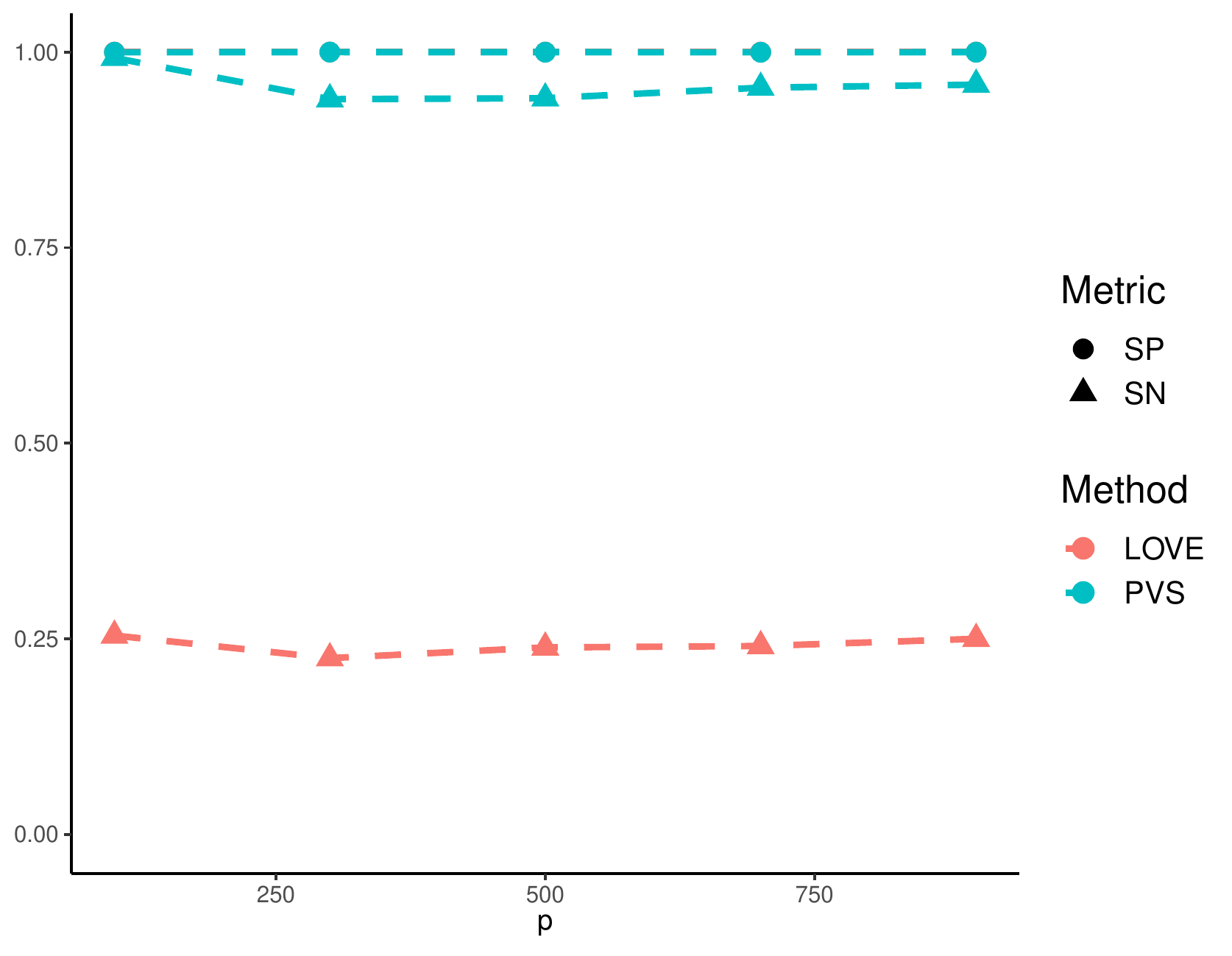}
		\end{tabular}
		\caption{Performance of PVS and LOVE as $n$ or $p$ varies.}
		\label{fig_np}
	\end{figure}

	\subsubsection*{Varying  $\alpha$ and $\rho_Z$ to change the signal strength}

	In this section we study how the performance of PVS changes for various levels of the signal strength. Recall from Section \ref{sec_theory_H_K} that $\min_i\|A_{i\sbt}\|_2$ and $\lambda_K(\C)$ jointly contribute to the signal strength. Further recall that $\alpha$ controls the overall scale of $A$ while $\rho_Z$ controls $\lambda_K(\C)$. We first set $\rho_Z = 0.3$ with $\alpha$ varying within $\{1.5, 2, 2.5, 3, 3.5\}$, then set $\alpha = 2.5$ and vary $\rho_Z \in \{0, 0.2, 0.4, 0.6, 0.8\}$. 
	
    From Figure \ref{fig_alpha_rhoZ}, PVS has near-perfect performance as long as $\alpha > 2$ and $\rho_Z < 0.8$. Its performance gets affected when $\alpha$ is too small or the latent factor $Z$ is too correlated. In particular, PVS tends to miss a few pure variables while still has a perfect TNR and SP.
	
	\begin{figure}[ht]
		\centering
		\begin{tabular}{cc}
			\includegraphics[width = .4\textwidth]{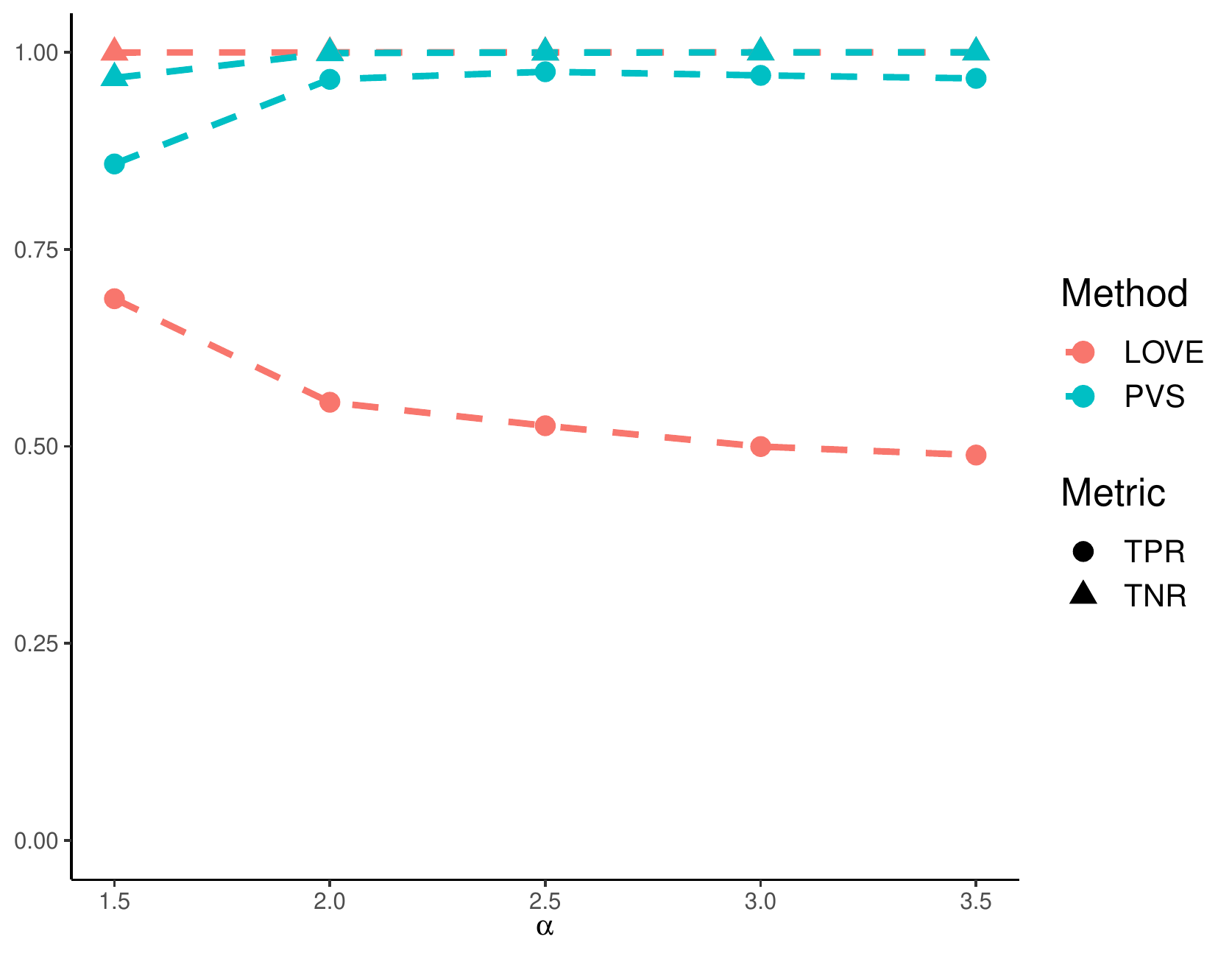} & 
			\includegraphics[width = .4\textwidth]{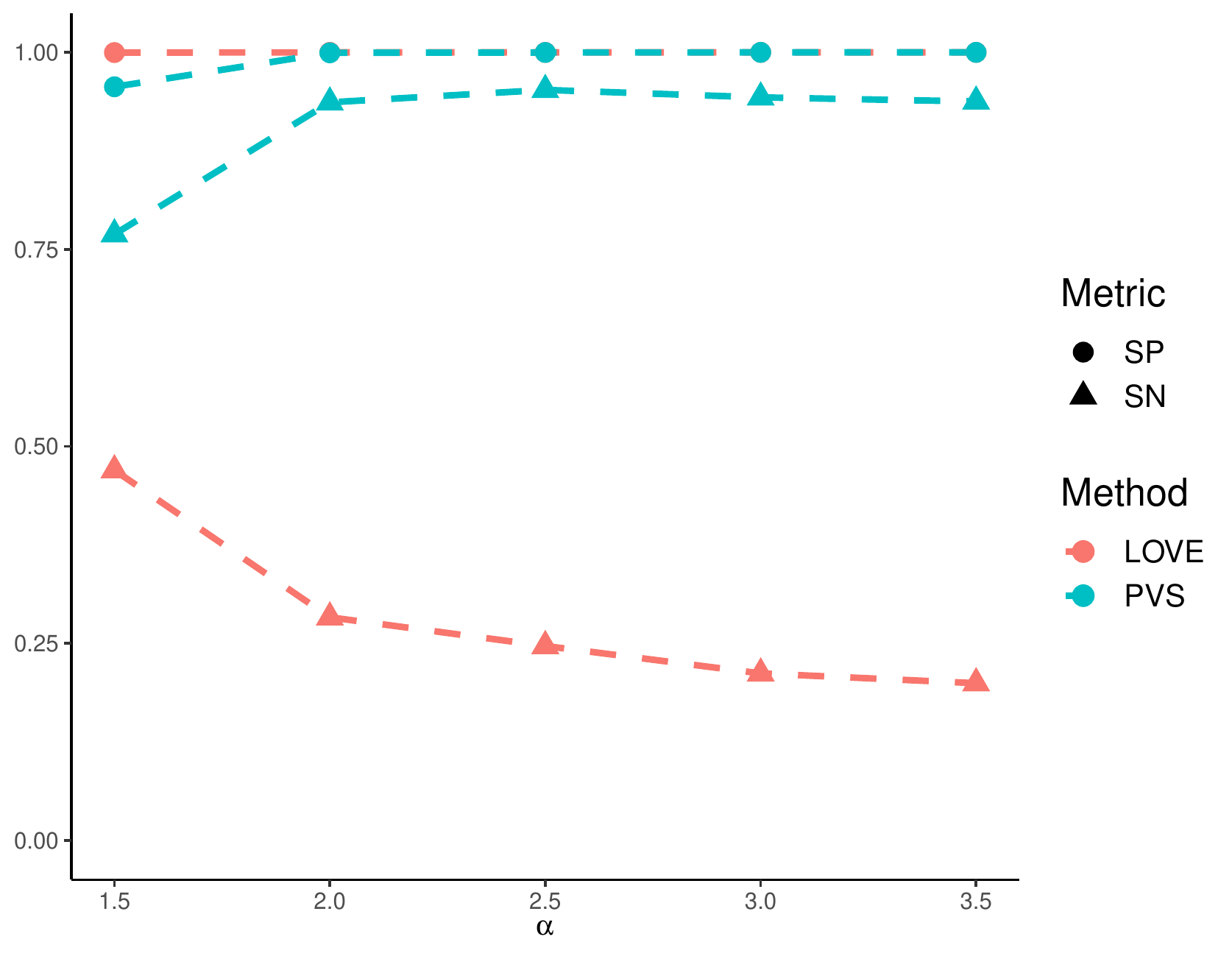}\\
			\includegraphics[width = .4\textwidth]{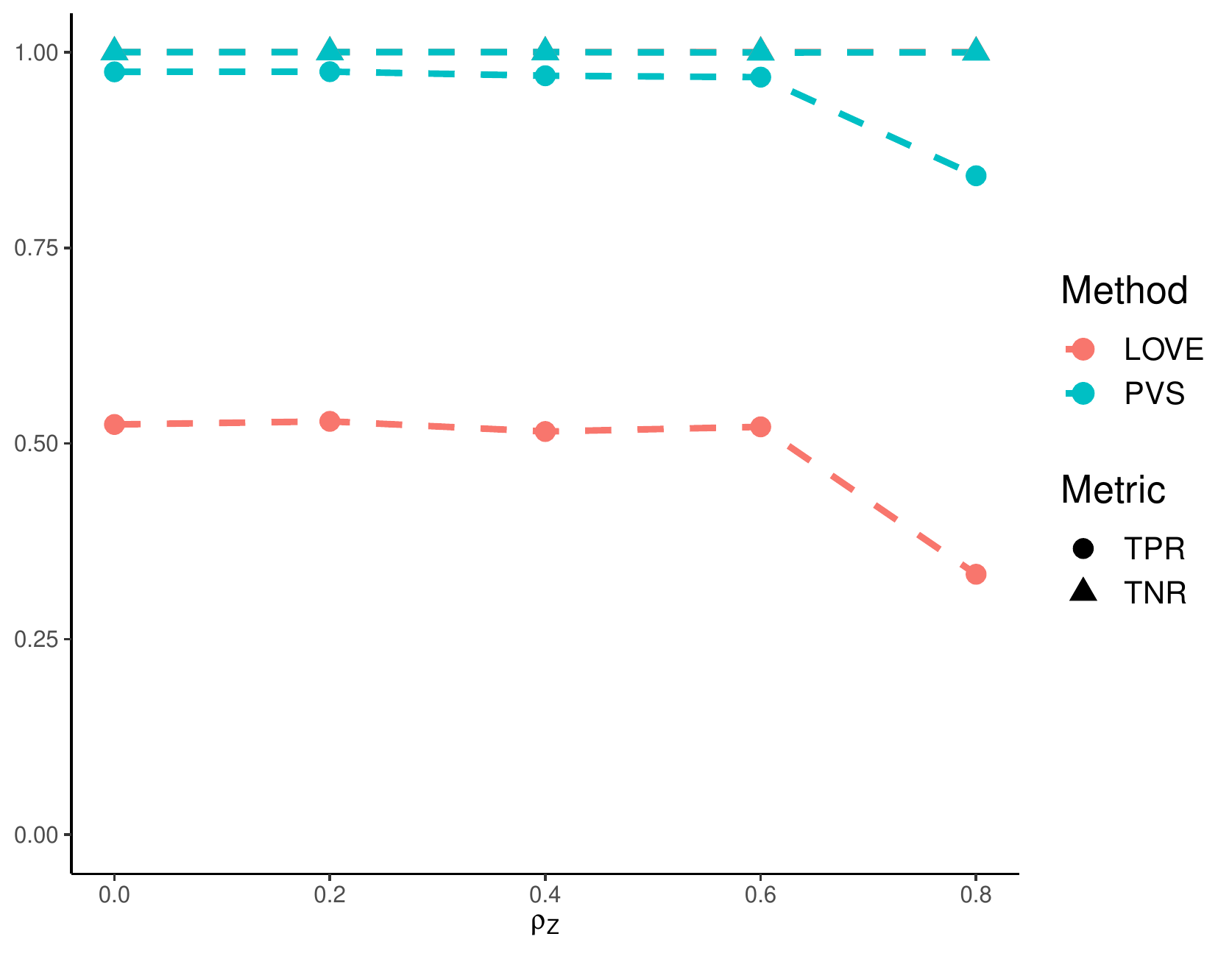} & 
			\includegraphics[width = .4\textwidth]{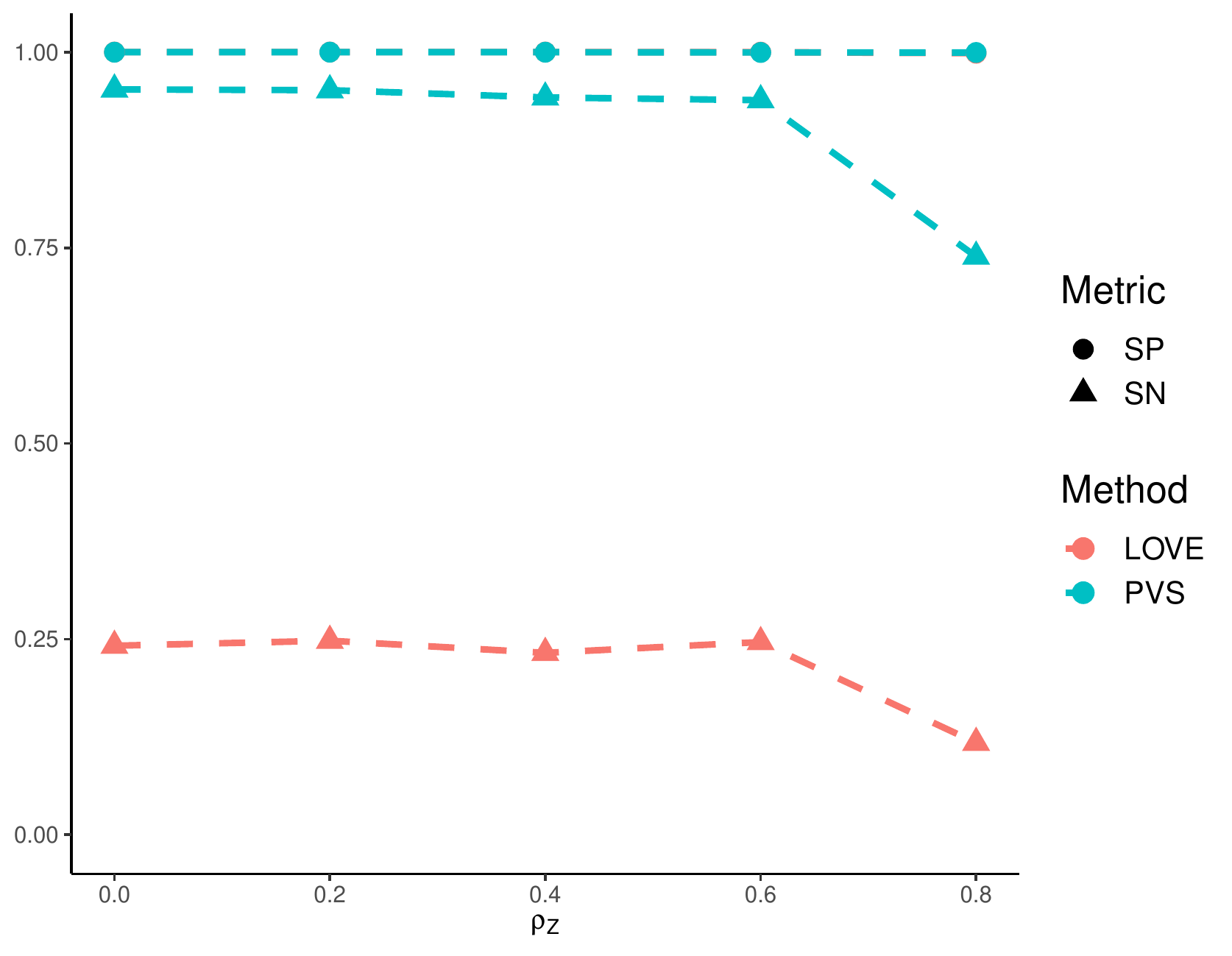}
		\end{tabular}
		\caption{Performance of PVS and LOVE as $\alpha$ or $\rho_Z$ varies.}
		\label{fig_alpha_rhoZ}
	\end{figure}

	\subsubsection*{Varying $\eta$ to change the heterogeneity of $A$}
	
	Finally, we investigate the effect of heterogeneity of $A$ on the performance of PVS and LOVE. For this end, we vary $\eta \in \{0, 0.5, 1, 1.5, 2\}$ to change the heterogeneity of $A$ (recall that small $\eta$ means less heterogeneity). Figure \ref{fig_eta} shows the SP, SN, TPR and TPN of both LOVE and PVS.  
	
	Starting from $\eta = 0$, that is, in the absence of heterogeneity, both LOVE and PVS perfectly recover the pure variables. As the heterogeneity increases, the performance of LOVE drops dramatically whereas PVS is only slightly affected. In the presence of strong heterogeneity, that is, $\eta \ge 1.5$, PVS starts to select less pure variables. The explanation is that some pure variables have too weak a signal to be captured by PVS when $\eta$ is large (recall that the total signal strength over all rows of $A$ is fixed for various of $\eta$).

	\begin{figure}[ht]
		\centering
		\begin{tabular}{cc}
			\includegraphics[width = .4\textwidth]{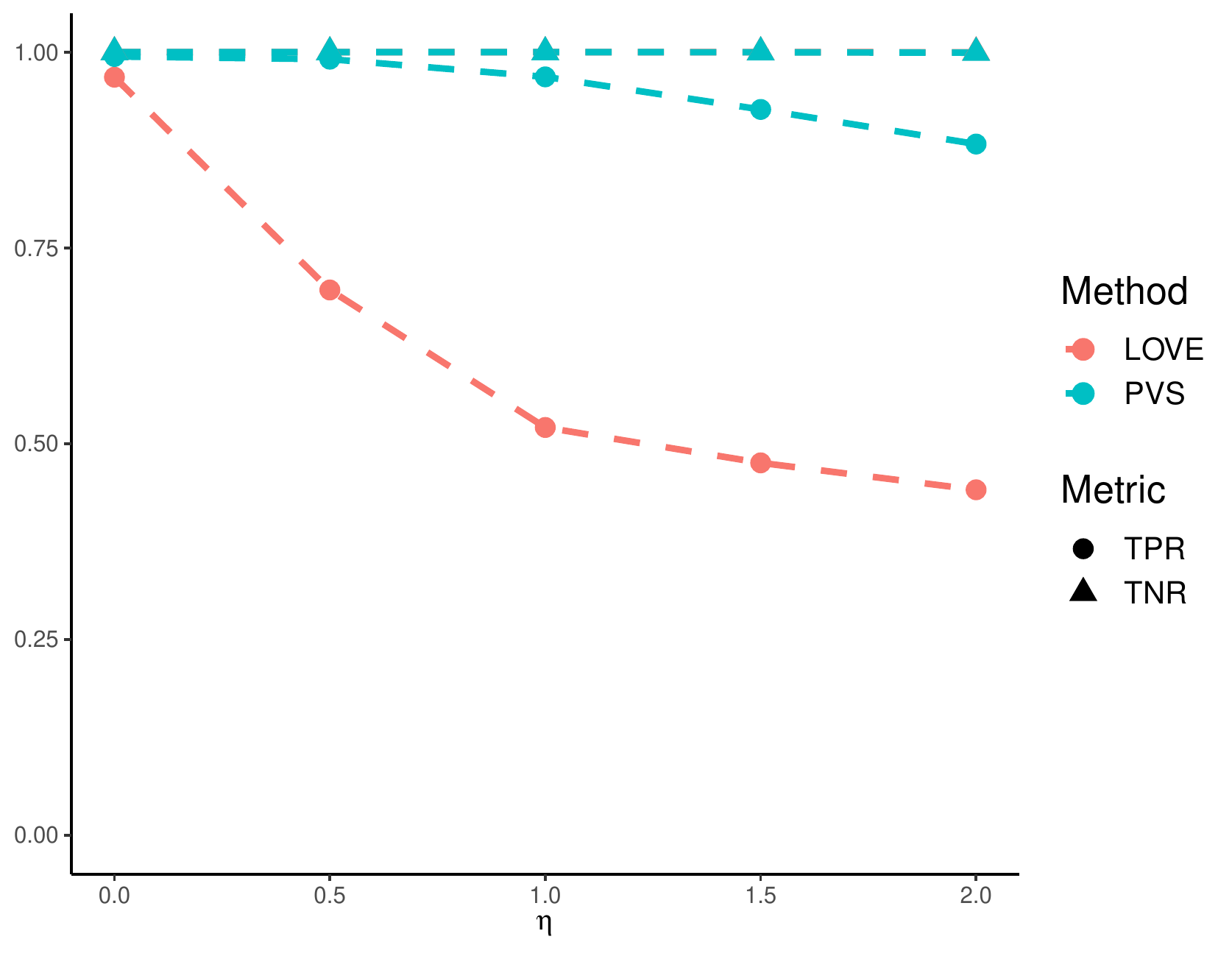} & 
			\includegraphics[width = .4\textwidth]{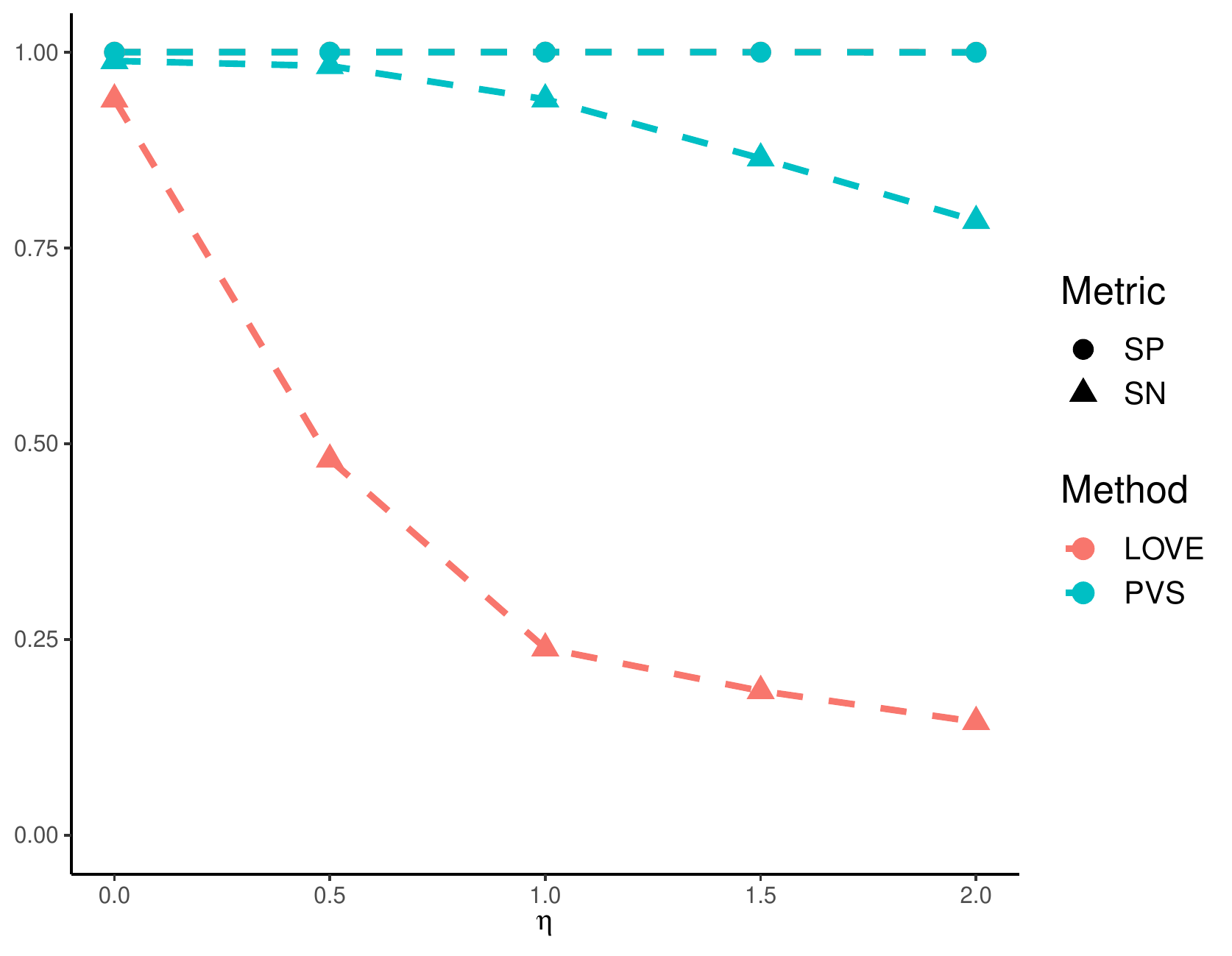}
		\end{tabular}
		\caption{Performance of PVS and LOVE as $\eta$ increases}
		
		\label{fig_eta}
	\end{figure}

	\subsubsection{Errors of estimating $A_{I\cdot}$ and $\C$}
	
	We evaluate PVS in terms of Err$(A)$ and Err$(\C)$ as the sample size $n$ changes. We set $p = 300$, $K=10$, $\rho_Z = 0.3$, $\eta = 1$, $\alpha = 2.5$ and vary $n\in \{100, 300, 500, 700, 900\}$. For each setting, the averaged Err$(A)$ and Err$(\C)$ together with their standard deviations are collected in Table \ref{tab_error}. As expected from Theorem \ref{thm_BI}, the errors for both metric decrease as $n$ increases.

	\begin{table}[ht]
        \centering
        \caption{The averaged errors of estimating $A_{I\cdot}$ and $\C$ as $n$ increases (the numbers in the parenthesis are standard deviations).}
        \label{tab_error}
        {\renewcommand{\arraystretch}{1.3}{
        \begin{tabular}{l|ccccc}
          \hline
         & $n=100$ & $n = 300$ & $n = 500$ & $n = 700$ & $n= 900$ \\ 
          \hline
        {\rm Err}$(A)$  
        & 0.133 (0.048) & 0.046 (0.019) & 0.023 (0.009) & 0.017 (0.008) & 0.014 (0.007)\\ 
       {\rm Err}$(\C)$ & 0.012 (0.006) & 0.005 (0.005) & 0.003 (0.003) & 0.002 (0.003) & 0.004 (0.006)\\
        \hline
        \end{tabular}
        }}
    \end{table}
    
    Finally, since Err$(\C)$ relies on $\wh K = K$, we remark that for almost all of the settings we have considered so far, Algorithm \ref{alg_whole} consistently selects $K$ (we remark that Algorithm \ref{alg_I} consistently selects $K$ as well). To save space, we defer  more simulation results regarding the selection of $K$ to the supplement, including some other settings where Algorithm \ref{alg_whole} consistently selects $K$ but Algorithm \ref{alg_I} does not.

	\subsection{Pre-screening}\label{sec_pre_screen}
	
	We assume $\|A_{j\sbt}\|_2> 0$ for $1\le j\le p$ to establish the identifiability of $H$ in Section \ref{sec_ident}. This can be guaranteed using a  pre-screening method, that is based on the following lemma. 
	\begin{lemma}\label{lem_pure_noise}
	    Under model (\ref{model}) and Assumption \ref{ass_parallel}, we have, for any $q\ge 0$, 
	    \[
	        \|A_{i\sbt}\|_q = 0 \quad \iff \quad \|R_{i, \setminus\{i\}}\|_q = 0.
	    \]
	\end{lemma}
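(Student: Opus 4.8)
The plan is to reduce the statement to a linear-algebra fact about the rows of $A$ and then invoke Assumption \ref{ass_parallel}. First recall from (\ref{eq_R}) that $R = B\C B^\T + \Gamma$, where $B = D_{\Sigma}^{-1/2}A$ and $\Gamma = D_{\Sigma}^{-1/2}\Sigma_E D_{\Sigma}^{-1/2}$ is \emph{diagonal} because $\Sigma_E$ is. Since $\Sigma_{jj}\ge \sigma_j^2>0$ for every $j$, the matrix $D_{\Sigma}^{-1/2}$ is a positive diagonal rescaling, so $B_{j\sbt}$ is a nonzero multiple of $A_{j\sbt}$; in particular the two rows have the same support, and for every $q\ge 0$ we have $\|A_{i\sbt}\|_q = 0 \iff B_{i\sbt} = 0$ (recall $\|v\|_q = 0$ iff $v=0$, including the count norm $q=0$). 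For $j\ne i$ the diagonal matrix $\Gamma$ does not contribute, so $R_{ij} = B_{i\sbt}\C B_{j\sbt}^\T$, and hence $\|R_{i,\setminus\{i\}}\|_q = 0$ is equivalent to $B_{i\sbt}\C B_{j\sbt}^\T = 0$ for all $j\ne i$.

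The forward implication is then immediate: if $A_{i\sbt}=0$ then $B_{i\sbt}=0$ and every $R_{ij}=0$. For the reverse implication, set $u := \C B_{i\sbt}^\T \in \RR^K$; using that $\C$ is symmetric, the hypothesis reads $B_{j\sbt}u = 0$ for all $j\ne i$, i.e.\ $u$ is orthogonal to every row of $B$ other than the $i$th. If I can show that $\{B_{j\sbt}: j\ne i\}$ spans $\RR^K$, then $u=0$, and since $\C$ is invertible (it is $K\times K$ with $\rank(\C)=K$), this forces $B_{i\sbt}=0$, i.e.\ $\|A_{i\sbt}\|_q = 0$, as wanted. Because $B$ and $A$ differ only by a positive diagonal rescaling, $\{B_{j\sbt}:j\ne i\}$ spans $\RR^K$ if and only if $\rank(A_{[p]\setminus\{i\},\sbt}) = K$, so everything reduces to this rank claim.

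The heart of the argument, and the step where Assumption \ref{ass_parallel} is used, is verifying $\rank(A_{[p]\setminus\{i\},\sbt}) = K$ for \emph{every} $i\in[p]$. I would split on whether $i\in H$. If $i\notin H$, then deleting row $i$ leaves the submatrix $A_{H\sbt}$ untouched, and $\rank(A_{H\sbt})=K$ by Assumption \ref{ass_parallel}, so the remaining matrix still has rank $K$. If $i\in H$, then $i\in H_k$ for some group with $|H_k|\ge 2$, so there exists $i'\in H_k\setminus\{i\}$ with $A_{i\sbt}\parallel A_{i'\sbt}$; as membership in $H$ requires a well-defined angle and hence nonzero rows, $A_{i\sbt}$ is a scalar multiple of the retained row $A_{i'\sbt}$, so $A_{i\sbt}$ lies in the span of the other rows and deleting it cannot lower the rank below $\rank(A)=K$. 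In either case the rank equals $K$, which completes the proof. The main obstacle I anticipate is precisely this rank bookkeeping: one must use \emph{both} parts of Assumption \ref{ass_parallel} (that $\rank(A_{H\sbt})=K$ and that each group has $|H_k|\ge 2$), and be careful that rows indexed by $H$ are genuinely nonzero, since a reading that treated a zero row as "parallel" to others would already break the equivalence.
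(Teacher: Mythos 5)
Your proof is correct and takes essentially the same route as the paper's (omitted) argument, which is the natural one-index adaptation of the proof of Proposition \ref{prop_I}: use the decomposition $R = B\C B^\T + \Gamma$ with $\Gamma$ diagonal, note $R_{ij} = B_{i\sbt}\C B_{j\sbt}^\T$ for $j\ne i$, and reduce the reverse implication to $\rank(B_{\setminus\{i\},\sbt}) = K$, which Assumption \ref{ass_parallel} delivers via exactly your two-case analysis. Your closing caveat — that rows counted in $H$ must be genuinely nonzero, so that parallel group members are true scalar multiples — is the correct reading (and indeed necessary for the lemma, whose purpose is to detect zero rows), but it is a point of care within the same approach rather than a different one.
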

	The proof of Lemma \ref{lem_pure_noise} follows straightforwardly from that of Proposition \ref{prop_I}, and is thus omitted.
	In practice, 
	we recommend to pre-screen the features by removing those in the set 
	\begin{equation}\label{def_N_0}
	   \mathcal{N}_0 := \left\{j\in [p]: \|\wh R_{j,\setminus \{j\}}\|_q \le  (p-1)^{1/q} \ c\sqrt{\log (p\vee n) / n}\right\}
	\end{equation}
	for some constant $c>0$.
	It is straightforward to show that, on the event $\E$, we have 
	\[ \max_{ j\in \mathcal{N}_0 } \|R_{j,\setminus\{j\}}\|_q \lesssim (p-1)^{1/q}\sqrt{\log (p\vee n) / n}.
	\]   We refer to  the features in $\mathcal{N}_0$ as pure-noise variables. 
	We propose the following strategy to select the leading constant $c$ from a specified grid $\mathcal{C} = \{c_1,\ldots, c_N\}$. By splitting the data into two parts with equal size, we compute their sample correlation matrices, denoted as $\wh R^{(1)}$ and $\wh R^{(2)}$. For each $c\in \mathcal{C}$, we find $\mathcal{N}_0$ from (\ref{def_N_0}) and set 
	\[
	     \wt R^{(1)}_{ij} = \left\{ \begin{array}{ll}  \wh R^{(1)}_{ij} & \mbox{for all $i\notin \mathcal{N}_0$, $j\notin \mathcal{N}_0$}\vspace{2mm}\\ 0 &\mbox{otherwise}
\end{array}\right..\]
 We choose the leading constant $c\in\mathcal{C}$ with the smallest loss $\|\wh R^{(2)} - \wt R^{(1)}\|_{\textrm{F-off}}$. The   grid $\mathcal{C}$ can be chosen in a similar way as that of selecting $\delta_n$ in Section \ref{sec_cv}.   Specifically, in our simulations, we choose $q = 2$ and first determine the range of $\mathcal{C}$ from the 0\% and 50\% quantiles of 
 \[
 \left\|\wh R^{(1)}_{j,\setminus\{j\}}\right\|_2\Big/   \sqrt{(p-1)\log (p\vee n) / n}, \qquad j\in [p] \]
 The full grid $\mathcal{C}$ is chosen as $50$ uniformly separated points within this range. The upper level 50\% can be adapted once one has prior information of the maximal percentage of pure-noise variables.\\

	
    We further verify the effectiveness of this pre-screening strategy via simulations. 
    We compare PVS with pre-screening (named as PVS-screen) and PVS without pre-screening (named as PVS-all) in terms of estimating the pure variables and its partition. 
	Under the data generating mechanism described above with  $n = 300$, $p = 100$, $K = 5$, $\eta = 1$, $\alpha = 2.5$ and $\rho_Z = 0.3$,
	we manually add $N_0$ pure-noise variables with $N_0$ varying within $\{0, 5, 10, 15, 20\}$.
  Figure \ref{fig_pure_noise} depicts the performance of PVS-screen and PVS-all. We see that when there exist pure-noise variables ($N_0>0$),   PVS-screen is better than PVS-all in terms of all metrics, while when there is no pure-noise variable $(N_0 = 0)$, PVS-screen has the same performance as PVS-all. 
	
	\begin{figure}[ht]
		\centering
		\begin{tabular}{cc}
			\includegraphics[width = .45\textwidth]{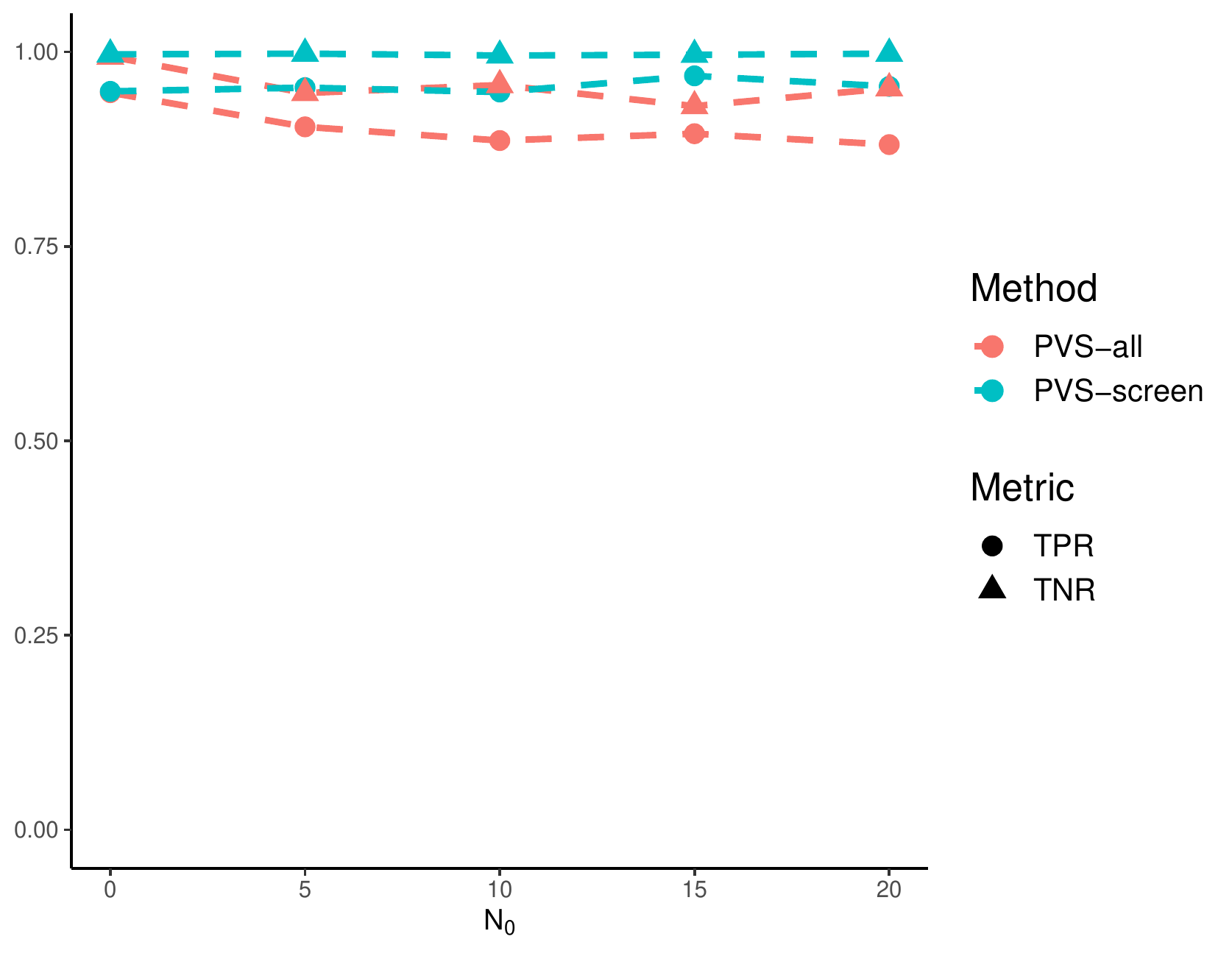} & 
			\includegraphics[width = .45\textwidth]{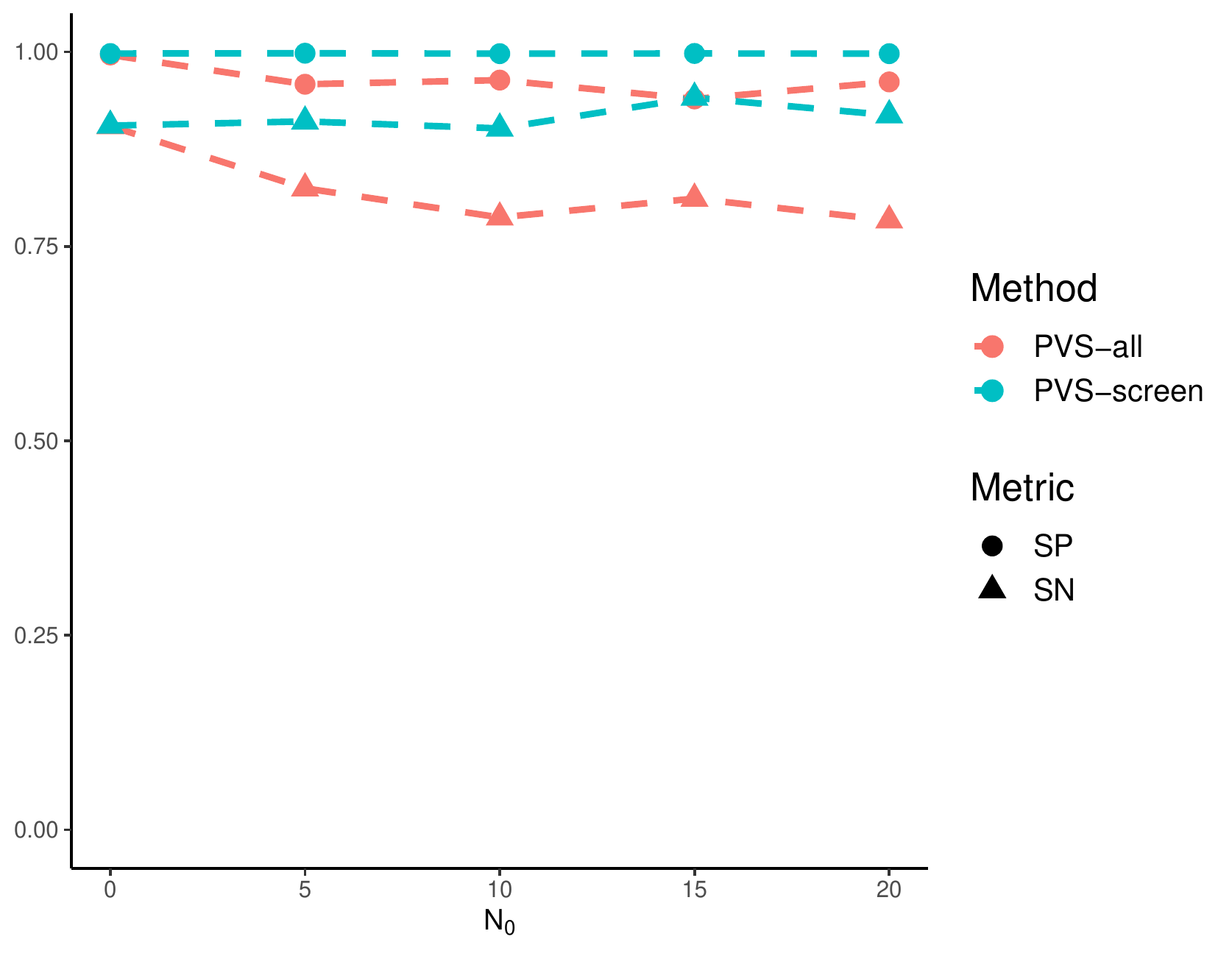}
		\end{tabular}
		\caption{Performance of PVS-all and PVS-screen as $N_0$ increases}
		
		\label{fig_pure_noise}
	\end{figure}

    \section*{Acknowledgements}
	  The authors were  supported in part by NSF grants DMS-1712709 and DMS-2015195.\\
	  We are grateful to Boaz Nadler for stimulating our interest in this problem, and for suggesting a preliminary version of the score function used in this work.

	\newpage

	\appendix 
	
	\section{Proofs}
	
	We provide section-by-section proofs in Appendix \ref{app_sec_ident} -- \ref{app_proof_sec_theory_overlap}. Appendix \ref{app_lemmas_M} contains main technical lemmas and Appendix \ref{app_concentration_R} contains the concentration inequalities of the sample correlation matrix.  Appendix \ref{app_dicuss} provides detailed discussions of condition (\ref{cond_signal_ell_q_relaxed}) and (\ref{cond_signal_ell_q_J}). All auxiliary results / proofs are collected in Appendix \ref{app_auxiliary_proof} while auxiliary lemmas of  concentration inequalities are stated in Appendix \ref{app_aux_lem}.
	
	\subsection{Proofs of Section \ref{sec_parallel}}\label{app_sec_ident}
	
	\subsubsection{Proof of Proposition \ref{prop_I}}\label{app_proof_prop_I}
	    The first statement is obvious by the definition of Assumption \ref{ass_I} and (\ref{def_J1}). We now prove 
	    \[
	        A_{i\sbt} \parallel A_{j\sbt}  \quad  \iff \quad  R_{i,\nij} \parallel R_{j,\nij}.
	    \]
		This holds trivially for any $i = j$. Pick any $i\ne j$. With the convention $0\parallel 0$ in mind, the statement also holds trivially when $K = 1$ and $p = 2$. We thus only consider either $K\ge 1$ or $K = 1$ with $p \ge 3$. 
		
		Recall that $\|A_{j\sbt}\|_2>0$ and $\rank(\C) = K$ imply $\Sigma_{jj}>0$ for all $1\le j\le p$. This implies
			\[
			A_{i\sbt} \parallel A_{j\sbt} \quad \iff \quad  B_{i\sbt}  \parallel B_{j\sbt}
			\]
			with $B = D_{\Sigma}^{-1/2}A$. Use decomposition  $R = B\C B^\T + \Gamma$ in (\ref{eq_R}), the diagonal structure of $\Gamma$ implies 
			\[
			R_{i,\setminus\{i,j\}} = B_{i\sbt }^\T M_{(ij)} ,\qquad  R_{j,\setminus\{i,j\}} = B_{j\sbt}^\T M_{(ij)} .
			\]
			We write for simplicity
			\[M_{(ij)} = \left[\C B^\T\right]_{\sbt \nij} \in \RR^{K\times (p-2)}.\]
			Observe that $K\ge 2$ and Assumption \ref{ass_I} guarantee $p\ge 2K \ge 4$ and $p-2 \ge K$. Further observe $p-2 \ge K$ also holds when $K =1$ and $p \ge 3$. 
		It suffices to show 
			\begin{align*}
			&B_{i\sbt} \parallel B_{j\sbt} \quad \Longrightarrow \quad  B_{i\sbt }^\T M_{(ij)} \parallel  B_{j\sbt }^\T M_{(ij)},\\
			&B_{i\sbt} \notparallel B_{j\sbt} \quad \Longrightarrow  \quad 
			B_{i\sbt }^\T M_{(ij)} \notparallel 	 B_{j\sbt }^\T M_{(ij)}.
			\end{align*}
			The first line follows trivially. To prove the second line, it suffices to show $\rank(M_{(ij)}) = K$ whenever $B_{i\sbt} \notparallel B_{j\sbt}$. To prove this, for any $i \ne j$ satisfying $B_{i\sbt} \notparallel B_{j\sbt}$, we have three possible cases: (1) $i\in I$, $j\in I^c$; (2) $i\in I_a$, $j\in I_b$ with $a\ne b$; (3) $i,j \in I^c$ and $i\ne j$. For any of these cases, Assumption \ref{ass_parallel} implies that we can find a $K\times K$ non-singular submatrix of $B_{\nij \sbt} \in \RR^{(p-2) \times K}$, which together with $\rank(\C) = K$ implies $\rank(M_{(ij)}) = K$. 
			This completes the proof. \qed

	\subsubsection{Proof of Proposition \ref{prop_score}}\label{app_proof_prop_score}  
	    For part (1), the result of $q = r = 2$ follows immediately from Proposition \ref{prop_I}. For general $q$ and $r$, since 
			\[
			S_{q,r}(i,j) = 
			(p-2)^{-1/q} 
			\cdot \min_{v\ne 0} {\left\|v_1 R_{i,\setminus\{i,j\}} + v_2 R_{j,\setminus\{i,j\}}\right\|_q \over \|v\|_r},
			\]
			the result follows by the equivalence of $\ell_q$--norms at the origin. 
			
			For part (2), we first show $(i)$. Pick any $0\le q\le \i$ and set $1_{p-2}=(1,\ldots,1)\in \RR^{p-2}$ so that $\| 1_{p-2}\|_q= (p-2)^{1/q}$. For any $r>0$, we have 
			\[
			S_{q,r}(i,j) = {1\over \|1_{p-2}\|_{q}}\cdot \min_{v\ne 0} {\left\|v_1 R_{i,\setminus\{i,j\}} + v_2 R_{j,\setminus\{i,j\}}\right\|_q \over \|v\|_r}.
			\]
			The result then follows from $\|v\|_r \ge \|v\|_t$ for any $0<r\le t$. 
			
			We proceed to show $(ii)$ of part (2). Pick any $0<r\le \i$ and $\|v\|_r = 1$. For all $t\ge q$, an application of H\"older's inequality gives 
			\begin{align*}
			{\left\|v_1 R_{i,\setminus\{i,j\}} + v_2 R_{j,\setminus\{i,j\}}\right\|_q
				\over \|1_{p-2}\|_q} & ~ \le  {\left\|v_1 R_{i,\setminus\{i,j\}} + v_2 R_{j,\setminus\{i,j\}}\right\|_t\over (p-2)^{1/q}} (p-2)^{{t-q \over tq}}\\
			& ~ =  {\left\|v_1 R_{i,\setminus\{i,j\}} + v_2 R_{j,\setminus\{i,j\}}\right\|_t
				\over \|1_{p-2}\|_t}.
			\end{align*}
			This completes the proof.\qed 
			
	\subsubsection{Proof of Proposition  \ref{prop_score_2}}\label{app_proof_prop_score_2}
	
			Pick any arbitrary $i,j\in [p]$ with $i\ne j$. We drop the superscripts for simplicity and write $V_{ii}=V_{ii}^{(ij)}$ and $V_{ij}= V_{ij}^{(ij)}$. 
			We consider two situations: (a)  $V_{jj}\ge V_{ii}$ and (b) $V_{ii}\ge V_{jj}$, and give first the proof under (a).

		     Let $v^* = (v^*_1, v^*_2)$ be the minimizer of	$S_2^2(i,j)$ given by  (\ref{score_ell_2}). Therefore,  $\max \{|v^*_1|, |v^*_2|   \} = 1$.  To begin with, assume  that  $|v^*_1| = 1$, and observe then that 
			\[
			v^*_2 = \arg\min_{|v_2|\le 1} \left\|v_1^*R_{i,\nij} + v_2 R_{j,\nij}\right\|_2^2.
			\]
			The unrestricted minimum of the quadratic function
			$
			v^*_2 = -v_1^*{V_{ij} / V_{jj}} 
			$
			 is a feasible solution: by Cauchy-Schwarz, $|V_{ij}|^2 \le V_{ii}V_{jj}$, and since we assumed  that $V_{jj} \ge V_{ii}$,  we have $|v_2^*| \le |V_{ij}|/V_{jj} \le 1$. Thus, the corresponding minimum  value of $(p-2)[S_2(i,j)]^2$ when $|v_1^*| = 1$ is given by:	\begin{equation}\label{eq_score_i}
	\left\|v_1^*R_{i,\nij} + v_2^* R_{j,\nij}\right\|_2^2 = \left\|
			R_{i,\nij} - {V_{ij} \over V_{jj}} R_{j,\nij}\right\|_2^2 = V_{ii}\left(
			1 - {V_{ij}^2 \over V_{ii}V_{jj}}\right).
			\end{equation}
			
			We  argue below that this is indeed the minimum value of the objective function, by studying its value in the complementary case,  $|v^*_2| = 1$.  By analogous arguments to those above, we have  \[
			v^*_1 = \arg\min_{|v_1|\le 1} \left\|v_1R_{i,\nij} + v_2^* R_{j,\nij}\right\|_2^2,
			\]
		and therefore 
			\[
			v_1^* =\left\{
			\begin{array}{ll}
			-v_2^*(V_{ij}/V_{ii}),   & \textrm{if }|V_{ij}| \le V_{ii}; \vspace{2mm}\\
			-\sgn(v_2^*V_{ij}),  &    \textrm{if }|V_{ij}| > V_{ii}.
			\end{array}
			\right.
			\]

			 When $|V_{ij}| > V_{ii}$, then  $|v_1^*| = 1$, a case covered above.  Hence, the corresponding minimum of the objective function  is no smaller than (\ref{eq_score_i}). When $|V_{ij}| \le V_{ii}$, 
			\[
			\left[S_2(i,j)\right]^2 =  {V_{jj}\over p-2}\left(
			1 - {V_{ij}^2 \over V_{ii}V_{jj}}\right),\quad \textrm{ if }|V_{ij}| \le V_{ii};
			\]
			which is again also no smaller than (\ref{eq_score_i}), since we work in the case $V_{jj} \geq V_{ii}$.  This concludes 
			\[
			\left[S_2(i,j)\right]^2 = {V_{ii} \over p-2}\left(
			1 - {V_{ij}^2 \over V_{ii}V_{jj}}\right), \quad \text{when} \quad  V_{ii} \le V_{jj}
			\]
		 The proof is completed by  repeating the same arguments under the assumption (b) $V_{ii}\ge V_{jj}$, and  noting that the score is symmetric in $i$ and $j$. \qed\\

	\subsubsection{Proof of Theorem \ref{thm_partial_ident}}\label{sec_proof_thm_partial_ident}
	We only consider $J_1 \ne \emptyset$ as otherwise the result holds automatically from Proposition \ref{prop_I}.  
    
    First, the proof of Proposition \ref{prop_I} reveals that, under Assumption \ref{ass_I}, 
    \[
        A_{i\sbt} \parallel A_{j\sbt} \quad \iff \quad B_{i\sbt} \parallel B_{j\sbt} \quad \iff \quad R_{i,\nij} \parallel R_{j,\nij}.
    \]
    Together with the proof of Proposition \ref{prop_score}, we further have 
    \[
         A_{i\sbt} \parallel A_{j\sbt} \quad \iff \quad S_q(i,j) = 0
    \]
    for any $0\le q\le \i$. Therefore, by comparing $S_q(i,j)$ to zero for all $i\ne j$, we can identify
    $H$ and its partition $\H$.\footnote{Of course, we can only identify the partition up to a group permutation. For simplicity, we assume the identity permutation.} 
    In the following, we show that $\Theta_{HH} := [A\C A^\T]_{HH}$ is identifiable. The uniqueness of $K$ then follows from the fact that $\rank(\Theta_{HH}) = K$.

    To identify $\Theta_{HH}$, we first identify $M_{HH} := [B\C B^\T]_{HH}$. For
    any $i, j\in I_k$ (or $i,j\in J_1^a$) with $i\ne j$ and $k\in [K]$ (or $a\in [N])$, since $B_{i\sbt} \parallel B_{j\sbt}$, there exists $\rho_{ij}\ne 0$ such that 
    \[
            \rho_{ij}B_{j\sbt} = B_{i\sbt}.
    \]
    Note that 
	\begin{equation}\label{eq_MI_offdiag}
	    R_{ij} = M_{ij} = B_{i\sbt}^\T \C B_{j\sbt} = \rho_{ij}B_{j\sbt}^\T \C B_{j\sbt} = \rho_{ij}M_{jj}
	\end{equation}
	and, by recalling that $R_{i,\nij} = B_{i\sbt}^\T \C B^\T_{\nij\sbt}$ and $R_{i,\nij} = B_{j\sbt}^\T \C B^\T_{\nij\sbt}$,
	\begin{equation*}
	|\rho_{ij}| = {\left\|R_{i,\nij}\right\|_q \over \left\|R_{j,\nij}\right\|_q},\qquad \forall q\ge 1.
	\end{equation*}
	The   two displays above together with the fact that $M_{jj}>0$   yield
	\begin{equation}\label{ident_MI}
	 M_{jj} = |R_{ij}| {\|R_{j,\nij}\|_q \over \|R_{i,\nij}\|_q}. 
	\end{equation}
	This identifies $M_{HH}$, and in turn identifies $\Theta_{HH}$, because 
	$$
	    \Theta = A\C A^\T = D_{\Sigma}^{1/2} B\C B^\T D_{\Sigma}^{1/2} = D_{\Sigma}^{1/2} M D_{\Sigma}^{1/2}.
	$$
	The proof is complete. \qed

	\subsubsection{Proof of Theorem \ref{thm_score_ell_q}}\label{app_proof_thm_score_q}
			Fix any $i\ne j$. 
			Recall from (\ref{score}) and (\ref{est_score_ell_q}) that 
			\begin{align*}
			&(p-2)^{1/q}\cdot S_q(i,j) = \min_{\|v\|_\i=1} \left\|v_1R_{i,\nij} + v_2 R_{j,\nij}\right\|_q := \left\|v^*_1 R_{i,\nij} + v_2^* R_{j,\nij}\right\|_q.
			\end{align*}
			Hence, using $\| v^*\|_\infty=1$, and the triangle inequality, we find
			\begin{align*}
			(p-2)^{1/q}\cdot \wh S_q(i,j) &=~  \min_{\|v\|_\i=1} \left\|v_1\wh R_{i,\nij} + v_2 \wh R_{j,\nij}\right\|_q\\
			&:= ~ \left\|\wh v^*_1 \wh R_{i,\nij} + \wh v_2^* \wh R_{j,\nij}\right\|_q\\
			&~\le~  \left\| v^*_1 \wh R_{i,\nij} +  v_2^* \wh R_{j,\nij}\right\|_q\\
			&~\le~ \left\| v^*_1  R_{i,\nij} +  v_2^*  R_{j,\nij}\right\|_q + \left\| v^*_1 \left(\wh R_{i,\nij} - R_{i,\nij}\right)\right\|_q\\
			&\quad + \left\| v^*_2 \left(\wh R_{j,\nij} - R_{j,\nij}\right)\right\|_q\\
			&~\le~ 	(p-2)^{1/q}\cdot S_q(i,j) + \left\|\wh R_{i,\nij} - R_{i,\nij}\right\|_q+\left\|\wh R_{j,\nij} - R_{j,\nij}\right\|_q.
			\end{align*}
		Analogously, 
			\begin{align*}
			(p-2)^{1/q}\cdot S_q(i,j) &~\le~ \left\| \wh v^*_1 R_{i,\nij} +  \wh v_2^* R_{j,\nij}\right\|_q\\
			&~\le~ 	(p-2)^{1/q}\cdot \wh S_q(i,j) + \left\|\wh R_{i,\nij} - R_{i,\nij}\right\|_q+\left\|\wh R_{j,\nij} - R_{j,\nij}\right\|_q.
			\end{align*}
			Invoking Lemma \ref{lem_dev_R} concludes the proof of the first statement. The rest of the proof follows by noting that $S_q(i,j) = 0$ for all $i\group j$ from Proposition \ref{prop_score}. \qed

      \subsubsection{Proof of Proposition \ref{prop_K_parallel}}\label{app_proof_prop_K_parallel}	
	     We prove the statement on the event $\E \cap \E_H$ where 
	    \[
            \E_H:= \bigcup_{\substack{L = \{\ell_1,\ldots, \ell_G\}\\ \ell_k\in H_k, k \in [G]}}\left\{
			\left\|\wh M_{LL} -  M_{LL}\right\|_{\rm op} \lesssim ~  \ol c(L)(\sqrt{G}\delta_n + G\delta_n^2) \right\}
		\]
		holds with probability $1-c(n\vee p)^{-c'}$ from Lemma \ref{lem_op_M_H}. Since Corollary \ref{cor_H} ensures $\wh H = H$ and $\wh H_k = H_k$ (without loss of generality, we assume the identity permutation $\pi$), we further have 
		\[
		    \left\|\wh M_{\wh L\wh L} -  M_{\wh L\wh L}\right\|_{\rm op} \lesssim   ~ \ol c(L)(\sqrt{G}\delta_n + G\delta_n^2).
		\]
		 Weyl's inequality and the fact that $M_{\wh L\wh L}$ has rank $K$ yield 
	    \[
	        \lambda_{K+1}(\wh M_{\wh L\wh L})\le \lambda_{K+1}(M_{\wh L\wh L}) + \|\wh M_{\wh L \wh L} - M_{\wh L\wh L}\|_{\rm op} \lesssim  \ol c(L) (\sqrt{G\delta_n^2} + G\delta_n^2)< \mu.
	    \] 
	    This implies $\wh K \le K$. On the other hand, by Weyl's inequality again,
	    \[
	        \lambda_K(\wh M_{\wh L\wh L}) \ge   \lambda_K(M_{\wh L\wh L}) - \|\wh M_{\wh L \wh L} - M_{\wh L\wh L}\|_{\rm op}.
	    \]
	    Since the definition of $\ul c(L)$ in (\ref{def_cbI}) ensures that
	    \[
	        \lambda_K(M_{\wh L\wh L}) \ge  \ul c(L),
	    \]
	    by using $\max_L[\ol c(L) / \ul c(L)]G\delta_n^2\le c'$ for sufficiently small $c'>0$ and (\ref{cond_regularity_parallel}), 
	    we conclude 
	    $
	        \lambda_K(\wh M_{\wh L\wh L}) > \mu
	    $
	    for all $L$, 
	    which implies $\wh K \ge K$ and completes the proof. \qed

    \subsection{Proofs of Section \ref{sec_overlap}}\label{app_proof_sec_theory_overlap}

    \subsubsection{Proof of Theorem \ref{thm_ident_I}}\label{app_proof_thm_ident_A}
	 The identifiability of the matrix $A$ follows from that of its scaled version $B=D_{\Sigma}^{-1/2}A$. To derive this property for $B$, we discuss the identifiability of $B_{I\sbt}$, $\C$ and $B_{J\sbt}$.\\ 
	
	Suppose $\I = \{I_1, \ldots, I_K\}$ is 
	recovered (for instance, as a consequence from Assumptions \ref{alg_I} \& \ref{ass_J_prime}). Using the fact that, w.l.o.g, 
$\diag(\C) = 1$, we have, for 
any $i, j\in I_k$ with $i\ne j$ and $k\in [K]$, 
	\[
	R_{ij} = B_{ik}[\C]_{kk}B_{jk} = B_{ik} B_{jk},
	\]
	and, by recalling that $R_{i,\nij} = B_{ik}[\C B^\T]_{k,\nij}$ and $R_{j,\nij} = B_{jk}[\C B^\T]_{k,\nij}$,
	\begin{equation}\label{eq_ratio}
	{|B_{ik}| \over |B_{jk}|} = {\left\|R_{i,\nij}\right\|_q \over \left\|R_{j,\nij}\right\|_q},\qquad \forall q\ge 1.
	\end{equation}
	The   two displays above yield
	\begin{equation}\label{ident_BI}
	B_{ik}^2 = |R_{ij}| {\|R_{i,\nij}\|_q \over \|R_{j,\nij}\|_q}, \quad \forall i,j\in I_k,\ i\ne j. 
	\end{equation}
	Since we can only identify $B$ up to a $K\times K$ signed permutation matrix, we choose 
	\begin{equation}\label{ident_BI_sign}
	\sgn(B_{ik}) = 1,\qquad \sgn(B_{jk}) = \sgn(R_{ij}),\qquad \forall j\in I_k \setminus \{i\}. 
	\end{equation}
	To identify $\C$, we first 
	identify $\Gamma_{II}$ as 
	\begin{equation}\label{ident_Gamma_II}
	\Gamma_{ii} = R_{ii} - B_{ik}^2 = 1-B_{ik}^2,\qquad \forall i\in I_k, k\in [K].
	\end{equation}
	From $R_{II} = B_{I\sbt} \C B_{I\sbt}^\T + \Gamma_{II}$, we then identify $\C$ by 
	\begin{equation}\label{ident_C}
	\C = B_{I\sbt}^+\left(R_{II} - \Gamma_{II}\right)B_{I\sbt}^{+\T},
	\end{equation}
	where $B_{I\sbt}^+ = (B_{I\sbt}^\T B_{I\sbt})^{-1}B_{I\sbt}^\T$ is the left inverse of $B_{I\sbt}$. Finally, to identify $B_{J\sbt}$, by noting that $R_{IJ} = B_{I\sbt}\C B_{J\sbt}^\T$, one has 
	\[
	B_{j\sbt} = \C^{-1} B_{I\sbt}^+ R_{Ij},\qquad \forall  j\in J.
	\]
	This completes our proof.
	\qed\\

	\subsubsection{Proof of Lemma \ref{lem_find_L}}\label{app_proof_lem_find_L}	 
	
	For convenience, we make the following definition.  
	\begin{definition}\label{def_Sk}
		Define $S_1 = \emptyset$ and, for $2\le k\le K$,
		$S_k = \{i_1, \ldots, i_{k-1}\}$  satisfying $i_a \in I_{\pi(a)}$  for all $1\le a\le k-1$. Further define the set of all possible $S_k$ for any fixed $k$ as 
		\[
		    \S_k := \left\{
			    S_k =\{i_1, \ldots, i_{k-1}\}: \forall\ i_a \in I_{\pi(a)}, \forall\ 1\le a\le k-1
			\right\}.
		\]
	\end{definition}
	We remark $|\S_k| = \prod_{a=1}^{k-1}|I_{\pi(a)}| \le p^{k-1}$.

		\begin{proof}[Proof of Lemma \ref{lem_find_L}]
		It suffices to prove the result for any given $1\le k\le K$. Without loss of generality,  assume $S_k=\{i_1,\ldots, i_{k-1}\}$ satisfying $i_a \in I_a$ for $1\le a\le k-1$. Write $T_k = \{1,\ldots, k-1\}$ (with $T_1 = \emptyset$) as the set of groups corresponding to $S_k$ and write the rest of groups as $T_k^c = [K]\setminus T_k$. 
		
		We aim to show 
		\begin{equation}\label{eq_target}
		    \arg\max_{j \in H} \Theta_{jj| S_k} \in I_{k^*}, \quad \textrm{for some }k^*\in T^c.
		\end{equation}
		Write
		\begin{equation}\label{def_tildes}
		    \wt A_{j\sbt} = D_\xi^{-1} A_{j\sbt},\qquad V = D_\xi\C D_\xi
		\end{equation} 
		such that $\Theta = A\C A^\T = \wt A V \wt A^\T$, where
		\begin{equation}\label{def_D_A}
		    D_\xi = \diag\left( \xi_1, \ldots, \xi_K\right),\quad \xi_k = \max_{i\in I_k} |A_{ik}|,\quad \forall 1\le k\le K.
		\end{equation}
		Note that Assumption \ref{ass_J_prime} implies 
		\begin{equation}\label{cond_scale}
		    \max_{j\in J_1} \| \wt A_{j\sbt}\|_1  = \max_{j\in J_1} \sum_{k=1}^K |A_{jk}|/ \xi_k \le 1.
		\end{equation}
		Further note that $\max_{i\in I_k}|\wt A_{ik}| = 1$ for all $1\le k\le K$. For any $j\in [p]$,
		\begin{eqnarray}\label{ident_Theta_jjsk}\nonumber
		\Theta_{jj|S_k} &=& \Theta_{jj} - \Theta_{jS_k}^\T \Theta_{S_kS_k}^{-1}\Theta_{S_k j}\\\nonumber
		&=& \wt A_{j\sbt}^\T \left(V - V_{\sbt T_k}V_{T_kT_k}^{-1}V_{T_k\sbt}\right) \wt A_{j\sbt} \\
		&=& \wt A_{jT_k^c}^\T V_{T_k^cT_k^c| T_k}\wt A_{jT_k^c}
		\end{eqnarray}
		with 
		$$
		V_{T_k^cT_k^c| T_k} = V_{T_k^cT_k^c} - V_{T_k^c T_k}V_{T_kT_k}^{-1}V_{T_kT_k^c}.
		$$
	    We partition the set $H=I\cup J_1$ as 
		$$
		    H = \left(\bigcup_{b\in T_k^c} I_b\right) \cup   \left(\bigcup_{a\in T_k}I_a\right) \cup J_1.
		$$
		Notice that $\wt A_{j\sbt}$ has the same support as $A_{j\sbt}$. From (\ref{ident_Theta_jjsk}), observe that
		\begin{alignat}{2}\label{eq_Theta_schur_I}
		    \Theta_{jj| S_k} & = 0 \qquad &&\text{ for all } j\in I_a, \ a\in T_k;\\\nonumber
		    \Theta_{ii| S_k} & = \wt A_{ib}^2\left[V_{T_k^cT_k^c|T_k}\right]_{bb} = \wt A_{ib}^2V_{bb| T_k}\qquad && \text{ for all } i\in I_b, \ b\in T_k^c.
		\end{alignat}
	The first claim follows immediately as $\wt A_{jT_k^c}=\0$ for $j\in I_a$ and $a\in T_k$.
		Thus, to prove (\ref{eq_target}), it suffices to show 
		\[
			\max_{j\in J_1} \Theta_{jj| S_k} < \max_{b\in T_k^c}\max_{i\in I_b}\Theta_{ii| S_k} =  \max_{b\in T_k^c}\max_{i\in I_b} \wt A_{ib}^2V_{bb| T_k} = \max_{b\in T_k^c}V_{bb| T_k}
		\] 
		by using $\max_{i\in I_k}\wt A_{ik}^2 = 1$ in the last step. 
	 	To show this, for any $j \in J_1$, we consider two cases:\\
	 	{\bf Case 1:  $\|\wt A_{jT_k}\|_0 \ge 1$.} We have
	 		\[
	 		\Theta_{jj| S_k} \le \|\wt A_{jT_k^c}\|_1^2 \left\| V_{T_k^cT_k^c| T_k}\right\|_\i  < \|\wt A_{j\sbt}\|_1^2 \max_{b\in T_k^c}V_{bb| T_k} \overset{(\ref{cond_scale})}{\le} \max_{b\in T_k^c} V_{bb| T_k}.
	 	\]
	 	\noindent
	 	{\bf Case 2: $\|\wt A_{jT_k}\|_0 = 0$.} By the definition of $J$,  $\|\wt A_{jT_k^c}\|_0 \ge 2$ and $
	 	\|\wt A_{jT_k^c}\|_1^2 > \|\wt A_{jT_k^c}\|_2^2
	 	$. 
	 		Applying Lemma \ref{lem_quad} to $\Theta_{jj| S_k}$ with $\alpha = \wt A_{jT_k^c}$ and $M = V_{T_k^cT_k^c| T_k}$ yields
		\begin{align}\label{bd_Theta_jjsk}\nonumber
			\Theta_{jj| S_k}  &\le \|\wt A_{jT_k^c}\|_1^2 \max_{b\in T_k^c}V_{bb| T_k} - \left(
			\|\wt A_{jT_k^c}\|_1^2 - \|\wt A_{jT_k^c}\|_2^2\right)\Delta(V_{T_k^cT_k^c| T_k}) \\
			&< \|\wt A_{jT_k^c}\|_1^2 \max_{b\in T_k^c}V_{bb| T_k}.
		\end{align}
	 	Since 
	 	\begin{align*}
	 	    \Delta(V_{T_k^cT_k^c| T_k}) &\ge \lambda_{\min}(V_{T_k^cT_k^c| T_k}) & (\text{ by Lemma \ref{lem_Delta}})\\
	 	    & \ge \lambda_{\min}(V) \\
	 	    &> 0 & (\text{using $\lambda_K(\C)>0$)}.
	 	\end{align*}
	 	%
	 	Hence, (\ref{bd_Theta_jjsk}) implies
	 	\[
	 		 \Theta_{jj| S_k} <\|\wt A_{jT_k^c}\|_1^2 \max_{b\in T_k^c}V_{bb| T_k} \le \max_{b\in T_k^c}V_{bb| T_k} .
	 	\]
	 	This completes the proof. 
	\end{proof}
	
	\bigskip

	The following two lemmas are used in the proof of Lemma \ref{lem_find_L}.
	
	\begin{lemma}\label{lem_quad}
		Let $M\in \RR^{d\times d}$ be any symmetric matrix. Define 
		\begin{equation}\label{def_delta_M}
			\Delta(M) := \min_{i\ne j} {1\over 2}(M_{ii} + M_{jj}) - |M_{ij}|.
		\end{equation}
		Then for any $\alpha \in \RR^d$, we have 
		\[
			\alpha^\T M \alpha \le \|\alpha\|_1^2 \max_i M_{ii} -  \left(
			\|\alpha\|_1^2 - \|\alpha\|_2^2
			\right)\Delta(M) .
		\]
	\end{lemma}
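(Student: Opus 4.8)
The plan is to expand the quadratic form into its diagonal and off-diagonal contributions, control each off-diagonal entry $M_{ij}$ through the defining inequality of $\Delta(M)$ in (\ref{def_delta_M}), and then reorganize the resulting double sums so that they collapse into precisely the two terms appearing on the right-hand side.

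First I would write $\alpha^\T M \alpha = \sum_i \alpha_i^2 M_{ii} + \sum_{i\ne j}\alpha_i\alpha_j M_{ij}$ and, using $\alpha_i\alpha_j M_{ij} \le |\alpha_i|\,|\alpha_j|\,|M_{ij}|$, bound it by $\sum_i \alpha_i^2 M_{ii} + \sum_{i\ne j}|\alpha_i|\,|\alpha_j|\,|M_{ij}|$. The definition (\ref{def_delta_M}) yields, for every pair $i\ne j$, the inequality $|M_{ij}| \le \tfrac12(M_{ii}+M_{jj}) - \Delta(M)$. Substituting this into the off-diagonal sum is legitimate because the weights $|\alpha_i|\,|\alpha_j|$ are nonnegative, and it converts that sum into $\sum_{i\ne j}|\alpha_i|\,|\alpha_j|\bigl[\tfrac12(M_{ii}+M_{jj}) - \Delta(M)\bigr]$.

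Next I would simplify the two pieces separately. By relabeling $i\leftrightarrow j$, the symmetric contribution $\sum_{i\ne j}|\alpha_i|\,|\alpha_j|\cdot\tfrac12(M_{ii}+M_{jj})$ equals $\sum_{i\ne j}|\alpha_i|\,|\alpha_j|\,M_{ii} = \sum_i |\alpha_i|\,M_{ii}\,(\|\alpha\|_1 - |\alpha_i|)$. Adding back the diagonal term $\sum_i \alpha_i^2 M_{ii} = \sum_i |\alpha_i|^2 M_{ii}$ cancels the $-|\alpha_i|^2 M_{ii}$ contribution and leaves $\|\alpha\|_1\sum_i |\alpha_i|\,M_{ii} \le \|\alpha\|_1^2 \max_i M_{ii}$, where the last step uses $M_{ii}\le \max_i M_{ii}$ together with $|\alpha_i|\ge 0$. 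For the remaining piece, $-\Delta(M)\sum_{i\ne j}|\alpha_i|\,|\alpha_j| = -\Delta(M)\bigl(\|\alpha\|_1^2 - \|\alpha\|_2^2\bigr)$, using the identity $\sum_{i\ne j}|\alpha_i|\,|\alpha_j| = \bigl(\sum_i |\alpha_i|\bigr)^2 - \sum_i \alpha_i^2$. Combining the two pieces produces exactly the claimed bound.

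I do not anticipate a genuine obstacle, as the argument is a direct rearrangement. The only points requiring mild care are: checking that the substitution of the $\Delta(M)$-inequality preserves the direction of the bound even when $\Delta(M)$ is negative (it does, since the inequality $|M_{ij}|\le \tfrac12(M_{ii}+M_{jj})-\Delta(M)$ holds for all $i\ne j$ by definition and is multiplied by the nonnegative weight $|\alpha_i|\,|\alpha_j|$), and verifying the relabeling symmetry that makes the $\tfrac12(M_{ii}+M_{jj})$ term collapse so that the diagonal terms cancel cleanly.
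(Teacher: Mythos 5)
Your proof is correct and follows essentially the same route as the paper's: expand the quadratic form, bound each off-diagonal entry by $|M_{ij}|\le \tfrac12(M_{ii}+M_{jj})-\Delta(M)$, and use the identity $\sum_{i\ne j}|\alpha_i\alpha_j|=\|\alpha\|_1^2-\|\alpha\|_2^2$. Your symmetrization step, which collapses the diagonal and $\tfrac12(M_{ii}+M_{jj})$ contributions into $\|\alpha\|_1\sum_i|\alpha_i|M_{ii}$, is in fact slightly tighter bookkeeping than the paper's (which bounds both pieces by $\max_i|M_{ii}|$), so your argument yields the stated constant $\max_i M_{ii}$ exactly, without the absolute value.
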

	\begin{proof}
	    Write $M_\i := \max_i |M_{ii}|$. Expanding the quadratic form gives
	    \begin{align*}
	       \alpha^\T M \alpha  &= \sum_{i=1}^d M_{ii}\alpha_i^2 + \sum_{i\ne j}\alpha_i\alpha_j M_{ij}\\
	       &\le M_\i \sum_{i=1}^d \alpha_i^2 + \sum_{i\ne j}|\alpha_i\alpha_j||M_{ij}|\\
	       &\le M_\i\sum_{i=1}^d \alpha_i^2 + \sum_{i\ne j}|\alpha_i\alpha_j|\left[{1\over 2}(M_{ii}+M_{jj}) - \Delta(M)\right]\\
	       &\le M_\i\left(\sum_{i=1}^d \alpha_i^2 +\sum_{i\ne j}|\alpha_i\alpha_j|\right) - \sum_{i\ne j}|\alpha_i\alpha_j|\Delta(M).
	    \end{align*}
	    The result follows by noting that $\|\alpha\|_1^2 - \|\alpha\|_2^2 = \sum_{i\ne j}|\alpha_i\alpha_j|$.
	\end{proof}
	
	\begin{lemma}\label{lem_Delta}
		Let $M\in \RR^{d\times d}$ be symmetric matrix with smallest eigenvalue equal to $\lambda_{\min}(M)$. With $\Delta(M)$ defined in (\ref{def_delta_M}), one has
		\[
		  \Delta(M) \ge \lambda_{\min}(M)
		\]
	\end{lemma}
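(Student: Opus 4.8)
The plan is to reduce the claim to a statement about $2\times 2$ principal submatrices and then invoke the variational characterization of the smallest eigenvalue. I would fix an arbitrary pair $i\ne j$ and let $N$ denote the $2\times 2$ principal submatrix of $M$ indexed by $\{i,j\}$, that is $N = \begin{bmatrix} M_{ii} & M_{ij}\\ M_{ij} & M_{jj}\end{bmatrix}$, where I used the symmetry $M_{ij}=M_{ji}$. The key observation is the interlacing bound $\lambda_{\min}(N)\ge \lambda_{\min}(M)$. Rather than quoting Cauchy interlacing, I would prove it directly to keep things self-contained: for any unit vector $u\in\RR^2$, embed it into $\RR^d$ as the vector $v$ supported on coordinates $\{i,j\}$ with $v_i=u_1$ and $v_j=u_2$; then $u^\T N u = v^\T M v \ge \lambda_{\min}(M)\,\|v\|_2^2 = \lambda_{\min}(M)$, and taking the infimum over unit $u$ yields $\lambda_{\min}(N)\ge \lambda_{\min}(M)$.

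Next I would compute $\lambda_{\min}(N)$ in closed form. Since $N$ is $2\times 2$ and symmetric, its smallest eigenvalue is
\[
\lambda_{\min}(N) = \frac{M_{ii}+M_{jj}}{2} - \sqrt{\left(\frac{M_{ii}-M_{jj}}{2}\right)^2 + M_{ij}^2}.
\]
Because $\sqrt{a^2 + M_{ij}^2}\ge |M_{ij}|$ for every real $a$, this immediately gives
\[
\frac{1}{2}(M_{ii}+M_{jj}) - |M_{ij}| \;\ge\; \frac{M_{ii}+M_{jj}}{2} - \sqrt{\left(\frac{M_{ii}-M_{jj}}{2}\right)^2 + M_{ij}^2} \;=\; \lambda_{\min}(N).
\]

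Chaining the two displays for the fixed pair produces
\[
\frac{1}{2}(M_{ii}+M_{jj}) - |M_{ij}| \;\ge\; \lambda_{\min}(N) \;\ge\; \lambda_{\min}(M),
\]
and taking the minimum over all $i\ne j$ establishes $\Delta(M)\ge \lambda_{\min}(M)$, as claimed. I do not anticipate a genuine obstacle here; the only step meriting a moment's care is the interlacing inequality $\lambda_{\min}(N)\ge \lambda_{\min}(M)$, which the embedding/Rayleigh-quotient argument handles cleanly, and the explicit $2\times 2$ eigenvalue formula makes the remaining comparison with $|M_{ij}|$ transparent.
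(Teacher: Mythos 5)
Your proof is correct and rests on the same mechanism as the paper's: the variational characterization of $\lambda_{\min}(M)$ restricted to vectors supported on the coordinates $\{i,j\}$. The paper simply short-circuits your two intermediate steps (interlacing plus the explicit $2\times 2$ eigenvalue formula) by directly plugging in the single unit test vector $u = (\e_i - \sgn(M_{ij})\e_j)/\sqrt{2}$, whose Rayleigh quotient equals $\tfrac{1}{2}(M_{ii}+M_{jj}) - |M_{ij}|$, so the two arguments are essentially identical in substance.
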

	\begin{proof}
	    By definition, for any $i\ne j$,
	    \[
	        \lambda_{\min}(M) = \inf_{\|u\|_2=1}u^\T M u \le {1\over 2}(M_{ii} + M_{jj}) - |M_{ij}|
	    \]
	    where we take $u = \e_i/\sqrt{2} + \sgn(M_{ij})\e_j/\sqrt{2}$. This finishes the proof. 
	\end{proof}
    	 
    \bigskip

	\subsubsection{Proof of Theorem \ref{thm_ident_I_post}}\label{app_proof_thm_ident_I_post}
    
    Recall that $\Theta_{HH}$ is identified from the proof of Theorem \ref{thm_partial_ident}. Given $\H$, to identify $I$ and $\I$, it suffices to show that we can find at least one variable in $I_k$ for each $k\in [K]$. This follows immediately from Lemma \ref{lem_find_L}. The identifiability of $A$ follows from the same arguments in the proof of Theorem \ref{thm_ident_I}.
	\qed\\

      \subsubsection{Proof of Proposition \ref{prop_K}}\label{app_proof_prop_K}
    
        We work on the event $\E \cap \E_M'$ where 
        \[
            \E_M':= \bigcup_{S_K\in \S_K}\left\{
			\left\|\wh M_{S_KS_K} -  M_{S_KS_K}\right\|_{\rm op} \le  c_1(\sqrt{K}\delta_n + K\delta_n^2) \right\}
		\]
		holds with probability $1-c(n\vee p)^{-c'}$ from Lemma \ref{lem_op_M}. 
		
		Corollary \ref{cor_I_prime} ensures that $I_k = \wh H_{\pi(k)}$ for all $1\le k\le K$ with some permutation $\pi$. This implies that $\wh L = S_{K+1} \cup S_{K+1}^c$ where $S_{K+1}$ satisfies Definition \ref{def_Sk}, that is, 
	    \[
	        S_{K+1} = \{i_1,\ldots, i_K\},\quad i_k \in I_{\pi(k)}, \quad \forall k\in [K]
	    \]
	    and $S_{K+1}^c = \wh L \setminus S_{K+1} \subset \bar J_1$. For notational simplicity, we drop the subscripts $K+1$. We have 
	    \begin{align*}
	        \|\wh M_{\wh L \wh L} - M_{\wh L\wh L}\|_{\rm op} &\le 
	        \|\wh M_{SS} - M_{SS}\|_{\rm op} + \|\wh M_{S^cS^c} - M_{S^c S^c}\|_{\rm op}.
	    \end{align*}
	    By the string of inequalities $\|Q\|_{\rm op} \le \|Q\|_F \le d\|Q\|_\i$ for any matrix $Q\in \RR^{d\times d}$, we have
	    \begin{align*}
	         \|\wh M_{S^cS^c} - M_{S^cS^c}\|_{\rm op} 
	        &\le  |S^c| \max_{i,j\in \bar J_1}|\wh M_{ij} - M_{ij}| \lesssim (\wh G - K)\delta_n
	    \end{align*}
	    by invoking Lemma \ref{lem_Gamma_M_Ihat} in the last step. Invoking $\E_M'$ to bound $\|\wh M_{SS} - M_{SS}\|_{\rm op}$ together with the fact that $\wh G - K \le |\bar J_1| / 2$ yields 
	    \[
	        \|\wh M_{\wh L \wh L} - M_{\wh L\wh L}\|_{\rm op} \lesssim \sqrt{K\delta_n^2} + K\delta_n^2 + |\bar J_1| \delta_n.
	    \]
	    Weyl's inequality and the fact that $M_{\wh L\wh L}$ has rank $K$ yield 
	    \[
	        \lambda_{K+1}(\wh M_{\wh L\wh L})\le \lambda_{K+1}(M_{\wh L\wh L}) + \|\wh M_{\wh L \wh L} - M_{\wh L\wh L}\|_{\rm op} \lesssim \sqrt{K\delta_n^2} + K\delta_n^2 + |\bar J_1| \delta_n< \mu,
	    \]
	    by choosing $\mu = C(\sqrt{K\delta_n^2} + K\delta_n^2 + |\bar J_1| \delta_n)$ for some constant $C>0$. This implies $\wh K \le K$. 
	    
	    On the other hand, by Weyl's inequality again,
	    \[
	        \lambda_K(\wh M_{\wh L\wh L}) \ge   \lambda_K(M_{\wh L\wh L}) - \|\wh M_{\wh L \wh L} - M_{\wh L\wh L}\|_{\rm op}.
	    \]
	    Since 
	    \[
	        \lambda_K(M_{\wh L\wh L}) \ge \lambda_K(M_{SS}) \overset{(\ref{lb_eigen_Msksk})}{\ge} c_zc_{b,I},
	    \]
	    by using $\max\{K\delta_n^2, |\bar J_1|\delta_n\}\le c'$ for sufficiently small $c'>0$, 
	    we conclude 
	    \[
	        \lambda_K(\wh M_{\wh L\wh L}) \gtrsim c_zc_{b,I} > \mu,
	    \]
	    which implies $\wh K \ge K$ and completes the proof. \qed 
	    
	 \bigskip

    \subsubsection{Proof of Theorem \ref{thm_I_post}}\label{app_proof_thm_I_post}
	

	We work on the event $\E$ which ensures that the results of Corollary \ref{cor_I_prime}  hold. For any given $1\le r \le K$, it suffices to show that Lemma \ref{lem_find_L} holds with $\Theta, H$  and $K$ replaced by $\wh \Theta$, $\wh H$ and $r$,  respectively, that is,  to show
	\[
		i_k :=\arg\max_{j \in \wh H} \wh\Theta_{jj| S_k} \in I_{\pi(k)},\qquad \text{for all }1\le k\le r,
	\]
	with $S_k$ satisfying Definition \ref{def_Sk}.
	This is guaranteed by invoking Lemmas \ref{lem_noise} and \ref{lem_signal}. Indeed, for any $1\le k\le r$, recall that $S_k = \{i_1, \ldots, i_{k-1}\}$ is defined in Definition \ref{def_Sk} meanwhile  $T_k = \{1,\ldots, k-1\}$\footnote{This is assumed without loss of generality.} is the set of groups corresponding to elements of $S_k$ and $T^c_k = [K]\setminus T_k$. With probability $1-c(n\vee p)^{-c'}$ for some constant $c,c'>0$, for all $1\le k\le r$ and $S_k\in \S_k$, there exists $i_k \in I_a$ for some $a\in T_k^c$ such that
	\begin{align*}
			&\wh\Theta_{i_ki_k | S_k} - \max_{j \in  \cup_{b\in T_k}I_b \cup \bar J_1 }\wh\Theta_{jj | S_k}\\
			&\ge \left(\Theta_{i_ki_k | S_k} - \Sigma_{i_ki_k}\cdot \varepsilon_n\right) - \max_{j \in  \cup_{b\in T_k}I_b \cup \bar J_1 }\left(\Theta_{jj | S_k} + \Sigma_{jj} \cdot \varepsilon_n\right) &(\text{by Lemma \ref{lem_noise}})\\
			&=
			\min_{j \in  \cup_{b\in T_k}I_b \cup \bar J_1 }
			\left(\Theta_{i_ki_k | S_k} -\Theta_{jj | S_k} - (\Sigma_{i_ki_k} + \Sigma_{jj}) \cdot \varepsilon_n\right)\\
			& > 0 &(\text{by Lemma \ref{lem_signal}})
	\end{align*}
	where $\varepsilon_n = c_*\sqrt{K}\delta_n$.
	This implies that there exists some permutation $\pi:[K]\to [K]$ such that the output $\{i_1, \ldots, i_r\}$  satisfies 
	$
		i_a \in I_{\pi(a)}
	$
	for $1\le a\le r$, as desired.\qed \\

    \subsubsection*{Lemmas used in the proof of Theorem \ref{thm_I_post}}
    
    We now state and prove Lemmas \ref{lem_noise} and \ref{lem_signal}. 	The following lemma provides the upper bounds of $|\wh \Theta_{ii|S_k} - \Theta_{ii|S_k}|$ over all $i\in \wh H$, $S_k\in \S_k$ and $1\le k\le K$.

	\begin{lemma}\label{lem_noise}
		Under conditions of Theorem \ref{thm_I_post}, with probability $1-c(p\vee n)^{-c'}$ for some constant $c,c'>0$,
		\[
			\max_{1\le k\le K} \max_{S_k \in \S_k}\max_{i\in \wh H}  \Sigma_{ii}^{-1}\left|
			\wh\Theta_{ii | S_k} - \Theta_{ii | S_k}
			\right|  \le c_*\sqrt{K}\delta_n
		\]
		where $c_* = c_*(c_z,c_{b,I},\bar c_z)>0$ is some constant. 
	\end{lemma}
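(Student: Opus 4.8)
The plan is to control the estimation error of each conditional Schur complement $\wh\Theta_{ii|S_k}$ by decomposing it into two sources of error: the error in estimating the denoised covariance matrix $\wh\Theta_{\wh H\wh H}$ relative to $\Theta_{\wh H\wh H}$, and the nonlinear amplification of that error through the Schur complement (quadratic form with an inverse) operation. First I would recall that, by definition, $\Theta_{ii|S_k} = \Theta_{ii} - \Theta_{iS_k}^\T \Theta_{S_kS_k}^{-1}\Theta_{S_ki}$ and its empirical analogue from (\ref{est_Theta_schur}), and write their difference as a telescoping sum. Since $\wh\Theta_{\wh H\wh H} = [\wh\Sigma - \wh D_{\wh\Sigma}^{1/2}\wh\Gamma\wh D_{\wh\Sigma}^{1/2}]_{\wh H\wh H}$, the scale-invariant reformulation $\Theta = D_\Sigma^{1/2} M D_\Sigma^{1/2}$ lets me pass to $M = B\C B^\T$, so that $\Sigma_{ii}^{-1}|\wh\Theta_{ii|S_k} - \Theta_{ii|S_k}|$ reduces to controlling $|\wh M_{ii|S_k} - M_{ii|S_k}|$ up to the ratios $\sqrt{\Sigma_{jj}/\Sigma_{ii}}$, which are handled by the regularity conditions (\ref{cond_regularity}) and (\ref{cond_pure_ratio}). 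The normalization by $\Sigma_{ii}$ is what makes the bound uniform across groups of differing scale.

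Next I would invoke the three auxiliary lemmas cited in the proof sketch, namely Lemmas \ref{lem_Gamma_M_Ihat}, \ref{lem_op_M} and \ref{lem_M_schur} of Appendix \ref{app_lemmas_M}. Concretely: Lemma \ref{lem_Gamma_M_Ihat} supplies the sup-norm bound $\max_{i,j\in\wh H}|\wh M_{ij}-M_{ij}|\lesssim \delta_n$ on the event $\E$ (using (\ref{est_M_LL_off_diag})--(\ref{est_M_LL_diag}) and the concentration of $\wh R$), which controls $|\wh M_{ii}-M_{ii}|$ and $\|\wh M_{iS_k}-M_{iS_k}\|_\infty$; Lemma \ref{lem_op_M} supplies the operator-norm bound $\|\wh M_{S_kS_k}-M_{S_kS_k}\|_{\op}\lesssim \sqrt{K}\delta_n + K\delta_n^2$, uniformly over $S_k\in\S_k$; and Lemma \ref{lem_M_schur} controls the quadratic form $\wh M_{iS_k}^\T \wh M_{S_kS_k}^{-1}\wh M_{S_ki}$. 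Combining these through the standard perturbation identity for the inverse,
\[
\wh M_{S_kS_k}^{-1} - M_{S_kS_k}^{-1} = -M_{S_kS_k}^{-1}(\wh M_{S_kS_k}-M_{S_kS_k})\wh M_{S_kS_k}^{-1},
\]
and splitting $\wh M_{iS_k}^\T \wh M_{S_kS_k}^{-1}\wh M_{S_ki} - M_{iS_k}^\T M_{S_kS_k}^{-1}M_{S_ki}$ into three cross terms, I would bound each using the regularity lower bound on $\lambda_{\min}(M_{S_kS_k})$ (which is $\ge c_z c_{b,I}$ by the eigenvalue bound (\ref{lb_eigen_Msksk}) invoked in the proof of Proposition \ref{prop_K}) together with the upper bounds on $\|M_{iS_k}\|_2 \lesssim \sqrt K$ (since $|I_k|\le K$ entries each of bounded size) to control the inverse after perturbation. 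The factor $\sqrt K$ emerges precisely from the dimension of $S_k$, explaining the $\sqrt K \delta_n$ rate.

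The main obstacle I anticipate is the uniformity over the exponentially large collection $\S_k$ of candidate index sets, whose cardinality is $\prod_{a<k}|I_{\pi(a)}|\le p^{k-1}$, combined with the uniformity over $i\in\wh H$ and over $1\le k\le K$. A naive union bound would be ruinous, so the real content must live inside Lemmas \ref{lem_op_M} and \ref{lem_M_schur}, which presumably already deliver operator-norm control uniformly over $\S_k$ by exploiting that $M_{S_kS_k}$ is a principal submatrix of the fixed low-rank matrix $M_{\wh H\wh H}$ and that its inverse is stabilized by the eigenvalue floor $c_z c_{b,I}$. Thus my proof would not re-derive these uniform bounds but would cite them and concentrate on assembling the three pieces and tracking the constant $c_* = c_*(c_z, c_{b,I}, \bar c_z)$ through the perturbation expansion, being careful that the $K\delta_n^2$ contribution from the operator-norm term is absorbed into the leading $\sqrt K\delta_n$ under the assumption (\ref{cond_Kdelta}) that $K\delta_n^2 = o(1)$, so that $K\delta_n^2 \le \sqrt K \delta_n$ and the final uniform bound is indeed of order $\sqrt K\delta_n$.
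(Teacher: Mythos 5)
Your proposal follows essentially the same route as the paper's proof: pass from $\Theta$ to the scale-free matrix $M = B\C B^\T$ (where in fact the scaling is exact, $\Theta_{ii|S_k} = \Sigma_{ii}M_{ii|S_k}$ and $\wh\Theta_{ii|S_k} = \wh\Sigma_{ii}\wh M_{ii|S_k}$, so no ratios $\sqrt{\Sigma_{jj}/\Sigma_{ii}}$ and hence no appeal to (\ref{cond_pure_ratio}) are needed — only $|\wh\Sigma_{ii}-\Sigma_{ii}|\lesssim\Sigma_{ii}\delta_n$ and $M_{ii|S_k}\le 1$ at the end), then split $\wh M_{ii|S_k}-M_{ii|S_k}$ into the same three cross terms, controlled via the inverse-perturbation identity together with Lemmas \ref{lem_Gamma_M_Ihat}, \ref{lem_op_M} and \ref{lem_M_schur}, with uniformity over $\S_k$ deferred to those lemmas exactly as the paper does (there it is a plain union bound over the at most $p^{K}$ sets with deviation parameter $t\asymp K\log(n\vee p)$, not a principal-submatrix argument). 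One caution in assembling the terms: to get $\sqrt{K}\delta_n$ rather than $K\delta_n$ you must use the quadratic-form bound $\wh M_{iS_k}^\T\wh M_{S_kS_k}^{-1}\wh M_{S_ki}\lesssim 1$ (and its population analogue) from Lemma \ref{lem_M_schur} inside the Cauchy--Schwarz step, not the cruder $\|M_{iS_k}\|_2\lesssim\sqrt{K}$ paired with the eigenvalue floor, which loses an extra factor of $\sqrt{K}$.
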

	\begin{proof}
		We work on the event $\E$ intersected with 
		\begin{equation}\label{def_event_E_M}
		\E_M:= \bigcup_{k=1}^K \bigcup_{S_k\in \S_k}\left\{
			\left\|\wh M_{S_kS_k} -  M_{S_kS_k}\right\|_{\rm op} \le  c_1\sqrt{K}\delta_n,~ \lambda_{k-1}\left(\wh M_{S_kS_k}\right) \gtrsim c_2
		\right\}
       \end{equation}
		Here we use the convention that $\lambda_0(M) = \i$ and  $c_1$ and $c_2$ are positive constants depending on $c_z,c_{b,I}$ and $\bar c_z$. Lemma \ref{lem_op_M} ensures $\PP(\E_M \cap \E) \ge 1-c(n\vee p)^{-c'}$.\\
		
		Pick any $1\le k\le K$, $S_k \in \S_k$ and $i\in \wh H$. Recall that 
		$$
		\Theta = A\C A^\T = D_{\Sigma}^{1/2}M D_{\Sigma}^{1/2},\qquad M = B\C B^\T
		$$ 
		and
		\[
			\wh\Theta = \wh\Sigma - D_{\wh \Sigma}^{1/2}\wh\Gamma D_{\wh \Sigma}^{1/2}=  D_{\wh \Sigma}^{1/2}\left(\wh R - \wh\Gamma \right)D_{\wh \Sigma}^{1/2} = D_{\wh \Sigma}^{1/2}\wh M D_{\wh \Sigma}^{1/2}.
		\]
		Observe
		\[
		    \Theta_{ii | S_k} = \Theta_{ii} - \Theta_{iS_k}^\T\Theta_{S_kS_k}\Theta_{S_ki} = \Sigma_{ii}\left(
		    M_{ii} - M_{iS_k}^\T M_{S_kS_k}^{-1}M_{S_ki}\right) = \Sigma_{ii} M_{ii|S_k}
		\]
		and similarly
		\[
		    \wh\Theta_{ii | S_k} = \wh\Theta_{ii} - \wh\Theta_{iS_k}^\T\wh\Theta_{S_kS_k}\wh\Theta_{S_ki} = \wh \Sigma_{ii} \wh M_{ii|S_k}.
		\]
		We find
		\begin{align*}
	    	\wh\Theta_{ii | S_k} - \Theta_{ii | S_k} 
	    	&= \wh \Sigma_{ii}\wh M_{ii|S_k} - \Sigma_{ii} M_{ii|S_k}.
		\end{align*}
		We first bound from above $|\wh M_{ii|S_k} - M_{ii|S_k} |$. 
		By definition and adding and subtracting terms, 
		\begin{align*}
			\left|\wh M_{ii | S_k} - M_{ii | S_k}\right| & = \left|\wh M_{ii} - \wh M_{iS_k}^\T \wh M_{S_kS_k}^{-1}\wh M_{S_ki} - M_{ii} + M_{iS_k}^\T M_{S_kS_k}^{-1}M_{S_ki}\right|\\
			&\le  | \wh M_{ii} - M_{ii}| + \left| \wh M_{iS_k}^\T \wh M_{S_kS_k}^{-1}(\wh M_{S_ki} - M_{S_ki})\right|\\
			&\quad   +\left| \wh M_{iS_k}^\T \left(\wh M_{S_kS_k}^{-1} - M_{S_kS_k}^{-1}\right)M_{S_ki}\right| + \left|(\wh M_{iS_k} - M_{iS_k})^\T M_{S_kS_k}^{-1}M_{S_ki}\right|\\
			&:= |\wh M_{ii} - M_{ii}| + T_1 + T_2 + T_3. 
		\end{align*}
	 Note that Lemma \ref{lem_Gamma_M_Ihat} gives
	 \begin{equation}\label{bd_M_sup}
	 |\wh M_{ij} - M_{ij}|\lesssim \delta_n\quad \forall i,j\in \wh H.
	 \end{equation}
	 We proceed to bound $T_1, T_2$ and $T_3$, separately. 
	 
	 \paragraph{Bound $T_1$:} By the Cauchy-Schwarz inequality, 
	 \begin{align*}
	 	T_1^2 &\le \left\|\wh M_{S_ki} - M_{S_ki}\right\|_2^2   \wh M_{iS_k}^\T\wh M_{S_kS_k}^{-2} \wh M_{S_ki}\\
	 	&\le |S_k| \max_{j\in S_k}|\wh M_{ji} - M_{ji}|^2 \cdot  {\wh M_{iS_k}^\T\wh M_{S_kS_k}^{-1} \wh M_{S_ki} \over  \lambda_{k-1}(\wh M_{S_kS_k})} & (\text{by }(\ref{bd_M_sup}))\\
	 	&\overset{(i)}{\lesssim} (k-1)\delta_n^2\cdot  {1 \over \lambda_{k-1}(\wh M_{S_kS_k})}\\
	 	&\lesssim (k-1)\delta_n^2 & (\text{on } \E_M).
	 \end{align*}
	 In $(i)$, we used $|S_k| = k-1$, (\ref{bd_M_sup}) and the following result from Lemma \ref{lem_M_schur}
	 \[
	 	 \wh M_{iS_k}^\T\wh M_{S_kS_k}^{-1} \wh M_{S_ki} \lesssim 1.
	 \]

	 \paragraph{Bound $T_3$:} By similar arguments and Lemma \ref{lem_M_schur}, we have 
	 \begin{align*}
	 T_3^2 &\le \left\|\wh M_{S_ki} - M_{S_ki}\right\|_2^2    M_{iS_k}^\T M_{S_kS_k}^{-2}  M_{S_ki} \lesssim   {(k-1)\delta_n^2 M_{ii} \over \lambda_{k-1}(M_{S_kS_k})} \lesssim (k-1)\delta_n^2.
	 \end{align*}
	 The last inequality uses
		\begin{align}\label{lb_eigen_Msksk}\nonumber
			\lambda_{k-1}(M_{S_kS_k})  &= \lambda_{k-1}(B_{S_k\sbt}\C B_{S_k\sbt}^\T)\\\nonumber
			&\ge c_z  \lambda_{k-1}(B_{S_k\sbt} B_{S_k\sbt}^\T)\\\nonumber
			&\ge c_z \min_{1\le a\le K}\min_{i\in I_a}B_{ia}^2\\ &\ge c_z c_{b,I},
		\end{align}
	 with $c_z$ and $c_{b,I}$ defined in (\ref{cond_regularity}).
	 
	 \paragraph{Bound $T_2$:} The identify $P^{-1} = Q^{-1} + P^{-1}(Q-P) Q^{-1}$ for two invertible matrices gives 
	 \begin{align*}
	 	T_2^2  & = \left| \wh M_{iS_k}^\T \wh M_{S_kS_k}^{-1}\left(M_{S_kS_k} -  \wh M_{S_kS_k}\right)M_{S_kS_k}^{-1}M_{S_ki}\right|^2\\
	 	&\le  \wh M_{iS_k}^\T \wh M_{S_kS_k}^{-2} \wh M_{S_ki}  \left\|\wh M_{S_kS_k} -  M_{S_kS_k}\right\|_{\rm op}^2 M_{iS_k}^\T M_{S_kS_k}^{-2}M_{S_ki}\\
	 	&\lesssim \left\|\wh M_{S_kS_k} -  M_{S_kS_k}\right\|_{\rm op}^2
	 	&(\textrm{by Lemma \ref{lem_M_schur}})\\
	 	&\lesssim K\delta_n^2 &(\textrm{ on $\E_M$}).
	 \end{align*}
	 
	 Collecting all the bounds on $T_1$, $T_2$ and $T_3$ concludes 
	 \begin{align*}
	 		\left|\wh M_{ii | S_k} - M_{ii | S_k}\right|  &\lesssim \delta_n + \sqrt{K\delta_n^2} \lesssim \sqrt{K\delta_n^2}.
	 \end{align*}
	 Finally, notice that 
	 \[
	 			\left|\wh \Theta_{ii | S_k} - \Theta_{ii | S_k}\right| \le \wh\Sigma_{ii} 	\left|\wh M_{ii | S_k} - M_{ii | S_k}\right|  + \left|\wh\Sigma_{ii} - \Sigma_{ii}\right|  M_{ii | S_k}.
	 \]
	 The result then follows by Lemma \ref{lem_find_L} and the inequalities $M_{ii|S_k} \le M_{ii}\le 1$ and 
	 $$
	 	\wh\Sigma_{ii} \le \Sigma_{ii}+ |\wh\Sigma_{ii}-\Sigma_{ii}| \le \Sigma_{ii} (1+ \delta_n / 2) \lesssim \Sigma_{ii}.
	 $$
	\end{proof}
	
	\medskip
	
	\begin{lemma}\label{lem_signal}
		Under conditions of Theorem \ref{thm_I_post}, for any $1\le k\le K$ and $S_k$ defined in Definition \ref{def_Sk}, with $\varepsilon_n = c_*\sqrt{K}\delta_n$, we have
		\[
			\max_{i\in \cup_{b = k}^KI_{\pi(b)}}\min_{j\in  \cup_{a=1}^{k-1}I_{\pi(a)} \cup \bar J_1}\left( \Theta_{ii|S_k} - \Theta_{jj|S_k}-\left( \Sigma_{ii}+ \Sigma_{jj}\right)  \varepsilon_n\right) >0.
		\]
	\end{lemma}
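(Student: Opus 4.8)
The plan is to reduce the statement to a purely population-level comparison of the Schur complements $\Theta_{jj|S_k}$, for which the proof of Lemma~\ref{lem_find_L} already supplies closed forms. Fixing (without loss of generality) $\pi$ to be the identity, so that $T_k=\{1,\dots,k-1\}$ indexes the selected groups and $T_k^c=\{k,\dots,K\}$ the remaining ones, I would recall from \eqref{eq_Theta_schur_I} that $\Theta_{jj|S_k}=0$ for pure $j$ in a selected group $I_a$, $a\in T_k$, while $\Theta_{ii|S_k}=\wt A_{ib}^2\,V_{bb|T_k}$ for pure $i\in I_b$, $b\in T_k^c$, where $V=D_\xi\C D_\xi$ and $\wt A=D_\xi^{-1}A$. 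The natural witness for the outer maximum is then the index $i^\star\in I_{b^\star}$ with $b^\star=\arg\max_{b\in T_k^c}V_{bb|T_k}$ and $|\wt A_{i^\star b^\star}|=1$, which gives $\Theta_{i^\star i^\star|S_k}=\bar V:=\max_{b\in T_k^c}V_{bb|T_k}$.

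Next I would establish the population gap uniformly over the inner index $j$. For selected pure $j$ it equals exactly $\bar V$. For $j\in\bar J_1$ I would rerun the two-case dichotomy of Lemma~\ref{lem_find_L} (splitting on whether $\wt A_{jT_k}$ vanishes) almost verbatim, but now invoking the sample-level scaling condition \eqref{ass_J1_prime} in place of Assumption~\ref{ass_J_prime}; using $\|V_{T_k^cT_k^c|T_k}\|_\infty\le\bar V$ in Case~1 and Lemmas~\ref{lem_quad}, \ref{lem_Delta} together with $\Delta(V_{T_k^cT_k^c|T_k})\ge\lambda_{\min}(V)>0$ in Case~2, this yields the strict bound $\Theta_{jj|S_k}\le\|\wt A_{j\sbt}\|_1^2\,\bar V<(1-\eps)\bar V$. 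Hence $\Theta_{i^\star i^\star|S_k}-\Theta_{jj|S_k}\ge\eps\bar V$ for every admissible $j$, and the elementary spectral bound $\bar V\ge\lambda_{\min}(V)\ge c_z\min_b\xi_b^2$ makes this signal gap a positive multiple of $\min_b\xi_b^2$.

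It then remains to dominate $(\Sigma_{i^\star i^\star}+\Sigma_{jj})\eps_n$, and the crux is that the gap and the variances share the common scale $\min_b\xi_b^2$. Writing $\Sigma_{jj}=\|A_{j\sbt}\|_2^2/\|B_{j\sbt}\|_2^2$ and using $c_{b,I}=\min_{j\in I\cup\bar J_1}\|B_{j\sbt}\|_2^2\ge c$ from \eqref{cond_regularity}, I get $\Sigma_{jj}\le\|A_{j\sbt}\|_2^2/c$; for pure $j$ this is at most $\xi^2_{\cdot}/c$, and for $j\in\bar J_1$ the bound $\|A_{j\sbt}\|_1<\max_b\xi_b$ (again from \eqref{ass_J1_prime}) gives the same order. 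Combined with the comparability \eqref{cond_pure_ratio}, $\max_b\xi_b^2\le C_0^2\min_b\xi_b^2$, this yields $\Sigma_{i^\star i^\star}+\Sigma_{jj}\lesssim\min_b\xi_b^2$, so after cancelling the common factor the required inequality reduces to $\eps\,c_z>C'\eps_n$. Since $\eps_n=c_*\sqrt K\delta_n=c_*\sqrt{K\delta_n^2}\to0$ under \eqref{cond_Kdelta}, this holds for all large $n$, giving strict positivity uniformly in $j$ and hence the claim. The main obstacle is exactly this scale matching: the quantities $\Theta$ and $\Sigma$ carry the unknown magnitudes of the loadings $\xi_b$ and of the noise variances $\sigma_j^2$, so one must argue that both the signal gap $\eps\bar V$ and the variance scale $\Sigma_{i^\star i^\star}+\Sigma_{jj}$ are of order $\min_b\xi_b^2$ and cancel. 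This is precisely where \eqref{cond_regularity}, \eqref{cond_pure_ratio} and \eqref{ass_J1_prime} are all needed at once — regularity to turn the lower bound on $\|B_{j\sbt}\|_2$ into an upper bound on $\Sigma_{jj}$ in terms of $\xi^2$, comparability to place all $\xi_b$ on one scale, and the scaling slack $1-\eps$ both to separate $\bar J_1$ from the pure variables and to control $\|A_{j\sbt}\|_1$.
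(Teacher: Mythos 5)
Your proposal is correct and follows essentially the same route as the paper's own proof: the same witness $i^*\in I_{b^*}$ maximizing the Schur-complement diagonal, the same two-case treatment of the selected pure groups (where $\Theta_{jj|S_k}=0$) versus $j\in\bar J_1$ via the bound (\ref{bd_Theta_jjsk}) and condition (\ref{ass_J1_prime}) to get the gap $\eps\bar V$, and the same scale-matching of that gap against $(\Sigma_{ii}+\Sigma_{jj})\varepsilon_n$ through (\ref{cond_regularity}) and (\ref{cond_pure_ratio}) before concluding from $\varepsilon_n=c_*\sqrt{K}\delta_n\to0$. The only cosmetic differences are that you bound $\Sigma_{jj}\le \|A_{j\sbt}\|_2^2/c_{b,I}$ directly while the paper routes through the quadratic form $A_{j\sbt}^\T\C A_{j\sbt}$ via (\ref{cond_signal_ratio}), and that you normalize everything by $\min_b\xi_b^2$ where the paper keeps $A_{i^*b^*}^2$ as the common factor to be cancelled --- both immaterial given (\ref{cond_pure_ratio}).
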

	\begin{proof}
		Pick any $1\le k\le K$ and $S_k = \{i_1, \ldots, i_{k-1}\}\in \S_k$. Recall that $T_k = \{1,\ldots, k-1\}$ is the set of groups corresponding to elements of $S_k$ and further recall $T_k^c = [K]\setminus T_k$. Notice that
		\begin{equation}\label{ident_sets}
			\bigcup_{a=1}^{k-1} I_{\pi(a)} = \bigcup_{a' \in T_k}I_{a'},\qquad 
			\bigcup_{b =k}^K I_{\pi(b)} = \bigcup_{b'\in T_k^c}I_{b'}.
	    \end{equation}
		From (\ref{eq_Theta_schur_I}), recall that
		\begin{equation}\label{disp_Theta_I_T}
			\Theta_{jj | S_k} = 0,\qquad \forall 	j\in  I_a,  a\in T_k,
 		\end{equation}
 		and 
 		\begin{eqnarray}\label{def_max}\nonumber
 				\max_{i\in \cup_{b = k}^K I_{\pi(b)}} \Theta_{ii | S_k} 
 				&=&\max_{i\in  \cup_{b'\in T_k^c}I_{b'}} \Theta_{ii | S_k}\\ \nonumber
 				&=& \max_{b'\in T_k^c}\max_{i\in I_{b'}} A_{ib'}^2\cdot [\C]_{b'b'|T_k}\\\nonumber
 			& :=& \max_{i\in I_{b^*}} A_{ib^*}^2\cdot [\C]_{b^*b^*|T_k}\\
 			& :=&  A_{i^*b^*}^2 [\C]_{b^*b^*|T_k}.
 		\end{eqnarray}
 		On the other hand, for any $j\in \bar J_1$, (\ref{bd_Theta_jjsk}) yields 
 		\begin{align}\label{disp_Theta_Jbar}\nonumber
 			 \Theta_{jj | S_k} &~\le~ \|\wt A_{jT_k^c}\|_1^2 \cdot \max_{b\in T_k^c}G_{bb| T_k}\\ \nonumber
 			 & ~=~ \|\wt A_{jT_k^c}\|_1^2 \cdot  \max_{b\in T_k^c}\max_{i\in I_b}A_{ib}^2 [\C]_{bb|T_k}\\\nonumber
 			 & ~=~ \|\wt A_{jT_k^c}\|_1^2 \cdot A_{i^*b^*}^2 [\C]_{b^*b^*|T_k} & (\text{by }(\ref{def_max}))\\
 			 &~<~ (1 - \varepsilon)\cdot  A_{i^*b^*}^2 [\C]_{b^*b^*|T_k}
 		\end{align}
 		where we invoke condition (\ref{ass_J1_prime}) in the last step. We consider two cases:\\
 		{\bf Case 1:}   $j\in \cup_{a=1}^{k-1}I_{\pi(a)}$. 
 		Based on (\ref{disp_Theta_I_T}) -- (\ref{disp_Theta_Jbar}), we immediately have 
 		\begin{align*}
 			&\max_{i\in \cup_{b = k}^KI_{\pi(b)}}\min_{j\in  \cup_{a=1}^{k-1}I_{\pi(a)} }\left( \Theta_{ii|S_k} - \Theta_{jj|S_k}-\left( \Sigma_{ii}+ \Sigma_{jj}\right) \cdot \varepsilon_n\right)\\
 			&=  A_{i^*b^*}^2[\C]_{b^*b^*|T_k} - \left( \max_{i\in \cup_{b' \in T^c_k} I_{b'}}\Sigma_{ii}+ \max_{j\in  \cup_{a'\in T_k}I_{a'} }\Sigma_{jj}\right) \cdot \varepsilon_n
 		\end{align*}
 		by also using (\ref{ident_sets}).\\ 
 		{\bf Case 2:}   $j\in \bar J_1$. Similar arguments yield 
 		\begin{align*}
 		&\max_{i\in \cup_{b = k}^KI_{\pi(b)}}\min_{j\in \bar J_1 }\left( \Theta_{ii|S_k} - \Theta_{jj|S_k}-\left( \Sigma_{ii}+ \Sigma_{jj}\right) \cdot \varepsilon_n\right)\\
 		&>  \max_{i\in \cup_{b = k}^KI_{\pi(b)}}\min_{j\in \bar J_1}\left( \Theta_{ii|S_k} - (1-\varepsilon)A_{i^*b^*}^2[\C]_{b^*b^*|T_k} -\left( \Sigma_{ii}+ \Sigma_{jj}\right) \cdot \varepsilon_n\right) & (\text{by }(\ref{disp_Theta_Jbar}))\\
 		& \ge  \varepsilon \cdot A_{i^*b^*}^2[\C]_{b^*b^*|T_k} -\left( \max_{i\in \cup_{b'\in T^c_k} I_{b'}}\Sigma_{ii}+ \max_{j\in \bar J_1} \Sigma_{jj}\right) \cdot \varepsilon_n & (\text{by }(\ref{def_max})).
 		\end{align*}
 		The proof is accomplished by bounding from below the displays in both cases. To this end, we first observe that 
 		\begin{equation}\label{lb_C_bbT}
 			[\C]_{b^*b^*|T_k} \ge \lambda_K(\C) = c_z > 0.
 		\end{equation}
 		Since (\ref{cond_regularity}) implies 
 		\begin{align}\label{cond_signal_ratio}
 		    {\Sigma_{jj}  \over A_{j\sbt}^\T \C A_{j\sbt}} = {1\over B_{j\sbt}^\T \C B_{j\sbt}} \le {1\over c_zc_{b,I}} := C_1,\quad \forall j\in I\cup \bar J_1,
 		\end{align}
 		we then obtain
 		\begin{align}\label{ub_Sigma_IT}\nonumber
 			\max_{j\in \cup_{a'\in T_k}I_{a'}}\Sigma_{jj} &\overset{(\ref{cond_signal_ratio})}{\le} C_1\max_{j\in \cup_{a'\in T_k}I_{a'}}A_{j\sbt}^\T\C A_{j\sbt}\\\nonumber
 			& ~~ = ~~ C_1\max_{a'\in T_k}\max_{j\in I_{a'}}A_{ja'}^2\\
 			& \overset{(\ref{cond_pure_ratio})}{\le} C_1C_0 A_{i^*b^*}^2
 		\end{align}
 		where the last step also uses $(\ref{def_max})$ to invoke (\ref{cond_pure_ratio}). Similarly, 
 		\begin{align}\label{ub_Sigma_ITc}
 		\max_{i\in \cup_{b'\in T^c_k}I_{b'}}\Sigma_{ii} &\overset{(\ref{cond_signal_ratio})}{\le} C_1\max_{i\in \cup_{b'\in T^c_k}I_{b'}}A_{i\sbt}^\T\C A_{i\sbt}\overset{(\ref{cond_pure_ratio})}{\le} C_1C_0A_{i^*b^*}^2
 		\end{align}
 		and, finally, recalling that $A\C A^\T = \wt AD_\xi \C D_\xi \wt A^T = \wt A V \wt A^\T $ with $D_\xi$ defined in (\ref{def_D_A}), 
 	 	\begin{align}\label{ub_Sigma_Jbar}\nonumber
 		\max_{j\in \bar J_1}\Sigma_{jj} \overset{(\ref{cond_signal_ratio})}{\le} C_1\max_{j\in \bar J_1}A_{j\sbt}^\T\C A_{j\sbt}
 		& ~~ = ~~ C_1\max_{j\in  \bar J_1}\wt A_{j\sbt}^\T V  \wt A_{j\sbt}\\\nonumber
 		&~~ \le~~ C_1\max_{j\in  \bar J_1}\|\wt A_{j\sbt}\|_1^2 \max_{1\le b\le K}G_{bb} \\\nonumber
 		& ~~ \overset{(i)}{<} ~~  C_1\max_{1\le b\le K}\max_{i\in I_b} A_{ib}^2\\
 		&\overset{(\ref{cond_signal_ratio})}{\le}C_1C_0 A_{i^*b^*}^2.
 	\end{align}
 	Step $(i)$ uses $\|\wt A_{j\sbt}\|_1^2 < 1$ from condition (\ref{ass_J1_prime})
 	and $G_{bb} = \max_{i\in I_b}A_{ib}^2[\C]_{bb} =\max_{i\in I_b}A_{ib}^2$. 
 	Collecting (\ref{lb_C_bbT}) -- (\ref{ub_Sigma_Jbar}) concludes 
 	\begin{align*}
 		& A_{i^*b^*}^2[\C]_{b^*b^*|T_k} - \left( \max_{i\in \cup_{b \in T^c_k} I_{b}}\Sigma_{ii}+ \max_{j\in  \cup_{a\in T_k}I_{a} }\Sigma_{jj}\right) \cdot \varepsilon_n\\
 		&\ge A_{i^*b^*}^2 \left(
 		  c_z - C_1C_0\varepsilon_n
 		\right)
 	\end{align*}
 	and 
 	\begin{align*}
 		&\varepsilon\cdot A_{i^*b^*}^2[\C]_{b^*b^*|T_k} -\left( \max_{i\in \cup_{b\in T^c_k} I_{b}}\Sigma_{ii}+ \max_{j\in \bar J_1} \Sigma_{jj}\right)  \varepsilon_n \ge A_{i^*b^*}^2\left(c_z \varepsilon -2C_1C_0 \varepsilon_n\right).
 	\end{align*}
 	We finish the proof by invoking 
 	\[
 		\varepsilon_n = c_*\sqrt{K}\delta_n <  {c_z \over C_0C_1} = {c_z^2 c_{b,I} \over C_0}
 	\]
 	and, for sufficiently large $C'>0$,  
 	\[
 		\varepsilon = C'\sqrt{K}\delta_n > {2C_1C_0 \over c_z}c_*\sqrt{K}\delta_n = {2C_0 \over c_z^2 c_{b,I}}\varepsilon_n.
 	\]
	\end{proof}

      \subsubsection{Proof of Theorem \ref{thm_BI}}\label{app_proof_sec_est_A}
      
		The entire proof is deterministic under the event $\E$ intersected with $\{\wh I_k = I_{\pi(k)}\}_{1\le k\le K}$ for some group permutation $\pi$. Recall from Proposition \ref{prop_K} and Theorem \ref{thm_I_post} that this event holds with probability tending to one as $n\to \i$. We assume the identity $\pi$ for simplicity. Further we write
		\[
		\wh V_i = \|\wh R_{i,\nij}\|_q,\qquad V_i = \|R_{i,\nij}\|_q,\qquad \forall i\in I.
		\]
		Pick any $k\in [K]$ and $i\in I_k$. Also recall from (\ref{est_BI}) that 
		\[
			\wh B_{ik}^2 \le \max_{j\in I_k\setminus \{i\}}|\wh R_{ij}| {\|\wh R_{i,\nij}\|_q  \over\|\wh R_{j,\nij}\|_q } =  \max_{j\in I_k\setminus \{i\}}|\wh R_{ij}| {\wh V_i \over \wh V_j}.
		\]
		As we can only identify $B_{I\sbt}$ up to a signed permutation matrix, we further assume 
		$
			\sgn(\wh B_{ik}) = \sgn(B_{ik}) = 1.
		$ 
		On the other hand, from (\ref{est_BI_sign}), for any $j\in I_k\setminus\{i\}$, we have $\sgn(\wh B_{jk}) = \sgn(\wh R_{ij})$. Note that  $\E$ together with $\min_{i\in I_k}|B_{ik}|\ge c$ from (\ref{cond_regularity}) and $\log p \le c_0 n$ ensures 
		\[
			  \wh R_{ij}\ge R_{ij} - \delta_n  = B_{ik}B_{jk} - \delta_n \ge  c' B_{ik}B_{jk} = c' R_{ij}
		\] 
		for some constant $c'>0$. This further implies $\sgn(\wh B_{jk}) = \sgn(R_{ij}) = \sgn(B_{ik}B_{jk}) = \sgn(B_{jk})$. We thus conclude 
		\[
			\sgn(\wh B_{ik}) = \sgn(B_{ik}),\qquad \forall i\in I_k, k\in [K].
		\]
		We then prove the upper bound for 
		\begin{equation}\label{bd_BI_diff}
		|\wh B_{ik}^2 - B_{ik}^2| \le \max_{j\in I_k\setminus \{i\}}\left|
		|\wh R_{ij}| {\wh V_i \over \wh V_j}  - |R_{ij}| {V_i \over V_j} 
		\right|.
		\end{equation}
		Pick any $j\in I_k \setminus \{i\}$. By adding and subtracting terms, we have 
		\begin{align}\label{disp_B}
			\left|
			|\wh R_{ij}| {\wh V_i \over \wh V_j}  - |R_{ij}| {V_i \over V_j} 
			\right| &\le {V_i \over V_j}|\wh R_{ij}-R_{ij}| + |\wh R_{ij}|\left\{
				{\wh V_i \over \wh V_j}{|\wh V_j - V_j| \over V_j} + {|\wh V_i  - V_i| \over V_j}
			\right\}.
		\end{align}
		Note that the event $\E$ and $\log p \le c_0 n$ imply
		\begin{equation}\label{bd_R_hat}
			|\wh R_{ij}| \le |R_{ij}| + \delta_n \le c'' |R_{ij}| = c''|B_{ik}||B_{jk}|
		\end{equation}
		for some constant $c''>0$. The event $\E$ also gives
		\begin{equation}\label{bd_Vi_diff}
			\max_{1\le i\le p}|\wh V_i - V_i| \le (p-2)^{1/q}\delta_n.
		\end{equation}
		Since H\"older's inequality, (\ref{def_cbzr}) and (\ref{def_cbI}) guarantee that, for any $q\ge 2$,
		\begin{equation}\label{bd_lower_Vi}
		    {V_i \over (p-2)^{1/q}} \ge {\|R_{i,\nij}\|_2 \over \sqrt{p-2}} \ge \sqrt{c_{b,I}c_z c_r} \delta_n
			,\quad \forall i,j\in I_k, i\ne j,
		\end{equation}
		one can deduce from (\ref{cond_regularity}) that, for some constants $c',c''>0$,
		\begin{equation}\label{bd_Vi_hat}
				c'V_i \le \wh V_i \le c'' V_i,\qquad \forall i,j\in I_k, i\ne j.
		\end{equation}
		Plugging (\ref{bd_R_hat}), (\ref{bd_Vi_diff}) and (\ref{bd_Vi_hat}) into (\ref{disp_B}) gives
		\begin{align*}
			\left|
			|\wh R_{ij}| {\wh V_i \over \wh V_j}  - |R_{ij}| {V_i \over V_j} 
			\right| &\lesssim {V_i\over V_j}\delta_n + |B_{ik}||B_{jk}|\left\{
			{V_i \over V_j}{ (p-2)^{1/q} \over V_j} + {(p-2)^{1/q}\over V_j}
			\right\}\delta_n\\
			&\lesssim  {V_i\over V_j}\delta_n + |B_{ik}||B_{jk}|\left\{
			{V_i \over V_j} +1
			\right\}\delta_n &(\textrm{by }(\ref{bd_lower_Vi}))\\
			&\le {|B_{ik}|\over |B_{jk}|}\delta_n + |B_{ik}||B_{jk}|\left\{
			{|B_{ik}| \over |B_{jk}|} +1
			\right\}\delta_n &(\textrm{by } (\ref{eq_ratio}))\\
			&\lesssim |B_{ik}|\delta_n,
		\end{align*}
		We used  $|B_{ik}| \le 1$ and $|B_{jk}|\ge c$ from (\ref{cond_regularity}) to arrive at the last line. In view of (\ref{bd_BI_diff}), we conclude 
		\[
			|\wh B_{ik}^2 - B_{ik}^2| \lesssim  |B_{ik}|\delta_n.
		\]
		The first result then follows from $|\wh B_{ik}^2 - B_{ik}^2| = |\wh B_{ik} + B_{ik}|\cdot |\wh B_{ik} - B_{ik}|$ and $\sgn(\wh B_{ik}) = \sgn(B_{ik})$. 
		
		We proceed to prove the second result. Recall that $\whC$ is estimated from (\ref{est_C}) and satisfies
		\begin{align}\label{disp_Chat}\nonumber
			[\whC]_{ab} &=\sum_{i\in I_a} {\wh B_{ia} \over \sum_{i\in I_a}\wh B_{ia}^2}\sum_{j\in \wh I_b} { \wh B_{jb} \over \sum_{j\in \wh I_b}\wh B_{jb}^2}\wh R_{ij}\\
			& =  {1\over \wh d_a \wh d_b}\cdot{1\over |I_a||I_b|} \sum_{i\in I_a, j\in I_b}\wh B_{ia}\wh B_{jb} \wh R_{ij},\qquad \forall a\ne b \in [K],
		\end{align}
		meanwhile (\ref{ident_C}) gives 
		\[
			[\C]_{ab} = {1\over d_a  d_b}\cdot{1\over |I_a||I_b|} \sum_{i\in I_a, j\in I_b} B_{ia} B_{jb}  R_{ij},\qquad \forall a\ne b \in [K]
		\]
		where we write 
		$$
		\wh d_a = {1\over |I_a|}\sum_{i\in I_a} \wh B_{ia}^2,\qquad d_a = {1\over |I_a|}\sum_{i\in I_a} B_{ia}^2\qquad \forall a\in [K].
		$$
		We then have, by adding and subtracting terms, 
		\begin{align*}
			\left|
				[\whC]_{ab} - [\C]_{ab}
			\right| &\le {|d_ad_b - \wh d_a\wh d_b|\over \wh d_a \wh d_b d_a d_b}\cdot{1\over |I_a||I_b|} \left|\sum_{i\in I_a, j\in I_b}\wh B_{ia}\wh B_{jb} \wh R_{ij}\right|\\
			&\quad  + {1\over d_a  d_b}\cdot{1\over |I_a||I_b|} \sum_{i\in I_a, j\in I_b} \left|\wh B_{ia} \wh B_{jb}  \wh R_{ij} -  B_{ia} B_{jb}  R_{ij}\right|.
		\end{align*}
		We bound from above the two terms separately.
		Note that, by (\ref{disp_Chat}),
		\begin{align*}
			{|d_ad_b - \wh d_a\wh d_b|\over \wh d_a \wh d_b d_a d_b}\cdot{1\over |I_a||I_b|} \left|\sum_{i\in I_a, j\in I_b}\wh B_{ia}\wh B_{jb} \wh R_{ij}\right|
			&=  {|d_ad_b - \wh d_a\wh d_b|\over d_a d_b}\left|[\whC]_{ab}\right|\\
			&\le  {|\wh d_b - d_b |\over  d_b}+ {\wh d_b|\wh d_a- d_a|\over d_a d_b}.
		\end{align*}
		Note that (\ref{cond_regularity}) guarantees
		$$
		    d_a \ge  {c\over |I_a|}\sum_{i\in I_a}|B_{ia}|.
		$$
		Since
		\[
			 |\wh d_a - d_a| \le {1\over |I_a|}\sum_{i\in I_a}|\wh B_{ia}^2 - B_{ia}^2| \lesssim  {1\over |I_a|}\sum_{i\in I_a}|B_{ia}|\delta_n,
		\] 
		and
		\[
			{\wh d_b \over d_b} \le 1 + {|\wh d_b - d_b| \over d_b} \lesssim 1 +   \delta_n \lesssim 1
		\]
		by using $\log p \le c_0 n$,
		we conclude 
		 \[
		 			{|d_ad_b - \wh d_a\wh d_b|\over \wh d_a \wh d_b d_a d_b}\cdot{1\over |I_a||I_b|} \left|\sum_{i\in I_a, j\in I_b}\wh B_{ia}\wh B_{jb} \wh R_{ij}\right| \lesssim \delta_n.
		 \]
		 On the other hand, 
		 \begin{align*}
		 	  \left|\wh B_{ia} \wh B_{jb}  \wh R_{ij} -  B_{ia} B_{jb}  R_{ij}\right| &\le  \left|\wh B_{ia} \wh B_{jb} -  B_{ia} B_{jb} \right||\wh R_{ij}|+  |B_{ia}| |B_{jb}|\left|\wh R_{ij} -    R_{ij}\right|\\
		 	  &\le \left|\wh B_{ia}-B_{ia}\right| |\wh B_{jb}| +  |B_{ia}|\left|\wh B_{jb}- B_{jb} \right|+  |B_{ia}| |B_{jb}|\delta_n\\
		 	  &\lesssim (|B_{ia}| + |B_{jb}|)\delta_n
		 \end{align*}
		 by using $|B_{ia}|\le 1$, $|B_{ia}| \ge c$ from (\ref{cond_regularity}) and $|\wh B_{jb}| \lesssim |B_{jb}| + \delta_n \lesssim |B_{jb}|$  in the last line. Using $d_a \gtrsim |I_a|^{-1}\sum_{i\in I_a}|B_{ia}|$ again together with  $|B_{ia}| \ge c$ concludes 
		 \[
		 	{1\over d_a  d_b}\cdot{1\over |I_a||I_b|} \sum_{i\in I_a, j\in I_b} \left|\wh B_{ia} \wh B_{jb}  \wh R_{ij} -  B_{ia} B_{jb}  R_{ij}\right| \lesssim \delta_n
		 \]
		which finishes the proof of the second result. 
		
		Finally, we prove the rates of $\|\wh A_{i\sbt} - A_{i\sbt}\|_\i$. We first remark that Lemma \ref{lem_dev_Sigma} holds on the $\E$ by inspecting the proof of Lemma \ref{lem_dev_R}. 
		By definition of $\wh A = D_{\wh \Sigma}^{1/2}\wh B$, we have 
		\begin{align*}
		    |\wh A_{ik}^2 - A_{ik}^2| & = |\wh\Sigma_{ii}\wh B_{ik}^2 - \Sigma_{ii}B_{ik}^2|\\
		    &\le \wh\Sigma_{ii}|\wh B_{ik}^2 -B_{ik}^2| +  B_{ik}^2|\wh \Sigma_{ii} - \Sigma_{ii}|\\
		    &\le (\Sigma_{ii} + |\wh \Sigma_{ii}-\Sigma_{ii}|)|\wh B_{ik}^2 -B_{ik}^2| +  B_{ik}^2|\wh \Sigma_{ii} - \Sigma_{ii}|.
		\end{align*}
		Invoking the rates of $|\wh B_{ik}^2 -B_{ik}^2|$ and Lemma \ref{lem_dev_Sigma} together with $\log p \le n$ yields
		\[
		     |\wh A_{ik}^2 - A_{ik}^2|  \lesssim |B_{ik}| \Sigma_{ii} \delta_n  + B_{ik}^2\Sigma_{ii}\delta_n=  |A_{ik}| \sqrt{\Sigma_{ii}} \delta_n  + A_{ik}^2\delta_n \lesssim |A_{ik}| \sqrt{\Sigma_{ii}} \delta_n
		\]
		where we used $A_{ik}= \sqrt{\Sigma_{ii}}B_{ik}$ in the equality and $\Sigma_{ii} = A_{ik}^2+[\Sigma_E]_{ii} \ge A_{ik}^2$ in the last step. Since 
		\begin{align*}
		    \|\wh A_{i\sbt} - A_{i\sbt}\|_\i  = |\wh A_{ik} - A_{ik}| = {|\wh A_{ik}^2 - A_{ik}^2| \over |\wh A_{ik}|+|A_{ik}|} \le {|\wh A_{ik}^2 - A_{ik}^2| \over |A_{ik}|},
		\end{align*} the result follows.
		\qed	\\

 	\subsection{Technical Lemmas for controlling quantities related with $\wh M$}\label{app_lemmas_M}
	
	In this section, we provide technical lemmas that control different quantities related with $\wh M_{\wh H\wh H}$ constructed from (\ref{est_M_LL_off_diag}) -- (\ref{est_M_LL_diag}). Also note that
	$\wh M_{\wh H\wh H} =  \wh R_{\wh H\wh H} - \wh \Gamma_{\wh H\wh H}$
	with $\wh\Gamma_{\wh H\wh H}$ constructed from (\ref{est_Gamma_II}). 

	The first lemma provides an element-wise control of $\wh M_{ij} - M_{ij}$ for all $i,j\in \wh H$. The second lemma provides both the upper bounds of $\wh M_{S_kS_k} - M_{S_kS_k}$ in operator norm and the lower bounds for $\lambda_{k-1}(\wh M_{S_kS_k})$. Both results hold uniformly for all $S_k\in \S_k$ and $1\le k\le K$. The third lemma provides upper bounds for the quadratic terms $M_{iS_k}^\T M_{S_kS_k}^{-1}M_{S_ki}$ and its sample level counterpart, as well uniformly over $i\in \wh H$, $S_k\in \S_k$ and $1\le k\le K$.
	
	Finally, the last two lemmas are variations of Lemmas \ref{lem_Gamma_M_Ihat} and \ref{lem_op_M} under conditions in Section \ref{sec_theory_H_K}. 
	
	\begin{lemma}\label{lem_Gamma_M_Ihat}
	     Under model (\ref{model}), assume Assumptions \ref{ass_I} \& \ref{ass_distr}, (\ref{cond_signal_ell_q_relaxed}), $\log p\le cn$ for sufficiently small constant $c>0$ and 
	     \[
	        \min(c_{b,I}, c_z, c_r) > c'
	     \]
	     for some absolute constant $c'>0$. Then on the event $\E$, 
	    \begin{align*}
	        \max_{i\in \wh H}\left|\wh \Gamma_{ii} - \Gamma_{ii}\right| \lesssim \delta_n,\qquad
	        &\max_{i,j\in \wh H}\left|\wh M_{ij} - M_{ij}\right| \lesssim \delta_n.
	    \end{align*}
	\end{lemma}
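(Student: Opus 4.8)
The plan is to prove the two element-wise bounds in turn, starting with the estimate for $\wh\Gamma_{ii}-\Gamma_{ii}$ and then leveraging it for $\wh M_{ij}-M_{ij}$. Throughout, I work on the event $\E$, so that $\max_{i,j}|\wh R_{ij}-R_{ij}|\le\delta_n$ is available, and I use freely that Corollary~\ref{cor_I_prime} guarantees (on $\E$) that $H=I\cup J_1\subseteq\wh H\subseteq I\cup\bar J_1$ and that each true pure group $I_k$ equals one of the estimated groups $\wh H_{\wh\pi(k)}$. The key recurring tool is the population identity (\ref{ident_MI}), namely $M_{jj}=|R_{ij}|\,\|R_{j,\nij}\|_q/\|R_{i,\nij}\|_q$, together with its sample analogue (\ref{est_M_LL_diag}). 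So the first step is to record that for off-diagonal entries the bound is immediate: $\wh M_{ij}=\wh R_{ij}$ and $M_{ij}=R_{ij}$ for $i\ne j$, so $|\wh M_{ij}-M_{ij}|\le\delta_n$ directly on $\E$.

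The main work is therefore the diagonal entries, since $\wh\Gamma_{ii}=1-\wh M_{ii}$ and $M_{ii}=R_{ii}-\Gamma_{ii}=1-\Gamma_{ii}$, which means the two claimed bounds are in fact equivalent: $|\wh\Gamma_{ii}-\Gamma_{ii}|=|\wh M_{ii}-M_{ii}|$. So I would reduce everything to controlling $|\wh M_{ii}-M_{ii}|$ for $i\in\wh H$. Writing $\wh V_i:=\|\wh R_{i,\nij}\|_q$ and $V_i:=\|R_{i,\nij}\|_q$, I set
\[
\wh M_{ii}=|\wh R_{i\wh j}|\,\frac{\wh V_i}{\wh V_{\wh j}},\qquad
M_{ii}=|R_{ij}|\,\frac{V_i}{V_j},
\]
with $\wh j$ chosen as the near-parallel partner of $i$ inside its estimated group. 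The plan is to bound $|\wh M_{ii}-M_{ii}|$ by adding and subtracting terms, exactly as in the proof of Theorem~\ref{thm_BI} (display~(\ref{disp_B})), which yields
\[
\Bigl||\wh R_{i\wh j}|\tfrac{\wh V_i}{\wh V_{\wh j}}-|R_{ij}|\tfrac{V_i}{V_j}\Bigr|
\le\tfrac{V_i}{V_j}|\wh R_{ij}-R_{ij}|
+|\wh R_{ij}|\Bigl\{\tfrac{\wh V_i}{\wh V_j}\tfrac{|\wh V_j-V_j|}{V_j}+\tfrac{|\wh V_i-V_i|}{V_j}\Bigr\}.
\]
The crucial ingredients are the deterministic triangle-inequality bound $\max_i|\wh V_i-V_i|\le(p-2)^{1/q}\delta_n$ (immediate from $\E$), a lower bound $V_i/(p-2)^{1/q}\gtrsim\sqrt{c_{b,I}c_zc_r}$ coming from $\|R_{i,\nij}\|_2/\sqrt{p-2}$ and Hölder's inequality (this is where the regularity hypotheses $\min(c_{b,I},c_z,c_r)>c'$ and $q\ge 2$ enter, as in (\ref{bd_lower_Vi})), and consequently $c'V_i\le\wh V_i\le c''V_i$ provided $\log p\le c n$ is small enough. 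Plugging these in collapses all the error terms to order $\delta_n$.

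The main obstacle, and the reason the statement is nontrivial, is that the bound must hold \emph{uniformly over all $i\in\wh H$}, and $\wh H$ may contain indices in $\bar J_1$ (near-parallel, non-pure rows) in addition to the pure indices. For a genuine pure index the ratio identities are exact at the population level, but for $i\in\bar J_1$ one cannot invoke (\ref{ident_MI}) as an exact identity; instead I would exploit that the representative $\wh j$ selected for such $i$ satisfies $\wh S_q(i,\wh j)\le 2\delta_n$, so that $R_{i,\nij}$ and $R_{j,\nij}$ are near-parallel up to order $\delta_n$, and argue that the induced discrepancy between $M_{ii}$ and $|R_{ij}|V_i/V_j$ is itself $O(\delta_n)$. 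Controlling this extra near-parallelism error term, while keeping the denominators $V_j$ bounded away from zero uniformly, is the delicate part; once it is handled the proof concludes by combining it with the perturbation bound above to obtain $\max_{i\in\wh H}|\wh M_{ii}-M_{ii}|\lesssim\delta_n$, and hence both displayed inequalities.
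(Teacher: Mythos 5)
Your skeleton coincides with the paper's proof: off-diagonal entries are immediate on $\E$ because $\wh M_{ij}=\wh R_{ij}$ and $M_{ij}=R_{ij}$ for $i\ne j$; the diagonal case reduces (since $\wh R_{ii}=R_{ii}=1$) to controlling the ratio estimator, which you split into a sampling-error part — handled exactly as in (\ref{disp_B})--(\ref{bd_Vi_hat}), using the lower bound $v_i^2\ge (p-2)\,c_{b,I}c_zc_r$ and $\max_i|\wh v_i - v_i|\le (p-2)^{1/q}\delta_n$ — and a bias part caused by the partner $\wh j$ being only \emph{near}-parallel to $i$. The gap is that this bias part, $\Delta_{i\wh j}:=\bigl|\,|R_{i\wh j}|\,v_i/v_{\wh j}-B_{i\sbt}^\T\C B_{i\sbt}\bigr|$, which you yourself single out as ``the delicate part,'' is never actually bounded: you assert it is $O(\delta_n)$ because $R_{i,\nij}$ and $R_{\wh j,\nij}$ are near-parallel, but near-parallelism of those two rows does not by itself control the ratio identity's error. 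The estimator implicitly reconstructs the proportionality constant between $B_{i\sbt}$ and $B_{\wh j\sbt}$ through the norm ratio $v_i/v_{\wh j}$, and one must prove that this reconstruction errs by at most $O(\delta_n)$ uniformly over $i\in\wh H$.

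The paper's mechanism for exactly this step is what your proposal lacks. For an arbitrary scalar $\rho\ne 0$ one first shows $\Delta_{i\wh j}\lesssim |\rho|^{-1}\bigl\|\C^{1/2}(\rho B_{i\sbt}-B_{\wh j\sbt})\bigr\|_2$ (display (\ref{bd_Delta_ij})). Next, $\wh S_2(i,\wh j)\le 2\delta_n$ and Theorem \ref{thm_score_ell_q} give $S_2(i,\wh j)\lesssim\delta_n$, while the regularity constant $c_r$ gives the lower bound $[S_2(i,\wh j)]^2\ge c_r\bigl\|(v_1^*B_{i\sbt}+v_2^*B_{\wh j\sbt})^\T\C^{1/2}\bigr\|_2^2$, where $(v_1^*,v_2^*)$ is the $\ell_\i$-normalized minimizer whose closed form comes from Proposition \ref{prop_score_2}. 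The scalar $\rho$ is then chosen according to which coordinate of $v^*$ equals one. The crux is the case $v_2^*=1$, $|v_1^*|<1$: there the bound degrades to $\delta_n/|v_1^*|$, and the argument only closes because one can prove $|v_1^*|^2\ge B_{\wh j\sbt}^\T\C B_{\wh j\sbt}-[S_2(i,\wh j)]^2/c_r\ge c_{b,I}c_z-O(\delta_n^2)\gtrsim 1$, using the same regularity constants appearing in the hypothesis of the lemma. Without this lower bound on the minimizing coefficient (or an equivalent device), the quantity $\delta_n/|v_1^*|$ is uncontrolled and your claim that ``the induced discrepancy is $O(\delta_n)$'' is unsupported; as proposed, the proof is therefore incomplete at its central step.
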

	\begin{proof}
	    The entire proof is on the event $\E$. Recall that $\wh H\subseteq I \cup \bar J_1$ from Corollary \ref{cor_I_prime}. Pick any $i\in \wh H_k$ with some $k\in [\wh G]$.
	    Recall from (\ref{est_Gamma_II}) and (\ref{est_BI}) that 
	    \[
	        \wh \Gamma_{ii} = 1 - \wh B_{ik}^2 = 1- |\wh R_{i\wh j}| {\|\wh R_{i,\setminus\{i,\wh j\}}\|_2  \over\|\wh R_{\wh j,\setminus\{i,\wh j\}}\|_2}
	    \]
	   with 
	   \[\wh j = \arg\min_{\ell \in \wh H_k\setminus \{i\}} \wh S_2(i,\ell).\]
	   Further recall that $\Gamma_{ii} = 1 - B_{i\sbt}^\T \C B_{i\sbt}$ and, for simplicity, write 
	   $$
	   \wh v_{i} = \|\wh R_{\wh j,\setminus\{i,\wh j\}}\|_2, \quad \wh v_{\wh j} = \|\wh R_{\wh j,\setminus\{i,\wh j\}}\|_2,  \quad v_{i} = \| R_{\wh j,\setminus\{i,\wh j\}}\|_2, \quad v_{\wh j} = \| R_{\wh j,\setminus\{i,\wh j\}}\|_2.
	   $$
	   We have 
	   \[
	        \left| \wh \Gamma_{ii} - \Gamma_{ii}\right| \le \left|
	        |\wh R_{i\wh j}| { \wh v_i \over \wh v_{\wh j}} - |R_{i\wh j}|{v_i \over v_{\wh j}}
	        \right| + \left| |R_{i\wh j}|{v_i \over v_{\wh j}} - B_{i\sbt}^\T \C B_{i\sbt}\right|.
	   \]
	   To bound the right hand side of the above display,  definitions in (\ref{def_cbzr}) and (\ref{def_cbI}) yield
	   \begin{equation}\label{bd_lower_vi}
	    v_i^2  = \left\| B_{i\sbt}^\T \C B_{\setminus\{i,\wh j\}}^\T\right\|_2^2 \ge (p-2)c_{b,I} c_zc_r.
	   \end{equation}
	   Together with (\ref{bd_Vi_diff}) by taking $q = 2$ and $\delta_n\lesssim 1$ (implied by $\log p\lesssim n$), we have 
	   \[
	        c'v_i \le \wh v_i \le c''v_i,\qquad \forall i\in [p].
	   \]
	   By similar arguments in the proof of Theorem \ref{thm_BI} (see (\ref{disp_B}) -- (\ref{bd_Vi_hat}) and their subsequent arguments), we obtain 
	   \begin{align*}
	        \left|
	        |\wh R_{i\wh j}| { \wh v_i \over \wh v_{\wh j}} - |R_{i\wh j}|{v_i \over v_{\wh j}}
	        \right| &\lesssim {v_i\over v_{\wh j}}\delta_n + (|R_{i\wh j}| + \delta_n)\left\{
			{v_i \over v_{\wh j}}{\sqrt{p-2} \over v_{\wh j}} + {\sqrt{p-2}\over v_{\wh j}}
			\right\}\delta_n\\
			&\lesssim  {v_i\over \sqrt{p-2}}\delta_n + (|R_{i\wh j}| + \delta_n)\left(
			{v_i \over \sqrt{p-2}} +1
			\right)\delta_n &(\textrm{by }(\ref{bd_lower_vi}))\\
			&\le \delta_n.
		\end{align*}
		where in the last step we used $|R_{i\wh j}|\le 1$, $\delta_n \lesssim 1$ (implied by $\log p\le cn$) together with 
		\[
		    \max_iv_i^2  \le \sum_{\ell \ne i,\wh j}(B_{i\sbt}^\T \C B_{\ell \sbt})^2 \le \sum_{\ell \ne i,\wh j}B_{i\sbt}^\T\C B_{i\sbt}B_{\ell\sbt}^\T\C B_{\ell \sbt}\le p-2.
		\]
		 It then suffices to show 
		\[
		    |\Delta_{i\wh j}| := \left| |R_{i\wh j}|{v_i \over v_{\wh j}} - B_{i\sbt}^\T \C B_{i\sbt}\right| \lesssim \delta_n.
		\]
		By adding and subtracting terms, for any $\rho \ne  0$ to be chosen later, we have 
		\begin{align*}
		    \Delta_{i\wh j} & = |R_{i\wh j}|{v_i \over v_{\wh j}} - B_{i\sbt}^\T \C B_{i\sbt}\\
		    & = |R_{i\wh j}|{v_i \over v_{\wh j}} - |R_{i\wh j}|{v_i \over |\rho| v_i} + |R_{i\wh j}|{v_i \over |\rho| v_i} - B_{i\sbt}^\T \C B_{i\sbt}\\
		    &= |R_{i\wh j}| {|\rho| v_i - \wh v_j \over |\rho| v_{\wh j}} + {|B_{i\sbt}^\T \C B_{\wh j \sbt}| \over |\rho|} - B_{i\sbt}^\T \C B_{i\sbt},
		\end{align*}
		which yields
		\begin{align}\label{bd_Delta_ij}\nonumber
		    |\Delta_{i\wh j}| &\le {|R_{i\wh j}|\over |\rho| v_{\wh j}} \left| |\rho|\cdot \left\|R_{i,\setminus\{i,\wh j\}}\right\|_2 -  \left\|R_{\wh j,\setminus\{i,\wh j\}}\right\|_2 \right| + {1\over |\rho|}\left|
		        |\rho|\cdot |B_{i\sbt}^\T \C B_{\wh j \sbt}| - B_{i\sbt}^\T \C B_{i\sbt}
		    \right|\\\nonumber
		    &\overset{(i)}{\lesssim} {1\over |\rho|\sqrt{p-2}} \left\|
		    \left(\rho \wh B_{i\sbt} - \wh B_{\wh j\sbt}\right)^\T \C B_{\setminus\{i,\wh j\}}^\T
		    \right\|_2 + {1\over |\rho|}\left|
		         B_{i\sbt}^\T \C (\rho B_{i\sbt} -  B_{\wh j \sbt})
		    \right|\\
		    &\overset{(ii)}{\lesssim} {1\over |\rho|}\left\|\C^{1/2}(\rho B_{i\sbt} -  B_{\wh j \sbt})
		    \right\|_2
		\end{align}
		In $(i)$, we used $|R_{i\wh j}|\le 1$ and (\ref{bd_lower_vi}) and in $(ii)$ we applied the Cauchy-Schwarz inequality and used
		\[
		    {1\over p-2} \left\|
		      \C^{1/2} B_{\setminus\{i,\wh j\}}^\T
		    \right\|_2^2 = {1\over  p-2} \sum_{\ell \ne i,\wh j}B_{\ell\sbt}^\T\C B_{\ell \sbt} \le 1
		\]
		from the fact that $\max_i B_{i\sbt}^\T \C B_{i\sbt} \le 1$.
		We proceed to bound $\|\C^{1/2}(\rho B_{i\sbt} -  B_{\wh j \sbt})\|_2 / |\rho|$ from above. Recall that 
		\[
		    \wh j = \arg\min_{\ell \in \wh H_k\setminus \{i\}} \wh S_2(i,\ell).
		\]
		It implies $\wh S_2(i,\wh j) \lesssim \delta_n$ which further gives 
		$$
		    S_2(i,\wh j) \le \wh S_2(i,\wh j)  + |\wh S_2(i,\wh j)-S_2(i,\wh j)|\lesssim \delta_n
		$$ by Theorem \ref{thm_score_ell_q}. On the other hand, the third line in (\ref{lb_score}) yields 
		\begin{align}\label{lb_score_2}\nonumber
		    [S_2(i,\wh j)]^2 &\ge c_r \min_{\|v\|_\i=1}\left\|(v_1 B_{i\sbt} + v_2 B_{\wh j\sbt})^\T \C^{1/2}\right\|_2^2\\\nonumber
		    & = c_r \min\left\{\min_{|v|\le 1}\left\|(B_{i\sbt} + vB_{\wh j\sbt})^\T \C^{1/2}\right\|_2^2, \min_{|v|\le 1}\left\|(vB_{i\sbt} + B_{\wh j\sbt})^\T \C^{1/2}\right\|_2^2\right\}\\
		    &:= c_r \left\|(v_1^* B_{i\sbt} + v_2^* B_{\wh j\sbt})^\T \C^{1/2}\right\|_2^2 
		\end{align}
		from the definitions in (\ref{def_cbzr}). When $v_1^* = 1$, choose $\rho = -(1 / v_2^*)$ in (\ref{bd_Delta_ij}) to obtain 
		\[
		    |\Delta_{i\wh j}| \lesssim {1\over |\rho|}\left\|\C^{1/2}(\rho B_{i\sbt} -  B_{\wh j \sbt})
		    \right\|_2 =  \left\|\C^{1/2}(B_{i\sbt} + v_2^*  B_{\wh j \sbt}) 
		    \right\|_2 \le {S_2(i,\wh j) \over  \sqrt c_r} \lesssim \delta_n.
		\]
		By similar arguments, when $v_2^* = 1$, 
		choose $\rho = -v_1^*$ in (\ref{bd_Delta_ij}) to obtain
		\[
		    |\Delta_{i\wh j}| \lesssim {1\over |\rho|}\left\|\C^{1/2}(\rho B_{i\sbt} -  B_{\wh j \sbt})
		    \right\|_2 = {1\over |v_1^*|}\left\|\C^{1/2}(v_1^* B_{i\sbt} + B_{\wh j \sbt})
		    \right\|_2 \le {S_2(i,\wh j) \over  |v_1^*|\sqrt c_r} \lesssim {\delta_n \over |v_1^*|}.
		\]
		To conclude $ |\Delta_{i\wh j}| \lesssim \delta_n$, it remains to lower bound $|v_1^*|$ when $|v_1^*|\ne 1$ and $v_2^* = 1$. Since the proof of Proposition \ref{prop_score_2} reveals that 
		\[
		    \left\|(v_1^* B_{i\sbt} + B_{\wh j\sbt})^\T \C^{1/2}\right\|_2^2 = B_{\wh j\sbt}^\T \C B_{\wh j\sbt} \left(
		     1 - {|B_{i\sbt}^\T \C B_{\wh j\sbt}|^2 \over B_{i\sbt}^\T\C B_{i\sbt}B_{\wh j\sbt}^\T\C B_{\wh j\sbt}}
		    \right) = B_{\wh j\sbt}^\T \C B_{\wh j\sbt}  - |v_1^*|^2
		\]
		with 
		\[
		    |v_1^*| = {|B_{i\sbt}^\T \C B_{\wh j\sbt}| \over B_{i\sbt}^\T \C B_{i\sbt}},
		\]
		we conclude that 
		\[
		    |v_1^*|^2 = B_{\wh j\sbt}^\T \C B_{\wh j\sbt} - \left\|(v_1^* B_{i\sbt} + B_{\wh j\sbt})^\T \C^{1/2}\right\|_2^2  \overset{(\ref{lb_score_2})}{\ge} c_{b,I}c_z - {[S_2(i,\wh j)]^2 \over c_r} \gtrsim c_{b,I}c_z
		\]
		by also invoking the definitions of the quantities in (\ref{def_cbzr}) and the inequality $S_2(i, \wh j) \lesssim \delta_n$. This concludes 
		$$
		\max_{i\in \wh H}|\wh \Gamma_{ii} - \Gamma_{ii}| \lesssim \delta_n.
		$$
		The rate of $\wh M_{ij} - M_{ij}$ follows immediately by noting that
		\[
		    \left|\wh M_{ij} - M_{ij} \right| \le \left|\wh R_{ij} - R_{ij} \right| + |\wh \Gamma_{ij} - \Gamma_{ij}|\cdot 1_{\{i=j\}},\quad \forall i,j\in \wh H.
		\]
	\end{proof}

	\begin{lemma}\label{lem_op_M}
		Under conditions of Lemma \ref{lem_Gamma_M_Ihat},  on the event $\E$,
		$$
				\max_{1\le k\le K}\max_{S_k \in \S_k}\left\|\wh M_{S_kS_k} -  M_{S_kS_k}\right\|_{\rm op} \lesssim \bar c_z(\sqrt{K\delta_n^2} + K\delta_n^2).
		$$
		Furthermore, if
		$
			\bar c_z (\sqrt{K\delta_n^2} + K\delta_n^2) \le c_1
		$
		for some sufficiently small constant $c_1>0$, then
		\[
		    \min_{1\le k\le K}\min_{S_k\in \S_k}\lambda_{k-1}(\wh M_{S_kS_k}) \gtrsim c_z c_{b,I}\]
		Both statements are valid on an event that holds with   probability greater than $1-C(p\vee n)^{-\alpha}$, for some $0< C, \alpha<\infty$.
	\end{lemma}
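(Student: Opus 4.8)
The plan is to reduce the operator-norm control of $\wh M_{S_kS_k}-M_{S_kS_k}$ to an operator-norm concentration inequality for a sample correlation submatrix, and then obtain the eigenvalue lower bound by Weyl's inequality. The starting point is that the off-diagonal entries of $\wh M$ and $M$ coincide with those of $\wh R$ and $R$, while the diagonal of $\wh R - R$ vanishes (both correlation matrices have unit diagonal). Hence for any $S=S_k$ I can write the exact decomposition
\[
\wh M_{SS}-M_{SS} = \left(\wh R_{SS}-R_{SS}\right) + \diag\left(\wh M_{ii}-M_{ii}:~i\in S\right),
\]
so that $\|\wh M_{SS}-M_{SS}\|_{\rm op}\le \|\wh R_{SS}-R_{SS}\|_{\rm op}+\max_{i\in S}|\wh M_{ii}-M_{ii}|$. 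The diagonal term is already $\lesssim\delta_n$ by Lemma \ref{lem_Gamma_M_Ihat}, so the entire task is to control the operator norm of $\wh R_{SS}-R_{SS}$.

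For a fixed $S$ with $s:=|S|=k-1$, I would apply the operator-norm deviation inequality for sample correlation matrices from Appendix \ref{app_concentration_R} to the subvector $X_S$. Since $\C_{SS}^{-1/2}X_S=(\Sigma_{SS}^{-1/2}P_S\Sigma^{1/2})(\Sigma^{-1/2}X)$ with coordinate-selection $P_S$, and $\|\Sigma_{SS}^{-1/2}P_S\Sigma^{1/2}\|_{\rm op}=1$ because $P_S\Sigma P_S^\top=\Sigma_{SS}$, the subvector inherits the $\gamma$--sub-Gaussianity of $\Sigma^{-1/2}X$ from Assumption \ref{ass_distr}, and its correlation matrix is exactly $R_{SS}$. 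This yields, with probability $\ge 1-2e^{-ct}-4(p\vee n)^{-1}$,
\[
\|\wh R_{SS}-R_{SS}\|_{\rm op}\lesssim \|R_{SS}\|_{\rm op}\left(\sqrt{(s+t)/n}+(s+t)/n+\sqrt{\log n/n}\right).
\]
I would then bound $\|R_{SS}\|_{\rm op}$: because $S$ contains one pure variable per distinct group, the rows of $B_{S\sbt}$ are scaled, mutually orthogonal canonical vectors, so $B_{S\sbt}B_{S\sbt}^\top$ is diagonal with entries $\|B_{i\sbt}\|_2^2\le 1/c_z$; thus $\|R_{SS}\|_{\rm op}\le \bar c_z\|B_{S\sbt}\|_{\rm op}^2+\|\Gamma_{SS}\|_{\rm op}\lesssim \bar c_z$, using $\bar c_z=\lambda_1(\C)\ge1$ and $c_z>c'$. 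A union bound over the at most $\sum_k|\S_k|\le Kp^{K-1}$ index sets, taking $t\asymp K\log(p\vee n)$, collapses $s+t$ to order $K\log(p\vee n)$ and hence $\sqrt{(s+t)/n}$ to order $\sqrt{K}\delta_n$ and $(s+t)/n$ to order $K\delta_n^2$. Combined with the diagonal bound this gives $\max_k\max_{S_k}\|\wh M_{S_kS_k}-M_{S_kS_k}\|_{\rm op}\lesssim\bar c_z(\sqrt{K\delta_n^2}+K\delta_n^2)$ on an event of probability $\ge 1-C(p\vee n)^{-\alpha}$, which is the first claim.

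For the second claim I would first lower bound the population eigenvalue exactly as in \eqref{lb_eigen_Msksk}: since $B_{S_k\sbt}B_{S_k\sbt}^\top$ is diagonal with smallest entry $\min_a\|B_{i_a\sbt}\|_2^2\ge c_{b,I}$ (as each $i_a\in I$ is pure), we get $\lambda_{k-1}(M_{S_kS_k})=\lambda_{k-1}(B_{S_k\sbt}\C B_{S_k\sbt}^\top)\ge c_z\,\lambda_{k-1}(B_{S_k\sbt}B_{S_k\sbt}^\top)\ge c_zc_{b,I}$. Weyl's inequality then gives $\lambda_{k-1}(\wh M_{S_kS_k})\ge c_zc_{b,I}-\|\wh M_{S_kS_k}-M_{S_kS_k}\|_{\rm op}\ge c_zc_{b,I}-\bar c_z(\sqrt{K\delta_n^2}+K\delta_n^2)$, and since $c_z,c_{b,I}>c'$ while the subtracted term is at most $c_1$, choosing $c_1$ a small enough multiple of $c'^2$ forces $\lambda_{k-1}(\wh M_{S_kS_k})\ge\tfrac12 c_zc_{b,I}\gtrsim c_zc_{b,I}$, uniformly over all $k$ and $S_k\in\S_k$.

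The main obstacle, where essentially all the work sits, is the uniform operator-norm bound in the first claim. The naive route through the entrywise estimate of Lemma \ref{lem_Gamma_M_Ihat} only yields $\|\cdot\|_{\rm op}\le(k-1)\|\cdot\|_\infty\lesssim K\delta_n$, which is off by the crucial factor $\sqrt{K}$; recovering the sharp $\sqrt{K}\delta_n$ genuinely requires operator-norm concentration of the correlation submatrix, not merely its sup-norm. The two delicate points are (i) invoking that concentration inequality with the correct sub-Gaussian constant for the subvector $X_S$, and (ii) selecting the deviation level $t\asymp K\log(p\vee n)$ so the union bound over the exponentially many ($\le p^K$) sets $S_k$ still closes with polynomially small failure probability, while keeping the global $4(p\vee n)^{-1}$ diagonal-normalization term (i.e.\ event $\E$) outside the per-set union so that it is not inflated by the number of submatrices.
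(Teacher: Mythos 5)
Your proposal is correct and takes essentially the same route as the paper's own proof: the identical decomposition of $\wh M_{S_kS_k}-M_{S_kS_k}$ into the correlation part plus a diagonal term handled by Lemma \ref{lem_Gamma_M_Ihat} (your $\diag(\wh M_{ii}-M_{ii})$ is exactly $-(\wh\Gamma_{S_kS_k}-\Gamma_{S_kS_k})$), the same appeal to the operator-norm concentration for sample correlation submatrices (Lemma \ref{lem_op_R_SS}) with $t\asymp K\log(p\vee n)$ and a union bound over the at most $p^K$ sets $\S_k$ while keeping the event $\E$ global, the same bound $\|R_{S_kS_k}\|_{\rm op}\lesssim \bar c_z$ from the diagonality of $B_{S_k\sbt}B_{S_k\sbt}^\T$, and the same combination of (\ref{lb_eigen_Msksk}) with Weyl's inequality for the eigenvalue lower bound. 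Your only additions are cosmetic refinements, e.g.\ the explicit verification that $\Sigma_{SS}^{-1/2}X_S$ inherits $\gamma$--sub-Gaussianity, a detail the paper leaves implicit when invoking Vershynin's inequality inside Lemma \ref{lem_op_R_SS}.
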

	\begin{proof}
		To prove   the first assertion, 
		pick any $k\in [K]$ and $S_k\in \S_k$. We have 
		\[
			\|\wh M_{S_kS_k} -  M_{S_kS_k}\|_{\rm op} \le 
			\|\wh R_{S_kS_k} -  R_{S_kS_k}\|_{\rm op} + \|\wh \Gamma_{S_kS_k} -  \Gamma_{S_kS_k}\|_{\rm op}. 
		\]
		On the event $\E$, the inclusion $I\subseteq \wh H$ from Corollary \ref{cor_I_prime}, and Lemma \ref{lem_Gamma_M_Ihat} ensure
		\begin{equation}\label{bd_Gamma_sksk}
			\|\wh \Gamma_{S_kS_k} -  \Gamma_{S_kS_k}\|_{\rm op} = 	\|\wh \Gamma_{S_kS_k} -  \Gamma_{S_kS_k}\|_{\i} \le \max_{i\in I} |\wh \Gamma_{ii} - \Gamma_{ii}| \lesssim \delta_n.  
		\end{equation}
        To bound $\|\wh R_{S_kS_k} -  R_{S_kS_k}\|_{\rm op}$, note that, on the event $\E\cap \E_{S_k,t}$ by taking $S=S_k$ in Lemma \ref{lem_op_R_SS}, we have, for all $t\ge 0$, 
        \[
		        \|\wh R_{S_kS_k} -  R_{S_kS_k}\|_{\rm op} \lesssim \|R_{S_kS_k}\|_{\op} \left(
		        \sqrt{|S_k| \over n}+{|S_k|\over n} + \sqrt{t\over n} + {t\over n} + \delta_n
		        \right).
		 \]
		Therefore, the above 
		holds uniformly over $1\le k\le K$ and $S_k\in \S_k$ with probability at least 
		\[
		    \PP\left\{
		        \E \cap \left(\bigcap_{k=1}^K \bigcap _{S_k\in\S_k}\E_{S,t}\right)
		    \right\} \ge 1 - \PP(\E^c) - \sum_{k=1}^K|\S_k| \cdot 2\exp(-ct).
		\]
		Recall that $\sum_{k=1}^K|\S_k| \le Kp^{K-1}\le p^{K}$. Taking $t = CK\log(n\vee p)$ with $C$ large enough and using $|S_k| < K$ conclude
		\[
	        \|\wh R_{S_kS_k} -  R_{S_kS_k}\|_{\rm op} \lesssim \|R_{S_kS_k}\|_{\op} \left(
	        \sqrt{K\delta_n^2} + {K\delta_n^2}
	        \right)
		\]
		uniformly over $1\le k\le K$ and $S_k\in \S_k$ with probability at least $1-4(n\vee p)^{-1} - 2(n\vee p)^{-c'K}$. Then observe that 
		\begin{equation}\label{bd_R_sksk}
			\|R_{S_kS_k}\|_{\rm op}\le \|B_{S_k\sbt}\C B_{S_k\sbt}^\T\|_{\rm op} + \|\Gamma\|_{\rm op} \le  \max_{i\in I_a, a\in [K]}\left(\bar c_z B_{ia}^2 + \Gamma_{ii}\right) \le 1+\bar c_z
		\end{equation}
		by using the fact that $B_{S_k\sbt} B_{S_k\sbt}^\T$ is diagonal with diagonal elements equal to $B_{i_a\pi(a)}^2$ for $1\le a\le k-1$. 
		In conjunction with (\ref{bd_Gamma_sksk}), this completes the proof of the first claim.
		
	The second claim follows from 
		Weyl's inequality, 
		\begin{align*}
			\lambda_{k-1}(\wh M_{S_kS_k}) & \ge 	\lambda_{k-1}(M_{S_kS_k})  - \left\|
				\wh M_{S_kS_k} - M_{S_kS_k}
			\right\|_{\rm op},
		\end{align*}
		the first assertion, $(\ref{lb_eigen_Msksk})$ and  $\bar c_z \sqrt{K\delta_n^2} \le c_1$ for sufficiently small $c_1$.
	\end{proof}
	
	\bigskip

	   	 \begin{lemma}\label{lem_M_schur}
	    Under conditions of Lemma \ref{lem_op_M}, we have
	    \[
	       \max_{1\le k\le K}\max_{S_k\in \S_k}\max_{i\in \wh H} M_{iS_k}^\T M_{S_kS_k}^{-1} M_{S_ki} \le 1.
	    \]
	    Furthermore, on the event $\E \cap \E_M$ with $\E_M$ defined in (\ref{def_event_E_M}), we have 
	    \[
	        \max_{1\le k\le K}\max_{S_k\in \S_k}\max_{i\in \wh H} \wh M_{iS_k}^\T\wh M_{S_kS_k}^{-1} \wh M_{S_ki} \lesssim 1.
	    \]
	 \end{lemma}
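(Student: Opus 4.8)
The plan is to prove the two displays separately: the population bound rests entirely on the positive semidefiniteness of $M=B\C B^\T$, while the sample bound follows from an operator-norm argument, once the correct population quantity has been controlled sharply.

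First I would dispose of the population inequality. Fix $1\le k\le K$, $S_k\in\S_k$ and $i\in\wh H$, and consider the principal submatrix of $M$ indexed by $S_k\cup\{i\}$. Since $M$ is positive semidefinite and $M_{S_kS_k}$ is invertible by $(\ref{lb_eigen_Msksk})$ (which gives $\lambda_{k-1}(M_{S_kS_k})\ge c_z c_{b,I}>0$), the Schur complement
\[
M_{ii|S_k}=M_{ii}-M_{iS_k}^\T M_{S_kS_k}^{-1}M_{S_ki}
\]
is nonnegative, whence $M_{iS_k}^\T M_{S_kS_k}^{-1}M_{S_ki}\le M_{ii}$. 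Because $M_{ii}=R_{ii}-\Gamma_{ii}\le R_{ii}=1$ holds uniformly, the first claim follows (the case $k=1$, $S_1=\emptyset$, being trivial, as the quadratic form vanishes). All estimates are uniform in $i,S_k,k$.

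The key preparatory step for the sample part is to extract from this a \emph{sharp} bound $\|M_{S_ki}\|_2=O(1)$; bounding the $k-1\le K-1$ entries of $M_{S_ki}$ one at a time would only yield $\|M_{S_ki}\|_2^2=O(K)$, which is too lossy. Instead I would use that each row $B_{i_a\sbt}$ of $B_{S_k\sbt}$ is a pure-variable row supported on a single, distinct column, so $M_{S_kS_k}=D\,\C_{T_kT_k}\,D$ with $D$ diagonal, $\|D\|_{\op}^2=\max_a M_{i_ai_a}\le 1$ and $\|\C_{T_kT_k}\|_{\op}\le\bar c_z$, giving $\lambda_1(M_{S_kS_k})\le\bar c_z$. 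Combining this with the population inequality just proved,
\[
\|M_{S_ki}\|_2^2\le \lambda_1(M_{S_kS_k})\,M_{iS_k}^\T M_{S_kS_k}^{-1}M_{S_ki}\le \bar c_z\le C.
\]
For the sample bound I would then work on $\E\cap\E_M$ and use the elementary estimate $\wh M_{iS_k}^\T\wh M_{S_kS_k}^{-1}\wh M_{S_ki}\le \|\wh M_{S_ki}\|_2^2/\lambda_{k-1}(\wh M_{S_kS_k})$. The denominator is bounded below by $\gtrsim c_z c_{b,I}$ by the defining property of $\E_M$ in $(\ref{def_event_E_M})$, and for the numerator the triangle inequality together with Lemma \ref{lem_Gamma_M_Ihat} gives $\|\wh M_{S_ki}-M_{S_ki}\|_2\le\sqrt{k-1}\,\max_{j\in\wh H}|\wh M_{ji}-M_{ji}|\lesssim\sqrt{K}\delta_n$, so that $\|\wh M_{S_ki}\|_2\le\sqrt{\bar c_z}+O(\sqrt{K}\delta_n)=O(1)$ since $\sqrt{K}\delta_n\lesssim 1$ under the conditions of Lemma \ref{lem_op_M}. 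Dividing yields $\lesssim 1$, uniformly, as $\E$ and $\E_M$ are defined as intersections over all $i,S_k,k$.

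The hard part is not any single inequality but the realization that the crude entrywise bound on $\|\wh M_{S_ki}\|_2$ must be replaced by an $O(1)$ bound; the only clean route is through the population estimate $\|M_{S_ki}\|_2=O(1)$, which in turn hinges on recognizing the diagonal-conjugation structure $M_{S_kS_k}=D\,\C_{T_kT_k}\,D$ to control $\lambda_1(M_{S_kS_k})$.
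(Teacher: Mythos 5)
Your population bound is correct and is essentially the paper's own argument (positive semidefiniteness of the Gram-type matrix $M$ plus invertibility of $M_{S_kS_k}$ gives a nonnegative Schur complement, and $M_{ii}\le 1$). Your structural observation for the sample part is also correct as far as it goes: $M_{S_kS_k}=D\,[\C]_{T_kT_k}D$ with $D$ diagonal and $\|D\|_{\op}\le 1$, so $\lambda_1(M_{S_kS_k})\le\bar c_z$ and hence $\|M_{S_ki}\|_2^2\le\lambda_1(M_{S_kS_k})\,M_{iS_k}^\T M_{S_kS_k}^{-1}M_{S_ki}\le\bar c_z$. The gap is the final inequality in that chain, ``$\le\bar c_z\le C$'': no upper bound on $\bar c_z=\lambda_1(\C)$ appears among the hypotheses of Lemma \ref{lem_op_M}. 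Those hypotheses are the conditions of Lemma \ref{lem_Gamma_M_Ihat} (lower bounds on $c_{b,I},c_z,c_r$, sub-Gaussianity, $\log p\le cn$) together with the smallness condition $\bar c_z(\sqrt{K\delta_n^2}+K\delta_n^2)\le c_1$; the latter only forces the product $\bar c_z\sqrt{K}\delta_n$ to be small, not $\bar c_z$ itself, so $\bar c_z$ may grow with $n$ (it can be as large as order $K$). The assumption $\bar c_z\le C$ is part of condition (\ref{cond_regularity}), which is invoked only later, in Proposition \ref{prop_K} and Theorem \ref{thm_I_post}, and is not available here. Consequently, what your argument actually proves is $\max_{k,S_k,i}\wh M_{iS_k}^\T\wh M_{S_kS_k}^{-1}\wh M_{S_ki}\lesssim\bar c_z/(c_zc_{b,I})$, which is weaker than the claimed $\lesssim 1$.

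Moreover, this loss is intrinsic to your route rather than a bookkeeping slip: the bound $v^\T P^{-1}v\le\|v\|_2^2/\lambda_{\min}(P)$ is off by a factor of order $\lambda_{\max}(P)/\lambda_{\min}(P)$ exactly when $v$ aligns with top eigendirections of $P$, and that is the regime the population Schur bound protects against ($\|M_{S_ki}\|_2^2$ can genuinely be of order $\bar c_z$ even though $M_{iS_k}^\T M_{S_kS_k}^{-1}M_{S_ki}\le 1$). The paper never converts the quadratic form into norm-over-smallest-eigenvalue: it decomposes $\wh M_{iS_k}^\T\wh M_{S_kS_k}^{-1}\wh M_{S_ki}$ into the population quadratic form plus three error terms, and in each Cauchy--Schwarz step only the perturbations ($\wh M_{S_ki}-M_{S_ki}$, with $\ell_2$ norm $\lesssim\sqrt{K}\delta_n$, or $\|\wh M_{S_kS_k}-M_{S_kS_k}\|_{\op}\lesssim\bar c_z\sqrt{K}\delta_n$) are taken out in unweighted norm, while the remaining factors retain their $\wh M_{S_kS_k}^{-1}$- or $M_{S_kS_k}^{-1}$-weighted norms. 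This produces a self-bounding inequality of the form $x\le C(\eps\sqrt{x}+\eps+1)$ with $\eps$ small precisely because $\bar c_z\sqrt{K\delta_n^2}\le c_1$ is assumed, whence $x\lesssim 1$ with a constant free of $\bar c_z$. To prove the lemma as stated you would need essentially this decomposition; your bound does suffice where the lemma is actually applied (Lemma \ref{lem_noise} under the hypotheses of Theorem \ref{thm_I_post}, where (\ref{cond_regularity}) gives $\bar c_z\le C$), but that is a strictly stronger hypothesis than the lemma's.
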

	 \begin{proof}
	    Pick any $k\in [K]$, $S_k\in \S_k$ and $i\in \wh H$. To prove the first result, note that $|S_k| = k-1$. When $i\in S_k$, we have 
	    \[
	        M_{ii} - M_{iS_k}^\T M_{S_kS_k}^{-1} M_{S_ki} \ge \lambda_{k-1}(M_{S_kS_k})\overset{(\ref{lb_eigen_Msksk})}{>} 0
	    \]
	    which in conjunction with $M_{ii} = R_{ii} - \Gamma_{ii} \le 1$ implies the result. When $i\notin S_k$, write $\bar S_k = S_k \cup \{i\}$ and 
	    observe that
	    \[
	        M_{ii} - M_{iS_k}^\T M_{S_kS_k}^{-1} M_{S_ki} \ge \lambda_{k}(M_{\bar S_k\bar S_k}) \ge 0.
	    \]
	    Using $M_{ii} \le 1$ again concludes the first result. 
	    
	    We proceed to prove the second result by working on $\E\cap \E_M$. By adding and subtracting terms,
	    \begin{align}\label{disp_0}\nonumber
	        \wh M_{iS_k}^\T\wh M_{S_kS_k}^{-1} \wh M_{S_ki} &\le \left|(\wh M_{iS_k} -M_{iS_k})^\T\wh M_{S_kS_k}^{-1} \wh M_{S_ki}\right| + \left|M_{iS_k}^\T\left(\wh M_{S_kS_k}^{-1} - M_{S_kS_K}^{-1}\right) \wh M_{S_ki}\right|\\
	        &\quad + \left|M_{iS_k}^\T M_{S_kS_K}^{-1}(\wh M_{S_ki}-M_{S_k i})\right| + M_{iS_k}^\T M_{S_kS_K}^{-1} M_{S_ki}.
	    \end{align}
	    Note that 
    	\begin{align}\label{disp_1}\nonumber
    	    \left|(\wh M_{iS_k} -M_{iS_k})^\T\wh M_{S_kS_k}^{-1} \wh M_{S_ki}\right|^2 &\le \left\|\wh M_{iS_k} -M_{iS_k}\right\|_2^2 \wh M_{iS_k}^\T \wh M_{S_kS_k}^{-2} \wh M_{S_ki}\\\nonumber
    	    &\lesssim K\delta_n^2 {\wh M_{iS_k}^\T \wh M_{S_kS_k}^{-1} \wh M_{S_ki}\over \lambda_{k-1}(\wh M_{S_kS_k})} &(\textrm{by (\ref{bd_M_sup})})\\
    	    &\lesssim K\delta_n^2 \cdot \wh M_{iS_k}^\T \wh M_{S_kS_k}^{-1} \wh M_{S_ki} &(\textrm{on $\E_M$})
    	\end{align}
    	and, similar arguments together with (\ref{lb_eigen_Msksk}) yield 
    	\begin{align}\label{disp_2}
    	    \left|M_{iS_k}^\T M_{S_kS_K}^{-1}(\wh M_{S_ki}-M_{S_k i})\right| &\lesssim \sqrt{K\delta_n^2} \cdot M_{iS_k}^\T  M_{S_kS_k}^{-1} M_{S_ki}\le \sqrt{K\delta_n^2}.
    	\end{align}
	    Furthermore, 
	    \begin{align}\label{disp_3}\nonumber
	        &\left|M_{iS_k}^\T\left(\wh M_{S_kS_k}^{-1} - M_{S_kS_K}^{-1}\right) \wh M_{S_ki}\right|\\ \nonumber
	        & = \left|M_{iS_k}^\T  M_{S_kS_K}^{-1}\left( M_{S_kS_K}-\wh M_{S_kS_k}\right) \wh M_{S_kS_k}^{-1}\wh M_{S_ki}\right|\\\nonumber
	        &\le \sqrt{M_{iS_k}^\T  M_{S_kS_k}^{-2}M_{S_ki}}\left\| M_{S_kS_K}-\wh M_{S_kS_k}\right\|_{\rm op}\sqrt{ \wh M_{iS_k}^\T \wh M_{S_kS_k}^{-2}\wh M_{S_ki}}\\\nonumber
	        &\overset{(i)}{\lesssim}\left\| M_{S_kS_K}-\wh M_{S_kS_k}\right\|_{\rm op}\left(\wh M_{iS_k}^\T \wh M_{S_kS_k}^{-1}\wh M_{S_ki}\over \lambda_{k-1}(\wh M_{S_kS_k})\right)^{1/2} \\
	        &\lesssim \sqrt{K\delta_n^2}\sqrt{ \wh M_{iS_k}^\T \wh M_{S_kS_k}^{-1}\wh M_{S_ki}} &(\textrm{on $\E_M$})
	    \end{align}
	    where $(i)$ uses the first result of this lemma together with (\ref{lb_eigen_Msksk}). Plugging (\ref{disp_1}) -- (\ref{disp_3}) into (\ref{disp_0}) and using the condition that $K\delta_n^2$ is sufficiently small complete the proof. 
	 \end{proof}
	 
    \bigskip
    
    \begin{lemma}\label{lem_op_M_H}
	    Under conditions of Proposition \ref{prop_K_parallel}, on the event $\E$, for all $L=\{\ell_1,\ldots,\ell_G\}$ with $\ell_k \in H_k$ and $k\in [G]$,
		$$
			\left\|\wh M_{LL} -  M_{LL}\right\|_{\rm op} \lesssim \ol c(L)(\sqrt{G\delta_n^2} + G\delta_n^2).
		$$
	\end{lemma}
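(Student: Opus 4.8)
The plan is to mirror the proof of Lemma \ref{lem_op_M}, adapting it to the representative set $L$ (which contains exactly one index per group $H_k$) and to the general parallel parametrization. First I would split the error into its off-diagonal and diagonal pieces. Since $\wh M_{\ell_a\ell_b}=\wh R_{\ell_a\ell_b}$ and $M_{\ell_a\ell_b}=R_{\ell_a\ell_b}$ for $a\ne b$, while the diagonal of any correlation matrix is identically one (so $\wh R_{\ell\ell}=R_{\ell\ell}=1$), one can write $\wh M_{LL}-M_{LL}=(\wh R-R)_{LL}+D$, where $D$ is the diagonal matrix with entries $\wh M_{\ell\ell}-M_{\ell\ell}$. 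Hence
\[
\|\wh M_{LL}-M_{LL}\|_{\op}\le \|(\wh R-R)_{LL}\|_{\op}+\max_{\ell\in L}|\wh M_{\ell\ell}-M_{\ell\ell}|.
\]

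For the first term I would invoke the submatrix concentration inequality for the sample correlation matrix (Lemma \ref{lem_op_R_SS}, applied with $S=L$): on $\E\cap\E_{L,t}$ one has, for all $t\ge0$, $\|\wh R_{LL}-R_{LL}\|_{\op}\lesssim \|R_{LL}\|_{\op}(\sqrt{|L|/n}+|L|/n+\sqrt{t/n}+t/n+\delta_n)$. Taking a union bound over all admissible $L$ — of which there are $\prod_{k=1}^G|H_k|\le p^G$ — and choosing $t=CG\log(n\vee p)$ with $C$ large, the tail $p^G\cdot 2\exp(-ct)$ stays summable, and since $\delta_n\asymp\sqrt{\log(p\vee n)/n}$ the five terms collapse to $\sqrt{G\delta_n^2}+G\delta_n^2$ (using $|L|=G$, $\sqrt{G/n}\le\sqrt{G}\,\delta_n$ and $G/n\le G\delta_n^2$). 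This yields $\|(\wh R-R)_{LL}\|_{\op}\lesssim \|R_{LL}\|_{\op}(\sqrt{G\delta_n^2}+G\delta_n^2)$ uniformly over $L$ on an event of probability $1-C'(p\vee n)^{-\alpha}$.

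To convert the prefactor $\|R_{LL}\|_{\op}$ into $\ol c(L)$, use $R_{LL}=B_{L\sbt}\C B_{L\sbt}^\T+\Gamma_{LL}$, so that $\|R_{LL}\|_{\op}\le \lambda_1(B_{L\sbt}\C B_{L\sbt}^\T)+\|\Gamma_{LL}\|_{\op}=\ol c(L)+\max_{\ell\in L}\Gamma_{\ell\ell}\le \ol c(L)+1$. The key observation is that condition (\ref{cond_regularity_parallel}) forces $\ol c(L)$ to be bounded below. Indeed, each entry of $R_{i,\nij}$ equals $B_{i\sbt}^\T\C B_{\ell\sbt}$, so Cauchy--Schwarz and $M_{\ell\ell}\le 1$ give $\|R_{i,\nij}\|_\i\le\sqrt{M_{ii}}$; hence $c(p-2)^{1/q}\le\|R_{i,\nij}\|_q\le(p-2)^{1/q}\sqrt{M_{ii}}$, i.e. $M_{ii}\ge c^2$ for all $i\in H$. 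Since $\ol c(L)=\lambda_1(B_{L\sbt}\C B_{L\sbt}^\T)\ge\max_{\ell\in L}M_{\ell\ell}\ge c^2$, we obtain $\|R_{LL}\|_{\op}\le\ol c(L)(1+c^{-2})\lesssim\ol c(L)$.

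The remaining diagonal term $\max_{\ell\in L}|\wh M_{\ell\ell}-M_{\ell\ell}|$ I would bound by $\lesssim\delta_n$ by re-running the diagonal estimate of Lemma \ref{lem_Gamma_M_Ihat} under the present hypotheses: condition (\ref{cond_signal_H}) and Corollary \ref{cor_H} guarantee $\wh H=H$ with the correct partition, so the selected partner $\wh j$ lies in the true group of $i$, giving exact parallelism $B_{i\sbt}\parallel B_{\wh j\sbt}$ and the population identity (\ref{ident_MI}); the error then stems only from replacing $R$ by $\wh R$ and is controlled exactly as in the proof of Theorem \ref{thm_BI}, with the lower bound (\ref{cond_regularity_parallel}) playing the role of (\ref{bd_lower_vi}). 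Since $\ol c(L)\ge c^2$ and $1\le\sqrt{G}$, this piece is absorbed, $\delta_n\lesssim\ol c(L)\sqrt{G\delta_n^2}$, and combining the three bounds gives the claim. The main obstacle is the uniform control of the first term over the exponentially many choices of $L$: the union bound costs $p^G$, which is exactly compensated by the deviation level $t\asymp G\log(n\vee p)$, and keeping the final rate at $\ol c(L)(\sqrt{G\delta_n^2}+G\delta_n^2)$ rather than something larger requires the careful bookkeeping sketched above; a secondary point is re-establishing the diagonal estimate outside the pure-variable regime of Lemma \ref{lem_Gamma_M_Ihat}.
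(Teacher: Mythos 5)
Your proposal is correct and follows essentially the same route as the paper: the identical decomposition of $\wh M_{LL}-M_{LL}$ into the correlation-matrix error plus the diagonal $\wh\Gamma$ error, the same operator-norm concentration via Lemma \ref{lem_op_R_SS} with a union bound over the at most $p^G$ admissible sets $L$ and $t\asymp G\log(n\vee p)$, and the same re-derivation of $\max_{\ell}|\wh M_{\ell\ell}-M_{\ell\ell}|\lesssim\delta_n$ using Corollary \ref{cor_H}, condition (\ref{cond_regularity_parallel}) in place of (\ref{bd_lower_vi}), and the exact population identity (\ref{ident_MI}). The one place you go beyond the paper's terse write-up is the explicit conversion $\|R_{LL}\|_{\op}\le \ol c(L)+1\lesssim \ol c(L)$, justified by showing $\ol c(L)\ge \max_{\ell\in L}M_{\ell\ell}\ge c^2$ from (\ref{cond_regularity_parallel}); this step is left implicit in the paper and your treatment of it is correct.
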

    \begin{proof}
        Notice that 
        \[
            \wh M_{\wh L \wh L} = \wh R_{\wh L\wh L} - \wh \Gamma_{\wh L\wh L} 
        \]
        where $\wh \Gamma_{\wh L\wh L}$ is a diagonal matrix defined in (\ref{est_Gamma_II}).
        The proof follows the same lines of arguments as that of the first statement in Lemma \ref{lem_op_M}, provided that,
        \[
            \max_{i\in \wh H}\left|\wh \Gamma_{ii} - \Gamma_{ii}\right| \lesssim \delta_n
        \]
        on the event $\E$  under conditions in Proposition \ref{prop_K_parallel}. To verify the above claim, recall that $\wh H = H$ from Corollary \ref{cor_H}. By inspecting the proof of Lemma \ref{lem_Gamma_M_Ihat}, we have 
        \[
	        \left| \wh \Gamma_{ii} - \Gamma_{ii}\right| \le \left|
	        |\wh R_{i\wh j}| { \wh V_i \over \wh V_{\wh j}} - |R_{i\wh j}|{V_i \over V_{\wh j}}
	        \right| + \left| |R_{i\wh j}|{V_i \over V_{\wh j}} - B_{i\sbt}^\T \C B_{i\sbt}\right|
	   \]
	   where 
	   \[
	            V_i = \|R_{i,\nij}\|_q,\quad \wh V_i = \|\wh R_{i,\nij}\|_q.
	   \]
	   Since (\ref{cond_regularity_parallel}) and (\ref{bd_Vi_diff}) ensure that 
	   \[
	         c'V_i \le \wh V_i \le c''V_i,\qquad \forall i,j\in H_k, i\ne j
	   \]
	   on the event $\E$. By similar arguments of the proof of Lemma \ref{lem_Gamma_M_Ihat} together with $V_i \le (p-2)^{1/q}$, we obtain
	   \begin{align*}
	        \left|
	        |\wh R_{i\wh j}| { \wh V_i \over \wh V_{\wh j}} - |R_{i\wh j}|{V_i \over V_{\wh j}}
	        \right| &\lesssim {V_i\over V_{\wh j}}\delta_n + (|R_{i\wh j}| + \delta_n)\left\{
			{V_i \over V_{\wh j}}{(p-2)^{1/q} \over V_{\wh j}} + {(p-2)^{1/q}\over V_{\wh j}}
			\right\}\delta_n\\
			&\lesssim  {V_i\over (p-2)^{1/q}}\delta_n + (|R_{i\wh j}| + \delta_n)\left(
			{V_i \over(p-2)^{1/q}} +1
			\right)\delta_n &(\textrm{by }(\ref{bd_lower_vi}))\\
			&\le \delta_n.
		\end{align*}
		Furthermore, (\ref{ident_MI}) together with $\wh H_k = H_k$ implies
		\[
		    \left| |R_{i\wh j}|{V_i \over V_{\wh j}} - B_{i\sbt}^\T \C B_{i\sbt}\right| = 0.
		\]
		The proof is complete.
    \end{proof}
    
    \bigskip

	\subsection{Concentration inequalities of the sample correlation matrix}\label{app_concentration_R}
	Throughout this section, let $\X\in \RR^{n\times p}$ be any random matrix whose rows $\X_{i\sbt}$ are $n$ independent sub-Gaussian random vectors in $\RR^p$ with zero mean and second moment $\Sigma$. We provide non-asymptotic upper bounds for its sample correlation matrix 
	$$
	    \wh R := D_{\wh \Sigma}^{-1/2}\wh \Sigma D_{\wh \Sigma}^{-1/2}
	$$ where $\wh\Sigma = n^{-1}\X^\T \X$ and $D_{\wh \Sigma} = \diag(\wh\Sigma_{11}, \ldots, \wh\Sigma_{dd})$. 
	
    \begin{lemma}\label{lem_op_R_SS}
        Let $\X\in \RR^{n\times p}$ be a random matrix with $\Sigma^{-1}\X_{i\sbt}$ being $n$ independent sub-Gaussian random vectors with finite sub-Gaussian constant $\gamma>0$, zero-mean and second moment $\bI_p$. 
        Let $S$ be any fixed subset of $[d]$ with $|S| = s$.\\
        Then the event 
        \[
            \E_{S,t}:= \left\{
                \left\|\left[D_{\Sigma}^{-1/2}(\wh\Sigma - \Sigma)D_{\Sigma}^{-1/2}\right]_{SS}\right\|_{\rm op} \le  C\|R_{SS}\|_{\op}\left(
		        \sqrt{s \over n}+{s\over n} + \sqrt{t\over n} + {t\over n}
		        \right)
            \right\}
        \]  
        has probability $1-2\exp(-ct)$ for some positive constants $c=c(\gamma)$ and $C= C(\gamma)$ depending on $\gamma$ only.\\
        Furthermore, if $\log p \le c'n$ for sufficiently small constant $c'>0$, we have
        \[
            \|\wh R_{SS} - R_{SS}\|_{\rm op} \le C \|R_{SS}\|_{\rm op}\left(
                \sqrt{s\over n} + {s\over n} + \sqrt{ t\over n} + {t\over n} + \sqrt{\log (n\vee p) \over n}
            \right)
        \]
        on the event $\E\cap \E_{S,t}$ with $\E$ defined in (\ref{event:E}).\\
    \end{lemma}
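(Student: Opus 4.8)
The plan is to reduce the discrepancy between the empirical and population normalizations to a diagonal rescaling, and to isolate the extra $\sqrt{\log(n\vee p)/n}$ term as the price of replacing $D_{\Sigma}$ by $D_{\wh\Sigma}$. First I would introduce the auxiliary matrix $\wt R := D_{\Sigma}^{-1/2}\wh\Sigma D_{\Sigma}^{-1/2}$, obtained by normalizing $\wh\Sigma$ with the \emph{population} standard deviations. By construction $[D_{\Sigma}^{-1/2}(\wh\Sigma-\Sigma)D_{\Sigma}^{-1/2}]_{SS} = \wt R_{SS}-R_{SS}$, so the event $\E_{S,t}$ established in the first part of the lemma reads directly as
\[
  \bigl\|\wt R_{SS}-R_{SS}\bigr\|_{\op} \le C\,\|R_{SS}\|_{\op}\Bigl(\sqrt{s/n}+s/n+\sqrt{t/n}+t/n\Bigr).
\]

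Next I would record the exact relation between $\wh R$ and $\wt R$. Since the two differ only through their diagonal normalizers, $\wh R = W\wt R W$ with $W:=D_{\Sigma}^{1/2}D_{\wh\Sigma}^{-1/2}$ diagonal and entries $W_{ii}=\sqrt{\Sigma_{ii}/\wh\Sigma_{ii}}$; because $W$ is diagonal this restricts cleanly to $S$, giving $\wh R_{SS}=W_{SS}\wt R_{SS}W_{SS}$. Writing $\Delta:=W_{SS}-\bI$ (still diagonal) and expanding,
\[
  \wh R_{SS}-R_{SS} = (\wt R_{SS}-R_{SS}) + \Delta\wt R_{SS} + \wt R_{SS}\Delta + \Delta\wt R_{SS}\Delta,
\]
so that, by the triangle inequality and submultiplicativity,
\[
  \bigl\|\wh R_{SS}-R_{SS}\bigr\|_{\op} \le \bigl\|\wt R_{SS}-R_{SS}\bigr\|_{\op} + \bigl(2\|\Delta\|_{\op}+\|\Delta\|_{\op}^2\bigr)\,\|\wt R_{SS}\|_{\op}.
\]

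The remaining task is to bound $\|\Delta\|_{\op}$. Since $\Delta$ is diagonal, $\|\Delta\|_{\op}=\max_{i\in S}\bigl|\sqrt{\Sigma_{ii}/\wh\Sigma_{ii}}-1\bigr|$, and on the event $\E$ Lemma \ref{lem_dev_Sigma} yields the uniform bound $\max_i|\wh\Sigma_{ii}/\Sigma_{ii}-1|\lesssim\delta_n$. The hypothesis $\log p\le c'n$ with $c'$ small keeps $\delta_n$ below a fixed constant, so $\wh\Sigma_{ii}/\Sigma_{ii}$ stays bounded away from $0$ and the elementary inequality $|x^{-1/2}-1|\le C|x-1|$ for $x$ near $1$ gives $\|\Delta\|_{\op}\lesssim\delta_n\asymp\sqrt{\log(n\vee p)/n}$. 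Finally I would bound $\|\wt R_{SS}\|_{\op}\le\|R_{SS}\|_{\op}+\|\wt R_{SS}-R_{SS}\|_{\op}$ and use $\delta_n\le1$ and $\|\Delta\|_{\op}^2\le\|\Delta\|_{\op}$, so the product term is at most a constant multiple of $\|R_{SS}\|_{\op}\bigl(\delta_n+\sqrt{s/n}+s/n+\sqrt{t/n}+t/n\bigr)$; combining the three displays gives the stated bound with the last summand $\sqrt{\log(n\vee p)/n}$ coming from $\delta_n$.

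The only delicate point is that the additional $\sqrt{\log(n\vee p)/n}$ contribution must originate from the \emph{diagonal} rescaling rather than from $\E_{S,t}$, which is why I keep the population-normalized deviation (controlled with the sharp $\sqrt{s/n}$ dependence through $\E_{S,t}$) separate from the scalar normalization error (controlled by the coarser, union-bounded $\delta_n$ through $\E$). I expect this bookkeeping, together with verifying that the cross and quadratic terms in $\Delta$ are absorbed into the target rate under $\delta_n\le1$, to be the main obstacle; everything else is routine.
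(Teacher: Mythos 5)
Your proposal proves only the second (``Furthermore'') assertion of the lemma: you take the probability bound $\PP(\E_{S,t})\ge 1-2e^{-ct}$ as given (``the event $\E_{S,t}$ established in the first part''), but that bound is itself part of the statement and needs an argument. The paper proves it by writing, since $D_\Sigma$ is diagonal, $[D_{\Sigma}^{-1/2}(\wh\Sigma-\Sigma)D_{\Sigma}^{-1/2}]_{SS}=A B A^\T$ with $A=[D_{\Sigma}^{-1/2}]_{SS}\Sigma_{SS}^{1/2}$ and $B=\Sigma_{SS}^{-1/2}(\wh\Sigma_{SS}-\Sigma_{SS})\Sigma_{SS}^{-1/2}$, so that $\|ABA^\T\|_{\op}\le\|AA^\T\|_{\op}\|B\|_{\op}=\|R_{SS}\|_{\op}\|B\|_{\op}$, and then applying sub-Gaussian covariance concentration (Vershynin, Remark 5.40) to the vectors $\Sigma_{SS}^{-1/2}\X_{iS}$, which are isotropic with the same sub-Gaussian constant $\gamma$. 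If your write-up is meant to stand as a proof of the whole lemma, this piece is missing and must be added; it is short but it is where the $\sqrt{s/n}+s/n+\sqrt{t/n}+t/n$ rate and the constant $c(\gamma)$ actually come from.

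For the part you do prove, your argument is correct and takes a genuinely different, arguably cleaner, route than the paper. The paper decomposes $\wh R-R$ additively into three terms and, after rewriting the normalization-error terms, ends up with the self-referential bound
\[
\|\wh R_{SS}-R_{SS}\|_{\op}\le \big\|[D_{\Sigma}^{-1/2}(\wh\Sigma-\Sigma)D_{\Sigma}^{-1/2}]_{SS}\big\|_{\op}(1+\wh\Delta_S)+\|R_{SS}\|_{\op}(\wh\Delta_S+\Delta_S)+\Delta_S\|\wh R_{SS}-R_{SS}\|_{\op},
\]
which requires absorbing the last term into the left-hand side using $\Delta_S\le\delta_n/2<1$. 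Your multiplicative identity $\wh R_{SS}=W_{SS}\wt R_{SS}W_{SS}$ with diagonal $W=D_{\Sigma}^{1/2}D_{\wh\Sigma}^{-1/2}$, expanded in $\Delta=W_{SS}-\bI$, avoids any such absorption: every term on the right is controlled directly by $\E_{S,t}$ (via $\wt R_{SS}-R_{SS}$) and by the diagonal deviations on $\E$ (via $\|\Delta\|_{\op}\lesssim\delta_n$), and your bookkeeping with $\delta_n\lesssim 1$ and $\|\Delta\|_{\op}^2\le\|\Delta\|_{\op}$ correctly yields the stated rate. Your use of Lemma \ref{lem_dev_Sigma} on the event $\E$ is consistent with the paper's own convention (the paper remarks that this lemma holds on $\E$), so that step is not a gap. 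In short: complete the first assertion as above, and the rest of your argument can replace the paper's second-half computation essentially verbatim.
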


    An immediate application of Lemma \ref{lem_op_R_SS} with $S = [p]$ in conjunction with $\PP(\E) \ge 1-4(n\vee p)^{-1}$ from Lemma \ref{lem_dev_R} yields the following upper bounds for the operator norm of $\wh R -R$.
    
    \begin{lemma}\label{lem_op_R}
	    Let $\X\in \RR^{n\times d}$ be a random matrix with $\Sigma^{-1}\X_{i\sbt}$ being $n$ independent sub-Gaussian random vectors with finite sub-Gaussian constant $\gamma>0$, zero-mean and second moment $\bI_d$.  
	    Assume $\log d \le c n$ for some sufficiently small constant $c>0$. Then, for all $t\ge 0$, with probability at least $1 - 2 \exp(-c't) - 4(p\vee n)^{-1}$, we have
        \[
            \|\wh R - R\|_{\rm op} \le C \|R\|_{\rm op}\left(
                \sqrt{p\over n} + {p\over n} + \sqrt{t\over n} + {t\over n} + \sqrt{\log n \over n}
            \right)
        \]
        where $C = C(\gamma)$ and $c' = c'(\gamma)$ are positive constants depending only on $\gamma$.\\
    \end{lemma}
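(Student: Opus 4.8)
The plan is to obtain Lemma \ref{lem_op_R} as the special case $S = [p]$ of Lemma \ref{lem_op_R_SS}, so that $s = p$, $R_{SS} = R$ and $\wh R_{SS} = \wh R$. First I would record the two probabilistic ingredients. By Lemma \ref{lem_op_R_SS} the event $\E_{[p],t}$ holds with probability at least $1 - 2\exp(-ct)$ for a constant $c = c(\gamma) > 0$, while Lemma \ref{lem_dev_R} guarantees $\PP(\E) \ge 1 - 4(n\vee p)^{-1}$, where $\E$ is the elementwise deviation event defined in (\ref{event:E}). A single union bound then shows that $\E \cap \E_{[p],t}$ occurs with probability at least $1 - 2\exp(-ct) - 4(n\vee p)^{-1}$, which is exactly the probability claimed in the statement (with $c' = c$).

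On that intersection event, and using the blanket hypothesis $\log p \le cn$ with $c$ small enough for the second part of Lemma \ref{lem_op_R_SS} to apply, that lemma yields directly
\[
\|\wh R - R\|_{\op} \le C\|R\|_{\op}\left(\sqrt{\frac{p}{n}} + \frac{p}{n} + \sqrt{\frac{t}{n}} + \frac{t}{n} + \sqrt{\frac{\log(n\vee p)}{n}}\right).
\]
The only remaining step is cosmetic: replacing the $\sqrt{\log(n\vee p)/n}$ term by the $\sqrt{\log n / n}$ term appearing in the statement. I would handle this through the elementary inequality $\sqrt{\log(n\vee p)/n} \le \sqrt{p/n} + \sqrt{\log n/n}$, obtained by splitting into the cases $p \ge n$ and $p < n$: when $p \ge n$ one has $\log(n\vee p) = \log p \le p$, so the term is at most $\sqrt{p/n}$; when $p < n$ one has $\log(n\vee p) = \log n$. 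Absorbing the extra $\sqrt{p/n}$ contribution into the constant $C$ then produces the stated bound.

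Since every quantitative estimate is inherited from Lemma \ref{lem_op_R_SS}, there is no genuine obstacle at this level; the substance of the argument lives entirely in that lemma, namely the non-asymptotic operator-norm concentration of a sub-Gaussian sample covariance submatrix together with the passage from covariance to correlation via the diagonal normalization $D_{\wh\Sigma}^{-1/2}$. The two points that merit a little care here are precisely the $\log(n\vee p)$ versus $\log n$ reconciliation described above, and verifying that the constants $c, C$ produced by Lemma \ref{lem_op_R_SS} and Lemma \ref{lem_dev_R} can be merged into single constants depending only on the sub-Gaussian parameter $\gamma$.
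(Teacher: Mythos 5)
Your proposal is correct and follows exactly the route the paper takes: the paper derives Lemma \ref{lem_op_R} as an immediate application of Lemma \ref{lem_op_R_SS} with $S=[p]$, intersected (via a union bound) with the event $\E$ whose probability bound $1-4(n\vee p)^{-1}$ comes from Lemma \ref{lem_dev_R}. Your extra step reconciling $\sqrt{\log(n\vee p)/n}$ with $\sqrt{\log n/n}$ by absorbing the $p\ge n$ case into the $\sqrt{p/n}$ term is a detail the paper glosses over, and it is handled correctly.
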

    
        \begin{proof}[Proof of Lemma \ref{lem_op_R_SS}]
    We first prove $\PP(\E_{S,t}) \ge 1-2e^{-ct}$ for all $t\ge 0$. First note that
		\begin{align*}
			&\left\|\left[D_{\Sigma}^{-1/2}(\wh\Sigma - \Sigma)D_{\Sigma}^{-1/2}\right]_{SS}\right\|_{\rm op} \\
			& = \left\|\left[D_{\Sigma}^{-1/2}\right]_{SS}\Sigma_{SS}^{1/2}\left[\Sigma_{SS}^{-1/2}(\wh\Sigma_{SS} - \Sigma_{SS})\Sigma_{SS}^{-1/2}\right]\Sigma_{SS}^{1/2}\left[D_{\Sigma}^{-1/2}\right]_{SS}\right\|_{\rm op}\\ 
			&\le \left\|
	    	[D_{\Sigma}^{-1/2}]_{SS}\Sigma_{SS}	[D_{\Sigma}^{-1/2}]_{SS}
			\right\|_{\rm op} \left\|
			\Sigma_{SS}^{-1/2}(\wh\Sigma_{SS} - \Sigma_{SS})\Sigma_{SS}^{-1/2}
			\right\|_{\rm op}\\
			&= \left\|R_{SS}\right\|_{\rm op} \left\|
			\Sigma_{SS}^{-1/2}(\wh\Sigma_{SS} - \Sigma_{SS})\Sigma_{SS}^{-1/2}
			\right\|_{\rm op}.
		\end{align*}
    	Here we used the inequality $\| ABC\|_{\rm op} \le \| AC\|_{\rm op} \|B\|_{\rm op}.$
		Then an application of
		Remark 5.40 in \cite{vershynin_2012} gives 
		\[
			  \left\|\Sigma_{SS}^{-1/2}(\wh\Sigma_{SS} - \Sigma_{SS})\Sigma_{SS}^{-1/2}\right\|_{\rm op} \le  C \left(
		  		\sqrt{|S| \over n} + {|S|\over n} + \sqrt{t\over n} + {t\over n}
		  \right)
		\]
		with probability $1-2e^{-ct}$ for all $t\ge 0$ and some constants $c = c(\gamma)$ and $C = C(\gamma)$. This finishes the proof of the first result.\\

    To prove the second statement, we use the following identity obtained from adding and subtracting terms
		\begin{align*}
			\wh R - R & = D^{-1/2}_{\wh \Sigma}\wh\Sigma D^{-1/2}_{\wh \Sigma} - D^{-1/2}_{ \Sigma}\Sigma D^{-1/2}_{ \Sigma} \\
			&= D^{-1/2}_{ \Sigma}\left(\wh \Sigma - \Sigma\right) D^{-1/2}_{ \Sigma} + \left(
			D^{-1/2}_{\wh \Sigma} - D^{-1/2}_{\Sigma}
			\right)\wh \Sigma D^{-1/2}_{\wh \Sigma}\\
			& \quad + D^{-1/2}_{\Sigma}\wh \Sigma \left(
			D^{-1/2}_{\wh \Sigma} - D^{-1/2}_{\Sigma}
			\right).
		\end{align*}
The first term has been studied above. For the other two terms,  		note that 
		\begin{align*}
			 \left(
			D^{-1/2}_{\wh \Sigma} - D^{-1/2}_{\Sigma}
			\right)\wh \Sigma D^{-1/2}_{\wh \Sigma} &= D_{\Sigma}^{-1/2}\left(D^{1/2}_{\Sigma}-D^{1/2}_{\wh \Sigma}\right) D^{-1/2}_{\wh \Sigma}\wh \Sigma D^{-1/2}_{\wh \Sigma}\\
			&= D_{\Sigma}^{-1/2}\left(D^{1/2}_{\Sigma}-D^{1/2}_{\wh \Sigma}\right) \wh R\\
			& = D_{\Sigma}^{-1/2}\left(D^{1/2}_{\Sigma}-D^{1/2}_{\wh \Sigma}\right) R + D_{\Sigma}^{-1/2}\left(D^{1/2}_{\Sigma}-D^{1/2}_{\wh \Sigma}\right) (\wh R-R)
		\end{align*}
		and, similarly, 
		\begin{align*}
			D^{-1/2}_{\Sigma}\wh \Sigma \left(
			D^{-1/2}_{\wh \Sigma} - D^{-1/2}_{\Sigma}
			\right) &= D^{-1/2}_{\Sigma}\wh \Sigma D_{\Sigma}^{-1/2} \left(
			D^{1/2}_{ \Sigma} - D^{1/2}_{\wh\Sigma}
			\right)D_{\wh \Sigma}^{-1/2}\\
			&= D^{-1/2}_{\Sigma} \Sigma D_{\Sigma}^{-1/2} \left(
			D^{1/2}_{ \Sigma} - D^{1/2}_{\wh\Sigma}
			\right)D_{\wh \Sigma}^{-1/2}\\
			&\quad + D^{-1/2}_{\Sigma}\left(\wh \Sigma - \Sigma\right) D_{\Sigma}^{-1/2} \left(
			D^{1/2}_{ \Sigma} - D^{1/2}_{\wh\Sigma}
			\right)D_{\wh \Sigma}^{-1/2}\\
			&= R \left(
			D^{1/2}_{ \Sigma} - D^{1/2}_{\wh\Sigma}
			\right)D_{\wh \Sigma}^{-1/2}\\
			&\quad + D^{-1/2}_{\Sigma}\left(\wh \Sigma - \Sigma\right) D_{\Sigma}^{-1/2} \left(
			D^{1/2}_{ \Sigma} - D^{1/2}_{\wh\Sigma}
			\right)D_{\wh \Sigma}^{-1/2}.
		\end{align*}
		Plugging these two identities into $\wh R- R$ and writing 
		\[
		    \Delta_S := \left\|
			\left[\left(
			D^{1/2}_{ \Sigma} - D^{1/2}_{\wh\Sigma}
			\right)D_{\Sigma}^{-1/2}\right]_{SS}
			\right\|_{\rm op},\quad \wh\Delta_S := \left\|
			\left[\left(
			D^{1/2}_{ \Sigma} - D^{1/2}_{\wh\Sigma}
			\right)D_{\wh \Sigma}^{-1/2}\right]_{SS}
			\right\|_{\rm op} 
		\]
		give 
		\begin{align*}
			\left\|
				\wh R_{SS} - R_{SS} 
			\right\|_{\rm op}
			&\le \left\|\left[
			D^{-1/2}_{\Sigma}\left(\wh \Sigma - \Sigma\right) D_{\Sigma}^{-1/2} \right]_{SS}
			\right\|_{\rm op}\left(
			1+ \wh \Delta_S
			\right)\\
			&\quad +  \left\|R_{SS}\right\|_{\rm op}\left(\wh \Delta_S + \Delta_S\right) + \Delta_S\left\|\wh R_{SS} - R_{SS}\right\|_{\op}. 
		\end{align*}
		On the event $\E$, observe that 
		\[
		 		\left|
		 			\sqrt{\wh\Sigma_{jj}} - \sqrt{\Sigma_{jj}}
		 		\right| = 
		 		{|\wh\Sigma_{jj} - \Sigma_{jj}| \over \sqrt{\wh\Sigma_{jj}} + \sqrt{\Sigma_{jj}}} \le 
		 		\sqrt{\Sigma_{jj}}\delta_n/2\qquad \forall 1\le j\le p,
		\]
		by invoking Lemma \ref{lem_dev_Sigma}, which, together with $\log p\le c'n$ for sufficiently small $c'$, implies
		\[
			\sqrt{\wh\Sigma_{jj}} \ge \sqrt{\Sigma_{jj}} - \left|
			\sqrt{\wh\Sigma_{jj}} - \sqrt{\Sigma_{jj}}
			\right| \ge c\sqrt{\Sigma_{jj}} 
		\]
		for some $c \in (0,1)$.  We thus obtain, 	on the event $\E$, 
		\begin{align*}
			\Delta_S &=   \max_{1\le j\le p}{|\wh\Sigma_{jj}^{1/2} - \Sigma_{jj}^{1/2}| \over \Sigma_{jj}^{1/2}} \le {\delta_n\over 2};\\
		\wh \Delta_S  &\le   \max_{1\le j\le p}{|\wh\Sigma_{jj}^{1/2} - \Sigma_{jj}^{1/2}| \over c\Sigma_{jj}^{1/2}} \le {\delta_n \over 2c}.
		\end{align*}
	    These two bounds and $\E_{S,t}$ together with the fact that $\log (p\vee n)/n$ is sufficiently small imply the desired result. 
    \end{proof}
    
    \bigskip
    
    \subsection{Discussion of conditions (\ref{cond_signal_ell_q_relaxed}) and (\ref{cond_signal_ell_q_J})}\label{app_dicuss}

    We 
    provide insight into  (\ref{cond_signal_ell_q_relaxed})	and
     (\ref{cond_signal_ell_q_J}), and  discuss when they are likely to  hold. Since the criterion $S_q(i,j)$ itself is not given in terms of the model parameters, to aid intuition, 
     we first establish a  lower bound for 
     $S_q(i,j)$ in terms of the angle 
     \[ \theta_{ij} ~:=~  \angle (A_{i\sbt}, A_{j\sbt})\]
     between the rows $A_{i\sbt}$ and $A_{j\sbt}$
     of $A$ directly. 
     \begin{lemma}\label{lem_lb_score}
     	Under model (\ref{model}) and Assumption \ref{ass_I},
     	we have
     	\begin{align}\label{lb_score} 
     	[S_{q}(i,j)]^2  &\ge c_b c_z c_r  
     	\sin^2(\theta_{ij})
     	\end{align}
     	for any $i\ne j$ and any $q\ge 2$, with
     	\begin{equation}\label{def_cbzr} 
     	c_b:= \min_{1\le i\le p}\|B_{i\sbt}\|_2^2 ,\quad c_z:= \lambda_K(\C) , \quad c_r:= \min_{i\ne j}{1\over p-2}\lambda_K\left(B_{\nij\sbt}\C B_{\nij\sbt}^\T\right).  \\
     	\end{equation}
     \end{lemma}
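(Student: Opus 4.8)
The plan is to reduce the statement to the case $q=2$ and then bound $S_2(i,j)$ from below in two moves: transfer the problem from the rows of $R$ back to the rows of $B$, and then solve the resulting geometric minimization in closed form. For the first reduction, part $(2)(ii)$ of Proposition \ref{prop_score} gives $S_q(i,j)\ge S_2(i,j)$ for every $q\ge 2$, so it suffices to prove $[S_2(i,j)]^2 \ge c_b c_z c_r \sin^2(\theta_{ij})$. Next I would use the decomposition $R=B\C B^\T+\Gamma$ from (\ref{eq_R}) together with the fact that $\Gamma$ is diagonal: after deleting columns $i$ and $j$ the diagonal part contributes nothing to $R_{i,\nij}$ and $R_{j,\nij}$, so exactly as in the proof of Proposition \ref{prop_I},
\[
R_{i,\nij} = B_{i\sbt}^\T \C B^\T_{\nij\sbt},\qquad R_{j,\nij} = B_{j\sbt}^\T \C B^\T_{\nij\sbt}.
\]
Hence, writing $u := a B_{i\sbt}+b B_{j\sbt}\in\RR^K$, we have $a R_{i,\nij}+b R_{j,\nij} = u^\T \C B^\T_{\nij\sbt}$ for every $(a,b)$.

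The second move is the key inequality $\|u^\T \C B^\T_{\nij\sbt}\|_2^2 \ge (p-2)\,c_r\,c_z\,\|u\|_2^2$. I would prove it by setting $M := B_{\nij\sbt}\C^{1/2}$ (a $(p-2)\times K$ matrix) and $v := \C^{1/2}u$, so that $\|u^\T \C B^\T_{\nij\sbt}\|_2^2 = \|B_{\nij\sbt}\C u\|_2^2 = \|Mv\|_2^2 \ge \lambda_{\min}(M^\T M)\,\|v\|_2^2$. The point is that $M^\T M = \C^{1/2}B^\T_{\nij\sbt}B_{\nij\sbt}\C^{1/2}$ and $M M^\T = B_{\nij\sbt}\C B^\T_{\nij\sbt}$ share the same nonzero eigenvalues, and the proof of Proposition \ref{prop_I} already guarantees $\rank(B_{\nij\sbt}\C)=K$ (removing any two rows preserves rank), so both matrices have exactly $K$ nonzero eigenvalues and $\lambda_{\min}(M^\T M)=\lambda_K(B_{\nij\sbt}\C B^\T_{\nij\sbt})\ge (p-2)c_r$ by the definition of $c_r$ in (\ref{def_cbzr}). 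Combined with $\|v\|_2^2 = u^\T\C u \ge \lambda_K(\C)\|u\|_2^2 = c_z\|u\|_2^2$, this gives $\|a R_{i,\nij}+b R_{j,\nij}\|_2^2 \ge (p-2)c_r c_z\,\|a B_{i\sbt}+b B_{j\sbt}\|_2^2$.

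It then remains to minimize over $\|(a,b)\|_\infty=1$ and divide by $p-2$. The quantity $\min_{\|(a,b)\|_\infty=1}\|a B_{i\sbt}+b B_{j\sbt}\|_2^2$ is precisely the optimization solved, for two arbitrary nonzero vectors, by the purely algebraic argument in the proof of Proposition \ref{prop_score_2}; it evaluates to $(\|B_{i\sbt}\|_2^2\wedge\|B_{j\sbt}\|_2^2)\bigl(1-\cos^2\angle(B_{i\sbt},B_{j\sbt})\bigr)$. Since $B_{i\sbt}$ is a positive scalar multiple of $A_{i\sbt}$, angles are preserved, so $\angle(B_{i\sbt},B_{j\sbt})=\theta_{ij}$ and the bracket equals $\sin^2(\theta_{ij})$. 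Bounding $\|B_{i\sbt}\|_2^2\wedge\|B_{j\sbt}\|_2^2\ge c_b$ yields $[S_2(i,j)]^2 \ge c_b c_z c_r \sin^2(\theta_{ij})$, which transfers to all $q\ge 2$ by the reduction above. The main obstacle is the eigenvalue bookkeeping in the second move: one must be careful to invoke the full-rank conclusion from Proposition \ref{prop_I} so that the smallest eigenvalue of the $K\times K$ Gram matrix $M^\T M$ genuinely equals $\lambda_K$ of the large matrix $B_{\nij\sbt}\C B^\T_{\nij\sbt}$ rather than a spurious zero; the geometric minimization and the remaining norm inequalities are routine.
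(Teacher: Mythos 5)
Your proposal is correct and follows essentially the same route as the paper's own proof: reduce to $q=2$ via Proposition \ref{prop_score}(2)(ii), write $a R_{i,\nij}+bR_{j,\nij}=(aB_{i\sbt}+bB_{j\sbt})^\T\C B_{\nij\sbt}^\T$ using the diagonal structure of $\Gamma$, peel off $\lambda_K\bigl(\C^{1/2}B_{\nij\sbt}^\T B_{\nij\sbt}\C^{1/2}\bigr)=\lambda_K\bigl(B_{\nij\sbt}\C B_{\nij\sbt}^\T\bigr)\ge (p-2)c_r$ and $\lambda_K(\C)\ge c_z$, and finish with the closed-form two-dimensional minimization from Proposition \ref{prop_score_2} together with angle preservation under the positive rescaling $B=D_{\Sigma}^{-1/2}A$. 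The only cosmetic difference is your extra caution about the rank of $B_{\nij\sbt}\C$, which is not actually needed: the nonzero spectra of $M^\T M$ and $MM^\T$ coincide even in the rank-deficient case (both relevant eigenvalues are then zero), and the definition of $c_r$ as a minimum over pairs already absorbs any degeneracy.
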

     The proof of Lemma \ref{lem_lb_score} can be found in Appendix \ref{app_auxiliary_proof}. 
    From Lemma \ref{lem_lb_score}, we see    that
	\begin{equation}\label{cond_angle_rows_A}
	    \sin^2(\theta_{ij})
	    ~ > ~{16 \delta_n^2 \over c_bc_zc_r}, \qquad  \forall{i\not\group j}
	\end{equation}
	implies condition (\ref{cond_signal_ell_q_relaxed}) and (\ref{cond_signal_ell_q_J}).
	Of course, (\ref{cond_angle_rows_A}) is feasible only if 
    \begin{equation}\label{cond_BC}
    c_b c_z c_r >  16  \delta_n^2.
    \end{equation}
    Sufficient signal present in $B$ and $\C$ 
    translate into $c_b$ and $c_z$ being absolute constants bounded away from zero. Precisely, $c_b \ge \epsilon$, for some $\epsilon > 0$, is equivalent to  
    	$$
    	\max_{i\in [p]}{[\Sigma_E]_{ii} \over A_{i\sbt}^\T \C A_{i\sbt}} \le {1-\epsilon \over \epsilon},
    	$$
    a signal ($A_{i\sbt}^\T \C A_{i\sbt}$) to noise ($[\Sigma_E]_{ii}$) ratio condition. 
    The quantity $c_r$ 
    controls the effect that leaving out the $i$-th and $j$-th coordinates of $X$ has on the correlation matrix $R$, via its $K$th largest eigenvalue. Notice that $K$ is the dimension of the low-rank part $B \C B^\T$ of $R$ in its decomposition (\ref{eq_R}).
	 For example,
		suppose that the rows of the submatrix $B_{J\sbt}$ are i.i.d. copies of a sub-Gaussian random vector that has a  covariance matrix with uniformly bounded eigenvalues. 
	In this case, we prove in Appendix \ref{app_proof_lb_cr} that, if $\min\{c_b,c_z\}>c$ for some constant $c>0$ and $K\lesssim |J|$, the inequality 
			$(p-2) c_r  \gtrsim |J| +\min_k |I_k|$ holds with probability at least $1-2\exp(-c'K)$ for some constant $c'>0$. If $|J|\asymp p$, $c_r$ is further bounded from below by some positive constant. 
			
Now, consider that $c_b$, $c_z$ and $c_r$ are lower bounded by some positive constant. We enumerate all three possible cases for $i\not\group j$, and provide lower bounds for $\sin^2(\theta_{ij})$ in light of (\ref{cond_angle_rows_A}).
The first two cases correspond to (\ref{cond_signal_ell_q_relaxed}) while the third one corresponds  to (\ref{cond_signal_ell_q_J}).


	\begin{enumerate}
	    \item [(1)] $i\in I_a$, $j\in I_b$ for any $a,b\in [K]$ with $a\ne b$.\\ In this case,  
        $
        \sin^2(\theta_{ij}) = 1
        $ as the two vectors are orthogonal.
        \item[(2)]   $i\in I_a$, for some $a\in [K]$ and $j\notin I$.\\ This case boils down to be able to separate 
         pure rows of $A$ from non-pure rows of $A$.
         For any $\epsilon>0$,  
        $$A_{ja}^2 \le (1-\epsilon)\|A_{j\sbt}\|_2^2$$ implies that
	    \[
	        \sin^2(\theta_{ij}) = 1 - {A_{ja}^2 \over \|A_{j\sbt}\|_2^2} \ge \epsilon. 
	    \] 
        
        \item[(3)]  $i,j \in J$ and $i\ne j$.\\
        The separation between two non-pure rows of $A$ 
        is hard to verify in general. If the entries of the submatrix $A_{J\sbt}$ are independent $\gamma_A$--sub-Gaussian random variables and $K \ge c''\log |J|  $ for some sufficiently large constant $c''>0$, we prove in Appendix \ref{app_proof_angle_A} that
	    \begin{equation}\label{bd_sintheta_nonpure}
		      \min_{i,j\in J,~ i\ne j}\sin^2(\theta_{ij}) = 1 - \max_{i,j\in J,~ i\ne j}{|A_{i\sbt}^\T A_{j\sbt}|^2 \over A_{i\sbt}^\T A_{i\sbt}A_{j\sbt}^\T A_{j\sbt}} \ge 1- c
		\end{equation}  
		holds
		    with probability at least $1-8|J|^{-c'}$ for some constants $c = c(\gamma_A)$ and $c' = c'(\gamma_A)$.

	\end{enumerate}

	Summarizing, (1) -- (2) suggests the validity of (\ref{cond_signal_ell_q_relaxed}) whereas (3) indicates that (\ref{cond_signal_ell_q_J}) is only likely to hold when $K \gtrsim \log |J|$.\\ 

    \subsection{Auxiliary results and proofs}\label{app_auxiliary_proof}
	\subsubsection{Proof of Lemma \ref{lem_lb_score}}
    By Proposition \ref{prop_score}, we have,  for any $i\ne j$ and $q\ge 2$,
    \begin{align*}
        [S_{q}(i,j)]^2  &\ge [S_{2}(i,j)]^2\\ 
        & = {1\over p-2}\min_{\|v\|_\i = 1}\left\|v^\T B_{\{i,j\}\sbt}\C B_{\nij\sbt}^\T \right\|_2^2\\
        &\ge{1\over p-2}\lambda_K\left(\C^{1/2}B_{\nij\sbt}^\T B_{\nij\sbt}\C^{1/2}\right)\lambda_K(\C)\min_{\|v\|_\i = 1}\left\|v^\T B_{\{i,j\}\sbt} \right\|_2^2
        \\
        &\overset{(i)}{=} c_r c_z \left(\|B_{i\sbt}\|_2^2 \wedge \|B_{j\sbt}\|_2^2\right)\left(
        1 - {|B_{i\sbt}^\T B_{j\sbt}|^2 \over \|B_{i\sbt}\|_2^2 \|B_{j\sbt}\|_2^2}
        \right)
        \\
        &\overset{(ii)}{\ge}   c_r c_z c_b \left|
		    \sin(\angle (A_{i\sbt}, A_{j\sbt})
		    \right|^2.
    \end{align*}
    In $(i)$, we use (\ref{def_cbzr}) and the fact $\lambda_K(LL^\T) = \lambda_K(L^\T L)$ together with 
    \[
        \min_{\|v\|_\i = 1}\left\|v^\T B_{\{i,j\}\sbt} \right\|_2^2 = \left(\|B_{i\sbt}\|_2^2 \wedge \|B_{j\sbt}\|_2^2\right)\left(
        1 - {|B_{i\sbt}^\T B_{j\sbt}|^2 \over \|B_{i\sbt}\|_2^2 \|B_{j\sbt}\|_2^2}
        \right)
    \]
    which can be deduced from Proposition \ref{prop_score_2}. In $(ii)$, we invoke (\ref{def_cbzr}) again and use the fact that $B_{i\sbt} = A_{i\sbt} / \sqrt{\Sigma_{ii}}$. \qed

    \subsubsection{Proof for Remark \ref{rem_one_pure}}\label{app_proof_rem_one_pure}
    Pick any $i\in I_k$ with $k\in [K]$ and $j$ with $A_{j\sbt}$ satisfying (\ref{quasi-scale}). 
    We show $S_2(i,j) \le 2\eps$. Recall $M_{ii} = B_{i\sbt}^\T \C B_{i\sbt}$. Observe from (\ref{score_ell_2}) that 
    \begin{align*}
        [S_{2}(i,j)]^2
        & = {1\over p-2}\min_{\|v\|_\i = 1}\left\|v^\T B_{\{i,j\}\sbt}\C B_{\nij\sbt}^\T \right\|_2^2\\\nonumber
        &\le{1\over p-2}\lambda_1\left(\C^{1/2}B_{\nij\sbt}^\T B_{\nij\sbt}\C^{1/2}\right)\min_{\|v\|_\i = 1}\left\|v^\T B_{\{i,j\}\sbt} \C^{1/2} \right\|_2^2
        \\\nonumber
        &\overset{(i)}{\le }  \left(M_{ii} \wedge  M_{jj}\right)\left(
        1 - {|B_{i\sbt}^\T \C B_{j\sbt}|^2 \over M_{ii}M_{jj}} 
        \right)
        \\
        &\overset{(ii)}{\le}   1 - {(\e_k^\T \C A_{j\sbt})^2 \over A_{j\sbt}^\T \C A_{j\sbt}}
    \end{align*}
    where $(i)$ can be deduced from Proposition \ref{prop_score_2} in conjunction with 
    \[
        \lambda_1\left(\C^{1/2}B_{\nij\sbt}^\T B_{\nij\sbt}\C^{1/2}\right) \le \sum_{\ell \ne i,j}B_{\ell\sbt}^\T \C B_{\ell \sbt} \le p-2.
    \]
    Step $(ii)$ uses $M_{ii}\le 1$ and $B_{i\sbt} = D_{\Sigma}^{-1/2}A_{i\sbt}$.  Since
    \begin{align*}
        {(\e_k^\T \C A_{j\sbt})^2 \over A_{j\sbt}^\T \C A_{j\sbt}} \ge  {(|A_{jk}| - \sum_{a\ne k}|A_{ja}||[\C]_{ak}|)^2 \over \|A_{j\sbt}\|_1^2 \|\C\|_\i} \ge 
        {(|A_{jk}| - \sum_{a\ne k}|A_{ja}|)^2 \over \|A_{j\sbt}\|_1^2} 
    \end{align*}
    by using $\|\C\|_\i = 1$, we conclude 
    \begin{align*}
    1 - {(\e_k^\T \C A_{j\sbt})^2 \over A_{j\sbt}^\T \C A_{j\sbt}} &\le \left(1 + {|A_{jk}| - \sum_{a\ne k}|A_{ja}| \over \|A_{j\sbt}\|_1} \right) \left(1 -{|A_{jk}| - \sum_{a\ne k}|A_{ja}| \over \|A_{j\sbt}\|_1} \right)\\
    & = {4|A_{jk}|\sum_{a\ne k}|A_{ja}| \over \|A_{j\sbt}\|_1^2} \le {4\sum_{a\ne k}|A_{ja}| \over \|A_{j\sbt}\|_1}.
    \end{align*}
    The result follows by invoking (\ref{quasi-scale}).

	\subsubsection{Lower bounds of $c_r$ defined in (\ref{def_cbzr}) when the rows of $B_{J\cdot}$ are i.i.d. sub-Gaussian random vectors. }\label{app_proof_lb_cr}
	
	The following lemma provides lower bounds for $c_r$ when the rows of $B_{J\cdot}$ are i.i.d. sub-Gaussian random vectors. It follows from an application of  \cite[Remark 5.40]{vershynin_2012}. 
	
	\begin{lemma}\label{lem_lb_cr}
	    Suppose the rows of $B_{J\sbt}$ are i.i.d. sub-Gaussian random vectors with sub-Gaussian constant $\gamma_B$ and second moment $\Sigma_B$. Assume $c \le \lambda_K(\Sigma_B) \le \lambda_1(\Sigma_B) \le c'$ and $\min\{c_b, c_z\} \ge c''$ for some constants $c,c',c''>0$. 
	    \item[(1)] If $|J| \ge CK$ for some 
	    sufficiently large $C>0$,  then 
	    \[
	        \PP\left\{(p-2)c_r \gtrsim \min_k|I_k| + |J|
	        \right\} \ge 1-2e^{-cK}
	    \]
	    for some positive constants $c=c(\gamma_B)$;
	        \item[(2)] If $|J| < CK$    and $\min_k|I_k| \asymp |I|/K$, then $(p-2)c_r \gtrsim p/K$ with probability equal to one.
	\end{lemma}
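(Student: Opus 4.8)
The plan is to reduce the spectral quantity defining $c_r$ to a uniform lower bound on the smallest eigenvalue of the $K\times K$ Gram matrix $B_{\nij\sbt}^\T B_{\nij\sbt}$, and then to split this Gram matrix into a deterministic ``pure'' block and a random ``non-pure'' block. For the reduction, I would first apply the fact that the nonzero eigenvalues of $MM^\T$ and $M^\T M$ coincide, taken with $M = B_{\nij\sbt}\C^{1/2}\in\RR^{(p-2)\times K}$, to get $\lambda_K(B_{\nij\sbt}\C B_{\nij\sbt}^\T) = \lambda_{\min}(\C^{1/2}B_{\nij\sbt}^\T B_{\nij\sbt}\C^{1/2}) \ge \lambda_K(\C)\,\lambda_{\min}(B_{\nij\sbt}^\T B_{\nij\sbt}) = c_z\,\lambda_{\min}(B_{\nij\sbt}^\T B_{\nij\sbt})$, where the middle inequality uses $u^\T\C^{1/2}Gu^\T\C^{1/2}\ge \lambda_{\min}(G)\,u^\T\C u$. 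Since $c_z\ge c''$, it suffices to bound $\lambda_{\min}(B_{\nij\sbt}^\T B_{\nij\sbt})$ from below, uniformly over $i\ne j$.

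Next I would write $B_{\nij\sbt}^\T B_{\nij\sbt} = G_I + G_J$ with $G_I := \sum_{\ell\in I\setminus\{i,j\}}B_{\ell\sbt}B_{\ell\sbt}^\T$ and $G_J := \sum_{\ell\in J\setminus\{i,j\}}B_{\ell\sbt}B_{\ell\sbt}^\T$, and use Weyl's inequality $\lambda_{\min}(G_I+G_J)\ge\lambda_{\min}(G_I)+\lambda_{\min}(G_J)$. Because each pure row $B_{\ell\sbt}$, $\ell\in I$, is supported on a single coordinate, $G_I$ is diagonal with $k$-th entry $\sum_{\ell\in I_k\setminus\{i,j\}}B_{\ell k}^2 \ge (|I_k|-2)^+ c_b$, giving $\lambda_{\min}(G_I)\ge (\min_k|I_k|-2)^+ c_b$ deterministically, and this is $\gtrsim \min_k|I_k|$ whenever $\min_k|I_k|$ is bounded away from $2$.

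For part (1), where the rows of $B_{J\sbt}$ are i.i.d.\ sub-Gaussian with $\lambda_{\min}(\Sigma_B)\ge c$, I would invoke \cite[Remark 5.40]{vershynin_2012} with deviation parameter $t\asymp K$: when $|J|\ge CK$ for $C$ large, this yields $\|\,|J|^{-1}\sum_{\ell\in J}B_{\ell\sbt}B_{\ell\sbt}^\T - \Sigma_B\|_{\op}\le \tfrac12\lambda_{\min}(\Sigma_B)$ on an event of probability $\ge 1-2e^{-c_1K}$, so that $\sum_{\ell\in J}B_{\ell\sbt}B_{\ell\sbt}^\T\succeq \tfrac{c}{2}|J|\,\bI$. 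Passing from this full sum to $G_J$ removes at most two rows, so by Weyl $\lambda_{\min}(G_J)\ge \tfrac{c}{2}|J| - 2\max_{\ell\in J}\|B_{\ell\sbt}\|_2^2$. Combining the three bounds and multiplying by $c_z$ then gives $(p-2)c_r\gtrsim \min_k|I_k|+|J|$, uniformly over $(i,j)$ since the random events above do not depend on the deleted pair. For part (2), where $|J|<CK$, I would simply discard $G_J\succeq 0$ and use $\lambda_{\min}(B_{\nij\sbt}^\T B_{\nij\sbt})\ge\lambda_{\min}(G_I)\ge(\min_k|I_k|-2)^+c_b$; since $|I|=p-|J|>p-CK\asymp p$ and $\min_k|I_k|\asymp|I|/K\asymp p/K$, this is $\gtrsim p/K$, and it holds with probability one because the pure loadings are non-random.

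The main obstacle I anticipate is the \emph{uniform} control, over all $O(p^2)$ pairs, of the random block $G_J$ after deleting up to two rows, while keeping the failure probability at order $e^{-cK}$ rather than letting it degrade to $e^{-c\log p}$. A naive union bound over pairs, or over per-row norm bounds of size $O(K)$, would force an extra assumption $K\gtrsim\log|J|$. The resolution I would use is that the deletion step only requires the \emph{crude} per-row bound $\max_{\ell\in J}\|B_{\ell\sbt}\|_2^2\le \tfrac{c}{8}|J|$: since $\|B_{\ell\sbt}\|_2^2$ is sub-exponential with mean $O(K)$ and $|J|\ge CK$, each exceedance has probability at most $e^{-c_2|J|}$, so a union bound over $J$ costs only $|J|e^{-c_2|J|}\le e^{-c_3K}$. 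Paired with the single full-$J$ concentration event, this yields the claimed $1-2e^{-cK}$ bound with no dependence on $\log p$.
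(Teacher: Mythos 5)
Your proposal is correct, and it shares the paper's skeleton (reduce to the $K\times K$ Gram matrix, split into a deterministic pure block and a random non-pure block via Weyl, invoke \cite[Remark 5.40]{vershynin_2012} for the latter), but it resolves the crucial uniformity-over-pairs issue by a genuinely different device. The paper never controls $B_{\nij\sbt}$ probabilistically at all: it lower bounds $\lambda_K(B\C B^\T)$ \emph{once}, for the full matrix, and then uses the deterministic inequality
\[
\lambda_K(\C^{1/2}B^\T B\,\C^{1/2}) \le \lambda_K(\C^{1/2}B_{\nij\sbt}^\T B_{\nij\sbt}\C^{1/2}) + B_{i\sbt}^\T\C B_{i\sbt} + B_{j\sbt}^\T\C B_{j\sbt} \le \lambda_K(B_{\nij\sbt}\C B_{\nij\sbt}^\T) + 2,
\]
where the bound $B_{i\sbt}^\T\C B_{i\sbt}\le 1$ is the model identity that the diagonal entries of $M=B\C B^\T=R-\Gamma$ are at most one on the correlation scale. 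This yields $(p-2)c_r \ge \lambda_K(B\C B^\T)-2$ uniformly over all pairs with zero additional probabilistic cost, and only one concentration event. You instead perform the deletion inside each block: deterministically for the diagonal pure block (your $(\min_k|I_k|-2)^+c_b$ bound), and for the random block via Weyl plus the per-row control $\max_{\ell\in J}\|B_{\ell\sbt}\|_2^2\le c|J|/8$, union bounded over $J$ with per-row tail $e^{-c|J|}$ so that the total cost stays at $e^{-cK}$ thanks to $|J|\ge CK$. Both routes deliver the stated $1-2e^{-cK}$ guarantee; the paper's trick is cleaner and extends transparently to deleting any bounded number of rows, while yours avoids relying on the normalization $M_{ii}\le 1$ and makes explicit exactly where randomness enters, at the price of one extra tail bound (which, strictly, requires a net or Bernstein argument to see that $\|B_{\ell\sbt}\|_2^2$ of a general sub-Gaussian vector has the claimed exponential tail). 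Three small points of parity or repair: (i) the degenerate case $\min_k|I_k|=2$, which you flag, is equally present in the paper's bound through its ``$-2$'' slack (in part (1) it is harmless for both arguments since the $|J|\ge CK$ term dominates the additive constant); (ii) your step ``$|I|=p-|J|>p-CK\asymp p$'' in part (2) needs the same one-line argument the paper uses, namely $|J|<CK\le C|I|/2$ hence $|I|>2p/(2+C)$, rather than a direct comparison of $p-CK$ with $p$; (iii) your reduction inequality $\lambda_{\min}(\C^{1/2}G\C^{1/2})\ge c_z\lambda_{\min}(G)$ is fine, and is the same spectral step the paper performs implicitly.
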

	\begin{proof}
	    We first bound from below $\lambda_K(B\C B^\T)$
	    in two cases: $|J| \ge CK$ and $|J| < CK$ for some sufficiently large constant $C>0$.  
	    When $|J| < CK$, we have
	    \begin{align*}
	       \lambda_K(B\C B^\T) &= \lambda_K(\C^{1/2}B^\T  B \C^{1/2}) = \lambda_K(\C^{1/2}B_{I\sbt}^\T  B_{I\sbt} \C^{1/2})\ge c_zc_b \min_{k}|I_k|.
	    \end{align*}
	    Since $|J| < CK$ and $|I| \ge 2K$, we have $|I|\asymp p$. Thus $\min_k|I_k| \asymp |I|/K$ implies $\min_k|I_k| \asymp p/K$. 
	    
	    When $C K \le |J|$, an application of \cite[Remark 5.40]{vershynin_2012} yields 
	    \[
	        \PP\left\{
	        \left\|
	        {1\over |J|} B_{J\sbt}^\T B_{J\sbt} - \Sigma_B
	        \right\|_{\rm op}^2 \le 
	        c(\gamma_B) \|\Sigma_B\|_{\rm op} \left(\sqrt{K\over |J|} + {K\over |J|}\right)
	        \right\}\ge 1-2e^{-c'(\gamma_B)K}.
	    \]
	    This together with Weyl's inequality implies, with the same probability, 
	    \[
	        {1\over |J|}\lambda_K(B_{J\sbt}^\T B_{J\sbt}) \ge \lambda_K(\Sigma_B) -\left[ c(\gamma_B) \|\Sigma_B\|_{\rm op} \left(\sqrt{K\over |J|} + {K\over |J|}\right)\right]^2 \gtrsim \lambda_K(\Sigma_B) \ge c
	    \]
	    by also using $CK \le |J|$ for a sufficiently large $C$ and the fact that $\Sigma_B$ has bounded eigenvalues. 
	    We then have 
	    \begin{align*}
	        \lambda_K(B\C B^\T) &\ge \lambda_K(\C^{1/2}B_{I\sbt}^\T  B_{I\sbt}\C^{1/2})+ \lambda_K(\C^{1/2}B_{J\sbt}^\T  B_{J\sbt}\C^{1/2})\gtrsim \min_k |I_k| + |J|.
	    \end{align*}
	    
	    Finally, observe that, for any $i\ne j$,
	    \begin{align*}
	        \lambda_K(B\C B^\T) &= \lambda_K(\C^{1/2}B^\T  B \C^{1/2})\\
	        &\le \lambda_K(\C^{1/2}B_{\nij\sbt}^\T  B_{\nij\sbt}\C^{1/2}) + B_{i\sbt}^\T \C B_{i\sbt} + B_{j\sbt}^\T \C B_{j\sbt}\\
	        &\le \lambda_K(B_{\nij\sbt} \C B_{\nij\sbt}^\T) + 2
	    \end{align*}
	    by using $B_{i\sbt}^\T \C B_{i\sbt} \le 1$. We thus conclude 
	    \[
	        c_r \ge {\lambda_K(B \C B^T) - 2\over p-2}.
	    \]
	    This together with the above two lower bounds of $\lambda_K(B \C B^T)$ finishes the proof. 
	\end{proof}

	\subsubsection{Proof of (\ref{bd_sintheta_nonpure})}\label{app_proof_angle_A}
	We give the proof of (\ref{bd_sintheta_nonpure}) when the entries of $A_{J\sbt}$ are i.i.d. sub-Gaussian with sub-Gaussian constant $\gamma_A$. 
	
	\begin{lemma}\label{lem_angle_A}
	    Suppose the entries of $A_{J\sbt}$ are i.i.d. sub-Gaussian random variables with sub-Gaussian constant $\gamma_A$ and second moment $\sigma_A^2$. Assume $\log |J| \le c_0K$ for sufficiently small constant $c_0>0$. Then
	    \[
	        \PP\left\{
	            \max_{i,j\in J,~ i\ne j}{|A_{i\sbt}^\T A_{j\sbt}|^2 \over A_{i\sbt}^\T A_{i\sbt}A_{j\sbt}^\T A_{j\sbt}} \le {c\log |J| \over K}
	        \right\} \ge 1- 8|J|^{-c'}
	    \]
	    for some constants $c,c'>0$.
	\end{lemma}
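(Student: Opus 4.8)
The plan is to prove the stated concentration bound by separately controlling the numerator $|A_{i\sbt}^\T A_{j\sbt}|$ and the denominator $\|A_{i\sbt}\|_2^2\|A_{j\sbt}\|_2^2$, uniformly over the $\binom{|J|}{2}$ pairs $i\ne j\in J$, and then combining via a union bound. The key observation is that each row $A_{i\sbt}$ has i.i.d.\ $\gamma_A$--sub-Gaussian entries with second moment $\sigma_A^2$, so $\|A_{i\sbt}\|_2^2 = \sum_{k=1}^K A_{ik}^2$ is a sum of $K$ independent sub-exponential random variables concentrating around $K\sigma_A^2$, while $A_{i\sbt}^\T A_{j\sbt} = \sum_{k=1}^K A_{ik}A_{jk}$ is a mean-zero sum of products of independent sub-Gaussians (hence sub-exponential terms) concentrating around $0$ at scale $\sqrt{K}$.

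First I would handle the denominator. By the Bernstein-type inequality for sums of independent sub-exponential variables (e.g.\ Proposition 5.16 in \cite{vershynin_2012}), for each fixed $i$,
\[
\PP\left\{\left|\ \|A_{i\sbt}\|_2^2 - K\sigma_A^2\ \right| \ge K\sigma_A^2/2\right\} \le 2\exp\left(-c_1 K\right)
\]
for some $c_1 = c_1(\gamma_A)>0$. A union bound over the $|J|$ rows, using $\log|J| \le c_0 K$ with $c_0$ small, shows that with probability at least $1-4|J|^{-c'}$ we have $\|A_{i\sbt}\|_2^2 \ge K\sigma_A^2/2$ simultaneously for all $i\in J$. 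Next I would handle the numerator: for each fixed pair $i\ne j$, conditionally the products $A_{ik}A_{jk}$ are i.i.d.\ mean-zero sub-exponential with parameter $\lesssim \gamma_A^2$, so another Bernstein bound gives
\[
\PP\left\{|A_{i\sbt}^\T A_{j\sbt}| \ge t\right\} \le 2\exp\left(-c_2\min\left\{\frac{t^2}{K\gamma_A^4},\ \frac{t}{\gamma_A^2}\right\}\right).
\]
Choosing $t = C\sqrt{K\log|J|}$ makes the quadratic branch active (since $\log|J|\lesssim K$ forces $t \lesssim K$), yielding a per-pair failure probability of order $|J|^{-c''C^2}$.

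Then I would union-bound the numerator estimate over all $\binom{|J|}{2} \le |J|^2$ pairs; taking $C$ large enough relative to $c''$ absorbs the $|J|^2$ factor and leaves a residual probability $4|J|^{-c'}$. On the intersection of the denominator event and the numerator events, for every pair $i\ne j$ we have
\[
\frac{|A_{i\sbt}^\T A_{j\sbt}|^2}{\|A_{i\sbt}\|_2^2\,\|A_{j\sbt}\|_2^2}
\ \le\ \frac{C^2 K\log|J|}{(K\sigma_A^2/2)^2}
\ =\ \frac{4C^2}{\sigma_A^4}\cdot\frac{\log|J|}{K},
\]
which is exactly the claimed bound with $c = 4C^2/\sigma_A^4$ and the total failure probability at most $8|J|^{-c'}$.

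The main obstacle, though a mild one, is bookkeeping the constants so that the condition $\log|J|\le c_0 K$ genuinely places the numerator Bernstein bound in its sub-Gaussian (quadratic) regime rather than its sub-exponential (linear) regime; if $\log|J|$ were comparable to or larger than $K$, the optimal tail choice $t$ would cross into the linear branch and the resulting ratio bound would degrade from $\log|J|/K$ to something of order $(\log|J|/K)^2$ or worse, which is why the hypothesis $K\gtrsim \log|J|$ is essential. A secondary technical point is justifying that products of independent sub-Gaussians are sub-exponential with the stated parameter, which follows from the standard fact $\|XY\|_{\psi_1}\le \|X\|_{\psi_2}\|Y\|_{\psi_2}$; I would invoke this rather than reprove it.
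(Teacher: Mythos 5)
Your proposal is correct and follows essentially the same route as the paper's proof: bound $|A_{i\sbt}^\T A_{j\sbt}|$ above by $O(\sigma_A^2\sqrt{K\log|J|})$ and $\|A_{i\sbt}\|_2^2$ below by $\Omega(K\sigma_A^2)$, union-bound over pairs using $\log|J|\le c_0K$, and take the ratio. The only difference is cosmetic: the paper invokes a packaged bilinear-form concentration result (Corollary F.2 of \cite{bunea2020}, applied with deviation parameter $t\asymp\log|J|$), whereas you rederive the same Bernstein-type bounds directly from the sub-exponentiality of products of sub-Gaussians.
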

	\begin{proof}
	   Pick any $i,j \in J$ with $i\ne j$. An application of Corollary F.2 of \cite{bunea2020} with $\X = A_{J\sbt}$, $u=\e_i$, $v = \e_j$ and $\Sigma = \sigma_A^2\bI_K$ yields 
	   \[
	       \left|
	       {1\over K}A_{i\sbt}^\T A_{j\sbt} \right| \le c(\gamma_A)\sigma_A^2\left(\sqrt{t\over K} + {t\over K}\right)
	   \]
	   with probability $1-4e^{-t}$ for any $t\ge 0$. Applying Corollary F.2 of \cite{bunea2020} again yields 
	   \[
	       \left|
	       {1\over K}A_{i\sbt}^\T A_{i\sbt}  - \sigma_A^2\right| \le c(\gamma_A)\sigma_A^2\left(\sqrt{t\over K} + {t\over K}\right)
	   \]
	   with probability $1-4e^{-t}$. Choose $t = c\log|J|$ for some constant $c>0$, taking the union bounds over $i,j\in J$ and using $\log|J| \le c_0 K$, concludes that, with probability $1-8|J|^{-c}$, one has 
	   \[
	       \min_{i\in J} A_{i\sbt}^\T A_{i\sbt} \ge K\sigma_A^2 - c(\gamma_A)\sigma_A^2\left(\sqrt{cK\log|J|} + {c\log|J|}\right) \ge c'K\sigma_A^2,
	   \]
	   \[   
	       \max_{i,j\in J, i\ne j}\left|
	       A_{i\sbt}^\T A_{j\sbt} \right| \le c(\gamma_A)\sigma_A^2\left(\sqrt{c K\log|J|} + c\log|J|\right) \le c''\sigma_A^2\sqrt{K\log|J|}.
	   \]
	   These two displays yield the desired results.
	\end{proof}

	\subsection{Auxiliary lemmas}\label{app_aux_lem}
	The following lemma provides deviation inequalities for the entry-wise difference of the sample covariance matrix and its population level counterpart. 
	
	\begin{lemma}\label{lem_dev_Sigma}
		Under model (\ref{model}) and Assumption \ref{ass_distr}, with probability $1- 4p^{-1}$, one has
		\[
		|\wh \Sigma_{ij} - \Sigma_{ij}| \le  (\delta_n / 2) \sqrt{\Sigma_{ii}\Sigma_{jj}},\qquad \textrm{for all $1\le i, j \le p$.}
		\]
	\end{lemma}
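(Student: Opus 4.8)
The plan is to reduce the entrywise deviation to a standard Bernstein bound for sums of independent sub-exponential random variables, followed by a union bound over the $O(p^2)$ index pairs. First I would pass to standardized coordinates. Writing $\wt X_i := X_i/\sqrt{\Sigma_{ii}}$ and $\wt X_{ki}$ for the $i$th coordinate of the $k$th standardized observation, the claim is equivalent to
\[
\left| \frac1n \sum_{k=1}^n \wt X_{ki}\wt X_{kj} - R_{ij}\right| \le \frac{\delta_n}{2}, \qquad \text{for all } 1\le i,j\le p,
\]
since $\wh\Sigma_{ij}/\sqrt{\Sigma_{ii}\Sigma_{jj}} = n^{-1}\sum_k \wt X_{ki}\wt X_{kj}$ and $R_{ij} = \Sigma_{ij}/\sqrt{\Sigma_{ii}\Sigma_{jj}}$.

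Second, I would verify that each $\wt X_i$ is $\gamma$-sub-Gaussian with unit variance. Under Assumption \ref{ass_distr}, $Y := \Sigma^{-1/2}X$ is $\gamma$-sub-Gaussian, so from $X_i = (\Sigma^{1/2}\e_i)^\T Y$ and $\|\Sigma^{1/2}\e_i\|_2^2 = \Sigma_{ii}$ we get $\EE[\exp(s X_i)] \le \exp(s^2 \Sigma_{ii}\gamma^2/2)$; dividing by $\sqrt{\Sigma_{ii}}$ shows $\wt X_i$ is $\gamma$-sub-Gaussian with $\EE[\wt X_i^2]=1$. Consequently the products satisfy $\|\wt X_i\wt X_j\|_{\psi_1} \le \|\wt X_i\|_{\psi_2}\|\wt X_j\|_{\psi_2} \lesssim \gamma^2$, so the centered summands $W_k := \wt X_{ki}\wt X_{kj} - R_{ij}$ are i.i.d., mean zero, and sub-exponential with norm $\nu \lesssim \gamma^2$ (the diagonal case $i=j$ being identical, as $\wt X_i^2 - 1$ is centered sub-exponential with the same norm bound).

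Third, I would apply Bernstein's inequality: for $t>0$ and an absolute constant $c>0$,
\[
\PP\left( \left| \frac1n\sum_{k=1}^n W_k \right| \ge t\right) \le 2\exp\left(-c\, n \min\left\{ \frac{t^2}{\nu^2}, \frac{t}{\nu}\right\}\right).
\]
Taking $t = \delta_n/2 = \tfrac12 c(\gamma)\sqrt{\log(p\vee n)/n}$ and invoking the blanket assumption $\log p \le n$, the quantity $\sqrt{\log(p\vee n)/n}$ is bounded, which places the deviation in the Gaussian regime where the minimum is attained by $t^2/\nu^2$; the exponent is then of order $c(\gamma)^2\log(p\vee n)/\gamma^4$. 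Choosing the leading constant $c(\gamma)$ in the definition (\ref{delta_n}) large enough makes each per-pair failure probability at most $2(p\vee n)^{-3}\le 2p^{-3}$, and a union bound over the at most $p^2$ pairs yields total failure probability at most $2p^{-1}\le 4p^{-1}$.

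There is no genuine obstacle here; the single point requiring care is tracking the dependence on $\gamma$ so that the constant $c(\gamma)$ in $\delta_n$ can be fixed once to deliver simultaneously the $\delta_n/2$ threshold and the probability $1-4p^{-1}$. This is precisely where the regime condition $\log p\le n$ enters, guaranteeing that Bernstein's inequality operates in its sub-Gaussian rather than its heavy-tailed branch.
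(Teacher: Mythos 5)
Your proof is correct and is essentially the paper's own argument in expanded form: the paper's one-line proof simply invokes Corollary F.2 of \cite{bunea2020} (a Bernstein-type concentration bound for bilinear forms $u^\T(\wh\Sigma-\Sigma)v$ of sub-Gaussian samples) with $u=\e_i$, $v=\e_j$, $t=3\log p$, followed by a union bound over all pairs, which is exactly the standardization / sub-exponential-product / Bernstein route you spell out. One small imprecision: boundedness of $\sqrt{\log(p\vee n)/n}$ does not by itself force the Gaussian branch of Bernstein's inequality (since $c(\gamma)$ must be taken large, $t=\delta_n/2$ can exceed the sub-exponential norm $\nu$ when $\log(p\vee n)\asymp n$), but this is harmless because in the other branch the exponent is $c\,n\,t/\nu \asymp \sqrt{n\log(p\vee n)} \ge \log(p\vee n)$, so with $c(\gamma)$ chosen large enough the same per-pair failure probability, and hence the stated conclusion, still follows.
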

	
	\begin{proof}
		The result follows by choosing $u = e_i$, $v = e_j$ and $t = 3\log p$ in Corollary F.2 of \cite{bunea2020} and taking the union bounds over $1\le i,j\le p$.
	\end{proof}
	
	We have similar deviation bounds for entries of $\wh R - R$. 
	
	\begin{lemma}\label{lem_dev_R}
		Under model (\ref{model}) and Assumption \ref{ass_distr}, with probability $1- 4p^{-1}$, one has
		\begin{equation}\label{bd_R_hat_R}
		\max_{1\le i,j\le p}|\wh R_{ij} - R_{ij}| \le \delta_n.
		\end{equation}
		Furthermore, with the same probability, the  following holds, for any $1\le q\le \i$,
		\begin{align*}
		\max_{1\le i,j\le p, i\ne j} \left\|
		\wh R_{i, \nij} - R_{i, \nij} 
		\right\|_q \le \delta_n(p-2)^{1/q}.
		\end{align*}
	\end{lemma}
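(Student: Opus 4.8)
The plan is to deduce both displays from Lemma~\ref{lem_dev_Sigma}, working throughout on its event $\{|\wh\Sigma_{ij}-\Sigma_{ij}|\le (\delta_n/2)\sqrt{\Sigma_{ii}\Sigma_{jj}}\ \text{for all}\ i,j\}$, which has probability $1-4p^{-1}$; everything below is deterministic on this event, so the ``same probability'' claim is automatic. First I would record the consequence for the diagonal normalizations. Taking $i=j$ gives $|\wh\Sigma_{ii}-\Sigma_{ii}|\le (\delta_n/2)\Sigma_{ii}$, and hence, exactly as in the proof of Lemma~\ref{lem_op_R_SS},
\[
\bigl|\sqrt{\wh\Sigma_{ii}}-\sqrt{\Sigma_{ii}}\bigr| = \frac{|\wh\Sigma_{ii}-\Sigma_{ii}|}{\sqrt{\wh\Sigma_{ii}}+\sqrt{\Sigma_{ii}}} \le \frac{\delta_n}{2}\sqrt{\Sigma_{ii}}.
\]
Writing $u_i:=\sqrt{\wh\Sigma_{ii}/\Sigma_{ii}}$, this says $|u_i-1|\le \delta_n/2$. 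Under the blanket assumption $\log p\le n$ the quantity $\delta_n$ is bounded, and in the regime of interest we may assume $\delta_n/2<1$ (the complementary case $\delta_n\ge 2$ is trivial, since both $\wh R_{ij}$ and $R_{ij}$ lie in $[-1,1]$); thus $u_i\in[1-\delta_n/2,\,1+\delta_n/2]$ is bounded away from $0$.

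Next I would propagate this to the off-diagonal correlations through the elementary decomposition
\[
\wh R_{ij}-R_{ij} = \frac{\wh\Sigma_{ij}-\Sigma_{ij}}{\sqrt{\Sigma_{ii}\Sigma_{jj}}\,u_iu_j} + R_{ij}\Bigl(\frac{1}{u_iu_j}-1\Bigr),
\]
which follows by substituting $\wh\Sigma_{ii}=\Sigma_{ii}u_i^2$, $\wh\Sigma_{jj}=\Sigma_{jj}u_j^2$ into $\wh R_{ij}=\wh\Sigma_{ij}/\sqrt{\wh\Sigma_{ii}\wh\Sigma_{jj}}$ and using $\Sigma_{ij}/\sqrt{\Sigma_{ii}\Sigma_{jj}}=R_{ij}$. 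The off-diagonal bound of Lemma~\ref{lem_dev_Sigma} controls the first summand by $(\delta_n/2)/(u_iu_j)$, while $|R_{ij}|\le 1$ together with $u_iu_j\in[(1-\delta_n/2)^2,(1+\delta_n/2)^2]$ controls the second by $|1-u_iu_j|/(u_iu_j)\le (\delta_n+\delta_n^2/4)/(1-\delta_n/2)^2$. Summing, $\max_{i,j}|\wh R_{ij}-R_{ij}|$ is at most an absolute constant multiple of $\delta_n$; absorbing this constant into $c(\gamma)$ in the definition (\ref{delta_n}) of $\delta_n$ (equivalently, invoking Lemma~\ref{lem_dev_Sigma} with that enlarged constant) yields (\ref{bd_R_hat_R}).

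Finally, the $\ell_q$ statement is an immediate consequence of the entrywise bound: on the same event, for each fixed $i\ne j$ every one of the $p-2$ coordinates of $\wh R_{i,\nij}-R_{i,\nij}$ is at most $\delta_n$ in absolute value, so $\|\wh R_{i,\nij}-R_{i,\nij}\|_q\le \delta_n(p-2)^{1/q}$ for all $1\le q\le\i$ (the case $q=\i$ reading as $\max_\ell|\wh R_{i\ell}-R_{i\ell}|\le\delta_n$). There is no genuine analytic obstacle here, since all the needed concentration is already supplied by Lemma~\ref{lem_dev_Sigma}; the only bookkeeping is the constant tracking in the middle step, where the passage from covariance to correlation inflates the relative error by a bounded factor coming from three contributions (one from the off-diagonal entry and one from each diagonal normalization), so the clean constant in (\ref{bd_R_hat_R}) should be understood up to this harmless rescaling of $c(\gamma)$.
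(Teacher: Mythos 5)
Your proof is correct and takes essentially the same route as the paper: the paper obtains the entrywise bound from Lemma \ref{lem_dev_Sigma} by citing Lemma D.3 of \cite{bunea2020}, which encapsulates exactly the covariance-to-correlation algebra you carry out explicitly via the decomposition in $u_i,u_j$, and the $\ell_q$ statement is handled identically, coordinate by coordinate. One cosmetic fix: split cases at $\delta_n\le 1$ rather than $\delta_n<2$ (using the trivial bound $|\wh R_{ij}-R_{ij}|\le 2\le 2\delta_n$ in the complementary case), since your factor $(1-\delta_n/2)^{-2}$ is not uniformly bounded as $\delta_n$ approaches $2$.
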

	\begin{proof}
		The first result follows from Lemma \ref{lem_dev_Sigma} together with Lemma D.3 in \cite{bunea2020}. The second result follows immediately from noting that 
		\[
		\left\|
		\wh R_{i, \nij} - R_{i, \nij} 
		\right\|_q^q = \sum_{\ell\in [p]\nij} \left|
		\wh R_{i\ell} - R_{i\ell}
		\right|^q.
		\]
		
	\end{proof}

    \section{Supplementary simulation results}\label{app_sims}
    
    We provide additional simulation results in this section. The first section contains the performance of PVS and LOVE for selecting $K$ in various settings considered in Section \ref{sec_sims}. The second section compares the performance of $\wh \H$ from Algorithm \ref{alg_I} and the final $\wh \I$ from Algorithm \ref{alg_whole} for very small $K$. The last section compares the performance of PVS with $q = 2$ and $q = \i$.

    \subsection{Results of PVS and LOVE for selecting $K$}
    Figure \ref{fig_K} depicts the performance of PVS and LOVE for selecting $K$ in various settings considered in Section \ref{sec_sims}, that is, $n = 300$, $p = 500$, $K = 10$, $\eta = 1$, $\alpha = 2.5$ and $\rho_Z = 0.3$. Since both procedures of selecting $\mu$ in Section \ref{sec_cv} for PVS yield almost identical result, we only present the result of choosing the rank $1\le k\le \wh G$. PVS consistently selects $K$ in all cases whereas LOVE tends to under-estimate $K$ when the correlation of $Z$ is higher than $0.6$.

    \begin{figure}[ht]
        \centering
        \begin{tabular}{cc}
             \includegraphics[width = 0.4\textwidth]{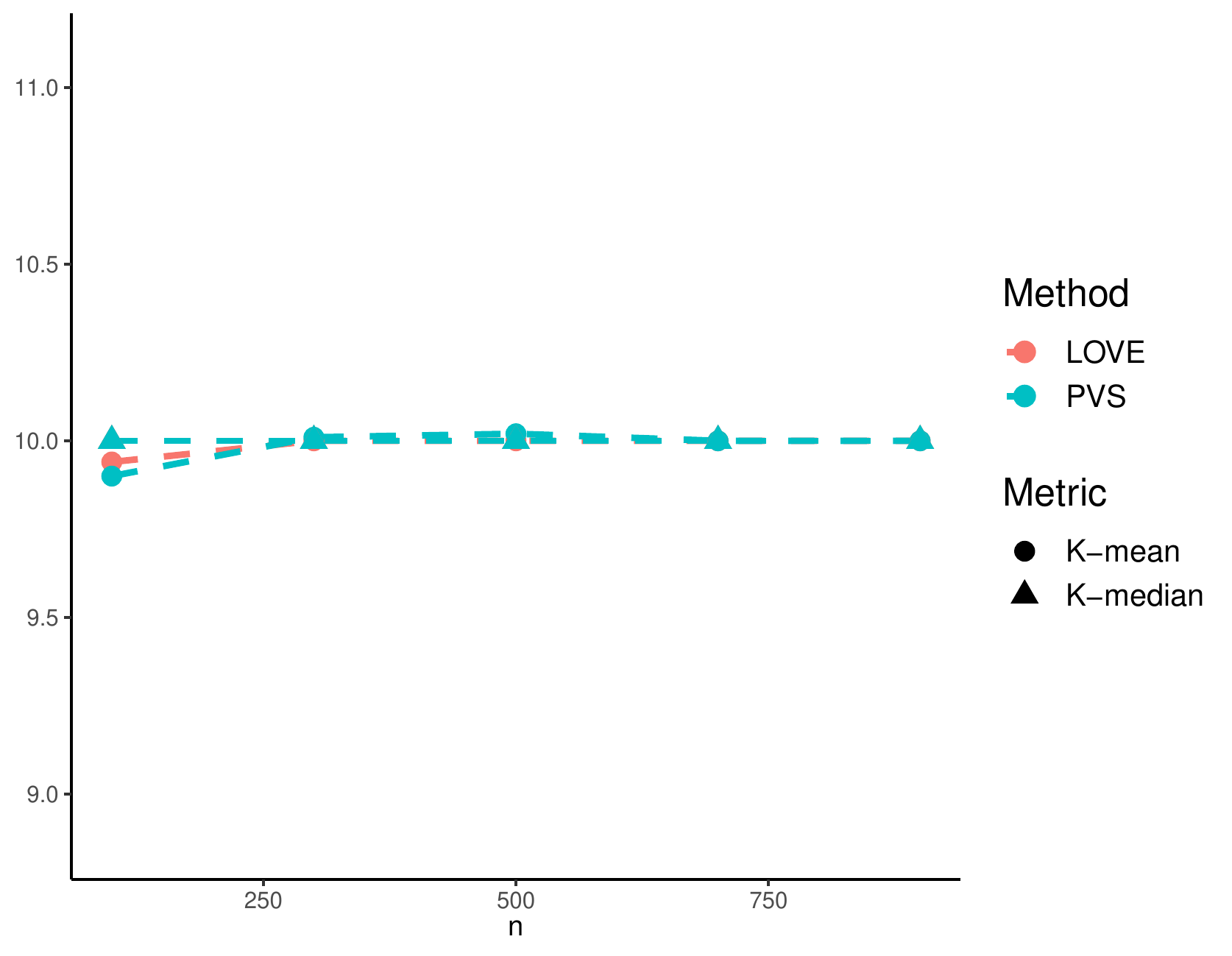} &  
             \includegraphics[width = 0.4\textwidth]{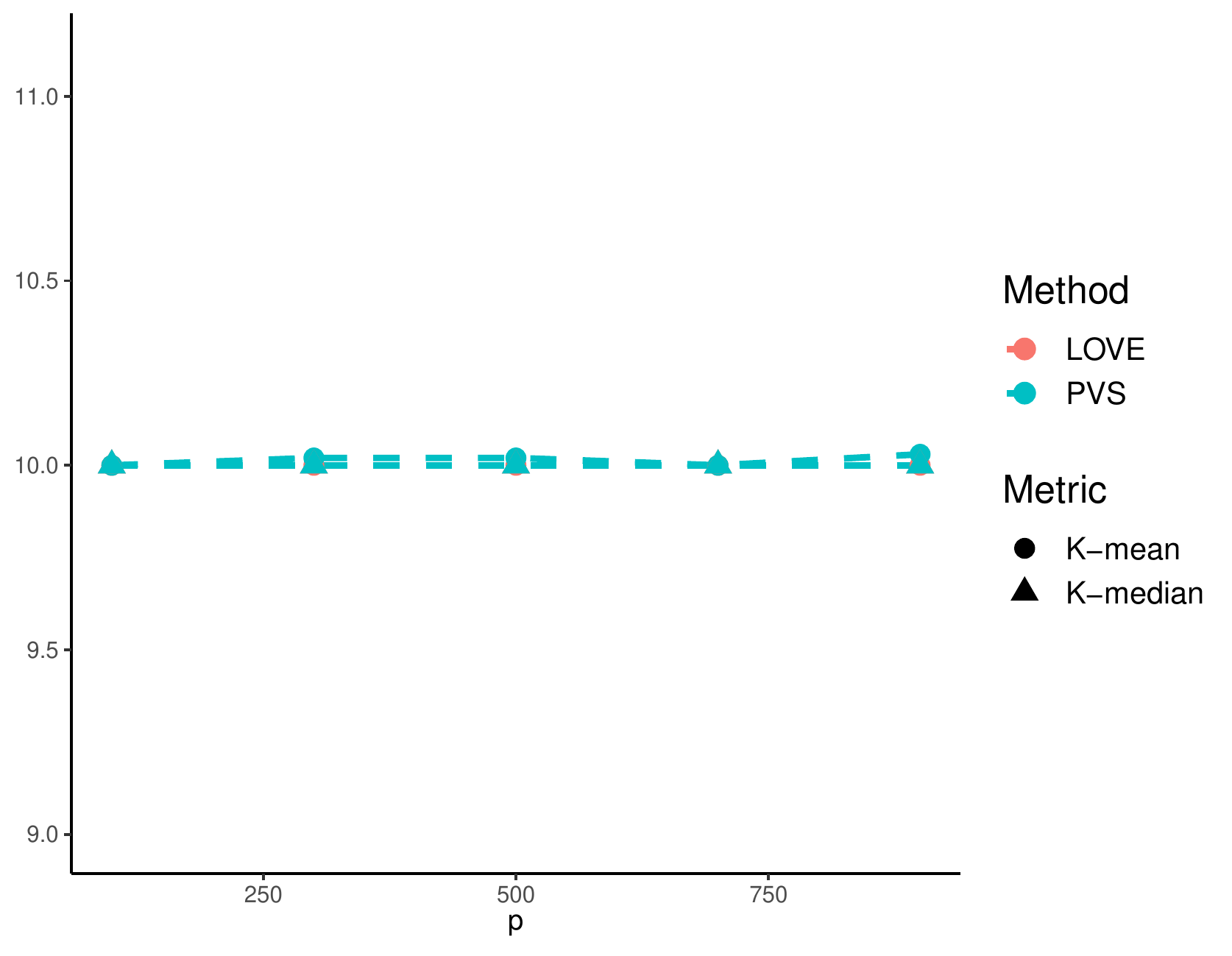} \\
             \includegraphics[width = 0.4\textwidth]{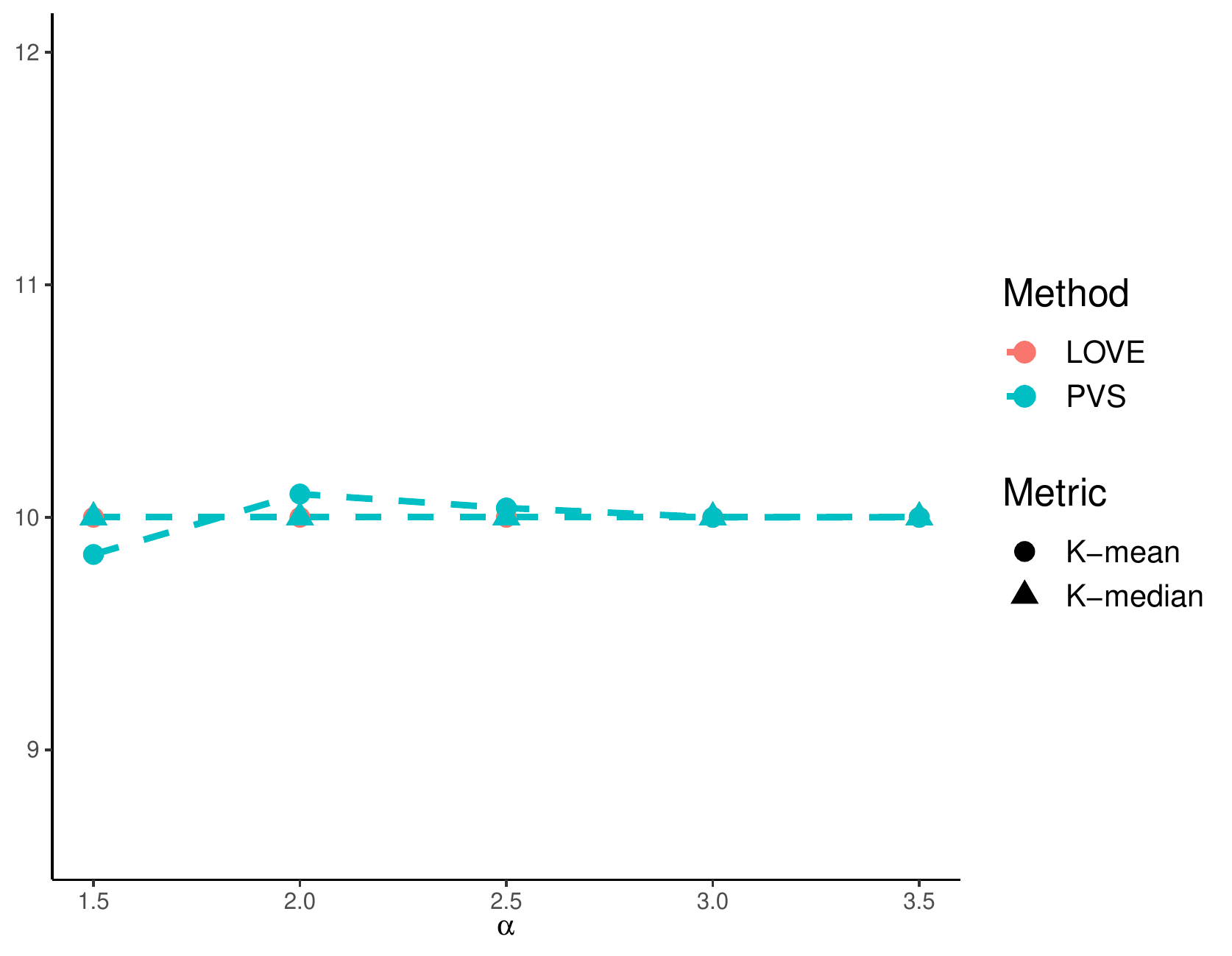} &
             \includegraphics[width = 0.4\textwidth]{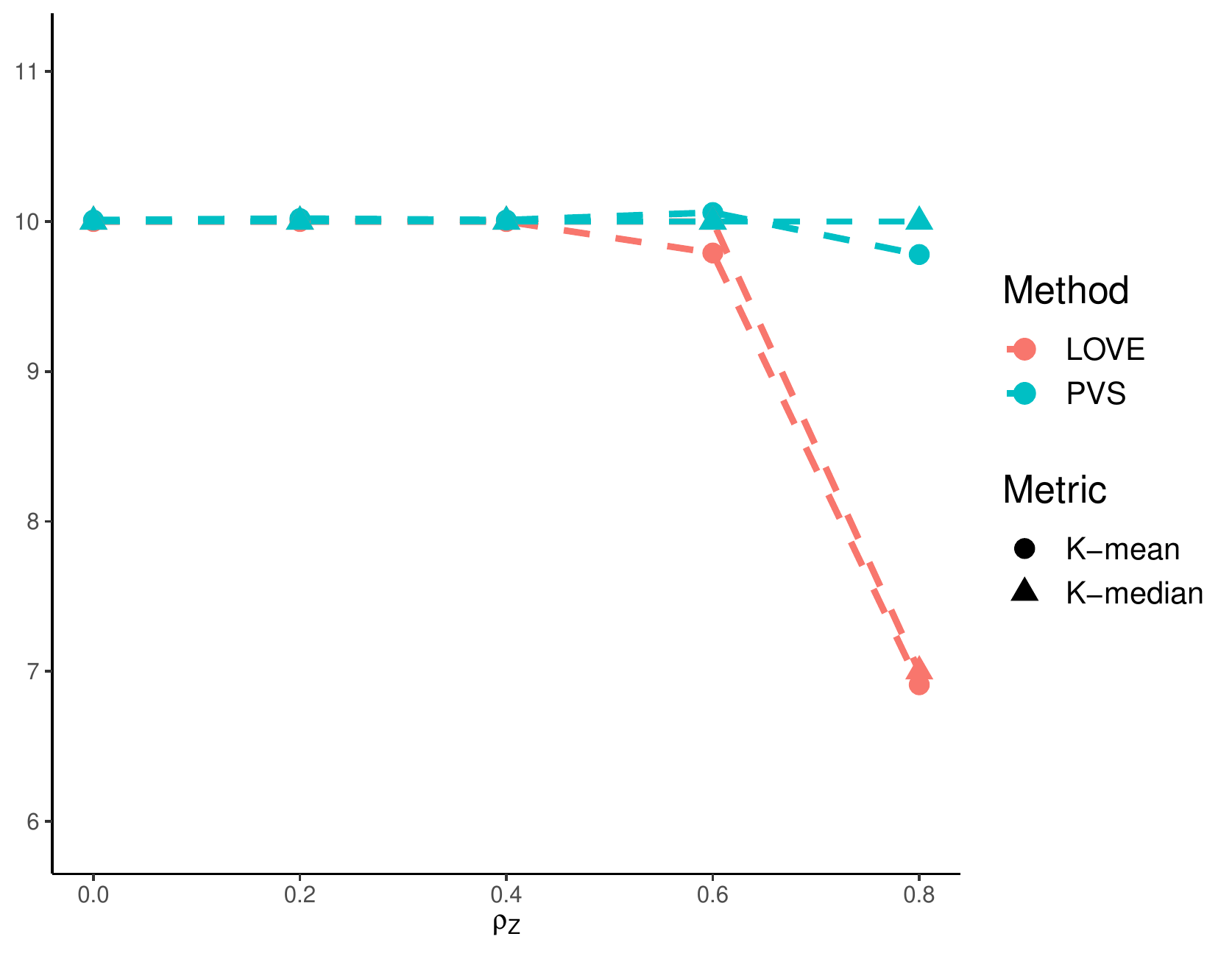}
        \end{tabular}
        \includegraphics[width = 0.4\textwidth]{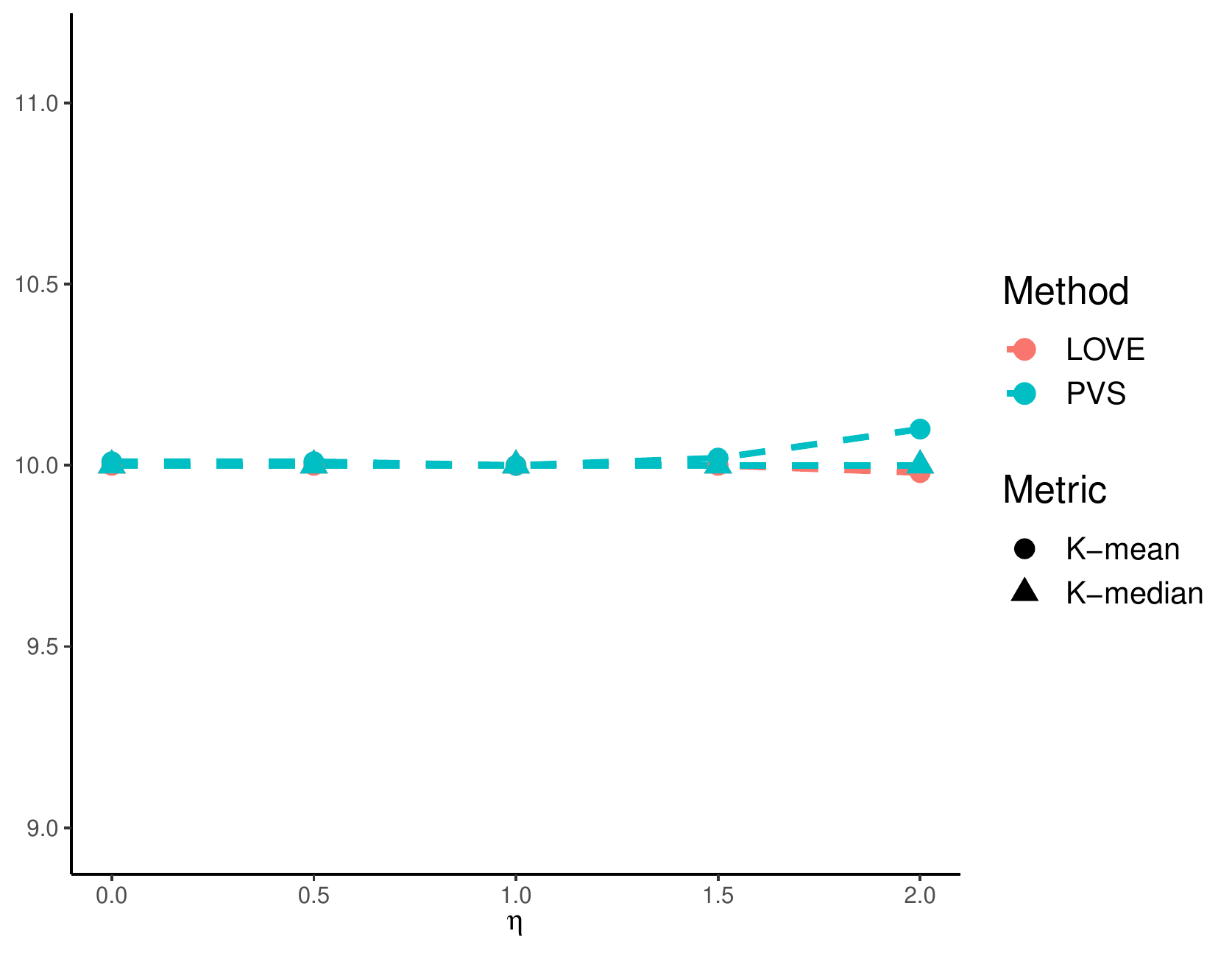}
        \caption{Selection of $K$ of PVS and LOVE for each setting. ``K-mean'' and ``K-median'' are, respectively, the mean and median of the selected $K$ over $100$ repetitions.}
        \label{fig_K}
    \end{figure}

    \subsection{Results of comparing $\wh \H$ and $\wh\I$ when $p = 500$ and $K = 5$}
    As seen in the previous section, both $\wh G$ from Algorithm \ref{alg_I} and $\wh K$ from Algorithm \ref{alg_whole} consistently selects $K$ when $p = 500$ and $K = 10$. In light of Corollary \ref{cor_I} and its following discussions, the consistency of $\wh G$ is as expected when the true $K$ is larger than $\log p$. In this section, we focus on the case where $K$ is smaller than $\log p$ and verify the performance of $\wh G$ and $\wh K$. We set $p = 500$, $K = 5$, $n = 300$, $\eta = 1$, $\rho_Z = 0.3$ and vary $\alpha \in \{1.5, 2, 2.5, 3, 3.5\}$. The performance of $\wh G$ (labelled as PVS-init) and $\wh K$ (labelled as PVS) 
    is shown in Figure \ref{fig_K_small}. Clearly, PVS consistently selects $K$ whereas PVS-init tends to over-estimate $K$ especially when the signal is small ($\alpha < 3$), in line with Corollary \ref{cor_I_prime} and Proposition \ref{prop_K}. On the other hand, the performance of $\wh \I$ (labelled as PVS) and $\wh \H$ (labelled as PVS-init) in Figure \ref{fig_I_K_small} further implies that the pruning step in Algorithm \ref{alg_I_post} successfully removes   the non-pure variables in $\wh \H$. To illustrate the advantage of $\wh I$ over $\wh H$, we also include the metric False Discovery Rate (FDR) defined as 
    \[
        \textrm{FDR} =  {|\wh I \cap J| \over |\wh I|}.
    \]
    Clearly, $\wh I$ from PVS has a lower FDR than $\wh H$ from PVS-init in all settings, in line with our Theorem \ref{thm_I_post}.
    
    \begin{figure}[htb!]
        \centering
        \includegraphics[width = 0.5\textwidth]{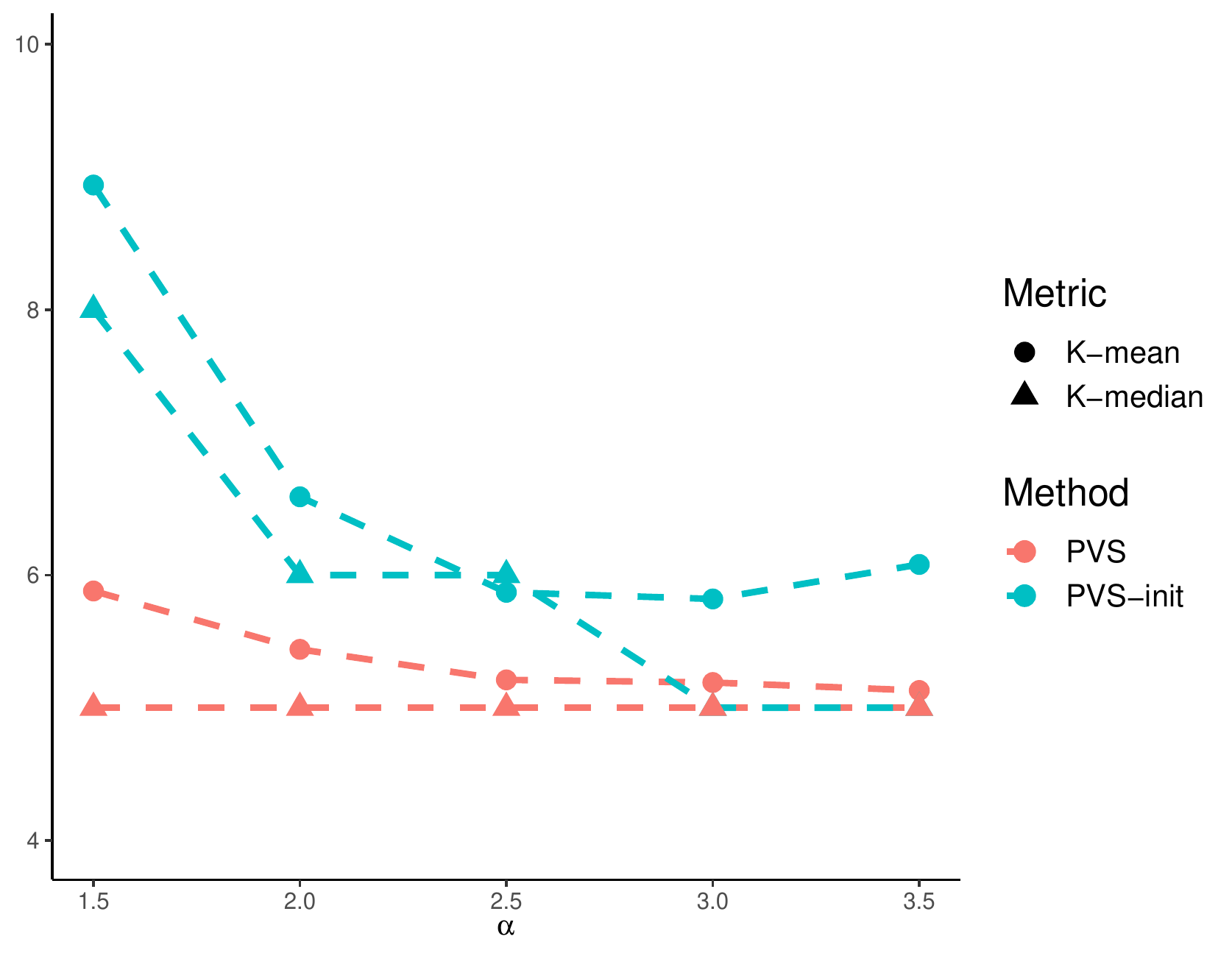}
        \caption{Performance of PVS-K-init and PVS-K}
        \label{fig_K_small}
    \end{figure}

    \begin{figure}[htb!]
        \centering
        \begin{tabular}{cc}
           \includegraphics[width = 0.5\textwidth]{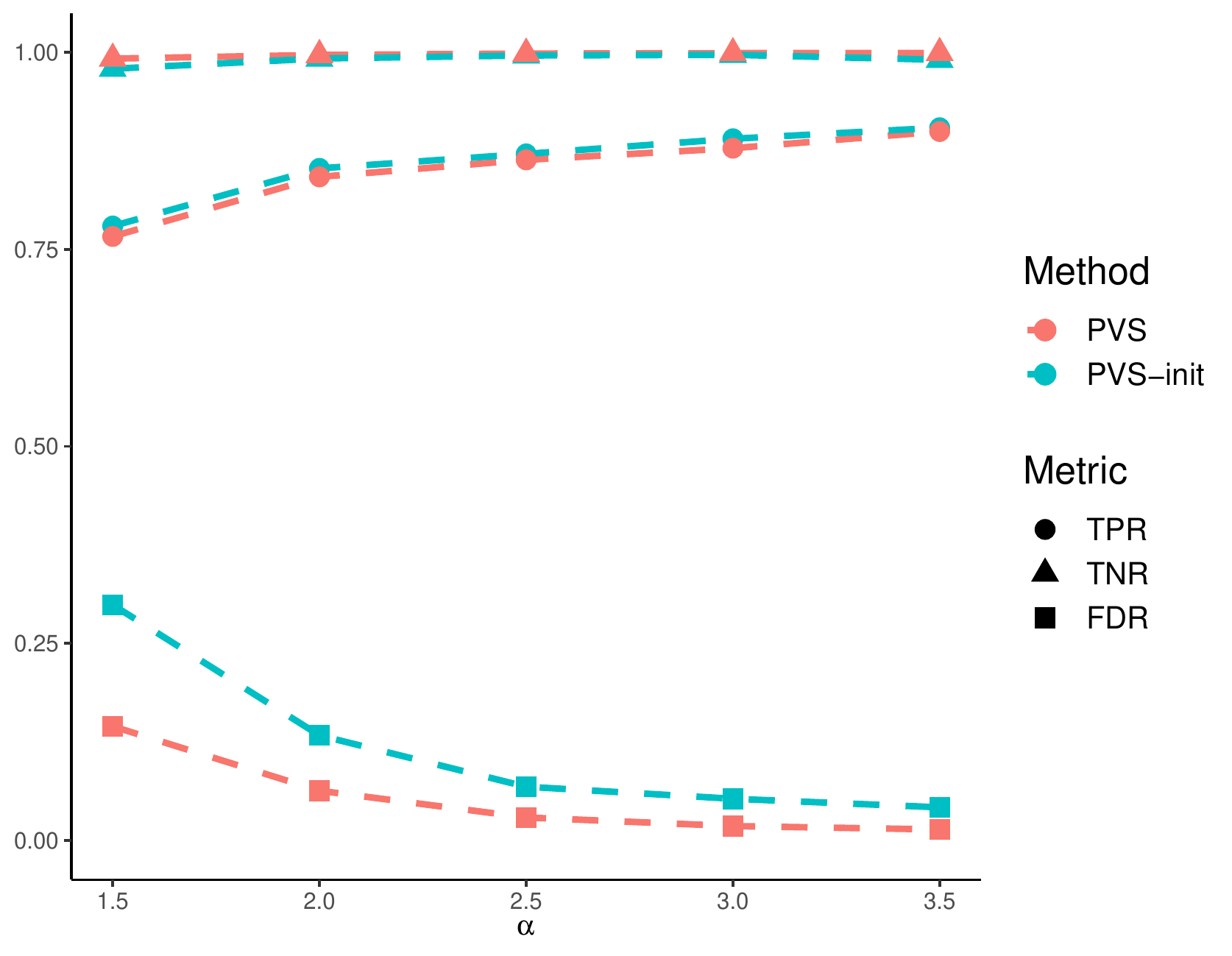} &
        \includegraphics[width = 0.5\textwidth]{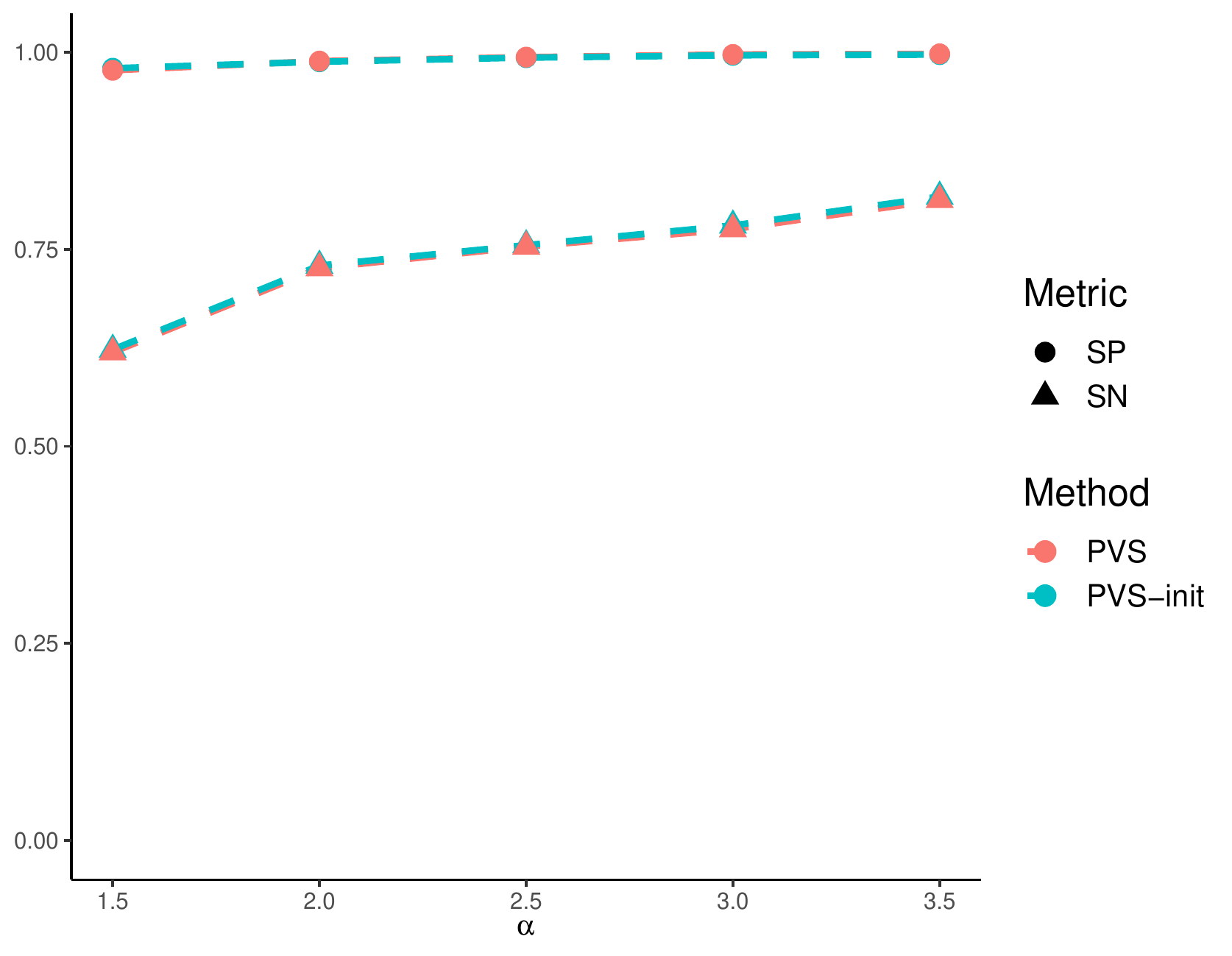}
        \end{tabular}
        \caption{Performance of $\wh \I$ and $\wh \H$}
        \label{fig_I_K_small}
    \end{figure}

    \subsection{Results of PVS by choosing $q = 2$ and $q = \i$ when $p = 100$}
    
    We provide the simulation results of comparing PVS with $q = 2$ (PVS-2) to PVS with $q = \i$ (PVS-Inf) in this section. We consider the general setting $p = 100$, $K = 5$, $\eta = 1$, $\alpha = 2.5$, $\rho_Z = 0.3$ and vary each parameter one at a time. We also include LOVE in the comparison. Figure \ref{fig_low_dim} shows the same findings between PVS-2 and LOVE as Section \ref{sec_sims}. On the other hand, PVS-Inf performs no better than PVS-2 but has much higher computational cost.

    \begin{figure}[htb!]
        \centering
        \begin{tabular}{cc}
             \includegraphics[width = 0.4\textwidth]{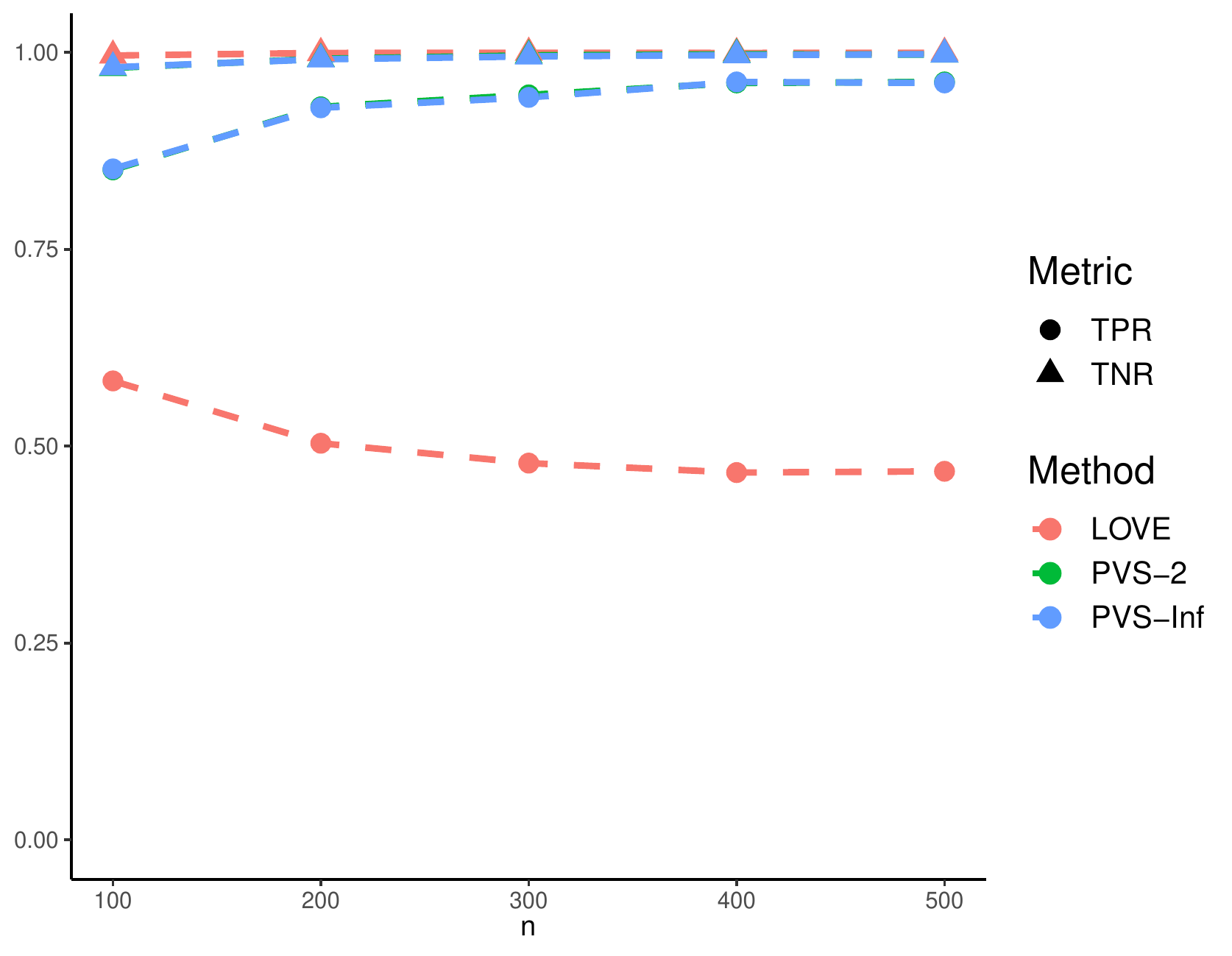} & \includegraphics[width = 0.4\textwidth]{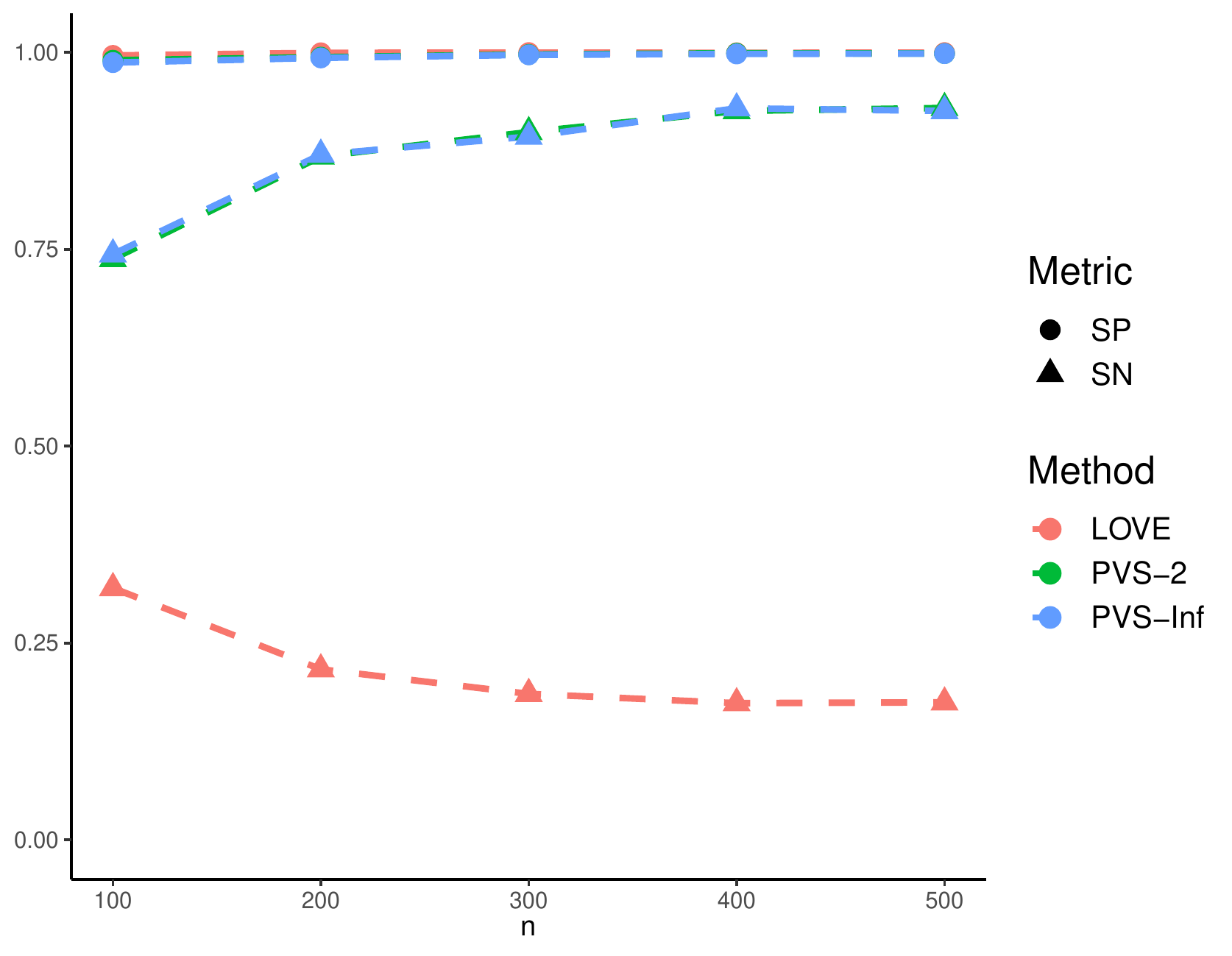}\\ 
             \includegraphics[width = 0.4\textwidth]{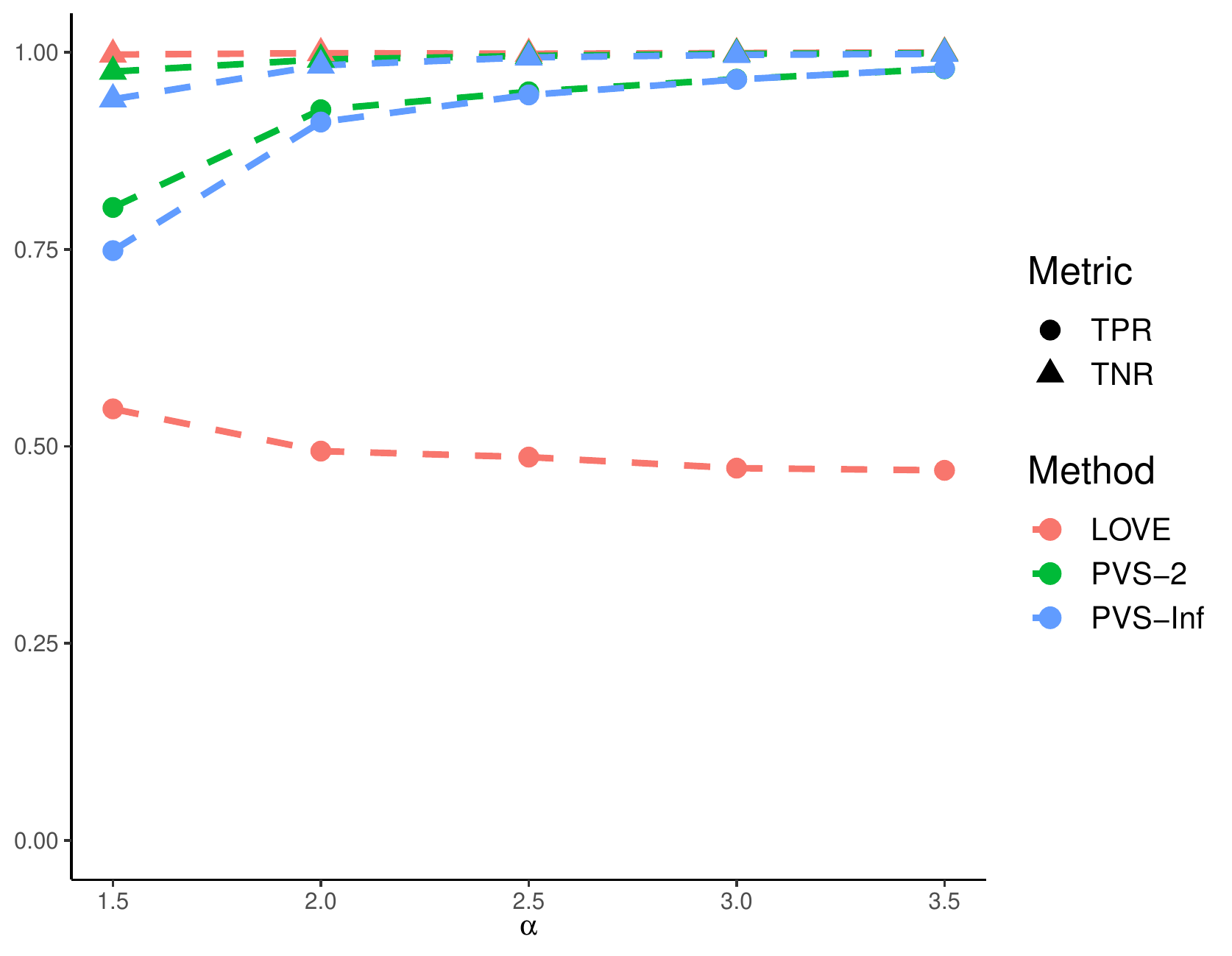} 
             & \includegraphics[width = 0.4\textwidth]{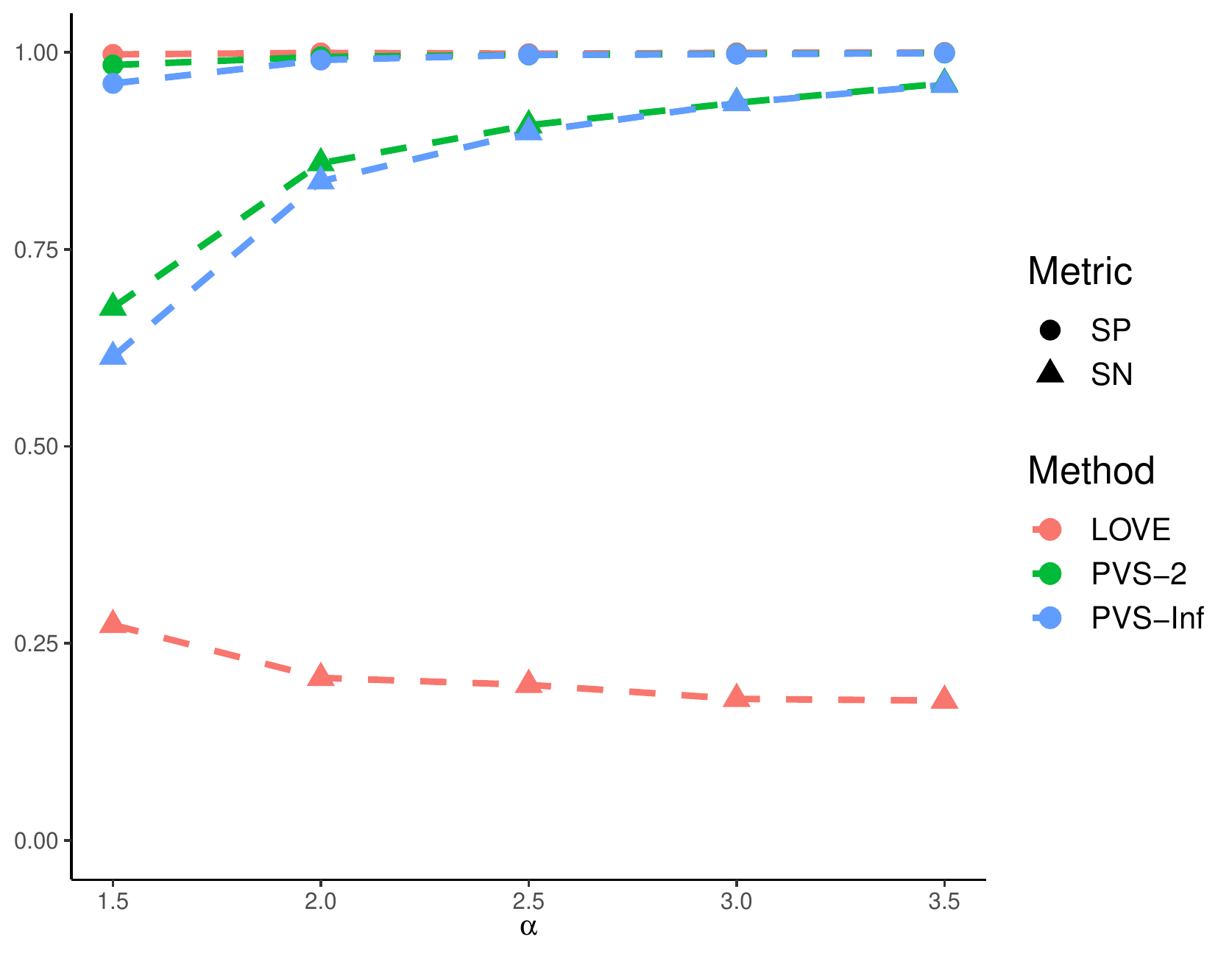}\\
             \includegraphics[width = 0.4\textwidth]{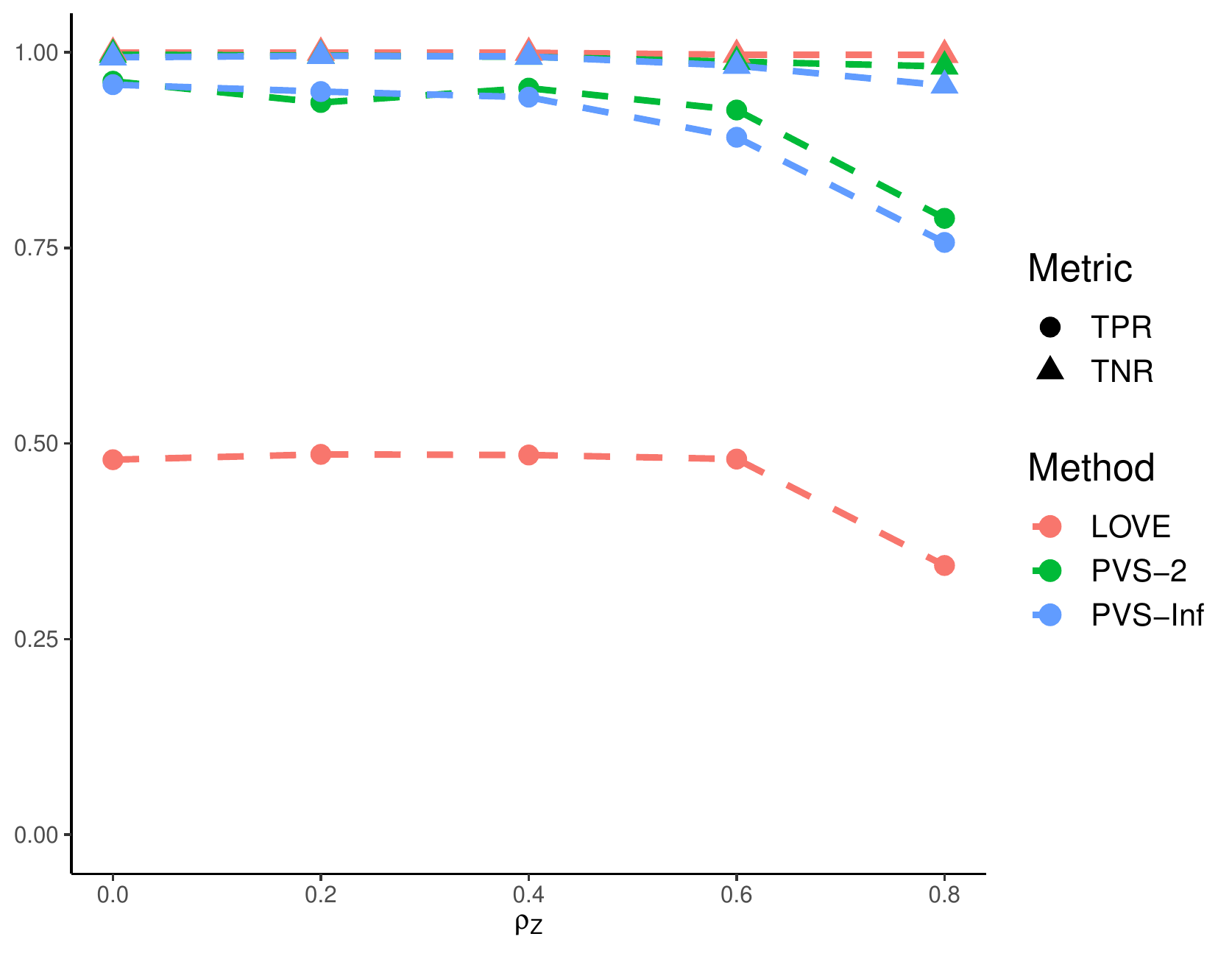} & 
             \includegraphics[width = 0.4\textwidth]{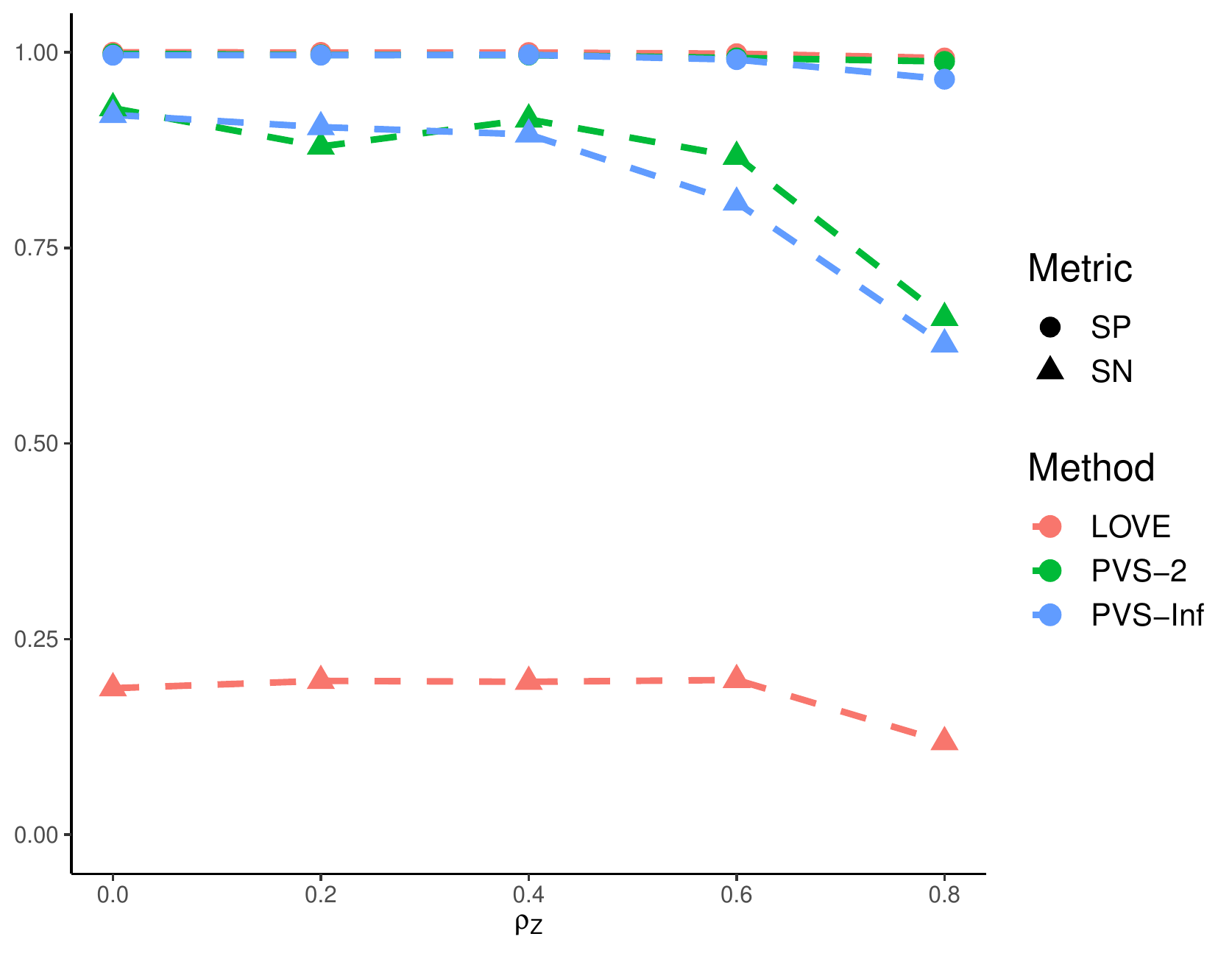}\\
             \includegraphics[width = 0.4\textwidth]{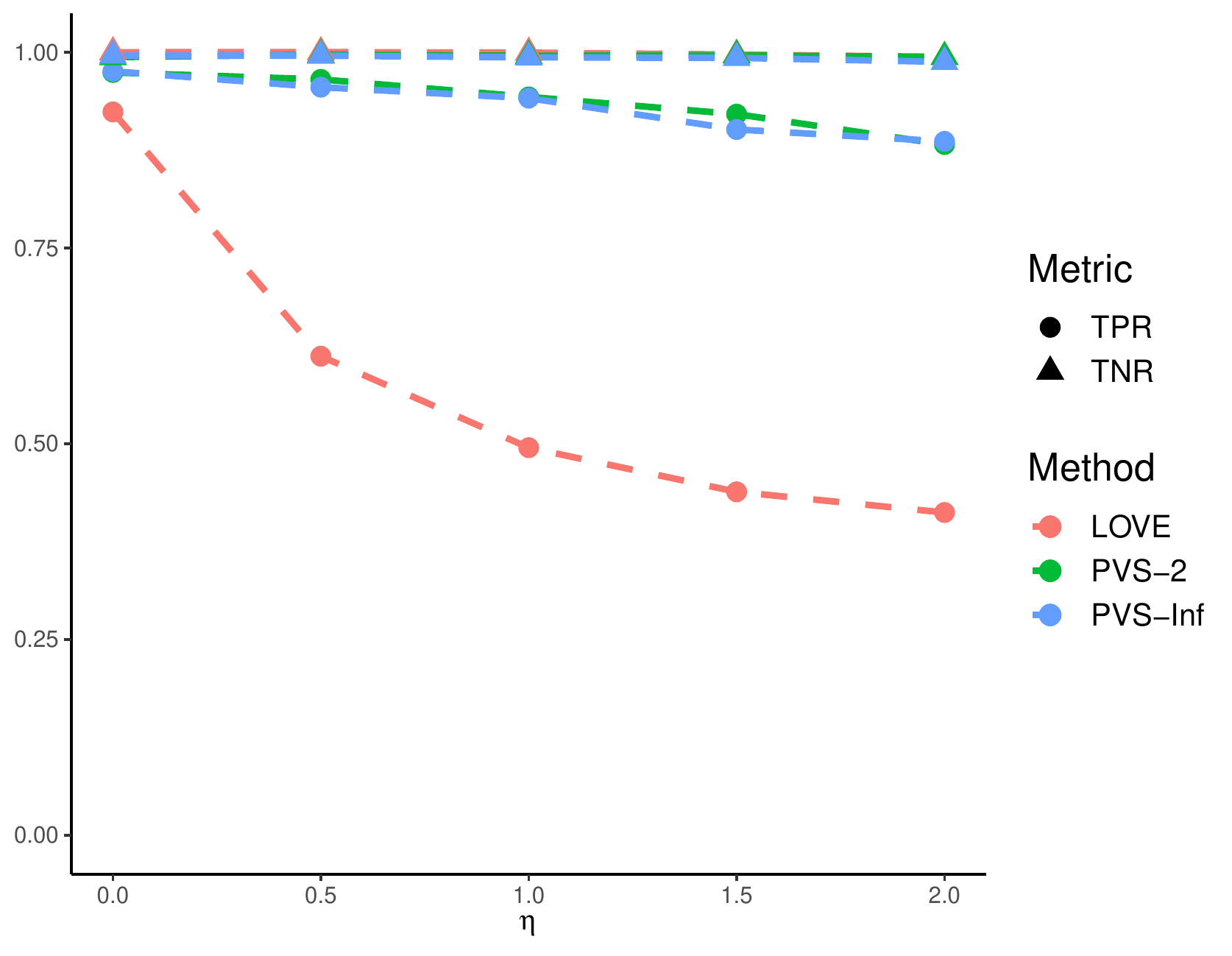} & 
             \includegraphics[width = 0.4\textwidth]{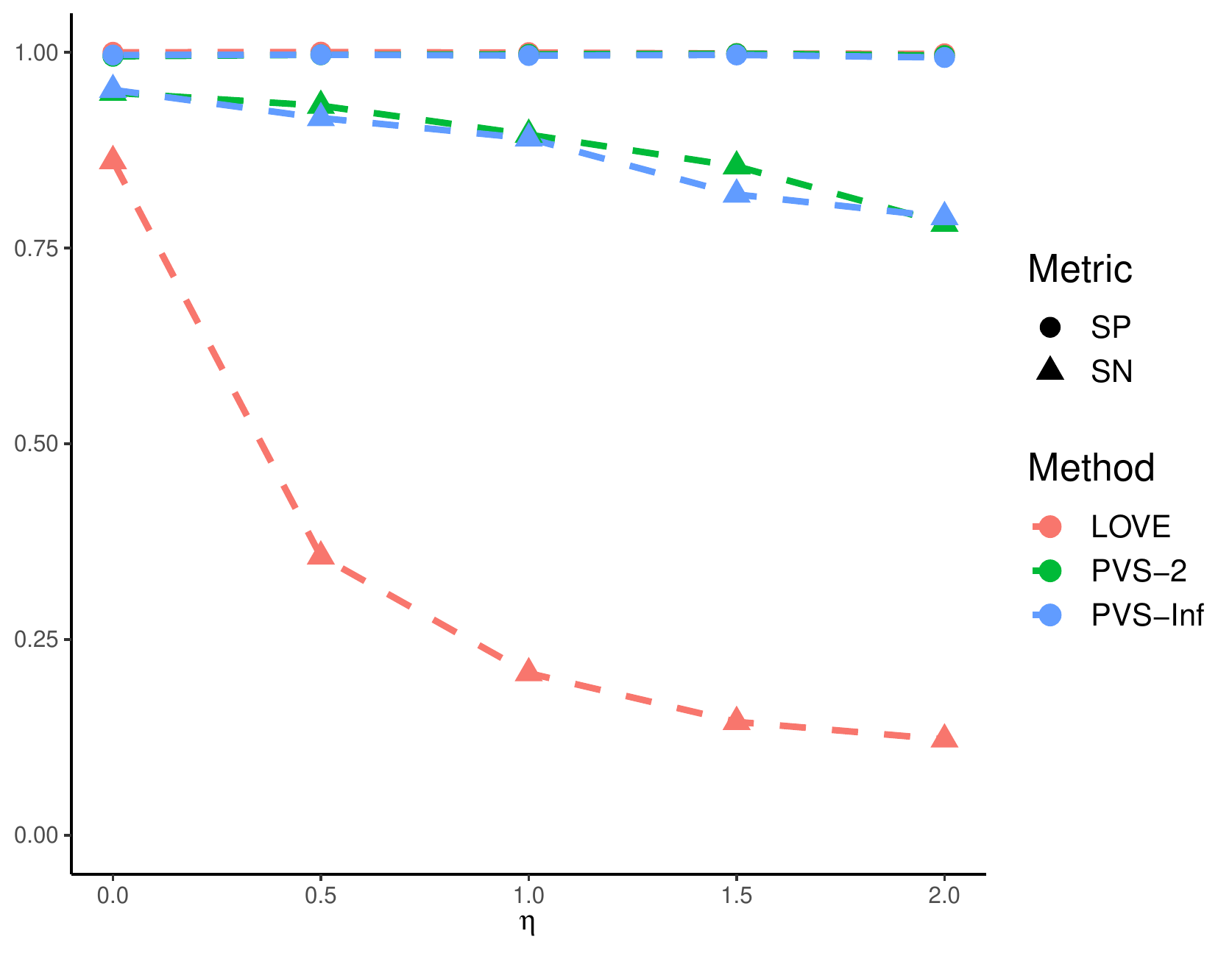}
        \end{tabular}
        \caption{TPR, TNR, SP and SN of PVS-2, PVS-Inf and LOVE}
        \label{fig_low_dim}
    \end{figure}

	\bibliographystyle{ims}
	\bibliography{ref}

\end{document}